\documentclass[a4paper,UKenglish,cleveref, autoref, thm-restate]{lipics-v2019}


\usepackage{mathtools}
\usepackage{xspace} 
\usepackage[appendix=append]{apxproof}
\newcommand{\define}[1]{\textit{#1}}
\newcommand{\stress}[1]{\textit{#1}}
\newcommand{\name}[1]{\textsc{#1}\xspace}
\newcommand{\problemm}[1]{\textsc{#1}\xspace}

\newcommand{\Oh}{\mathcal{O}}
\newcommand{\fpt}{\name{FPT}}
\newcommand{\xp}{\name{XP}}
\newcommand{\np}{\name{NP}}
\newcommand{\wone}{\name{W[1]}}
\newcommand{\wtwo}{\name{W[2]}}
\newcommand{\N}{\mathbb{N}}
	
\newcommand{\HHH}{\{\tfrac{z}{2} \mid z \in \mathbb{Z}\}}
\newcommand{\dispersion}{\textsc{Dispersion}\xspace}

\newcommand{\disp}[1]{#1\text{-}\mbox{disp}}
\newcommand{\dispauto}[1]{#1\text{-}\mbox{auto}\text{-}\allowbreak\mbox{disp}}
\newcommand{\tw}{\operatorname{\mathsf{tw}}}
\newcommand{\td}{\operatorname{\mathsf{td}}}
\newcommand{\fvs}{\operatorname{\mathsf{fvs}}}

\newcommand{\pw}{\operatorname{\mathsf{pw}}}

\newcommand{\vc}{\operatorname{\mathsf{vc}}}

\newcommand{\commentt}[1]{}
\newcommand{\refstar}{${(\star)}$}
\newcommand{\backward}{(\( \Leftarrow \))\xspace}
\newcommand{\forward}{(\( \Rightarrow \))\xspace}
\newcommand{\pivots}[2]{{\operatorname{\mathsf{pivots}}}(#1,#2)}
\newcommand{\vel}{\operatorname{\mathsf{vel}}}
\newcommand{\velo}{\operatorname{\mathsf{vel}_S}} 
\newcommand{\sgn}{\operatorname{\mathsf{sgn}}}
\newcommand{\oripara}{\mathsf{dir}}
\newcommand{\noripara}{\overline{\mathsf{dir}}}
\newcommand{\ori}[2]{\oripara({#2}\to{#1})}

\newcommand{\nori}[2]{\overline{\oripara}({#2}\to{#1})}

\newcommand{\lfs}{\operatorname{\mathsf{pos}}} 
\newcommand{\lf}[2]{\lfs_{#1}(#2)} 
\newcommand{\fracp}{\operatorname{\mathsf{fp}}} 
\newcommand{\change}{\operatorname{\mathsf{flip}}}
\newcommand{\orie}[2]{\mathsf{dir}^\star({#2}\to{#1})}
\newcommand{\norie}[2]{\overline{\mathsf{dir}}^\star({#2}\to{#1})}

\makeatletter
\newcommand{\labeltext}[3][]{%
    \@bsphack%
    \csname phantomsection\endcsname
    \def\tst{#1}%
    \def\labelmarkup{}
    \def\refmarkup{}%
    \ifx\tst\empty\def\@currentlabel{\refmarkup{#2}}{\label{#3}}%
    \else\def\@currentlabel{\refmarkup{#1}}{\label{#3}}\fi%
    \@esphack%
    \labelmarkup{#2}
}
\makeatother

\usepackage{tikz}
\usetikzlibrary{calc}

\newcommand{\pointe}[4]{
	\node [point] (internal) at ($ (#2) !{#4}! (#3) $) {};
	\node [right] at (internal.east) {$#1$};
}
\newcommand{\points}[4]{
	\node [point] (internal) at ($ (#2) !{#4}! (#3) $) {};
	\node [below] at (internal.south) {$#1$};
}

\newcommand{\pointn}[4]{
	\node [point] (internal) at ($ (#2) !{#4}! (#3) $) {};
	\node [above] at (internal.north) {$#1$};
}

\bibliographystyle{plainurl}

\title{Dispersing Obnoxious Facilities on Graphs
by
Rounding Distances}

\titlerunning{Dispersing Obnoxious Facilities on Graphs
by
Rounding Distances}

\author{Tim A. Hartmann}{Department of Computer Science, RWTH Aachen University, Germany}{hartmann@algo.rwth-aachen.de}{https://orcid.org/0000-0002-1028-6351}{}

\author{Stefan Lendl}{Department of Operations and Information Systems, University of Graz, Austria}{}{https://orcid.org/0000-0002-5660-5397}{}


\authorrunning{T.\,A. Hartmann and S. Lendl}

\Copyright{Tim A. Hartmann and Stefan Lendl} 

\ccsdesc[500]{Theory of computation~Discrete optimization}
\ccsdesc[500]{Mathematics of computing~Graph algorithms}
\ccsdesc[500]{Theory of computation~Parameterized complexity and exact algorithms}
\ccsdesc[500]{Theory of computation~Problems, reductions and completeness}

\keywords{facility location, parameterized complexity, packing}

\category{} 

\relatedversion{} 

\supplement{}



\nolinenumbers 


\EventEditors{}
\EventNoEds{1}
\EventLongTitle{}
\EventShortTitle{}
\EventAcronym{}
\EventYear{}
\EventDate{}
\EventLocation{}
\EventLogo{}
\SeriesVolume{}
    \ArticleNo{1}

\begin{document}

\maketitle

\begin{abstract}
We continue the study of $\delta$-dispersion, a continuous facility location problem on a graph where all edges have unit length and where the facilities may also be positioned in the interior of the edges.
The goal is to position as many facilities as possible subject to the condition that every two facilities have distance at least $\delta$ from each other.

Our main technical contribution is an efficient procedure to `round-up' distance $\delta$. It transforms a $\delta$-dispersed set $S$ into a $\delta^\star$-dispersed set $S^\star$ of same size
	where distance $\delta^\star$ is a slightly larger rational $\tfrac{a}{b}$ with a numerator $a$ upper bounded by the longest (not-induced) path in the input graph.

Based on this rounding procedure and connections to the distance-$d$ independent set problem we derive a number of algorithmic results. When parameterized by treewidth, the problem is in XP. When parameterized by treedepth the problem is FPT and has a matching lower bound on its time complexity under ETH. Moreover, we can also settle the parameterized complexity with the solution size as parameter using our rounding technique: $\delta$-\dispersion is FPT for every $\delta \leq 2$ and W[1]-hard for every $\delta > 2$.

Further, we show that $\delta$-dispersion is NP-complete for every fixed irrational distance $\delta$, which was left open in a previous work.	
\end{abstract}

\section{Introduction}

We study the algorithmic behavior of a continuous dispersion problem.
Consider an undirected graph $G$, whose edges are have unit length.
Let $P(G)$ be the continuum set of points on all the edges and vertices.
For two points $p,q \in P(G)$, we denote by $d(p,q)$ the length of a shortest path containing $p$ and $q$ in the underlying metric space.
A subset $S \subseteq P(G)$ is \define{$\delta$-dispersed} for some positive real number $\delta$,
	if every distinct points $p,q \in S$ have distance at least $d(p,q) \geq \delta$.
Our goal is,
	for a given graph $G$ and a positive real number $\delta$,
	to compute a maximum cardinality subset $S \subseteq P(G)$ that is $\delta$-dispersed.
We denote by \define{$\disp{\delta}(G)$} the maximum size of a $\delta$-dispersed set of~$G$.
The decision problem \dispersion asks for a $\delta$-dispersed set of size at least $k$, 
	where additionally integer $k\geq 0$ is part of the input.
When $\delta$ is fixed and not part of the input, we refer to the problem as $\delta$-\dispersion.

\subsection{Known and related results}

The area of obnoxious facility location goes back to seminal articles of Goldman \& Dearing~\cite{GolDea1975} and Church \& Garfinkel~\cite{ChuGar1978}.
The area includes a wide variety of objectives and models.
For example, purely geometric variants have been studied by Abravaya \& Segal~\cite{AbrSeg2010}, Ben-Moshe, Katz \& Segal~\cite{BenKatSeg2000}, and Katz, Kedem \& Segal~\cite{KatKedSeg2002}.
Recently, van Ee studied the approximability of a generalized covering problem in a metric space that also involves dispersion constraints~\cite{DBLP:journals/dam/Ee21}.
Another direction is a graph-theoretic model, where every edge of the given graph $G$ is rectifiable and has some individual length.
Tamir discusses the complexity and approximability of several optimization problems.
For example, when $G$ is a tree, then a $\delta$-dispersed set can be computed in polynomial time \cite{Tamir1991}.
Another task is to place a single obnoxious facility in a network while maximizing, for example, the smallest distance from facility to certain clients, as studied by Segal~\cite{Segal2003}.

In a previous work, the complexity of \dispersion was studied for every rational distance~$\delta$.
When $\delta$ is a rational number with numerator $1$ or $2$,
	the problem is polynomial time solvable,
	while it is \np-complete for all other rational values of $\delta$~\cite{DBLP:conf/stacs/GrigorievHLW19,grigoriev2021dispersing}.
The complexity when $\delta$ is irrational was left as an open problem.

A closely related facility location problem 
is $\delta$-covering.
The objective is to place as few locations as possible on $P(G)$ subject to the condition that any point in $P(G)$ is in distance at most $\delta$ to a placed location.
This problem is polynomial time solvable whenever $\delta$ is a unit fraction,
	while it is \np-hard for all non unit fractions~$\delta$~\cite{DBLP:journals/mp/HartmannLW22}.
	Furthermore, the parameterized complexity with the parameter solution size $k$ is studied.
	$\delta$-covering is fixed parameter tractable when $\delta < \tfrac{3}{2}$,
	while for $\delta \geq \tfrac{3}{2}$ the problem is \wtwo-complete~\cite{DBLP:journals/mp/HartmannLW22}.
Tamir~\cite{DBLP:journals/mor/Tamir87} showed that for $\delta$-covering only certain distances $\delta$ are of interest.
For every amount of points $p$ the distance $\max\{ \delta^\star : |\mbox{cover}{\delta^\star}(G)|=p \}$ is of the form $\tfrac{L'}{2p'}$ where $p' \in \{1,\dots,p\}$ and $L'$ is roughly at most twice the length of a non-induced path in $G$.

\subsection{Our contribution}
Our main technical contribution is an efficient and constructive rounding procedure.
Given a $\delta$-dispersed set $S$ for some distance value $\delta > 0$,
	it transforms $S$ into a $\delta^\star$-dispersed set $S^\star$ of equal size
	with a slightly larger well-behaving distance value $\delta^\star \geq \delta$.
The new distance $\delta^\star$ is a rational $\tfrac{a}{b}$ with small numerator $a$.
More precisely, the numerator is upper bounded by the length of the longest (not-induced) path $L$, 
hence upper bounded asymptotically by the number of vertices $n$ of the input graph (see \autoref{sec:rounding}).

Our second technical contribution relates the optimal solution for distance $\delta$ and $\tfrac{\delta}{\delta+1}$ for $\delta \leq 3$.
A $\delta$-dispersed set translates to a $\tfrac{\delta}{\delta+1}$-dispersed set
	by placing one more point on every edge, and vice versa by removing one point (see~\autoref{sec:translating}).

Further we explore a connection of \dispersion and an independent set problem (see~\autoref{section:relationship:with:dis}).
The combination of that connection with our technical contributions yields several algorithmic results for \dispersion (see~\autoref{section:algo:imp} and~\autoref{sec:hardness}):
\begin{itemize}
	\item
	\dispersion is \np-hard even for chordal graphs of diameter $4$.
	\item
	\dispersion is \fpt for the graph parameter treedepth $\td(G)$
		with a run time matching a lower bound under ETH.
	We complement this result by showing that $\delta$-\dispersion is \wone-hard
		for the slightly more general graph parameter pathwidth $\pw(G)$, even for the combined parameter $\pw(G) + k$.
	Similarly, $\delta$-\dispersion is \wone-hard for the graph parameter $\fvs(G)$, the minimum size of a feedback vertex set.
	\item
	\sloppy
	\dispersion is \xp for the parameter treewidth $\tw(G)$, with a running time of $(2L+2)^{\tw(G)}n^{\Oh(1)}$,
		where $n$ is the number of vertices and $L$ is an upper bound on the length of the longest path in $G$.
	We complement this result by the more general lower bound of $n^{o(\tw(G)+ \sqrt{k})}$, assuming ETH.
	It implies the lower bound of $L^{o(\tw(G)+ \sqrt{k})}$ since $L \leq n$.
	Note that a mere lower bound of $L^{o(\tw(G)+ \sqrt{k})}$ would not exclude an $n^{o(\tw(G))}$-algorithm.
\end{itemize}

In addition, we completely resolve the complexity of $\delta$-dispersion, by showing \np-hardness for irrational $\delta$ (see~\autoref{sec:irrational}).
We also study the parameterized complexity when parameterized by the solution size $k$.
The problem is \wone-hard when $\delta >2$, and \fpt otherwise.
Thus, there is a sharp threshold at $\delta = 2$ where the complexity jumps from \fpt to \wone-hard (see~\autoref{sec:naturalparameter}).

	We mark statements whose proof can be found in the appendix with ``\refstar''.
	
\section{Preliminaries}

We use the word \define{vertex} in the graph-theoretic sense,
	while we use the word \define{point} to denote the elements of the geometric structure $P(G)$.
As an input for $\delta$-dispersion,
	we consider graphs $G$ that are undirected, connected, and without loops and isolated vertices.

For an edge $\{u,v\} \in E(G)$ and a real number $\lambda \in [0,1]$,
	let $p(u,v,\lambda) \in P(G)$ be the point on edge $\{u,v\}$ that has distance $\lambda$ from $u$.
Note that $p(u,v,0)=u$, $p(u,v,1)=v$ and $p(u,v,\lambda) = p(v,u,1-\lambda)$.
Further, we use $d(p,q)$ for the length of a shortest path between points $p,q \in P(G)$.

For a subset of vertices $V' \subseteq V(G)$ or a subset of edges $E' \subseteq E(G)$, we denote by $G[V']$ and $G[E']$ the subgraph induced by $V'$ and $E'$, respectively.
The neighborhood of a vertex $u$ is $N(u) \coloneqq \{ v \in V(G) \mid \{u,v\}\in E(G)\}$.
We use $n$ as the number of vertices of $G$, when $G$ is clear from the context.

For a graph $G$ and integer $c\geq 1$,
	let the \define{$c$-subdivision of $G$} be the graph $G$
	where every edge is replaced by a path of length $c$.

\begin{lemma}[\cite{grigoriev2021dispersing}]
	\label{lemma:subdivide}
	Let \( G \) be a graph, let \( c \geq 1 \) be an integer, and let \( G' \) be the \( c \)-subdivision of~\( G \).
	Then \( \disp{\delta}(G) = \disp{(c \delta)}(G') \).
\end{lemma}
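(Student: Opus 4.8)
The plan is to exhibit a natural distance-scaling bijection between the point sets \( P(G) \) and \( P(G') \), and to show that it carries \( \delta \)-dispersed sets to \( (c\delta) \)-dispersed sets and back, so that maximum cardinalities agree.

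First I would fix the bijection. Write \( G' \) as \( G \) with each edge \( \{u,v\} \) replaced by a fresh path \( u = w^{uv}_0, w^{uv}_1, \dots, w^{uv}_c = v \). Define \( \phi \colon P(G) \to P(G') \) by sending a point \( p(u,v,\lambda) \) on edge \( \{u,v\} \) to the point of this path at distance \( c\lambda \) from \( u \). Then \( \phi \) maps the vertices of \( G \) to the branch vertices of \( G' \), and the newly introduced subdivision vertices of \( G' \) are exactly the images of the points \( p(u,v,i/c) \) for \( 1 \le i \le c-1 \). That \( \phi \) is well defined (independent of the chosen orientation of the edge, using \( p(u,v,\lambda) = p(v,u,1-\lambda) \)) and a bijection is routine.

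The crux is the identity \( d_{G'}(\phi(p),\phi(q)) = c\cdot d_G(p,q) \) for all \( p,q \in P(G) \). For ``\( \le \)'', take a shortest walk in \( P(G) \) realizing \( d_G(p,q) \); it decomposes into a partial traversal of one edge, a sequence of full edge traversals, and a final partial traversal, and applying \( \phi \) replaces each unit of length by \( c \) units, producing a walk from \( \phi(p) \) to \( \phi(q) \) of length \( c\cdot d_G(p,q) \). For ``\( \ge \)'', apply the same argument to \( \phi^{-1} \), which scales lengths by \( 1/c \): a shortest walk in \( P(G') \) from \( \phi(p) \) to \( \phi(q) \) maps to a walk in \( P(G) \) from \( p \) to \( q \) of length \( \tfrac1c d_{G'}(\phi(p),\phi(q)) \). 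The only mild subtlety is the bookkeeping of the partial first and last segments, since a shortest walk may begin or end in the interior of an edge; this is where I expect essentially all of the (small) effort to go, and it is the main obstacle in an otherwise formal argument.

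The lemma then follows immediately: a set \( S \subseteq P(G) \) is \( \delta \)-dispersed iff every two distinct \( p,q \in S \) satisfy \( d_G(p,q) \ge \delta \), which by the scaling identity is equivalent to \( d_{G'}(\phi(p),\phi(q)) \ge c\delta \) for all distinct \( p,q \), i.e.\ to \( \phi(S) \) being \( (c\delta) \)-dispersed. Since \( \phi \) is a bijection it preserves cardinality, and \( S \mapsto \phi(S) \) is a bijection between the \( \delta \)-dispersed sets of \( G \) and the \( (c\delta) \)-dispersed sets of \( G' \); taking maxima gives \( \disp{\delta}(G) = \disp{(c\delta)}(G') \).
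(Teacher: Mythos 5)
Your argument is correct: the $c$-subdivision of $G$ is, as a metric space, exactly $P(G)$ with all distances rescaled by the factor $c$, your map $\phi$ realizes this rescaling bijectively, and the distance identity $d_{G'}(\phi(p),\phi(q)) = c\cdot d_G(p,q)$ (both inequalities via mapping shortest walks forward and backward, with only routine bookkeeping of the partial first and last edge segments) immediately yields a cardinality-preserving correspondence between $\delta$-dispersed sets of $G$ and $(c\delta)$-dispersed sets of $G'$. The paper itself does not prove this lemma but imports it from the cited earlier work, and your scaling-bijection argument is precisely the standard proof one would give there, so there is nothing to flag.
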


For integers $a$ and~$b$, we denote the rational number $\frac{a}{b}$ as \define{$b$-simple}.
A set $S\subseteq P(G)$ is $b$-simple,
	if for every point $p(u,v,\lambda)$ in $S$ the edge position $\lambda$ is \define{$b$-simple}.

\begin{lemma}[\cite{grigoriev2021dispersing}]
\label{lemma:simple:solution}
	Let \( \delta = \tfrac{a}{b} \) with integers \( a \) and \( b \), and let $G$ be a graph.
	Then, there exists an optimal \( \delta \)-dispersed set \( S^\star \) that is \( 2b \)-simple.
\end{lemma}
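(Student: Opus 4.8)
The plan is to prove \autoref{lemma:simple:solution} by taking an arbitrary optimal $\delta$-dispersed set $S$ with $\delta = \tfrac{a}{b}$ and showing that it can be transformed, one point at a time, into a $2b$-simple set of the same size without ever violating the dispersion constraint. The natural way to organize this is to fix a point $p = p(u,v,\lambda) \in S$ whose edge position $\lambda$ is not $2b$-simple and argue that we can slide $p$ along its edge to the nearest $2b$-simple position in at least one of the two directions. So first I would set up the local picture around $p$: the positions (measured along the edge $\{u,v\}$, or more globally along shortest paths) of all other points of $S$ that are relevant, i.e. those within distance $\delta$ of the segment of the edge we might move $p$ into.

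The key structural observation I would use is that every constraint "$d(p,q) \ge \delta$" with $q \in S \setminus \{p\}$ is, as a function of the continuous parameter $\lambda \in [0,1]$, of the form $|\lambda - c_q| \ge \delta$ or $\lambda + c_q \ge \delta$ or $(1-\lambda) + c_q \ge \delta$ for suitable constants $c_q$ that are themselves $b$-simple — because $q$'s own position and the graph distances from $u$ and $v$ to $q$ all lie in $\tfrac{1}{b}\mathbb{Z}$ (the latter since $q$, if it is $b$-simple, contributes a $b$-simple offset, and vertex-to-vertex distances are integers). Hence the forbidden set of $\lambda$ values is a union of open intervals each of whose endpoints lies in $\tfrac{1}{2b}\mathbb{Z}$: the critical thresholds have the form $c_q \pm \delta$ or $\delta - c_q$, and since $\delta = \tfrac{a}{b} = \tfrac{2a}{2b}$ and $c_q \in \tfrac{1}{b}\mathbb{Z} = \tfrac{2}{2b}\mathbb{Z}$, all these thresholds are $2b$-simple. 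Therefore $p$ currently sits in some maximal open sub-interval $I \subseteq [0,1]$ of allowed positions whose endpoints are $2b$-simple; that interval contains a $2b$-simple point (either an endpoint is strictly inside $[0,1]$ and can be approached, or — since it is an open interval of positive length with $2b$-simple endpoints — it contains a $2b$-simple interior point, e.g. a value of the form $\tfrac{k}{2b}$). Moving $p$ to such a point keeps $S$ $\delta$-dispersed and of the same size.

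To make this a clean induction I would define a potential, e.g. the number of points of $S$ whose edge position is not $2b$-simple, and show each move strictly decreases it. A subtlety here is that when we move $p$ we must not create a new non-$2b$-simple constraint configuration for the other points — but we do not move the other points, so their positions are unchanged; the only thing to check is that the argument for the next point $p'$ still applies, and it does, because the constants $c_{p'}$ for the neighbours of $p'$ among $S$ are still $b$-simple or, for those neighbours we have already fixed, $2b$-simple, and in either case the resulting thresholds remain $2b$-simple. Actually it is cleanest to process points in an order and maintain the invariant "every already-processed point is $2b$-simple, and for every point $q$ the relevant offsets $c_q$ are $2b$-simple"; since $2b$-simple is closed under the arithmetic used and contains the $b$-simple values, this goes through. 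One should also double-check the corner case where the allowed interval $I$ degenerates or where $p$ coincides with a vertex ($\lambda \in \{0,1\}$, already $b$-simple hence $2b$-simple, nothing to do).

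The main obstacle, and the part that needs the most care, is the bookkeeping of exactly which distance $d(p,q)$ is realized by which path as $\lambda$ varies: the function $\lambda \mapsto d(p(u,v,\lambda), q)$ is piecewise linear with slopes $\pm 1$, and its breakpoints (where the shortest path switches from going "through $u$" to "through $v$") are themselves at $b$-simple values of $\lambda$ when $q$ is $b$-simple. So the constraint region is genuinely a finite union of intervals with $2b$-simple endpoints, but writing this down rigorously requires being explicit that all these quantities — vertex distances, $q$'s offset, and $\delta$ itself written with denominator $2b$ — share the denominator $2b$, and that taking minima/maxima and adding/subtracting $\delta$ preserves membership in $\tfrac{1}{2b}\mathbb{Z}$. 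Once that is established, the existence of a $2b$-simple point in the (open, positive-length, $2b$-simple-endpoint) allowed interval is immediate, and the rest is the routine induction sketched above.
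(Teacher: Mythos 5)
There is a genuine gap at the heart of your argument. Your claim that the forbidden region for the single point $p$ has $2b$-simple endpoints rests on the constants $c_q$ being $b$-simple, i.e.\ on the \emph{other} points of $S$ already having simple edge positions --- but that is exactly what the lemma asks you to prove, and the optimal set you start from is arbitrary. For an unprocessed neighbour $q$ with an arbitrary (even irrational) edge position, the thresholds $c_q \pm \delta$ are not $2b$-simple, so the closed interval of allowed positions for $p$ need not contain any $2b$-simple point; your invariant ``the constants $c_{p'}$ are still $b$-simple or $2b$-simple'' is simply not maintained for points you have not yet touched. Worse, the one-point-at-a-time scheme can be stuck from the very start: take a cycle of length $9$ and $\delta = \tfrac{9}{4}$, and an optimal $\delta$-dispersed set of four points equally spaced at positions $x,\, x+\tfrac{9}{4},\, x+\tfrac{9}{2},\, x+\tfrac{27}{4}$ along the cycle for an irrational offset $x$. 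Every point is at distance exactly $\delta$ from both of its neighbours, so no single point can be moved at all while the others stay fixed; your potential (number of non-$2b$-simple points) cannot be decreased by any single-point move, yet the set is optimal and none of its points is $2b$-simple. Even on trees, processing in an arbitrary order can pin a point to a degenerate allowed interval consisting of one non-simple position.

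The statement is quoted in the paper from prior work, so there is no in-paper proof to compare against, but the known ways to prove it avoid exactly this trap by moving \emph{all} points simultaneously rather than one at a time: either one fixes the combinatorial structure (which pairs are at distance exactly $\delta$ and via which shortest paths, and the order of points along each edge) and observes that the positions then satisfy a system of linear constraints with integer coefficients and right-hand sides in $\tfrac{1}{b}\mathbb{Z}$ shifted by halves, so that an extreme (basic) feasible solution is $2b$-simple; or one uses a coordinated pushing argument of the kind the paper itself develops in \autoref{sec:rounding}, where whole chains of mutually critical points are shifted with prescribed velocities precisely because a single critical point cannot move alone. Your local analysis of $\lambda \mapsto d(p(u,v,\lambda),q)$ being piecewise linear with slopes $\pm 1$ is correct and is a useful ingredient, but to salvage the proof you must replace the greedy single-point move by such a simultaneous perturbation (or basic-solution) argument.
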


For an introduction into parameterized algorithms, we refer to \cite{DBLP:books/sp/CyganFKLMPPS15}.
We study of the complexity of \dispersion with the natural parameter solution size $k$,
	as well as its dependency on structural measures on the input graph.
Besides treewidth $\tw(G)$ and pathwidth $\pw(G)$,
	we also study the parameters `feedback vertex set size' $\fvs(G)$ and treedepth $\td(G)$.

A graph has a feedback vertex set $W \subseteq V(G)$ if $G$ after removing $W$ contains no cycle.
The `feedback vertex set size' is the size of a smallest feedback vertex set of $G$.

The treedepth of a connected graph $G$ can be defined as follows.
If $G$ is disconnected, it is the maximum treedepth of its components;
If $G$ consists of a single vertex, then $\td(G)=1$;
And else it is one plus the minimum over all $u \in V(G)$ of the treedepth of $G$ without vertex~$u$.

We provide lower bounds for the time-complexity assuming the Exponential Time Hypothesis (ETH):
There is no $2^{o(N)}$-time algorithm for $3$-\problemm{SAT} with $N$ variables and $\Oh(N)$ clauses~{\cite{DBLP:journals/jcss/ImpagliazzoPZ01}.  
For more details on ETH, we refer to~\cite{DBLP:books/sp/CyganFKLMPPS15}.

\section{Translating $\delta$-Dispersion}
\label{sec:translating}

There is an intriguing relation of the optimal solution for distance $\delta$ and $\tfrac{\delta}{2\delta+1}$
	for the similar problem $\delta$-covering~\cite{DBLP:journals/mp/HartmannLW22}.
We may analogously expect that an optimal solution for $\delta$-dispersion translates to an optimal solution for $\tfrac{\delta}{\delta+1}$-dispersion;
	i.e., that an optimal $\delta$-dispersed set corresponds to an optimal $\tfrac{\delta}{\delta+1}$-dispersed set of the same size plus one extra point for every edge.

This is not true for $\delta = 3+\varepsilon$ for any $\varepsilon > 0$:
Consider a triangle, where a $(3+\varepsilon)$-dispersed set $S$ contains at most one point $p$.
Since $\frac{\delta}{\delta + 1} > \tfrac{3}{4}$, a $\frac{\delta}{\delta + 1}$-dispersed set however contains at most $3 < |S|+3$ points.

Causing trouble is a non-trivial closed walk containing $p$ and of length less than $\delta$.
The translating lemma may only apply to a variation of dispersion that is sensitive to such walks,
	a variant which we call \define{auto-dispersion}.
A $\delta$-dispersed set $S \subseteq P(G)$ is \define{$\delta$-auto-dispersed}
	if additionally for every point $p \in S$
	there is no walk from $p$ to $p$ of length $<\delta$ that is locally-injective and non-trivial.
A walk is \define{locally-injective} if, when interpreted as a continuous mapping $f: [0,1] \to P(G)$ from $f(0)=p$ to $f(1)=p$,
	has for every pre-image $c \in (0,1)$ a positive range $\varepsilon > 0$ such that $f$ restricted to the interval $(c-\varepsilon, c+ \varepsilon)$ is injective.

\newcommand{\lemmaTextLimitBAuto}{
	Let \( G \) be a graph and $ \delta > 0$.
	Then $ \dispauto{\delta}(G) = \dispauto{\frac{\delta}{\delta+1}}(G) + |E(G)| .$}

\begin{lemma}
\label{lemma:limit:b:auto}
\lemmaTextLimitBAuto
\end{lemma}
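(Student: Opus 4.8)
The plan is to prove both inequalities between $\dispauto{\delta}(G)$ and $\dispauto{\frac{\delta}{\delta+1}}(G)$ by explicit set transformations. It is convenient to rescale first: multiplying all lengths by $\delta+1$ turns a $\frac{\delta}{\delta+1}$-dispersed set of $G$ into a $\delta$-dispersed set of the graph $G^{(\delta+1)}$ obtained from $G$ by setting every edge length to $\delta+1$, and conversely. Since $G$ is simple, every non-trivial locally-injective closed walk has length at least the girth, which is at least $3$; hence on $G$ the notions of $\frac{\delta}{\delta+1}$-dispersed and $\frac{\delta}{\delta+1}$-auto-dispersed coincide, and on $G^{(\delta+1)}$ the notions of $\delta$-dispersed and $\delta$-auto-dispersed coincide (every edge there has length $\delta+1>\delta$). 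So it suffices to prove $\disp{\delta}(G^{(\delta+1)}) = \dispauto{\delta}(G) + |E(G)|$; the residual ``auto'' on the right is relevant precisely when $\delta>3$, and is exactly what removes the triangle from being a counterexample.

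For $\disp{\delta}(G^{(\delta+1)}) \geq \dispauto{\delta}(G) + |E(G)|$, I would take an optimal $\delta$-auto-dispersed set $S$ of $G$ (normalised via \autoref{lemma:simple:solution} to avoid degenerate coincidences), re-embed it into $G^{(\delta+1)}$ by keeping vertex points fixed and keeping every interior point at the same distance from a fixed endpoint of its edge, and then add one new point to each edge. The re-embedded copy of $S$ is $\delta$-dispersed in $G^{(\delta+1)}$ because stretching edges only increases distances; the real content is that there is always room for the $|E(G)|$ extra points. Each new point is placed in the interior of its edge, so a non-trivial locally-injective closed walk through it has length at least the girth and in particular exceeds $\delta$; and the distance inequalities to be checked are all between points on two edges meeting at a common vertex $w$, where the triangle inequality through $w$, applied to the corresponding points of the $\delta$-dispersed set $S$, supplies the required lower bound $\delta$ — the $\delta$ extra units of length available on each stretched edge being exactly what is needed to place the new point.

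For $\disp{\delta}(G^{(\delta+1)}) \leq \dispauto{\delta}(G) + |E(G)|$, I would take an optimal (hence maximal) $\delta$-dispersed set $T$ of $G^{(\delta+1)}$ and, simultaneously for all edges, delete exactly one point and cut a sub-segment of length $\delta$ out of each edge, shrinking $G^{(\delta+1)}$ back to $G$. On an edge $e$ carrying an interior point $p$ of $T$ one cuts the length-$\delta$ piece of $e$ immediately on one side of $p$ and deletes $p$: since every remaining point of $T$ was at distance $\geq\delta$ from $p$, the $\delta$ units of slack on that side compensate exactly for the $\delta$ units removed, so no distance routed through $e$ drops below $\delta$. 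Maximality of $T$ forces the edges that carry no interior $T$-point to have $T$-points near both endpoints, and for such an edge one instead deletes a near-endpoint point and cuts a middle piece; a short estimate using maximality again shows the surviving distances stay $\geq\delta$. As $T$ restricted to $G$ is then automatically $\delta$-auto-dispersed, this yields the bound.

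The delicate part is the $\leq$ direction: one must organise the deletions so that a distinct point is charged to each edge — a bipartite matching / Hall-type condition that has to be derived from maximality of $T$ — and one must control the interaction of the simultaneous cuts so that no pair of surviving points is brought below distance $\delta$, which needs a careful case analysis of edges that are ``too short'' to carry an interior point of $T$ (that is, when $\delta$ is large compared with a unit edge). The $\geq$ direction is comparatively routine once the triangle-inequality bookkeeping at each vertex is set up.
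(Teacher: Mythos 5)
Your rescaling to the graph $G^{(\delta+1)}$ with edges of length $\delta+1$, and the observation that on both rescaled sides ``auto'' coincides with plain dispersion, is a legitimate reformulation, and your skeleton (one extra point per edge, with exactly $\delta$ units of slack per stretched edge) is the same as the paper's. However, the step you call ``the real content'' of the $\geq$ direction is not supplied by a triangle inequality through a shared vertex applied to the $\delta$-dispersed set $S$, and in fact cannot be: for $\delta>3$ a single point on $K_3$ is $\delta$-dispersed, yet the three extra points cannot be added (the perimeter $3(\delta+1)$ is less than $4\delta$), so any correct argument must use auto-dispersedness of $S$ in the form of walk-distances at vertices --- in the paper this is condition \ref{condition:a2} of \autoref{lemma:auto:dispersion:conditions} together with the $\mathsf{dr}$-machinery --- and your sketch never invokes it for this purpose (your only use of ``auto'' there, that the new points create no short closed walk, is automatic in $G^{(\delta+1)}$ and beside the point). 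Concretely, the re-embedding ``same distance from a fixed endpoint'' piles all the slack at one end of each edge, which is not the right normal form: the paper's construction, rescaled, keeps each edge's two extremal points at their original distances from their respective nearest endpoints and re-spaces the interior. Worse, for edges carrying no point of $S$ you do not say where the new point goes at all; this is exactly where the paper needs its placement $\max\{\tfrac{\delta\rho}{2},\,\delta\rho-\mathsf{dr}_S(u,w_u)\rho\}$ and the positive/neutral/negative case analysis, and a naive choice (say, midpoints) already collides through a shared vertex with the forced near-endpoint position of the new point on an adjacent point-carrying edge once $\delta>3$.

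The $\leq$ direction has a second genuine gap: the claim that the shrunk set is ``automatically $\delta$-auto-dispersed'' is unjustified and is precisely the crux. After cutting the edges back to unit length, non-trivial locally-injective closed walks in $G$ can be as short as the girth, i.e.\ length $3<\delta$, so condition \ref{condition:a2} must be re-verified for the shrunk set; in the paper this is the inductive minimal-counterexample argument establishing $\mathsf{dr}_S(u,v)\geq \mathsf{dr}_{S'}(u,v)\rho^{-1}$. Similarly, reducing to a maximum solution in which every edge carries a deletable interior point is itself a substantial result (\autoref{lemma:delta:1:point} and \autoref{lemma:auto:delta:1:point}, proved by a lengthy exchange-sequence argument), not something recoverable by ``maximality plus a Hall-type condition'' --- and note in passing that \autoref{lemma:simple:solution} is only available for rational $\delta$, while the statement covers all real $\delta>0$. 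So both inequalities are open at exactly the places where the paper does its real work, even though your high-level plan matches the paper's.
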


Fortunately, this translation lemma is still useful for ordinary $\delta$-dispersion.
We have $\dispauto{\delta}(G) = \disp{\delta}(G)$ for $\delta \leq 3$,
	since there is no such non-trivial closed walk 'without a turn within an edge' of length $<3$.
The threshold of $3$ is tight according to the above example with graph~$K_3$.

\begin{corollary}\label{lemma:limit:b}
Let $G$ be a graph and $\delta \in (0,3]$.
Then 
$
\disp{\delta}(G) = \disp{\frac{\delta}{\delta+1}}(G) + |E(G)| .
$
\end{corollary}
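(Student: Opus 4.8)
The plan is to obtain the statement from \autoref{lemma:limit:b:auto} by showing that $\delta$-dispersion and $\delta$-auto-dispersion have the same optimum as long as the distance is at most $3$. Concretely, I would first prove the auxiliary claim
\[
\disp{\mu}(G) \;=\; \dispauto{\mu}(G) \qquad\text{for every } \mu \in (0,3].
\]
Granting the claim, fix $\delta \in (0,3]$ and put $\delta' \coloneqq \tfrac{\delta}{\delta+1}$. Since $\delta > 0$ we have $\delta' \in (0,1) \subseteq (0,3]$, so the claim applies to both $\delta$ and $\delta'$, and therefore
\[
\disp{\delta}(G) \;=\; \dispauto{\delta}(G) \;=\; \dispauto{\delta'}(G) + |E(G)| \;=\; \disp{\delta'}(G) + |E(G)|,
\]
where the middle equality is \autoref{lemma:limit:b:auto}. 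This is exactly the asserted identity, so everything reduces to the claim.

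One direction of the claim is immediate: a $\mu$-auto-dispersed set is in particular $\mu$-dispersed, hence $\dispauto{\mu}(G) \le \disp{\mu}(G)$. For the reverse inequality I would show that when $\mu \le 3$ \emph{every} $\mu$-dispersed set $S$ is automatically $\mu$-auto-dispersed; equivalently, that no point $p \in P(G)$ is the start of a non-trivial locally-injective closed walk of length $<\mu$. Since $\mu \le 3$, this follows from the purely geometric fact that is really the only thing needing an argument: \emph{every non-trivial locally-injective closed walk in $P(G)$ has length at least $3$}.

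To prove this fact, let $f\colon[0,1]\to P(G)$ be such a walk with $f(0)=f(1)=p$. First, $f$ cannot avoid all vertices of $G$: otherwise its (connected) image would lie inside a single open edge, and a locally-injective map from a circle into a line segment is constant, contradicting non-triviality. Hence $f$ passes through a vertex, and listing its vertex visits in order yields a walk $W$ in $G$. Between consecutive vertex visits $f$ crosses one whole edge (it may not turn around inside an edge), while the at most two partial edge-traversals of $f$ at the start and at the end together make up exactly one further full traversal of the edge containing $p$ (and nothing if $p$ is itself a vertex) — here I would use local injectivity at the basepoint to see that $f$ leaves and re-enters $p$ from opposite sides. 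Thus $\operatorname{length}(f)$ equals the number of edge-steps of the closed walk traced in $G$, which is at least $1$; and such a closed walk of length $1$ or $2$ is impossible, since the former is a loop and the latter traverses a single edge and immediately returns along it, a forbidden turn at a vertex (this is where simplicity and looplessness of $G$ enter). Hence $f$ has length at least $3$.

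I expect the geometric fact of the previous paragraph to be the main obstacle: one has to pin down the meaning of ``locally injective'' at the basepoint of the closed walk, account carefully for the two partial edge-traversals at the ends of $f$, and invoke that $G$ is simple and loopless in order to rule out the short back-and-forth walks of length $1$ and $2$. Once the fact is established, the reduction via \autoref{lemma:limit:b:auto} in the first paragraph is routine, and the tightness of the constant $3$ is precisely the $K_3$ example described just before the statement.
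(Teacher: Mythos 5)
Your overall route is exactly the paper's: the corollary is deduced from \autoref{lemma:limit:b:auto} by showing $\disp{\mu}(G)=\dispauto{\mu}(G)$ for every $\mu\in(0,3]$ and applying this to both $\delta$ and $\tfrac{\delta}{\delta+1}$; the paper justifies that equality by precisely the geometric fact you isolate (no non-trivial locally-injective closed walk has length $<3$), which it asserts with only a one-line remark, so your contribution is to actually prove it.

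There is, however, one step in your proof of that fact that does not work as stated. You appeal to ``local injectivity at the basepoint'' to conclude that $f$ leaves and re-enters $p$ from opposite sides, but the paper's definition quantifies only over pre-images $c\in(0,1)$; it imposes no condition at $c=0,1$. Consequently a locally-injective closed walk may leave $p=p(u,v,\lambda)$ towards $u$ and also return from the $u$-side (for instance $p\to u$, around a cycle attached at $u$, back to $p$), and for such walks your accounting ``the length of $f$ equals the number of edge-steps of the traced closed walk in $G$'' fails, since the two partial traversals then lie on the same half of the edge and do not combine into a full traversal. The conclusion is nevertheless safe: in that case the portion of $f$ between its first and last vertex visit is a closed walk based at a vertex, and local injectivity at interior times forces it to leave every intermediate vertex by a different edge than it entered, so (as $G$ is simple and loopless) it cannot have length $1$ or $2$ and hence has length at least $3$; the two positive partial traversals only add to this. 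So the bound $\geq 3$, and with it the corollary, holds in all cases — the gap is genuine but local, and is repaired by splitting into the opposite-side case (your counting) and the same-side case (reduction to a vertex-based closed walk).
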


\section{Dispersion and Independent Set}
\label{section:relationship:with:dis}

\newcommand{\problemdefSimple}[3]{
\medskip
\begin{tabularx}{\textwidth}{ r X p{0.5cm} }
\multicolumn{2}{l}{#1} \\
Input: & #2 \\	
Question: & #3
\end{tabularx}
}

\newcommand{\ddis}[1]{\problemm{Distance $#1$ Independent Set}}
\newcommand{\dis}{\problemm{Distance Independent Set}}
\newcommand{\ddisdef}{
\problemdefSimple{\ddis{d}}{
A graph $G$ and an integer $k$.}{
Is there a subset $I \subseteq V(G)$ of size $|I|\geq k$
	such that each distinct $u,v \in I$ have distance at least $d$?
}
}

To solve \dispersion we can borrow from algorithmic results from a
	generalized independent set problem.
A classical independent set is a set of \stress{vertices}
	where each two elements have to be at least~$2$ apart from each other (when we consider that the edges have unit length).
In a $2$-dispersed set also each two elements need to be at least~$2$ apart from each other,
	though the set contains a set of \stress{points} of the graph.

To generalize the independent set problem,
	we may ask that the vertices are not~$2$ apart but some integer~$d$ apart from each other.
Such a generalization for independent set
	is called a \define{distance-$d$ independent} set or \define{$d$-scattered set}.
They have been studied by Eto~et~al.~\cite{DBLP:journals/jco/EtoGM14} and Katsikarelis et al.~\cite{DBLP:journals/dam/KatsikarelisLP22}.

Let $\alpha_d(G)$ be the maximum size of a distance $d$ independent set, for a graph~$G$ and integer~$d$.
We relate $\delta$-dispersion to
	$\alpha_d$. 
We consider the \define{$c$-subdivision of a graph $G$},
	denoted as $G_c$,
	which is the graph~$G$ where every edge is replaced by a path of length $c$,
	for some integer~$c\geq 1$.

\newcommand{\lemmaTextSubdivisionIntoIS}{
Consider integers $a,b$ and a $2b$-subdivision $G_{2b}$ of a graph $G$.
Then $\disp{\tfrac{a}{b}}(G) = \alpha_{2a}(G_{2b})$.
}

\begin{lemma}
\label{lemma:subdivision:into:is}
\lemmaTextSubdivisionIntoIS
\end{lemma}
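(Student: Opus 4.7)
The plan is to combine the two earlier lemmas (\ref{lemma:subdivide} and \ref{lemma:simple:solution}) with the observation that $2b$-simple points in $G$ are in one-to-one correspondence with the vertices of the $2b$-subdivision $G_{2b}$. Under this correspondence, an edge $\{u,v\}$ of $G$ becomes a path $u = w_0, w_1, \ldots, w_{2b} = v$ in $G_{2b}$, and the $2b$-simple point $p(u,v,\tfrac{i}{2b})$ corresponds exactly to the vertex $w_i$.

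For the inequality $\disp{\tfrac{a}{b}}(G) \leq \alpha_{2a}(G_{2b})$, I would start with an arbitrary optimal $\tfrac{a}{b}$-dispersed set $S$ in $G$, and invoke \autoref{lemma:simple:solution} to assume without loss of generality that $S$ is $2b$-simple. By the correspondence above, each point of $S$ becomes a vertex of $G_{2b}$, giving a vertex set $I \subseteq V(G_{2b})$ of the same cardinality. By \autoref{lemma:subdivide}, the pairwise continuous distance in $G_{2b}$ between these points is at least $2a$; but since they are vertices, continuous distance and graph distance coincide, so $I$ is a distance-$2a$ independent set in $G_{2b}$.

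For the reverse inequality $\alpha_{2a}(G_{2b}) \leq \disp{\tfrac{a}{b}}(G)$, I would take a maximum distance-$2a$ independent set $I$ in $G_{2b}$. Viewed as a subset of $P(G_{2b})$, its pairwise continuous distance equals its graph distance (again because the elements are vertices), so $I$ is a $2a$-dispersed set in $G_{2b}$. Applying \autoref{lemma:subdivide} in the other direction, this yields a $\tfrac{a}{b}$-dispersed set in $G$ of the same size.

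The only subtle point is the identification of continuous distance with graph distance when both endpoints are vertices of $G_{2b}$; this is essentially by definition, but it is worth stating explicitly, because elsewhere in the paper continuous distance strictly underestimates what one might naively expect. Apart from this, the argument is a direct chain of applications of the two cited lemmas, so I expect no real obstacle beyond writing down the point-to-vertex correspondence carefully.
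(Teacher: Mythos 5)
Your proposal is correct and follows essentially the same route as the paper: both directions rest on \autoref{lemma:subdivide} together with \autoref{lemma:simple:solution} and the identification of simple points with vertices of the subdivision, where graph distance and continuous distance coincide. The only cosmetic difference is that you invoke $2b$-simplicity directly in $G$, whereas the paper passes through the intermediate $b$-subdivision $G_b$ and uses half-integral points there; the underlying argument is the same.
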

\begin{proof} 
Consider the $b$-subdivision $G_b$ of $G$.
Then $G_{2b}$ is a $2$-subdivision of~$G_b$.
We know that $\disp{\tfrac{a}{b}}(G) = \disp{2a}(G_{2b})$ from~\cref{lemma:subdivide}.
Hence it remains to show $\disp{2a}(G_{2b}) = \alpha_{2a}(G_{2b})$.

Clearly, a distance-$2a$ independent set $I \subseteq V(G)$ is also a $2a$-dispersed set.
For the reverse direction,
	assume there is a $2a$-dispersed set $I_{2a}$ of $G_{2b}$.
Then $I_{2a}$ corresponds to an $a$-dispersed set $I$ of $G_b$ of same size, according to \cref{lemma:subdivide}.
Since $a$ is integer, we may assume that $S$ contains only half-integral points,
	hence points with edge position from $\{0,\tfrac{1}{2},1\}$, according to~\autoref{lemma:simple:solution}.
Let $G_{2b}$ result from $G_b$ by replacing each edge $\{u,v\}$ by a path $u w_{u,v} v$.
Then let $I \subseteq V(G_{2b})$ consist of vertex $u \in V(G)$ with a point in $S$
	and every $w_{u,v}$ for every point $p(u,v,\tfrac{1}{2}) \in S$.
Then $I$ is a distance $2a$ independent set of $G_{2b}$ of size ${|I|=|S|}$.	
\end{proof}

Thus to solve \dispersion for $\delta = \tfrac{a}{b}$ we can use algorithms for distance $d$ independent set.
For rationals $\tfrac{a}{b}$ with small values of $a$ and~$b$ this possibly leads to efficient algorithms.
For example, a distance $d$ independent set on graphs width treewidth $\tw(G)$ (and a given tree decomposition) can be found in time $d^{\tw(G)} n^{\Oh(1)} $, see \cite{DBLP:journals/dam/KatsikarelisLP22}.
``Simply'' subdivide the edges of the input graph sufficiently often,
	which does not increase the treewidth of the considered graph.
To find a $\tfrac{a}{b}$-dispersed set in a graph~$G$, we can search for distance $2a$ independent set the $2b$-subdivision of~$G$.

\begin{corollary}
\label{lemma:naive:tw:algo}
There is an algorithm that, given a rational distance $\tfrac{a}{b} > 0$ and a graph \( G \), a tree decomposition of width $\tw(G)$,
	computes a maximum \( \tfrac{a}{b}  \)-dispersed set \( S \) in time $(2a)^{\tw(G)} (bn)^{\Oh(1)}$.
\end{corollary}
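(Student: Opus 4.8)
The plan is to reduce the problem to \ddis{2a} on the $2b$-subdivision $G_{2b}$ and then run the known treewidth-based algorithm for \ddis{d}~\cite{DBLP:journals/dam/KatsikarelisLP22}. By \autoref{lemma:subdivision:into:is} we have $\disp{\tfrac{a}{b}}(G) = \alpha_{2a}(G_{2b})$, and — just as importantly — the argument behind that lemma, together with \autoref{lemma:subdivide}, is fully constructive: from a distance-$2a$ independent set $I$ of $G_{2b}$ one reads off in polynomial time a $\tfrac{a}{b}$-dispersed set of $G$ of the same cardinality, and conversely. Hence it suffices to (i) build a tree decomposition of $G_{2b}$ of small width, (ii) invoke the \ddis{d} algorithm with $d = 2a$ on $G_{2b}$, and (iii) translate the resulting independent set back.

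For step (i), note that $G_{2b}$ has $n + (2b-1)|E(G)| = \Oh(b n^2)$ vertices, which is $(bn)^{\Oh(1)}$. From the given width-$\tw(G)$ tree decomposition of $G$ we produce one of $G_{2b}$ of width $\max(\tw(G),2)$ in polynomial time: for every edge $\{u,v\} \in E(G)$, pick a bag $B$ with $\{u,v\} \subseteq B$ and attach to its node a path of new nodes whose bags are $\{u,v,w_1\}, \{v,w_1,w_2\}, \dots, \{v, w_{2b-1}\}$, where $u\,w_1 \cdots w_{2b-1}\,v$ is the path replacing $\{u,v\}$ in $G_{2b}$; one checks that this yields a valid tree decomposition whose only new bags have size at most $3$. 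Since a connected graph with $\tw(G) \le 1$ is a tree, on which \dispersion is solvable in polynomial time~\cite{Tamir1991}, we may assume $\tw(G) \ge 2$, so the obtained decomposition has width exactly $\tw(G)$.

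For step (ii), the algorithm of Katsikarelis et al.~\cite{DBLP:journals/dam/KatsikarelisLP22}, run on $H = G_{2b}$ with the tree decomposition above and distance $d = 2a$, computes a maximum distance-$2a$ independent set in time $(2a)^{\tw(G)} |V(H)|^{\Oh(1)} = (2a)^{\tw(G)} (bn)^{\Oh(1)}$; standard backtracking over the dynamic program also returns the set itself. Translating it back as in the first paragraph adds only polynomial overhead, which gives a maximum $\tfrac{a}{b}$-dispersed set within the claimed bound. The only points that need care are the polynomial blow-up of the vertex count from $n$ to $\Oh(bn^2)$ in passing to $G_{2b}$ (harmless, as it stays $(bn)^{\Oh(1)}$) and the fact that subdividing edges does not raise treewidth above $\max(\tw(G),2)$ — this is what keeps the exponential factor at $(2a)^{\tw(G)}$ rather than at $(2a)^{\tw(G_{2b})}$ for a potentially larger parameter.
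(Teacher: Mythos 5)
Your proposal is correct and follows essentially the same route as the paper: subdivide each edge $2b$ times (which keeps the treewidth at $\max(\tw(G),2)$ and blows the vertex count up only polynomially in $bn$), use \autoref{lemma:subdivision:into:is} to identify $\disp{\tfrac{a}{b}}(G)$ with $\alpha_{2a}(G_{2b})$, and run the $d^{\tw}n^{\Oh(1)}$ distance-$d$ independent set algorithm with $d=2a$. The extra details you supply (the explicit tree decomposition of the subdivision and the constructive back-translation) are fine and only make the paper's sketch more precise.
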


However, in general this constitutes a possibly exponential increase of the input size.
While in the input of $\tfrac{a}{b}$-dispersion encodes $a$ and $b$ in binary,
	the subdivided graph essentially encodes~$b$ in unary. 
Further, if $\delta$ is irrational, we do not have a suitable subdivision at all.

\section{Rounding the Distance}
\label{sec:rounding}

For a given graph $G$ and distance $\delta$,
	we state a rational $\delta^\star \geq \delta$
	such that $\disp{\delta}(G) = \disp{\delta^\star}(G)$.
Our proof is constructive.
We give a procedure that efficiently transforms a $\delta$-dispersed set into a $\delta^\star$-dispersed set.
The guaranteed rational $\delta^\star$ has a numerator bounded by the longest path in $G$ (or just $n$ as an upper bound thereof).
It is independent of the precise structure of the given graph.

To give some intuition:
Generally there is some leeway for $\delta$.
For example, in a star $K_{1,k}$, $k\geq 1$ for every $\delta \in (1,2]$ the optimal solution puts a point on every leaf yielding a $\delta$-dispersed set of size $k$.
Hence for instance $\disp{\tfrac{3}{2}}(K_{1,k}) = \disp{2}(K_{1,k})$.
However, for $\delta>2$ only one point can be placed,
	such that $\disp{2}(K_{1,k}) \neq \disp{(2+\varepsilon)}(K_{1,k})$ for every $\varepsilon>0$.
	
So what $\delta^\star$ can be guaranteed such that $\disp{\delta}(G) = \disp{\delta^\star}(G)$?
An illustrative example is a path of length $6$.
Then $\disp{\tfrac{15}{11}}(G) = 5 = \disp{\tfrac{3}{2}}(G)$.
For $\delta = \tfrac{15}{11}$ tightly packing $5$ points allows to have a space of size $\tfrac{6}{11}$ at either end of the path, not enough to place another point.
However, placing $5$ points in distance $\delta=\tfrac{3}{2}$ allows no leeway;
	$\delta$ is (already) a divisor of $6$, the length of the considered path.
Distance $\delta^\star$ relies on $L$, the length of the longest (not-induced) path in $G$.
We have to take into account that $\delta$ might divide any path of length $\leq L$.
Our $\delta^\star$ is the smallest rational $\tfrac{a^\star}{b^\star}$ where the numerator $a^\star \leq 2L+2$.
In other words, the inverse of $\delta^\star$ is the next rational number in the Farey sequence of order $2L+2$.

\begin{theorem}
	\label{lemma:rounding}
	Let \( \delta \in \mathbb{R}^{+} \).
	Let $L$ be an upper bound on the length of paths in $G$.
	Let \( \delta^\star = \frac{a^\star}{b^\star} \geq \delta \) minimal for $a^\star \leq 2L+2$ and \( b^\star \in \mathbb{N} \).
	Then \( \disp{\delta}(G) = \disp{ \delta^\star }(G) \).
\end{theorem}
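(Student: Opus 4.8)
The plan is to argue both inequalities. The direction $\disp{\delta}(G) \le \disp{\delta^\star}(G)$ is the whole point of the rounding procedure: starting from an arbitrary $\delta$-dispersed set $S$, I want to produce a $\delta^\star$-dispersed set $S^\star$ of the same cardinality. The reverse direction $\disp{\delta^\star}(G) \ge \disp{\delta}(G)$, wait — I mean $\disp{\delta^\star}(G) \le \disp{\delta}(G)$, is trivial: since $\delta^\star \ge \delta$, any $\delta^\star$-dispersed set is immediately $\delta$-dispersed, so no work is needed there. So the entire content is the forward transformation.

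First I would understand the structure of a tightly packed $\delta$-dispersed set. By \autoref{lemma:simple:solution} we may take $S$ to be $2b$-simple for $\delta = \tfrac{a}{b}$; more usefully, I think of the points of $S$ as lying on a union of "chains" — maximal runs of consecutive points within edges and across vertices whose pairwise distances along the chain are exactly $\delta$ (or tightly constrained). The key geometric fact is that whatever prevents us from increasing $\delta$ to $\delta^\star$ must be some path (not necessarily induced, of length at most $L$) along which points of $S$ are packed at distance exactly $\delta$ with no slack at the ends — this is exactly the "path of length $6$" phenomenon in the path-of-length-$6$ example, where $\delta = \tfrac32$ divides $6$ and there is zero leeway. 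So I would make precise the claim: if $S$ is a maximum $\delta$-dispersed set, then the only obstructions to perturbing the points slightly outward are equalities of the form $\ell \delta = (\text{distance realized along some walk of length} \le L)$, equivalently $\delta = \tfrac{m}{\ell}$ with $m \le L$ (or $\le 2L+2$ once half-integral positions and the two-sided nature of the slack are accounted for — this is where the factor $2$ and the $+2$ come from).

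The heart of the argument, and the step I expect to be the main obstacle, is the re-positioning itself: given $S$, construct $S^\star$ as a function that moves each point $p(u,v,\lambda)$ to $p(u,v,\lambda^\star)$ where the new edge-positions $\lambda^\star$ are themselves $b^\star$-simple, and verify that \emph{every} pairwise distance is still at least $\delta^\star$. The natural approach is an LP / continuity argument: consider the polytope of all placements of $|S|$ labeled points that keeps the "combinatorial type" of $S$ fixed (which edge each point sits on, and the cyclic/linear order of points on shared structure) and satisfies $d(p_i,p_j) \ge \delta$ for all pairs. This polytope is nonempty (it contains $S$) and is cut out by linear inequalities in the $\lambda$-variables with small integer coefficients. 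I would then show that within this polytope one can increase the common lower bound $\delta$ to $\delta^\star$: either some pair-distance constraint is slack, in which case there is room to push, or the binding constraints force $\delta$ to be a rational with bounded numerator, and by minimality of $\delta^\star$ in the Farey sequence of order $2L+2$ there is no rational strictly between $\delta$ and $\delta^\star$ with numerator $\le 2L+2$ — hence the binding value can only be $\delta^\star$ itself or larger, and we may take $S^\star$ to be a vertex (or suitable rational point) of the scaled polytope. The delicate points will be (i) handling points that lie exactly on vertices, where the notion of "edge position" and of "which edge" degenerates and several edges interact; (ii) ensuring the bound $L$ on path length is what controls the numerator — one must check that any chain of tightly packed points corresponds to a genuine path (or closed walk, but for $\delta \le 3$ closed short walks are excluded, cf. the discussion before \autoref{lemma:limit:b}, and for larger $\delta$ a separate bookkeeping is needed) of length $\le L$ in $G$; and (iii) confirming that the transformation is genuinely efficient, i.e. that the polytope and the Farey successor can be computed in polynomial time. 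Once the re-positioning is shown to preserve cardinality and achieve dispersion $\delta^\star$, combined with the trivial reverse inequality we conclude $\disp{\delta}(G) = \disp{\delta^\star}(G)$.
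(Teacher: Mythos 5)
Your overall frame is right: the direction $\disp{\delta^\star}(G)\leq\disp{\delta}(G)$ is trivial, and the content is transforming a $\delta$-dispersed $S$ into an equally large $\delta^\star$-dispersed set, with the Farey-minimality of $\delta^\star$ closing the argument once one knows the extremal dispersion value is a rational with numerator at most $2L+2$. But that last fact is exactly where your proposal stops being a proof. You write that at the critical value ``the binding constraints force $\delta$ to be a rational with bounded numerator,'' and you flag as a ``delicate point'' that a chain of tight pairs corresponds to a genuine path of length $\leq L$ in $G$ --- yet this is not a delicate side issue, it is the entire theorem, and it is the part the paper spends Section~\ref{sec:rounding} and most of the appendix proving. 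An LP/vertex argument over a fixed combinatorial type only tells you that the critical $\delta$ satisfies an equation $m\delta\in\mathbb{Z}+\{0,\tfrac12\}$ where $m$ counts tight constraints in a chain; without further work $m$ is only bounded by something like $|S|$, which gives a numerator bound depending on the solution size, not $2L+2$. To get $2L+2$ one must show that a chain of $\delta$-critical pairs yields a \emph{simple} path in $G$ of length roughly $m\delta/2$, and this is genuinely hard: consecutive shortest paths between critical pairs can overlap (the ``accordion'' effect of \autoref{figure:accordion}), so distances along a chain partially cancel rather than add, and chains can be anchored at vertices, edge midpoints, or pivot points, which is where the factor $2$ and the $+2$ really come from. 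The paper handles this with the spine/velocity/sign machinery, \autoref{lemma:oriented:path} (a spine of net velocity $z$ forces a path of length about $z\delta$ in $G$, via the disjointness analysis of the overlapping shortest paths), and \autoref{lemma:no:high:velocity}; none of this is replaced by anything in your sketch.

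Two further gaps in the plan as written: you would also need a consistency statement when a point is reachable along several tight chains from different anchors (the paper's \autoref{lemma:equal:potential}, whose proof again relies on the long-path bound), and, since the combinatorial type changes as $\delta$ grows (points crossing vertices or midpoints, shortest paths switching), you need either a termination/potential argument as in \autoref{lemma:overview} or a compactness argument showing the supremal $\delta$ for fixed cardinality is attained and then re-running the rigidity analysis at that supremum. The compactness route is a legitimate non-constructive alternative to the paper's potential function (the theorem itself does not require efficiency), but it does not spare you the core rigidity-implies-small-numerator lemma. As it stands, the proposal is a reasonable plan whose central step is asserted rather than proved.
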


Clearly, a \( \delta^\star \)-dispersed set $S^\star$, is also \( \delta \)-dispersed, since $\delta^\star \geq \delta$.
We have to show the reverse direction.
Consider a \( \delta \)-dispersed set $S$ (of size $|S|\geq 2$) of a connected graph $G$ that is not \( \delta^\star \)-dispersed, hence $\delta$ is irrational or is equal to $\frac{a}{b}$ for some co-prime $a,b$ with \( a > 2L+2 \).

\medskip

In the following we develop our rounding procedure that shows the reverse direction.
Our presentation aims to be accessible by starting from the core algorithmic idea
	from which we unravel all involved technical concepts piece by piece.
The detailed proofs are placed in the appendix.

\subsection{Overview}

Our rounding procedure repeatedly applies a pushing algorithm to the current point set $S$.
We show that each such step strictly decreases a polynomially bounded potential $\Phi: P(G) \to \N$.

\begin{theorem}
\label{lemma:overview}
Suppose that there is an algorithm, that given a $\delta$-dispersed set $S$ with $\delta < \delta^\star$
	computes an $\varepsilon > 0$ and a $(\delta + \varepsilon)$-dispersed set $S_{\varepsilon}$ of size $|S_\varepsilon| = |S|$
	that satisfies $\Phi(S) > \Phi(S_\varepsilon)$
	for some polynomially bounded potential $\Phi: P(G) \to \N$.
Then \autoref{lemma:rounding} follows.
\end{theorem}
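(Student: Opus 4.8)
The plan is to iterate the hypothesised single-step algorithm until its precondition fails, and then read off that the failure means the target dispersion has been reached. The inequality $\disp{\delta^\star}(G) \le \disp{\delta}(G)$ is free: $\delta^\star \ge \delta$ makes every $\delta^\star$-dispersed set $\delta$-dispersed. So I only need $\disp{\delta^\star}(G) \ge \disp{\delta}(G)$, and I would first clear the degenerate cases — if $\delta = \delta^\star$ there is nothing to do, and if $\disp{\delta}(G) \le 1$ then a single point of $P(G)$ (or the empty set) is $\delta^\star$-dispersed, so the inequality is immediate. Hence assume $\delta < \delta^\star$ and fix an optimal $\delta$-dispersed set $S$ with $|S| = \disp{\delta}(G) \ge 2$; in this case two points of $S$ lie at distance $\ge \delta$, so $\delta$ is at most the (finite) diameter of $P(G)$, which keeps $\delta^\star$ well defined.

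Next I would run the loop. Put $S_0 := S$; given $S_i$, let $\delta_i := \min_{p \ne q \in S_i} d(p,q)$ be the \emph{exact} dispersion of $S_i$, so $S_i$ is $\delta_i$-dispersed. If $\delta_i \ge \delta^\star$, halt. Otherwise $\delta_i < \delta^\star$, so the hypothesised algorithm, fed the $\delta_i$-dispersed set $S_i$, produces an $\varepsilon_i > 0$ and a $(\delta_i + \varepsilon_i)$-dispersed set $S_{i+1}$ with $|S_{i+1}| = |S_i|$ and $\Phi(S_{i+1}) < \Phi(S_i)$, where I write $\Phi(T) := \sum_{p \in T}\Phi(p)$ for a finite point set $T$. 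The point of feeding the algorithm the exact dispersion $\delta_i$ rather than the weaker bound $\delta$ is that this forces $\delta_{i+1} \ge \delta_i + \varepsilon_i > \delta_i$, so the dispersion strictly increases every round.

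To conclude, I would invoke the potential. The values $\Phi(S_0) > \Phi(S_1) > \cdots$ form a strictly decreasing sequence of natural numbers, and $\Phi(S_0)$ is finite — indeed polynomially bounded by the hypothesis on $\Phi$ — so the loop halts after finitely many rounds, at a set $S_m$. It halts only when $\delta_m \ge \delta^\star$, so $S_m$ is $\delta^\star$-dispersed; and every round preserves cardinality, so $|S_m| = |S_0| = \disp{\delta}(G)$. Therefore $\disp{\delta^\star}(G) \ge \disp{\delta}(G)$, which combined with the trivial direction is exactly \autoref{lemma:rounding}.

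I expect no real obstacle inside this reduction — it is the standard fact that a strictly decreasing, $\N$-valued, polynomially bounded potential cannot decrease indefinitely. All the genuine difficulty sits in the hypothesis of \autoref{lemma:overview}: constructing the pushing algorithm together with a suitable $\Phi$, and in particular proving that whenever $\delta < \delta^\star$ (equivalently, $\delta$ is irrational or is $\tfrac ab$ with $a > 2L+2$) there is always slack to separate every pair of points a little further while strictly lowering $\Phi$, with that slack witnessed by a path of length at most $L$ in $G$. Within the present argument the only care required is the bookkeeping flagged above: passing the exact $\delta_i$ as input, handling $|S| \le 1$ and $\delta = \delta^\star$ separately, and checking that $\delta^\star$ exists.
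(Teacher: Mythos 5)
Your proposal is correct and follows essentially the same route as the paper: iterate the hypothesised pushing step, observe that the $\N$-valued, polynomially bounded potential strictly decreases at every application and hence the iteration must terminate, and note that termination can only happen once the current (exact) dispersion has reached $\delta^\star$, while cardinality is preserved throughout; the converse inclusion $\disp{\delta^\star}(G)\le\disp{\delta}(G)$ is immediate from $\delta^\star\ge\delta$. Your extra bookkeeping (feeding the exact dispersion $\delta_i$, discarding the cases $\delta=\delta^\star$ and $|S|\le 1$, and checking $\delta^\star$ exists) is sound and only makes explicit what the paper leaves implicit.
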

\begin{proof}
Let $S$ be a $\delta$-dispersed set.
Apply the assumed algorithm to obtain a $\varepsilon > 0$ and a $(\delta + \varepsilon)$-dispersed set $S_{\varepsilon}$ of size $|S_\varepsilon| = |S|$.
If $\delta = \delta^\star$, we reached our goal.
Else we apply the assumed algorithm again.
Since the potential $\Phi: P(G) \to \N$ decreases for $S_{\varepsilon}$ compared to $S$
	and $\Phi$ is polynomially bounded,
	we have to reach $\delta^\star$ in polynomial many steps.
\end{proof}

In the remainder of this section we will develop such an algorithm.
It pushes the points of point set $S$ away from each other
	such that their pairwise distance increases from `at least $\delta$' to `at least $\delta + \varepsilon$'.
We choose $\varepsilon \geq 0$ as a large as possible limited by some events.
Either we already reach $\delta + \varepsilon = \delta^\star$, hence we reached our goal,
	or at least one of three events occurs.
We will specify these events in the course of this section.
These events mean that one pushing step, i.e., one step for \autoref{lemma:overview} terminated.
All the following preparations for such a pushing step start anew. 

We make sure that our potential $\Phi: P(G) \to \N$ decreases when an event occurs.
Each of the three events has a corresponding partial potential $\Phi_1(S)$, $\Phi_2(S)$ and $\Phi_3(S)$.
They define the overall potential as $\Phi(S) \coloneqq \Phi_1(S) + \Phi_2(S) + \Phi_3(S)$.
Each part never increases.
Whenever event $i$ occurs, $\Phi_i(S)$ strictly decreases.

We denote a pair of points $\{p,q\}$ from our given point set $S$ as $\delta$-\define{critical},
	if they have distance exactly $\delta$.
Hence the critical pairs of points are exactly those that we need to push away from each other.
At the same time we make sure that, once $\{p,q\}$ are $\delta$-critical,
	they never turn uncritical again, i.e., they are $(\delta+\varepsilon)$-critical in the next step.
An uncritical pair of points $\{p,q\}$ might become critical,
	hence we have to take care of $\{p,q\}$ in future steps.
This constitutes our first event.
The corresponding partial potential is $\Phi_1(S)$,
	the number of uncritical pairs of points $\{p,q\}$.
\begin{quote}
(\labeltext{Event 1}{event:1:n})
	A $\delta$-uncritical pair of points $\{p,q\}$ becomes $(\delta+\varepsilon)$-critical.
\end{quote}
$$
\Phi_1(S) \;\coloneqq\; \big|\big\{ \{p,q\} \in \textstyle\binom{S}{2} \mid \{p,q\} \text{ are not $\delta$-critical}  \big\}\big|
	\;\leq\; |S|^2
$$

\begin{figure}[t]
\begin{center}
\begin{subfigure}[b]{0.25\textwidth}
\begin{tikzpicture}
[scale=1.5,auto=left, node/.style={rectangle,fill=white, draw, scale=0.5},
	point/.style={circle,fill=black, scale=0.5}]
	
	\node [node] (ll) at (0,0) {};
	
	\node [node] (l) at (1,0) {};
	\node [node] (t) at (1.71,{sin 30}) {};
	\node [right] at (t.east) {$v$};
	\node [node] (b) at (1.71,-{sin 30}) {};
	\node [right] at (b.east) {$u$};

	\points{p}{l}{ll}{0.6}	
	\pointe{q}{b}{t}{0.4}
	
	\foreach \from/\to in {ll/l,l/t,l/b,t/b}
	\draw (\from) -- (\to);
\end{tikzpicture}
\end{subfigure}
\hfill
\begin{subfigure}[b]{0.32\textwidth}
\begin{tikzpicture}
[scale=1.5,auto=left, node/.style={rectangle,fill=white, draw, scale=0.5},
	point/.style={circle,fill=black, scale=0.5}]
	
	\node [node] (ll) at (0,0) {};
	
	\node [node] (l) at (1,0) {};
	\node [node] (t) at (1.71,{sin 30}) {};
	\node [node] (b) at (1.71,-{sin 30}) {};
	\node [node] (r) at (2.42,0) {};

	\points{p_0}{ll}{l}{0.55}
	\points{p_1}{l}{b}{0.70}	
	\points{p_2}{b}{r}{0.85}
	\pointn{p_3}{l}{t}{0.90}

	\foreach \from/\to in {ll/l,l/t,l/b,t/r,r/b}
	\draw (\from) -- (\to);
\end{tikzpicture}
\end{subfigure}
\hfill
\begin{subfigure}[b]{0.39\textwidth}
\begin{tikzpicture}
[scale=1.5,auto=left, node/.style={rectangle,fill=white, draw, scale=0.5},
	point/.style={circle,fill=black, scale=0.5}]
	
	\node [node] (ll) at (0,0) {};
	
	\node [node] (l) at (1,0) {};
	\node [node] (t) at (1.71,{sin 30}) {};
	\node [node] (tt) at (2.71,0.5) {};
	\node [above] at (tt.north) {$v$};
	\node [node] (b) at (1.71,-{sin 30}) {};
	\node [node] (bb) at (2.71,-0.5) {};
	\node [below] at (bb.south) {$u$};

	\points{p_0}{ll}{l}{0.7}
	\points{p_1}{l}{b}{0.3}
	\points{p_2}{l}{b}{0.9}
	\points{p_3}{b}{bb}{0.5}
	\pointe{p_4}{bb}{tt}{0.1}
	\pointe{p_5}{bb}{tt}{0.7}
	\pointn{p_6}{tt}{t}{0.3}
	\pointn{p_7}{tt}{t}{0.9}
	\pointn{p_8}{t}{l}{0.5}

	\foreach \from/\to in {ll/l,l/t,l/b,t/tt,tt/bb,bb/b}
	\draw (\from) -- (\to);
\end{tikzpicture}
\end{subfigure}
\end{center}
\caption{
(left)
Consider critical points $\{p,q\}$ (points depicted as black dots; vertices as white squares).
If we move $q$ away from $p$ by $\varepsilon \geq 0$,
their distance increases by $\varepsilon$ until $q$ reaches the half-integral point $p(u,v,\tfrac{1}{2})$.
(middle)
Let $\{p_{i},p_{i-1}\}$ be critical for $i \geq 1$.
Consider moving $p_i,i\geq 1$ by $i\varepsilon$ away from $p_{i-1}$.
Once $p_2$ becomes half-integral, points $\{p_3,p_0\}$ become also critical, hence
we cannot continue to move points in the same way.
This happens when a point in $S$ becomes half-integral or $\dots$
(right)
$\dots$ a point half-way between two points in $S$ becomes half-integral,
as in this example between $p_4,p_5$.
We say $p_4,p_5$ \define{witness} the \define{pivot} $p(u,v,\tfrac{1}{2})$.}
\label{figure:start}
\end{figure}
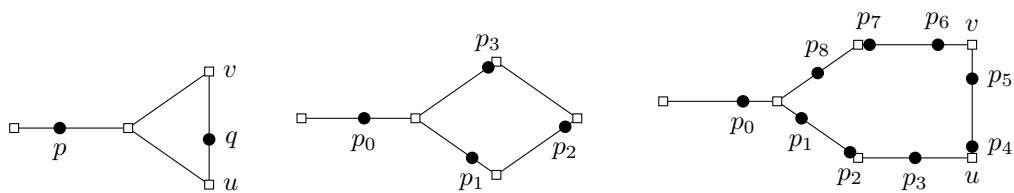

\subsection{Coordination of Movement}

We need to coordinate the movement of all critical pairs of points.
To this end, we will fix some set of \define{root points} $R$.
Our movement will be locally prescribed for sequences of points $p_0,p_1,\dots,p_s$
	that originate in $p_0 \in R$ and where each $\{p_0,p_1\}, \dots ,\{p_{s-1},p_s\}$ is critical.
The overall movement will be uniquely defined by movement defined for these sequences.

For now, consider such a sequence of points $p_0,p_1,p_2,\dots$.
Our idea is to do not move $p_0$,
	to move $p_1$ by distance $\varepsilon$ away from $p_0$,
	point $p_2$ by distance~$2\varepsilon$ away from $p_1$ and so on.
We have to stop pushing in this way
	as soon as one of the points, say $p_i$, becomes half-integral,
	i.e., $p_i$ is moved onto a vertex or the midpoint of an edge.
See \autoref{figure:start} for examples.
This constitutes the second event.
\begin{quote}
(\labeltext{Event 2}{event:2:n})
A non-half-integral $p \in S$ becomes half-integral.
\end{quote}
$$
\Phi_2(S) \;\coloneqq\; \big|\big\{ p \in S \mid p \text{ is not half-integral}\big\}\big| 
\;\leq\; |S|.
$$

The next pushing step will choose $p_i$ as one of the root point $R$ and will move the points away from $p_i$ instead of $p_0$.
Very similarly, we stop when a point $r \in P(G)$ that is `half-way' between two points $p_i,p_{i-1}$ becomes half-integral.
Formally, we denote such a point $r$ as an $(S,\delta)$-\define{pivot}, or simply a pivot,
	if it is half-integral and there is a (critical) pair of points $\{p,q\} \in \binom{S}{2}$,
	the witnesses,
	that have equal distances to $r$,
		which meas $d(p,r) = d(q,r) = \tfrac{\delta}{2}$.
Let $\pivots{S}{\delta}$ be the set of $(S,\delta)$-pivots,
		and let $W(S,\delta) \subseteq \binom{S}{2}$ be the family of pairs of points from $S$, that witness some $(S,\delta)$-pivot.
This leads to the third and final event. 
\begin{quote}
(\labeltext{Event 3}{event:3:n})
A non-pivot point $r \in P(G)$ becomes a pivot.
\end{quote}
$$ \Phi_3(S) \;\coloneqq\; \big|\big\{  r \in P(G) \mid r \text{ is half-integral} \big\} \setminus \pivots{S}{\delta}\big| \;\leq\; |V(G)|^2 . $$

Hence a root point $R$ may not only be a point $p \in S$ but also come from the set of pivots.
We will later properly define $R$ as a superset of half-integral points $p_i \in S$ and the $(S,\delta)$-pivots.

We use an \define{auxiliary graph} $G_S$ for the current $\delta$-dispersed set $S$.
Its vertex set is $S \cup \pivots{S}{\delta}$.
Essentially we make all pairs of critical $\{p,q\}$ adjacent unless they witness a pivot;
If they do witness a pivot, we make them adjacent to the pivot:
\begin{itemize}
	\item
	For $\{p,q\} \in W(S,\delta)$ and for every pivot $r \in \pivots{S}{\delta}$ they witness,
		add edges $\{p,r\},\{r,q\}$; and
	\item
	for every critical pair of points $\{p,q\} \in \binom{S}{2} \setminus W(S,\delta)$ add edge $\{p,q\}$.
\end{itemize}
Note that, for every edge $\{r,p\}$ with $r \in \pivots{S}{\delta}$,
	there is at least one other edge $\{r,q\}$ such that $p,q$ witness $r$ as a pivot.

Now we define the sequence of points which serve as the structure to state the movement.
A path $P=(p_0,p_1,\dots,p_s)$ in the auxiliary graph $G_S$ of length $s\geq 1$  is a \define{spine} if $p_1,\dots,p_{s}$ are not half-integral.
Note that any sub-sequence $(p_0,\dots,p_i)$ for $1\leq i \leq s$ is also a spine.

\subsection{Velocities}

We assign velocities $\vel_P$ to the points $p_0,\dots,p_s$ of a spine $P$
	that specify their movement speed.
The point $p_i$ for $i \in \{1,\dots,s\}$ is moved by $\vel(p_i)\varepsilon$.
Thus setting $\vel(p_1)=1$ makes the $\delta$-critical $\{p_0,p_1\}$ become $(\delta+\varepsilon)$-critical, as desired.
Setting $\vel(p_i)=i$ for $i\geq 1$, however,
	can make consecutive points $\{p_{i-1},p_i\}$ uncritical.
\Autoref{figure:accordion} provides an example.
To see this,
	fix some shortest $p_{i-1},p_i$-path $P_i$ and some shortest $p_i,p_{i+1}$-path $P_{i+1}$.
The paths $P_i$ and $P_{i+1}$ can have a trivial intersection of only $\{p_i\}$ or their intersection may contain more than one point.
We denote this bit of information as $\change_P(p_i) \in \{-1,1\}$.
We set $\change_P(p_i)=1$ if and only if $P_i$ and $P_{i-1}$ have a trivial intersection.
(The definition of $\change_P$ is independent on the exact considered shortest paths
	and we will define it properly in the next subsection.)

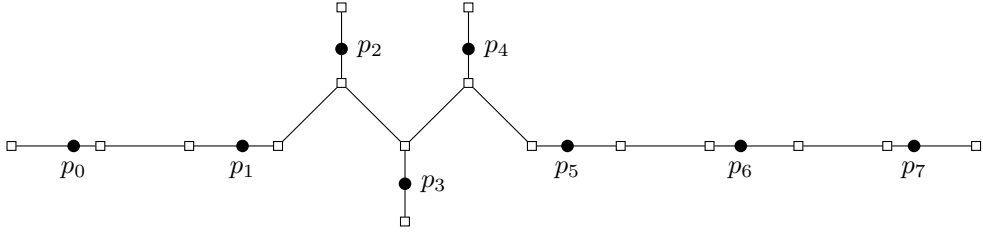
\begin{figure}[t]
\begin{center}
\begin{tikzpicture}
[scale=1.67,auto=left, node/.style={rectangle,fill=white, draw, scale=0.5},
	point/.style={circle,fill=black, scale=0.5}]

	\node [node] (llll) at (-1.1,0) {};	
	\node [node] (lll) at (-0.4,0) {};	
	\node [node] (ll) at (0.3,0) {};
	
	\node [node] (l) at (1,0) {};
	\node [node] (t) at (1.5,{sin 30}) {};
	\node [node] (tt) at (1.5,1.1) {};
	\node [node] (a) at (2,0) {};
	\node [node] (aa) at (2,-0.6) {};
	\node [node] (b) at (2.5,{sin 30}) {};
	\node [node] (bb) at (2.5,1.1) {};
	\node [node] (c) at (3,0) {};
	\node [node] (d) at (3.7,0) {};
	\node [node] (e) at (4.4,0) {};
	\node [node] (f) at (5.1,0) {};
	\node [node] (g) at (5.8,0) {};
	\node [node] (h) at (6.5,0) {};

	\points{p_0}{lll}{llll}{0.3}
	\points{p_1}{l}{ll}{0.4}	
	\pointe{p_2}{t}{tt}{0.45}
	\pointe{p_3}{a}{aa}{0.5}
	\pointe{p_4}{b}{bb}{0.45}
	\points{p_5}{c}{d}{0.4}
	\points{p_6}{e}{f}{0.35}
	\points{p_7}{g}{h}{0.3}
	
	\foreach \from/\to in {llll/lll,lll/ll,ll/l,l/t,t/a,a/b,b/c,c/d,d/e,e/f,f/g,g/h,
		t/tt,a/aa,b/bb}
	\draw (\from) -- (\to);
\end{tikzpicture}

\end{center}
\caption{
A spine $P = (p_0,\dots,p_7)$.
The shortest path between $p_0,p_1$ and the shortest path between $p_1,p_2$ have the trivial intersection of $\{p_1\}$,
	hence $\change_P(p_1) = 1$.
In turn, the shortest path between $p_1,p_2$ and the shortest path between $p_2,p_3$ have a non-trivial intersection,
	hence $\change_P(p_1) = -1$.
Also $\change_P(p_3) = \change_P(p_4) = -1$ while the other values are positive.
Consequently $(\sgn_P(p_1),\dots,\sgn_P(p_7)) = (1,1,-1,1,-1,-1,-1)$.
Thus $(\vel_P(p_1),\dots,\vel_P(p_7)) = (1,2,1,2,1,0,-1)$.
In particular, $\sgn_P(p_5)$ is negative such that it is moved towards $p_4$.
In turn, $\sgn_P(p_7)$ and $\vel_P(p_7)$ are negative such that $p_7$ has a net movement away from $p_0$.
Under this movement all $\{p_0,p_1\}, \dots, \{p_6,p_7\}$ remain critical.
%
}
\label{figure:accordion}
\end{figure}

The easy case is when $P_i$ and $P_{i+1}$ have a trivial intersection, i.e., $\change_P(p_i)=1$.
In this case we increase the velocity of the next point $p_{i+1}$.
The first time we encounter the other case, that $\change_P(p_i)=-1$,
	we decrease the velocity of the next point $p_{i+1}$.
Further we move $p_{i+1}$ \stress{towards} and not away of $p_i$.
Hence we also specify a $\sgn$ of the velocity that records
	whether a point $p_{i+1}$ is pushed towards or away from its predecessor $p_i$.
All these changes are relative to whether the movement of predecessor $p_i$ is away from $p_{i-1}$, i.e., whether $\sgn(p_i)$ is positive.
For example, the second time we encounter a point $p_j$ with $\change_P(p_j)=-1$,
	point $p_{j+1}$ is again moved away from its predecessor.

This leads to the following definition of $\vel_P$ and $\sgn_P$ for a spine $P$.
We define its half-integral \define{velocities} $\vel_P: \{p_0,\dots,p_s\} \allowbreak \to \HHH$
	depending on \define{signs} $\sgn_P: \{p_1,\dots,p_s\} \to \{-1,1\}$,
	which in turn depend on $\change_P$. 
We may drop the subscript $P$, if it is clear from the context.
Let $\vel(p_0)=0$.
Let $\vel(p_1) = \frac{1}{2}$, if $p_0 \in \pivots{S}{\delta}$, and let $\vel(p_1)=1$, if $p_0 \in S$.
For $i \geq 1$, let
$$ \vel(p_{i+1}) \; \coloneqq \; \vel(p_i)+ \sgn(p_{i+1}) . $$
Thus $\sgn \in \{-1,1\}$ indicates whether the velocity increases or decreases.
The current $\sgn$ is unchanged unless $\change$ is negative.
Let $\sgn(p_1) = 1$.
For $2 \leq i \leq s$, let
	$$
	\sgn(p_{i}) \; \coloneqq \; \change(p_{i-1}) \sgn(p_{i-1}) \; = \; \prod_{0< j< i} \change(p_j) . $$
The movement step of a point $p_i$ in a spine $P = (p_0,\dots,p_i)$ is now as follows.
We push the point $p_i$ by the (possibly negative) distance $\sgn_P(p_i) \vel_P(p_i) \varepsilon$ away from its predecessor $p_{i-1}$.
In other words, the point $p_i = p(u,v,\lambda)$ is replaced by the point $p(u,v,\lambda + \sgn_P(p_i) \vel_P(p_i)\varepsilon)$ assuming that vertex $u$ compared to $v$ is in some sense closer to the predecessor point $p_{i-1}$.
We make this notion formal in the next subsection.

\subsection{Directions}

We formalize the notion the direction of a point $p$ towards another point $q$.
The \define{direction} $\ori{q}{p} \in \{u,v\}$ for distinct points $p = p(u,v,\lambda) \in P(G)$ and $q \in P(G)$
	is defined as follows:
\begin{itemize}
\item
For points $p=p(u,v,\lambda_p)$ and $q=p(u,v,\lambda_q)$ on a common edge $\{u,v\} \in E(G)$ with $\lambda_p<\lambda_q$,
	let $\ori{q}{p}=v$.
Let $\nori{q}{p} = u$.
\item
For points $p=p(u_p,v_p,\lambda_p)$ and $q=p(u_q,v_q,\lambda_q)$
	on distinct edges $\{u_p,v_p\} \neq \{u_q,v_q\}$,
	let $\ori{q}{p}$ be the unique vertex of $\{u_p,v_p\}$ that is contained in \emph{every} shortest path between $p$ and $q$,
		if such a vertex exists.
If $\ori{q}{p}$ is defined, let $\nori{q}{p}$ be the unique vertex in $\{u_p,v_p\} \setminus \{ \ori{q}{p} \}$.
\end{itemize}

\newcommand{\textLemmaRoundingOrientation}{
For distinct points $p, q \in V(G_S)$, $\ori{q}{p}$ is well-defined, unless $p$ is half-integral.
}

\begin{lemma}\hyperref[lemma:appendix:rounding:orientation]{\refstar}
\label{lemma:rounding:orientation}
\textLemmaRoundingOrientation
\end{lemma}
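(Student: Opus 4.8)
The plan is a proof by contradiction: assuming $p=p(u,v,\lambda)$ is not half-integral (so $\lambda\in(0,1)\setminus\{\tfrac12\}$) but $\ori{q}{p}$ is not well-defined, I want to force $\delta$ to be a rational with numerator at most $2L+2$, contradicting the standing hypothesis recorded just after \autoref{lemma:rounding} that $\delta$ is irrational or equals $\tfrac ab$ in lowest terms with $a>2L+2$.

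First I dispose of the easy cases. If $q$ lies on the edge $\{u,v\}$, or $q\in\{u,v\}$, then $\ori{q}{p}$ is read off from the edge positions and is defined whenever $p\neq q$. So assume $q$ is incident to no endpoint of the edge $\{u,v\}$, hence every shortest path between $p$ and $q$ leaves the edge through $u$ or through $v$. The key observation is that, since $p$ is an interior point of its edge, no shortest $p$-$q$ path can contain both $u$ and $v$: if $\pi$ were such a path, it would traverse a $u$-$v$ subwalk of length at least $d(u,v)=1$, and rerouting the initial stretch of $\pi$ (from $p$ to whichever of $u,v$ it reaches first, followed by that subwalk) along the direct segment of $\{u,v\}$ would strictly shorten it, using $\lambda\notin\{0,1\}$. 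Hence every shortest $p$-$q$ path meets exactly one of $u,v$, and $\ori{q}{p}$ is well-defined, equal to whichever one is met by all of them, unless both of them are met by some shortest path, i.e.\ unless
\[
\lambda+d(u,q)\;=\;(1-\lambda)+d(v,q)\;=\;d(p,q).
\]

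Now assume this identity. Fixing shortest paths $\pi_u$ through $u$ and $\pi_v$ through $v$, their sub-paths from $u$ and from $v$ to $q$ avoid the edge $\{u,v\}$ by the observation above, so together they form a $u$-$v$ walk in $G-\{u,v\}$; splicing this walk with the edge $\{u,v\}$, which runs through $p$, and pruning repeated pieces yields a cycle $C$ of $G$ through the edge $\{u,v\}$, hence through the point $p$. Tracing $C$ together with the displayed identity shows that, up to a possibly shared initial stretch of $\pi_u$ and $\pi_v$ at $q$, the point $q$ sits antipodally to $p$ on $C$, so $d(p,q)=\tfrac12|C|$ plus the length of that shared stretch. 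Since any cycle of $G$ yields a (not necessarily induced) path of length $|C|-1$, we have $|C|\le L+1\le 2L+2$, so $\tfrac12|C|$ is a rational with numerator at most $2L+2$. In the situation where the lemma is applied, $\{p,q\}$ is an edge of the auxiliary graph $G_S$, so $d(p,q)\in\{\delta,\tfrac\delta2\}$; tracking the arithmetic (the shared stretch carries the same denominators) then shows $\delta$ itself is a rational with numerator at most $2L+2$. By the minimality of $\delta^\star$ this forces $\delta^\star=\delta$, contradicting $\delta<\delta^\star$, and the lemma follows.

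The technical heart, and the part I expect to cost the most work, is the cycle-extraction step: producing a genuine cycle through the interior point $p$ out of the two non-unique shortest paths, and pinning down where $q$ lies on it. The nuisance is that $\pi_u$ and $\pi_v$ need not be internally disjoint near $q$, so the naive splice is a lollipop rather than a cycle; the clean remedy is to first replace $q$ by the last point $\pi_u$ and $\pi_v$ have in common on the $q$-side (the orientation stays ill-defined there, and the two sub-paths now diverge immediately), show this point is exactly antipodal to $p$ on $C$ so that its distance from $p$ equals $\tfrac12|C|$, and only afterwards reattach the pruned stub. One must also verify that the edge $\{u,v\}$ survives the pruning, which it does because the sub-paths to $q$ avoid it, so that $C$ genuinely runs through the point $p$.
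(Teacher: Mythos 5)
Your overall strategy (compare two shortest $p$--$q$ paths, one through $u$ and one through $v$, close them up, and use the length bound $L$ together with the minimality of $\delta^\star$) is in the same spirit as the paper's proof, which is a length-parity computation on exactly these two paths. But there is a genuine gap in your concluding arithmetic, precisely in the case you flagged as the ``technical heart''. When the two shortest paths leave $q$'s edge through the \emph{same} endpoint (your shared-stretch case, with $q\in S$), the length of the shared stretch is an integer plus $\lambda_q$ or $1-\lambda_q$, and $\lambda_q$ is an arbitrary -- possibly irrational -- edge position of a point of $S$; it is in no way tied to the denominator of $\delta$. So the claim that ``the shared stretch carries the same denominators'' is unjustified, and from $d(p,q)=\tfrac12|C|+c$ you cannot conclude that $\delta$ is rational with numerator at most $2L+2$. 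In this case the contradiction cannot be routed through the minimality of $\delta^\star$ at all: the correct conclusion (and the one the paper draws) is about $p$ itself. Since the divergence point $q'$ is a vertex and both subpaths from $q'$ to $p$ are shortest, their lengths $\lambda_p+\ell_1$ and $(1-\lambda_p)+\ell_2$ (with $\ell_1,\ell_2\in\N$) are equal, forcing $2\lambda_p\in\mathbb{Z}$, i.e.\ $\lambda_p=\tfrac12$, contradicting the hypothesis that $p$ is not half-integral. Your own machinery almost delivers this (antipodality on an integer-length cycle is exactly this parity statement), but as written you aim the contradiction at $\delta^\star$ and it does not land. The paper's proof keeps the two cases separate: same endpoint at $q$ $\Rightarrow$ $\lambda_p$ half-integral (contradiction with $p$); different endpoints $\Rightarrow$ $2d(p,q)=2+\ell_u+\ell_v$, so $\delta$ (or, for a pivot $q$, a quantity forcing the same conclusion) is half-integral, which is excluded by the standing hypothesis $a>2L+2$ via the path-length bound.

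A secondary, fixable issue: after pruning, your cycle need not contain $q'$, and then $d(p,q')=\tfrac12|C|$ fails, so the antipodality step as described is not sound when the two subpaths re-intersect. The clean repair is to replace $q'$ by the common point $x$ of the two paths that is \emph{closest to $p$}; the two subpaths from $x$ to $p$ are internally disjoint shortest paths of equal length, so they form a genuine cycle through the edge of $p$ of length $2d(p,x)$, and one then argues on the parity of that integer: if $x\neq q$ this again gives $\lambda_p=\tfrac12$, and if $x=q$ it gives half-integrality of $d(p,q)\in\{\delta,\tfrac{\delta}{2}\}$ with the numerator bounded via $L$, which is the paper's second case.
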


Hence we can properly define $\change(p_i)$ of a point $p_i$ of a spine $(p_0,\dots,p_s)$ with $1 \leq i \leq s-1$,
	since $p_i$ with $i\geq 1$ is non-half-integral.
Let $$
		\change(p_i) \coloneqq
		\begin{cases}
			\phantom{-}1, & \ori{p_{i-1}}{p_i} \neq \ori{p_{i+1}}{p_i}, \\ 
			-1, & \text{else}.
		\end{cases}
	$$

Further, for a non-half-integral point $p= p(u,v,\lambda)$ we have $\{u,v\} = \{\ori{q}{p},\nori{q}{p}\}$.
By symmetry assume that $\nori{q}{p} = u$.
We can equivalently specify point $p$ as $p(\nori{q}{p},\ori{q}{p}, \lambda)$.
Conveniently, we may write $p=p(\cdot,\nori{q}{p}, \lambda)$
	since the missing entry is clear from the context.
Doing so, the edge position $\lambda$ measures a part of the length of any shortest $p,q$-path,
	specifically the part using the edge of $p$ (assuming $q$ is on another edge).

Now we can also properly define one pushing step for a point $p_i$ of a spine $P = (p_0,\dots,p_s)$ and for $\varepsilon > 0$.
Let $\lambda_i$ be such that $p_{i} = p(\cdot, \nori{p_{i-1}}{p_{i}}, \lambda_i)$.
Then the new point is
$$
	\label{eq:moved:point} 
	(p_i)_{P,\varepsilon} \coloneqq p \big(\cdot, \; \nori{p_{i-1}}{p_{i}}, \; \lambda_i + \sgn_P(p_i) \vel_P(p_i) \varepsilon \big) .
$$

\newcommand{\lemmaTextStayCritical}{
For a spine $P = (p_0,\dots,p_s)$ and $i \in \{0,\dots,s-1\}$, points $(p_i)_{P,\varepsilon},(p_{i+1})_{P,\varepsilon}$ are $(\delta+\varepsilon)$-critical
	for the maximal $\varepsilon \leq \delta^\star-\delta$ that is limited by the Events 1,2,3.
}

\begin{lemma}\hyperref[lemma:appendix:stay:critical]{\refstar}
\label{lemma:stay:critical}
\lemmaTextStayCritical
\end{lemma}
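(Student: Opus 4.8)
The statement asks to verify that, for the velocity assignment just defined, consecutive points $p_i,p_{i+1}$ of a spine stay $(\delta+\varepsilon)$-critical under the prescribed movement, up to the point where one of Events~1--3 forces us to stop (and never beyond $\delta^\star$). The plan is to fix a shortest $p_i,p_{i+1}$-path $P_{i+1}$ and track how its length changes as a function of $\varepsilon$, showing it equals $\delta+\varepsilon$ exactly. The key observation is that by the convention $p_j = p(\cdot,\nori{p_{j-1}}{p_j},\lambda_j)$, the edge position $\lambda_j$ measures the portion of a shortest $p_{j-1},p_j$-path lying on $p_j$'s own edge, with the complementary portion measured from the other endpoint of $p_{j-1}$'s edge. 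Moving $p_i$ by $\sgn_P(p_i)\vel_P(p_i)\varepsilon$ \emph{away from $p_{i-1}$} and $p_{i+1}$ by $\sgn_P(p_{i+1})\vel_P(p_{i+1})\varepsilon$ away from $p_i$ changes the length of $P_{i+1}$ by a signed sum of these two displacements; the sign with which $p_i$'s displacement enters depends precisely on whether $P_i$ and $P_{i+1}$ share more than the point $p_i$, i.e.\ on $\change_P(p_i)$.

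First I would set up notation: let $P_{i+1}$ be a fixed shortest $p_i,p_{i+1}$-path of length exactly $\delta$ (it has this length because $\{p_i,p_{i+1}\}$ is $\delta$-critical). Its length decomposes as $\lambda_{i+1}$ (the part on $p_{i+1}$'s edge, on the $\nori{p_i}{p_{i+1}}$ side) plus a middle part plus a final segment incident to $p_i$'s edge. When we replace $p_{i+1}$ by $(p_{i+1})_{P,\varepsilon} = p(\cdot,\nori{p_i}{p_{i+1}},\lambda_{i+1}+\sgn_P(p_{i+1})\vel_P(p_{i+1})\varepsilon)$, this path length changes additively by $\sgn_P(p_{i+1})\vel_P(p_{i+1})\varepsilon$, provided $\varepsilon$ is small enough that $P_{i+1}$ is still a shortest path — which is exactly what the choice of maximal $\varepsilon$ limited by the three events guarantees (no new shorter route opens up without some point hitting a half-integral position or a pivot forming). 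Symmetrically, moving $p_i$ away from $p_{i-1}$ shifts the $p_i$-end of $P_{i+1}$; here the signed contribution is $+\sgn_P(p_i)\vel_P(p_i)\varepsilon$ when $P_{i+1}$ leaves $p_i$ in the same direction that $p_i$ moves (which happens iff $\ori{p_{i+1}}{p_i} = \ori{p_{i-1}}{p_i}$, the $\change_P(p_i)=-1$ case after sign bookkeeping) and $-\sgn_P(p_i)\vel_P(p_i)\varepsilon$ otherwise. Combining, the new length is $\delta + \big(\pm\sgn_P(p_i)\vel_P(p_i) + \sgn_P(p_{i+1})\vel_P(p_{i+1})\big)\varepsilon$, and the definitions $\sgn_P(p_{i+1}) = \change_P(p_i)\sgn_P(p_i)$ and $\vel_P(p_{i+1}) = \vel_P(p_i)+\sgn_P(p_{i+1})$ are cooked precisely so that this bracket equals $1$. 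I would check the two cases $\change_P(p_i)=1$ and $\change_P(p_i)=-1$ separately; the base case $i=0$ uses $\vel_P(p_1)\in\{\tfrac12,1\}$ and $\sgn_P(p_1)=1$ together with the fact that $p_0$ does not move, and when $p_0\in\pivots{S}{\delta}$ one uses $d(p_0,p_1)=\tfrac{\delta}{2}$ rather than $\delta$ — but the spine in that case is rooted at a pivot and the relevant critical pair is the witnessed one, so the $\tfrac12$ factor restores the increment of $\varepsilon$ across the full witnessed pair.

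I also need the lower-bound direction: that $d((p_i)_{P,\varepsilon},(p_{i+1})_{P,\varepsilon})$ is not \emph{smaller} than $\delta+\varepsilon$, i.e.\ that no alternative path becomes shorter. This is where the maximality of $\varepsilon$ enters crucially. I would argue that if some other $p_i,p_{i+1}$-walk had length $<\delta+\varepsilon'$ for some $\varepsilon' < \varepsilon$, then by continuity and the structure of $P(G)$ there is a threshold $\varepsilon''\le\varepsilon'$ at which this walk first reaches length $\delta+\varepsilon''$ while being locally shortest, and at that threshold either an endpoint of some point of $S$ sits at a half-integral coordinate (Event~2) or the midpoint of two such points does (Event~3 / a new pivot) or a previously uncritical pair has just become critical (Event~1); in each case the defined maximal $\varepsilon$ would already have stopped there. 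A clean way to phrase this is: all relevant distances $d(p,q)$ for $p,q$ in the (finite) vertex set of $G_S$, as functions of the scalar $\varepsilon$, are piecewise linear with breakpoints that are themselves the places where some point or midpoint becomes half-integral; between consecutive breakpoints the combinatorial type of all shortest paths is fixed, so the additive computation above is valid on the whole interval $[0,\varepsilon]$.

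**Main obstacle.** I expect the genuinely delicate part to be the sign bookkeeping in the inductive step — verifying that $\pm\sgn_P(p_i)\vel_P(p_i) + \sgn_P(p_{i+1})\vel_P(p_{i+1}) = 1$ with the correct $\pm$ determined by $\change_P(p_i)$ — together with carefully justifying that the decomposition ``length of $P_{i+1}$ $=$ (segment on $p_{i+1}$'s edge) $+$ (middle) $+$ (segment toward $p_i$)'' really does respond linearly and with the claimed sign to the two independent displacements, especially in the degenerate situation where $p_i$ and $p_{i+1}$ share an edge or where the shortest path between them is not unique. Handling non-uniqueness is what forces the reliance on \autoref{lemma:rounding:orientation} (so that $\oripara$ is well-defined for the non-half-integral interior points of the spine) and on the event-limited choice of $\varepsilon$ (so that the combinatorial type does not change mid-interval). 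The arithmetic itself, once the framework is in place, is routine, but getting the framework stated so that all these cases collapse into one computation is the crux.
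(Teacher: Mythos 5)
Your first half — fixing a shortest $p_i,p_{i+1}$-path and showing its length grows at rate exactly $1$ because $\sgn_P(p_{i+1})\vel_P(p_{i+1})=\sgn_P(p_{i+1})\vel_P(p_i)+1$ while $p_i$'s displacement enters with sign $-\change_P(p_i)\sgn_P(p_i)$ — is exactly the paper's computation (there it is packaged as \autoref{lemma:dir:change:flip} together with \autoref{lemma:appendix:rounding:one} applied with rates $x=-\sgn(p_i)\vel(p_{i-1})$, $y=\sgn(p_i)\vel(p_{i-1})+1$, in \autoref{lemma:appendix:moving:points:preview}); your verbal description of which geometric case gets which sign is stated backwards, but the net assignment you use is the right one.

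The genuine gap is in the lower-bound direction. You claim that if an alternative $p_i,p_{i+1}$-route becomes shortest at some threshold $\varepsilon''<\varepsilon$, then at that moment one of Events 1--3 must fire, and more generally that all breakpoints of the distance functions occur only where a point of $S$ or a midpoint becomes half-integral. That is false as stated, and it is precisely the case the paper has to treat separately: at the first tie you get two shortest $p_i,p_{i+1}$-paths, and if they are internally disjoint their union is a closed curve traversing only full edges plus the two full edges carrying $p_i$ and $p_{i+1}$, so it has integer length $2(\delta+\varepsilon'')$, forcing $\delta+\varepsilon''$ to be half-integral. No point of $S$ need be half-integral at that moment, the equidistant point on either path is in general not half-integral (so no pivot is created, Event 3 does not fire), and the pair $p_i,p_{i+1}$ is already critical (so Event 1 is irrelevant). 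This configuration cannot be excluded by the events or by the maximality of $\varepsilon$ at all; it is excluded arithmetically, using that no half-integral value lies in $[\delta,\delta^\star)$ — which is exactly where the definition of $\delta^\star$ as the minimal rational $\tfrac{a^\star}{b^\star}\geq\delta$ with $a^\star\leq 2L+2$, the bound $L$ on path lengths, and \autoref{lemma:appendix:not:half:integral} (via \autoref{lemma:appendix:path:from:PG:to:G}) enter the paper's proof. Your proposal never invokes this ingredient, so the step ``no new shorter route opens up without some point hitting a half-integral position or a pivot forming'' is unsupported; to repair it you must add the case analysis of \autoref{lemma:appendix:rounding:one} (tying route through the other endpoint of $p_i$'s edge forces $(p_i)_{\varepsilon''}$ half-integral, i.e.\ Event 2; internally disjoint tying routes force $\delta+\varepsilon''$ half-integral, contradicting the choice of $\delta^\star$). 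The same omission also undermines your ``piecewise linear with breakpoints only at half-integrality'' phrasing, since route ties are themselves breakpoints.
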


\subsection{Root Points}

We formally define the set of root points $R$.
Let $R_0$ be the set of half-integral points in $G_S$.
There may be some components of the auxiliary graph $G_S$ without a point in $R_0$.
Let $R$ result from $R_0$ by adding exactly one point from every component that has no point in $R_0$.

We consider only spines $P=(p_0,\dots,p_i)$ where $p_0 \in R$.
Clearly every point $G_S$ is part of at least one spine and hence has some movement prescribed.
We also claim that the prescribed movement is uniquely defined.
In other words, there are no spines $P=(p_0,\dots,p_i)$ and $Q=(q_0,\dots,\allowbreak q_j)$	with $p_0,q_0 \in R$ that terminate at the same point $p_i=q_j$
	and contradict in their prescribed movement for $p_i=q_j$.

\newcommand{\textLemmaEqualPotential}{
Let $\delta < \delta^\star$.
Consider spines $P=(p_0,\dots,p_i)$ and $Q=(q_0,\dots,\allowbreak q_j)$ 
	with $p_0,q_0 \in R$ and $p_i=q_j$. 
Then (1) $\vel_P(p_i) = \vel_Q(q_j)$; and (2) $\nori{p_{i-1}}{p_i} = \nori{q_{j-1}}{q_j}$ if and only if $\sgn_P(p_i) = \sgn_Q(q_j)$.
}

\begin{lemma}\hyperref[lemma:appendix:equal:potential]{\refstar}
\label{lemma:equal:potential}
{\textLemmaEqualPotential}
\end{lemma}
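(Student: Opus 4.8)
The plan is to argue by induction along a common structure shared by the two spines $P$ and $Q$. Since $p_i = q_j$ lies in a single connected component of the auxiliary graph $G_S$, and each spine starts at a root point $r \in R$ with exactly one root per component, either $p_0 = q_0$ or both $p_0$ and $q_0$ are half-integral points of $R_0$ in that component. In the first case I would proceed by a direct induction on $\min(i,j)$, showing that the two spines agree point-by-point until one of them stops; the only subtlety is that a spine may end early because the next point would be half-integral, but then $p_i = q_j$ being the common endpoint forces the shorter one to be an initial segment of the longer one, at which point both velocity and the $\sgn$/direction correspondence follow from the recursive definitions of $\vel$, $\sgn$, and $\change$, all of which depend only on the (undirected) sequence of points visited. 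So the real work is the second case, where $P$ and $Q$ approach $p_i = q_j$ from two genuinely different root points.

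In that case I would look at the concatenation of $P$ with the reverse of $Q$: a walk from $p_0 \in R_0$ to $q_0 \in R_0$ passing through the common point $p_i = q_j$. After discarding any repeated portion, this gives a closed-ish structure in $G_S$, and translating it back to $P(G)$ yields a non-trivial closed or near-closed walk whose total length is an integer combination of $\delta$'s (each $G_S$-edge corresponds to a $\delta$-critical pair, up to the pivot edges which contribute $\tfrac{\delta}{2}$). The key quantitative claim is then: the existence of such a walk in a graph with longest path bounded by $L$ forces $\delta$ to be rational with small numerator, contradicting $\delta < \delta^\star$ (recall $\delta^\star = \tfrac{a^\star}{b^\star}$ is minimal with $a^\star \le 2L+2$, so $\delta < \delta^\star$ means $\delta$ is either irrational or has numerator exceeding $2L+2$ in lowest terms). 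Concretely, summing edge lengths along the walk, together with the half-integrality of the root points and the pivots, produces an equation of the form (integer)$\cdot \delta = $ (half-integer length realized inside $G$), and since any such realized length is bounded by something like $2L+2$, we get $\delta = \tfrac{a}{b}$ with $a \le 2L+2$ — contradiction. This simultaneously pins down $\vel_P(p_i) = \vel_Q(q_j)$ (the two "accordion" computations starting from half-integral base velocities must land on the same value, else their difference is a positive multiple of the constraint that just got excluded) and the $\sgn$/direction equivalence (the parity of the number of $\change = -1$ steps along each spine is determined by the geometry of the shared walk).

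For the $\sgn$ part specifically, I would unwind the identity $\sgn_P(p_i) = \prod_{0 < j < i} \change_P(p_j)$ and note that $\change$ at an interior point records whether the two incident shortest subpaths meet trivially or not — a purely local, orientation-free quantity. So $\sgn_P(p_i)$ times $\sgn_Q(q_j)$ equals a product of $\change$-values around the combined walk, and this product evaluates to $+1$ exactly when the walk is "direction-preserving" at $p_i = q_j$ in the sense that $\nori{p_{i-1}}{p_i}$ and $\nori{q_{j-1}}{q_j}$ pick out the same endpoint of the common edge; whereas it is $-1$ when they pick opposite endpoints. This is essentially a discrete Gauss–Bonnet / winding-parity argument, and making it precise is where I expect the main obstacle to lie: one must carefully handle the pivot vertices of $G_S$ (which have degree $\ge 2$ and half-integral length, so $\change$ is not defined there but the spine can pass through them only as an endpoint, by the definition of spine) and the case where the two spines share a long common tail before diverging near the root. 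I would isolate the common tail first, reduce to the situation where $P$ and $Q$ meet only at $p_i = q_j$, and then run the walk-length argument on the resulting genuinely bifurcated configuration, invoking \autoref{lemma:rounding:orientation} to guarantee all the directions involved are well-defined since the interior spine points are non-half-integral.
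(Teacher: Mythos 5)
There are two genuine gaps. First, your treatment of the case $p_0=q_0$ is incorrect: two spines with the same root and the same final point need not agree point-by-point, because the auxiliary graph $G_S$ may contain cycles, so neither spine need be an initial segment of the other. This is not a degenerate case one can dispose of by induction on $\min(i,j)$ --- it is the archetypal hard case (the paper's own overview uses exactly the configuration $p_0=q_0$ with the two spines arriving at $p_i=q_j$ from the same side as the illustrative contradiction), and it requires the same quantitative argument as the two-root case.

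Second, the quantitative core of your two-root argument is unsupported. The walk obtained by concatenating $P$ with the reverse of $Q$ consists of roughly $i+j$ hops of length $\delta$ or $\tfrac{\delta}{2}$, so its total length is in general far larger than $2L+2$; walks in $P(G)$ are not bounded by the longest path. The bound that is actually available is on paths in $G$, and bridging from a spine to a long path in $G$ is precisely the content of \autoref{lemma:oriented:path} (a generalized spine whose endpoint has velocity $z$ forces a path of length about $|z|\delta$ in $G$), whose proof must handle the fact that consecutive shortest paths can overlap (the $\change=-1$ ``accordion'' of Figure~\ref{figure:accordion}); the velocity $\vel$ exists exactly to track the net, non-cancelled length. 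Your parenthetical claim that unequal velocities give ``a positive multiple of the constraint that just got excluded'' presupposes this missing lemma (in the paper, \autoref{lemma:no:high:velocity}: $|\vel|<\tfrac{b}{2}$ when $\delta=\tfrac{a}{b}$). Finally, claim (2) is not obtained in the paper by a winding-parity argument but falls out of an explicit closed-form for the edge position of $p_i$ in terms of $\sgn$, $\vel$ and the root's position (\autoref{lemma:lambda:jumps}), followed by a four-way case analysis; the mixed cases there ($\beta\neq\beta_0$ in the paper's notation) are excluded not by a length bound alone but by showing that an intermediate velocity $z=\tfrac{x\pm y}{2}$ would force a half-integral interior spine point or a pivot witness, contradicting the definition of a spine. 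Your sketch, which you yourself flag as the main obstacle, does not anticipate these cases, and without the velocity-to-path lemma neither claim (1) nor claim (2) is established.
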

\newcommand{\me}{{\varepsilon^\star}}

Therefore we can uniquely define the moved version of a point $p$ as $p_\varepsilon \coloneqq (p)_{P,\varepsilon}$
	where we may choose $P$ to be an arbitrary spine starting in $R$ and containing $p$.
This defines the set of pushed points $S_\varepsilon \coloneqq \{\, p_\varepsilon \mid p \in S \,\}$.

\medskip

For the proof of \autoref{lemma:equal:potential},
	we use that $\delta^\star = \frac{a^\star}{b^\star} \geq \delta$ is minimal with $a^\star \leq 2L+L$.
Since spines $P,Q$ meet at the point $p_i=q_j$ their roots $p_0,q_0$ must be from the same component in the auxiliary graph $G_S$;
	in other words $p_0,q_0$ are both half-integral or are the same point.
Our proof by contradiction considers these two options and whether $P,Q$ reach $p_i$ and~$q_j$ from the same vertex relatively to if they agree on the $\sgn$ of $p_i=q_j$.
An example is that $p_0=q_0$ and $P,Q$ reach $p_i = q_j$ from the same vertex (formally with $\nori{p_{i-1}}{p_i} \neq \nori{q_{j-1}}{q_j}$) while they agree on the $\sgn$ (formally $\sgn_P(p_i) = \sgn_Q(q_j)$).
Then we can glue $P$ and $Q$ together forming a walk starting in $p_0=q_0$ and returning to $p_0=q_0$.
This walk then has half-integral length at most $2L+2$ but is made up hops of length $\delta$.
That implies the contradiction that already $\delta = \delta^\star$.

\subsection{Summery}

With the previous observations we can assemble the algorithm for \autoref{lemma:overview}.
Our potential counts how many elements can still trigger Events 1,2,3.
That is
$$
\Phi(S) \;\coloneqq\; \Phi_1(S) + \Phi_2(S) + \Phi_3(S)  \;\leq\; 2|S|^2 + |V(G)|^2  .
$$

We define $\me \geq 0$ as the maximal $\varepsilon \leq \delta^\star - \delta$
	limited by the Events 1,2,3.
We claim that $\me \geq 0$ is defined.
This is due to that the above events depend on continuous functions in $\varepsilon$,
	which are the distance of $p_{\varepsilon}$ to its closest half-integral point, and the distance between points $p_\varepsilon$ and $q_{\varepsilon}$ for $p,q \in S$.

To show termination, we prove that no such element can trigger its according event more than once.
\autoref{lemma:stay:critical} already implies that a $\delta$-critical pair of points $\{p,q\}$ stays $(\delta+\me)$-critical.
It remains to show the following monotonicities:

\newcommand{\lemmaTextRoundingInvariant}{
	Let $S$ be a $\delta$-dispersed set for $\delta < \delta^\star$ and $\me$ defined as above.
	Then:
	\begin{itemize}
		\item $(1)$ $S_\me$ is a $(\delta + \me)$-dispersed set of size $|S|$.
		\item $(2)$ If $\{p,q\} \in \binom{S}{2}$ is $\delta$-critical,
			then $\{p_\me, q_\me\}$ is $(\delta + \me)$-critical.
		\item $(3)$ If $r \in \pivots{S}{\delta}$, then $r \in \pivots{S_\me}{\delta+\me}$.
	\end{itemize}
}
\begin{lemma}
\hyperref[lemma:appendix:rounding:invariant]{\refstar}
	\label{lemma:rounding:invariant}
	\lemmaTextRoundingInvariant
\end{lemma}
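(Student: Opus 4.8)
The plan is to prove the three claims together by tracking, for each pair of points $\{p,q\}\subseteq S$, how the distance $d(p_\me,q_\me)$ depends on $\varepsilon$ along the prescribed movement, and then to read off all three statements from the event-stopping rule that defines $\me$. First I would fix $p,q\in S$ and a spine $P=(p_0,\dots,p_i)$ through $p$ starting in $R$; by \autoref{lemma:equal:potential} the moved point $p_\me$ is well-defined independently of this choice, so it suffices to argue with one spine at a time. For a pair that is $\delta$-critical I would connect $p$ and $q$ through the auxiliary graph $G_S$: either $\{p,q\}$ is an edge of $G_S$ and lies on a common spine (so \autoref{lemma:stay:critical} directly gives $(\delta+\varepsilon)$-criticality for every $\varepsilon\le\me$), or $\{p,q\}$ witnesses a pivot $r$, in which case $p$ and $q$ each sit on a spine rooted so as to pass through $r$, and I would show that the velocity bookkeeping forces $d(p_\varepsilon,r)=d(q_\varepsilon,r)=\tfrac{\delta+\varepsilon}{2}$ for all $\varepsilon\le\me$ — which simultaneously yields claim~(2) and claim~(3) for this $r$. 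The key computation here is that along any spine the quantity $\sgn_P(p_i)\vel_P(p_i)$ is exactly the net rate of change of the distance from $p_i$ back to the root, which I would verify by induction on $i$ using the definition of $\change_P$ (a trivial intersection of consecutive shortest paths adds the two rates, a non-trivial one subtracts); this is essentially the content already asserted in the captions of \autoref{figure:start} and \autoref{figure:accordion}.

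Next, for claim~(1), I would handle pairs $\{p,q\}$ that are \emph{not} $\delta$-critical. Here $d(p,q)>\delta$ strictly, and $d(p_\varepsilon,q_\varepsilon)$ is a continuous (indeed piecewise-linear) function of $\varepsilon$ with $d(p_0,q_0)>\delta$; since \hyperref[event:1:n]{Event~1} is one of the events limiting $\me$, the value $\me$ is chosen small enough that this distance has not dropped to $\delta+\varepsilon$ before $\varepsilon=\me$, so $d(p_\me,q_\me)\ge\delta+\me$. Combined with the critical-pair analysis above, every pair of points in $S_\me$ is at distance at least $\delta+\me$, so $S_\me$ is $(\delta+\me)$-dispersed. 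That $|S_\me|=|S|$ needs a small separate argument: the map $p\mapsto p_\me$ could in principle identify two points, but two distinct points of $S$ at distance $\ge\delta>0$ cannot be mapped to the same point while all pairwise distances stay $\ge\delta+\me>0$, so injectivity is immediate from $(\delta+\me)$-dispersion of the image. One also has to check that each $p_\me$ genuinely lies in $P(G)$, i.e. the edge position $\lambda_i+\sgn_P(p_i)\vel_P(p_i)\me$ stays in $[0,1]$; this is guaranteed because \hyperref[event:2:n]{Event~2} (a point becoming half-integral, in particular reaching a vertex) also caps $\me$.

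The main obstacle I anticipate is the pivot case in claim~(2)/(3): when $\{p,q\}$ witnesses a pivot $r$, the spine through $p$ and the spine through $q$ may be rooted at the same point or at two different half-integral roots, and I must show the prescribed movements of $p$ and $q$ are \emph{consistent with keeping $r$ equidistant}, which is exactly where the sign conventions $\sgn_P$ and the case distinction in \autoref{lemma:equal:potential} do their work — the delicate point is that $r$ itself need not move, yet both witnesses move, so I need the two half-lengths $d(p_\varepsilon,r)$ and $d(q_\varepsilon,r)$ to increase at rate exactly $\tfrac12$ each. I would derive this from the rate-of-change computation above applied to the two sub-spines from $r$ to $p$ and from $r$ to $q$ (using that the velocity of the first point off a pivot root is set to $\tfrac12$, cf.\ the definition of $\vel(p_1)$), together with \autoref{lemma:stay:critical} to rule out the degenerate possibility that one of these shortest paths changes combinatorial type before $\varepsilon=\me$. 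Everything else — monotonicity "never increases" for the three partial potentials, and the fact that the events are triggered by continuous functions so that $\me$ is well-defined and positive — is routine once these distance-rate identities are in place, and I would relegate those verifications to the appendix as the \refstar{} marking indicates.
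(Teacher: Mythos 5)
Your proposal is correct and follows essentially the same route as the paper's proof: the same exhaustive case split over pairs (non-critical pairs handled by the minimality of $\me$ with respect to \hyperref[event:1:n]{Event~1} and continuity, critical pairs that are edges of $G_S$ handled by \autoref{lemma:stay:critical}, and pivot-witness pairs handled via the two velocity-$\tfrac12$ sub-spines rooted at the pivot, which is a root point since it is half-integral), with \autoref{lemma:equal:potential} guaranteeing consistency of the movement. The rate-of-change computation you plan to redo by induction is exactly the machinery the paper has already packaged into its auxiliary lemmas, so nothing essentially new or different is involved.
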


Now we have all the tools to show \autoref{lemma:overview}.
Determine $\me$ and the $(\delta+\me)$-dispersed set $S_\me$ as defined before.
The resulting set $S_\me$ is a $(\delta+ \me)$-dispersed set of the same size,
	according to \autoref{lemma:rounding:invariant}.
If $\delta + \me = \delta^\star$
	then $S_\me$ is already the desired $\delta^\star$-dispersed set.
Else one of the Events 1,2,3 occurred.
We observe that the potential strictly decreases, that is $\Phi(S_\me) < \Phi(S)$.
Because of the monotonicities of \autoref{lemma:stay:critical} and \autoref{lemma:rounding:invariant} the partial potentials $\Phi_1$, $\Phi_2$ and $\Phi_3$ do not increase.
If \ref{event:1:n} occurs then $\Phi_1$ strictly decrease.
If \ref{event:2:n} occurs then $\Phi_2$ strictly decrease.
If \ref{event:3:n} occurs then $\Phi_3$ strictly decrease.
All in all at least one part strictly decreases and so does $\Phi$.
This completes the proof of \autoref{lemma:overview} and hence of \autoref{lemma:rounding}.

\section{Algorithmic Implications}
\label{section:algo:imp}

Based on the rounding procedure from \autoref{sec:rounding},
	the translation result from \autoref{sec:translating}
	and connections to distance-$d$ independent set we derive a number of algorithmic results.

\begin{theorem}
\label{lemma:L:tw:algo}
There is an algorithm that, given distance $\delta \geq 0$, a graph \( G \), a tree decomposition
and an upper bound $L \in \N$ on the length of the longest path in $G$,
computes a maximum \( \delta \)-dispersed set \( S \) in time \( (2L+2)^{\tw(G)} n^{\Oh(1)} \).
\end{theorem}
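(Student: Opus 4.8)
The plan is to combine the rounding procedure of \autoref{lemma:rounding} with the naive treewidth algorithm of \autoref{lemma:naive:tw:algo}. First I would apply \autoref{lemma:rounding}: given the arbitrary distance $\delta \geq 0$ and the bound $L$ on the longest path in $G$, compute the rational $\delta^\star = \tfrac{a^\star}{b^\star} \geq \delta$ that is minimal subject to $a^\star \leq 2L+2$. By \autoref{lemma:rounding} we have $\disp{\delta}(G) = \disp{\delta^\star}(G)$, so it suffices to compute a maximum $\delta^\star$-dispersed set, and any such set is in particular $\delta$-dispersed since $\delta^\star \geq \delta$. Note $\delta^\star$ can be found efficiently: it is obtained by walking the Farey sequence of order $2L+2$, i.e.\ it is the successor of the appropriate fraction, which is computable in time polynomial in $L$ and in the bit-length of $\delta$.

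Next I would invoke \autoref{lemma:naive:tw:algo} with the rational distance $\delta^\star = \tfrac{a^\star}{b^\star}$: this yields a maximum $\delta^\star$-dispersed set in time $(2a^\star)^{\tw(G)}(b^\star n)^{\Oh(1)}$. Here I would use the fact (implicit in \autoref{lemma:naive:tw:algo}, via \autoref{lemma:subdivision:into:is} and the treewidth-preserving subdivision) that subdividing each edge $2b^\star$ times does not increase the treewidth, so the given tree decomposition of width $\tw(G)$ can be converted into one of the same width for the subdivided graph $G_{2b^\star}$, on which the distance-$d$ independent set algorithm of Katsikarelis et al.\ runs. Since $2a^\star \leq 2(2L+2) = 4L+4$ and, by minimality of $\delta^\star$ in the Farey sequence of order $2L+2$, the denominator satisfies $b^\star \leq 2L+2$ as well, the running time $(2a^\star)^{\tw(G)}(b^\star n)^{\Oh(1)}$ is bounded by $(4L+4)^{\tw(G)} n^{\Oh(1)}$. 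A small amount of care brings the base of the exponential down to $2L+2$: one should observe that the relevant quantity in the distance-$d$ independent set dynamic program for $G_{2b^\star}$ and target distance $2a^\star$ is really governed by the original graph, and that after translating back through \autoref{lemma:subdivision:into:is} the table size per bag is $(2L+2)^{\tw(G)}$ rather than $(2a^\star)^{\tw(G)}$; alternatively, since $L \le n$, any base of the form $cL + c'$ is absorbed into $(2L+2)^{\tw(G)} n^{\Oh(1)}$ after a harmless re-parametrisation, using that $(cL+c')^{\tw} \le (2L+2)^{\tw} \cdot c^{\tw}$ and $c^{\tw} \le c^{n}$ is \emph{not} polynomial, so one genuinely needs the sharper bag-size bound.

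The main obstacle is therefore exactly this last point: getting the base of the exponent to be $2L+2$ and not $4L+4$ (or $(2a^\star)$, which could be as large as $\approx 4L$). I expect the cleanest route is to not literally subdivide $2b^\star$ times and then run an off-the-shelf routine, but instead to run a dynamic program directly over the tree decomposition of $G$ itself, where the state at a vertex $v$ records the ``fractional offset'' of the nearest solution point together with its integral distance budget, and to argue via \autoref{lemma:simple:solution} that an optimal $\delta^\star$-dispersed set is $2b^\star$-simple so that each such state lies in a set of size bounded in terms of $a^\star$ only — more precisely, the number of relevant half-open ``distance-to-nearest-point'' values that need to be tracked per vertex is at most $2a^\star$, but because consecutive solution points are exactly $\delta^\star = a^\star/b^\star$ apart, the admissible combined state collapses to at most $2L+2$ possibilities per vertex. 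Once this bag-size bound is in hand, standard join/forget/introduce updates over the tree decomposition give the claimed $(2L+2)^{\tw(G)} n^{\Oh(1)}$ running time; everything else is routine bookkeeping and is deferred (marked \refstar) to the appendix.
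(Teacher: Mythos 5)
Your overall route is the paper's: round $\delta$ up to $\delta^\star=\tfrac{a^\star}{b^\star}$ with $a^\star\le 2L+2$ via \autoref{lemma:rounding}, then pass to the $2b^\star$-subdivision and solve distance-$2a^\star$ independent set by the treewidth DP (\autoref{lemma:subdivision:into:is}, \autoref{lemma:naive:tw:algo}). The genuine gap is your claim that minimality of $\delta^\star$ ``in the Farey sequence of order $2L+2$'' gives $b^\star\le 2L+2$. \autoref{lemma:rounding} only bounds the numerator; when $\delta^\star<1$ the denominator of the minimal fraction $\ge\delta$ with numerator at most $2L+2$ is roughly $a^\star/\delta$ and is not bounded by any function of $n$. (Small arithmetic example: with $2L+2=8$ and $\delta=\tfrac{11}{100}$ the rounded distance is $\delta^\star=\tfrac19$, whose denominator exceeds $2L+2$; for $\delta$ of order $2^{-t}$ the denominator is exponential in the encoding length of $\delta$.) Since your algorithm constructs the $2b^\star$-subdivision explicitly and runs in time $(2a^\star)^{\tw(G)}(b^\star n)^{\Oh(1)}$, an unbounded $b^\star$ destroys the claimed $n^{\Oh(1)}$ factor --- this is exactly the ``$b$ is encoded in unary in the subdivided graph'' caveat the paper states right after \autoref{lemma:naive:tw:algo}. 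The paper closes this hole with a translation step you never use: as long as the current distance $\tfrac ab\le\tfrac34$, apply \autoref{lemma:limit:b} to trade it for $\tfrac{a}{b-a}$ (same numerator, target solution size shifted by $|E(G)|$), repeating until $\tfrac34 b<a\le 2L+2$; this forces $b=\Oh(L)=\Oh(n)$ before subdividing. Your proof needs this step (or some substitute) to be correct whenever $\delta^\star<1$.

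Your second concern, getting base $2L+2$ rather than $2a^\star\le 4L+4$, is legitimate but your fix is only a sketch: the ``state collapse to $2L+2$ possibilities per vertex'' in a direct dynamic program over the decomposition of $G$ is asserted, not argued, and as written it is the one step that would require real work. For what it is worth, the paper does not do better here: its proof simply cites the $(2a)^{\tw(G)}n^{\Oh(1)}$ bound with $a\le 2L+2$, so it literally establishes base $4L+4$ and absorbs the constant factor in the base silently. The decisive missing ingredient in your write-up is therefore the denominator bound via \autoref{lemma:limit:b}, not the base of the exponential.
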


\begin{proof}
According to \autoref{lemma:rounding}, we may consider the rounded up distance,
	that is a rational $\frac{a}{b} \geq \delta$ with $a \leq 2L+2$, instead of $\delta$.
Notice that $\frac{a}{b}$ polynomial time computable.
As long as $\frac{a}{b} \leq \tfrac{3}{4}$, we may repeatedly apply \autoref{lemma:limit:b}
	such that eventually we obtain that $\tfrac{3}{4} b < a \leq 2L+2 $.
Let $G_{2b}$ be a $2b$-subdivision of $G$.
Observe that $\tw(G) = \tw(G_{2b})$
	and the number of vertices increases only by a factor of $\Oh(n^2L)$.
According to \autoref{lemma:subdivision:into:is} $\disp{\tfrac{a}{b}}(G) = \alpha_{2a}(G_{2b})$.
Thus, to answer the original $\delta$-\dispersion-instance
	we may find a maximum distance-$2a$ independent set in $G_{2b}$,
	which is possible in time $(2a)^{\tw(G)} n^{\Oh(1)}$, according to \cite{DBLP:conf/wg/KatsikarelisLP18}.
\end{proof}

This result immediately yields parameterized complexity results for the parameters treedepth and treewidth.
Regarding the treewidth, note that \( n \) is an upper bound on \( L \).
Thus the above algorithm is an \xp-algorithm for the parameter treewidth.
When a treewidth decomposition is given,
\textsc{Dispersion} can be solved in time $2n^{\tw(G)} n^{\Oh(1)}$.
	
\begin{corollary} 
\sloppy
	\textsc{Dispersion} can be solved in time $2n^{\tw(G)} n^{\Oh(1)}$,
	assuming a tree decomposition is given.
\end{corollary}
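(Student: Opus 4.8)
The plan is to obtain this statement as an immediate specialization of \autoref{lemma:L:tw:algo}. First I would supply the trivial path-length bound: any path in a graph on $n$ vertices visits at most $n$ vertices and hence has length at most $n-1$, so $L \coloneqq n-1$ is a valid value for the parameter $L$ demanded by \autoref{lemma:L:tw:algo}, and it is available without any computation. Thus both inputs that \autoref{lemma:L:tw:algo} requires beyond the instance itself --- a tree decomposition and an upper bound on the length of the longest path --- are at hand, the former by assumption and the latter for free.

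Next I would instantiate the running time $(2L+2)^{\tw(G)} n^{\Oh(1)}$ guaranteed by \autoref{lemma:L:tw:algo} with $L = n-1$. Since $2L+2 = 2n$, this yields the bound $(2n)^{\tw(G)} n^{\Oh(1)}$, which is exactly what the statement abbreviates as $2n^{\tw(G)} n^{\Oh(1)}$. That completes the argument: the corollary is nothing more than \autoref{lemma:L:tw:algo} read with the worst-case value of $L$, which is why the surrounding text already notes that $n$ upper-bounds $L$ and that the algorithm is an \xp-algorithm for the parameter treewidth.

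Because this is a one-line reduction to an earlier theorem, I do not expect any real obstacle in the proof itself; all the substantive work (the rounding of \autoref{lemma:rounding}, the translation of \autoref{lemma:limit:b}, the subdivision reduction of \autoref{lemma:subdivision:into:is} to distance-$d$ independent set, and the known treewidth algorithm for that problem) has already been carried out inside \autoref{lemma:L:tw:algo}. The only point meriting a moment's attention is purely notational: the factor $2$ must remain inside the base, since $(2n)^{\tw(G)} = 2^{\tw(G)} n^{\tw(G)}$ cannot be absorbed into $n^{\Oh(1)}$ when $\tw(G)$ is unbounded --- and this is precisely how the statement is phrased.
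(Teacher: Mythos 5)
Your proposal is correct and is essentially the paper's own argument: the corollary is obtained by instantiating \autoref{lemma:L:tw:algo} with the trivial bound on the longest path length (the paper notes $L \leq n$; your sharper $L = n-1$ makes $2L+2 = 2n$ exactly, matching the stated base). No gap here.
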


Similarly we obtain an \fpt algorithm for treedepth $\td(G)$ of the input graph.
The treedepth \( \td(G) \) implies a bound on \( L \), which is \( L \leq 2^{\td(G)} \).
Since also \( \td(G) \geq \tw(G) \), we obtain an \( 2^{\Oh(\td(G)^2)} n^{\Oh(1)} \)-time algorithm,
	assuming a treedepth decomposition is given.

\begin{corollary} 
	\label{lemma:dispersion:treedepth}
	\textsc{Dispersion} can be solved in time \( 2^{\Oh(\td(G)^2)} n^{\Oh(1)} \),
	assuming a treedepth decomposition is given.
\end{corollary}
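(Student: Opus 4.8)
The plan is to obtain \autoref{lemma:dispersion:treedepth} as a direct consequence of \autoref{lemma:L:tw:algo}, by feeding that algorithm a tree decomposition derived from the given treedepth decomposition together with the exponential bound $L \le 2^{\td(G)}$ on the length of the longest path. There is essentially no new idea involved; the work is in invoking two standard facts about treedepth with the right quantitative shape.

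First I would turn the given treedepth decomposition into a tree decomposition in polynomial time: declare the bag of each vertex $v$ to consist of $v$ together with all of its ancestors in the elimination forest, and let the decomposition tree mirror the forest. This is a valid tree decomposition, since every edge of $G$ joins a vertex to one of its ancestors (hence both endpoints lie in the descendant's bag), and the bags containing a fixed vertex form a root-to-node path, which is connected. Each bag has size at most the depth of the forest, so the width is at most $\td(G)-1 \le \td(G)$; in particular $\tw(G) \le \td(G)$. Thus the hypothesis of \autoref{lemma:dispersion:treedepth} already supplies, for free, the tree decomposition required by \autoref{lemma:L:tw:algo}, with width at most $\td(G)$.

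Next I would recall the classical bound that a graph of treedepth $d$ contains no path on $2^{d}$ vertices, so $L \le 2^{\td(G)} - 1$ is a valid upper bound on the longest-path length. This follows by induction on $d$: given a path $Q$ in a component of treedepth $d$, remove that component's root; the path visits the root at most once, so deleting it leaves at most two subpaths, each lying in a graph of treedepth at most $d-1$ and hence (by induction) on at most $2^{d-1}-1$ vertices, giving $|V(Q)| \le 2(2^{d-1}-1)+1 = 2^{d}-1$. Consequently $2L+2 \le 2^{\td(G)+2}$, which is only single-exponential in $\td(G)$.

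Finally I would apply \autoref{lemma:L:tw:algo} with $L := 2^{\td(G)}$ and the tree decomposition constructed above; its running time is at most $\bigl(2^{\td(G)+2}\bigr)^{\td(G)} n^{\Oh(1)} = 2^{(\td(G)+2)\td(G)}\, n^{\Oh(1)} = 2^{\Oh(\td(G)^2)}\, n^{\Oh(1)}$, as claimed. The only points that require a little care -- and the closest thing to an obstacle -- are making sure the given treedepth decomposition is used directly (so that no width-optimal tree decomposition needs to be recomputed) and observing that the seemingly double-exponential quantity $2L+2$ is merely single-exponential in $\td(G)$, so that raising it to the power $\tw(G) \le \td(G)$ keeps the exponent quadratic rather than exponential in $\td(G)$.
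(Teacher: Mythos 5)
Your proposal is correct and follows essentially the same route as the paper: bound the longest path by $L \le 2^{\td(G)}$, use $\tw(G) \le \td(G)$ (via the standard ancestor-bag tree decomposition), and invoke \autoref{lemma:L:tw:algo} to get $(2L+2)^{\tw(G)} n^{\Oh(1)} = 2^{\Oh(\td(G)^2)} n^{\Oh(1)}$. You merely spell out in detail the two standard facts the paper cites implicitly; nothing differs in substance.
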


\section{Parameterized Hardness Results}
\label{sec:hardness}

We complement the positive results by hardness results.
These results borrow ideas from hardness-reductions for the similar problem \problemm{Distance Independent Set} (\problemm{DIS}), see \autoref{section:relationship:with:dis}.

A natural generalization of treedepth is the maximum diameter of graph $G$,
	which is the maximum distance between any vertices $u,v \in {V(G)}$
	(since we only consider connected graphs $G$).
We show \np-hardness for graphs of any diameter $\geq 3$ even for chordal graphs
	by a reduction from \problemm{Independent Set},
	similarly as \np-hardness for \problemm{DIS} is shown by Eto et al.~\cite{DBLP:journals/jco/EtoGM14}.
Our reduction also shows \wone-hardness with respect to the solution size $k$.

\newcommand{\lemmaTextHardnessChordal}{
For every $\delta > 3$,
$\delta$-\dispersion is \np-complete and \wone-hard with parameter solution size, even for connected chordal graphs of diameter $\leq \lceil\delta\rceil$.
}

\begin{lemma}\hyperref[lemma:appendix:hardness:chordal]{\refstar}
\label{lemma:hardness:chordal}
\lemmaTextHardnessChordal
\end{lemma}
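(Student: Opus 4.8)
The plan is to give a polynomial-time reduction from \problemm{Independent Set} that is simultaneously an fpt-reduction for the parameter solution size, mirroring the reduction of Eto et al.~\cite{DBLP:journals/jco/EtoGM14} for \problemm{DIS}. Given an instance $(H,k)$ --- which we may assume has no isolated vertices, since \problemm{Independent Set} is \np-hard and \wone-hard already under this restriction --- I would construct in polynomial time a connected chordal graph $G$ and an integer $k' = k + t$ with $t$ a constant depending only on $\delta$, such that $\disp{\delta}(G) \ge k'$ if and only if $H$ has an independent set of size at least $k$. Membership in \np\ for each fixed $\delta$, even irrational, follows from \autoref{lemma:rounding} and \autoref{lemma:simple:solution}: a $\delta$-dispersed set of maximum size may be taken to be $2b^\star$-simple for the rounded distance $\delta^\star = \tfrac{a^\star}{b^\star}$, hence of polynomial bit-size and verifiable in polynomial time.

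For the construction I would start from a \emph{split-graph core}: a clique $K = \{k_e \mid e \in E(H)\}$ together with an independent set $\{x_v \mid v \in V(H)\}$, where $x_v$ is adjacent to $k_e$ exactly when $v \in e$. Split graphs are chordal, and for distinct $u,v \in V(H)$ one has $d_G(x_u, x_v) = 2$ if $\{u,v\} \in E(H)$ and $d_G(x_u, x_v) = 3$ otherwise. To every $x_v$ I attach a pendant path (a ``pocket'') $Q_v$ of length $\ell \ge 1$, where $\ell$ depends only on $\delta$, with far endpoint $\bar v$; attaching pendant paths preserves chordality, as their internal and end vertices form a chain of simplicial vertices to eliminate first. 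Then $d_G(\bar u, \bar v) = 2\ell + 2$ when $\{u,v\} \in E(H)$ and $2\ell + 3$ otherwise, and because $\ell$ is roughly $\delta/2$, each pocket holds at most one point of a $\delta$-dispersed set. The constant $t$ accounts for a fixed ``base'' $\delta$-dispersed set supported on $K$ and the incident unit edges, which does not depend on $E(H)$.

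For correctness, the forward direction places a point at $\bar v$ for each $v$ in a given size-$k$ independent set of $H$, together with the base set; two such points are at distance $\ge 2\ell + 3 \ge \delta$, and one checks that the base points retain distance $\ge \delta$ to everything. For the reverse direction, from a $\delta$-dispersed set $S$ with $|S| = k'$ I would first normalise $S$: using \autoref{lemma:simple:solution} and a shifting/exchange argument that moves each pocket's point to its far end and reorganises the points in $K$, I may assume $S$ is the base set plus at most one point per pocket, located at $\bar v$. Then $\{v : \bar v \in S\}$ is independent in $H$, since $\{u,v\} \in E(H)$ forces $d_G(\bar u, \bar v) = 2\ell + 2 < \delta$, and it has size at least $k$. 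Read as an fpt-reduction (parameter $k' = k + t$), the same construction yields \wone-hardness for the solution size; the diameter of $G$ equals $2\ell+3$ (attained by a pocket pair over a non-edge of $H$, assuming $H$ is not complete), which the choice of $\ell$ forces to be at most $\lceil\delta\rceil$.

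The main obstacle is the joint calibration of $\ell$: we need a single integer with $\delta \in (2\ell+2,\, 2\ell+3]$ \emph{and} $2\ell+3 \le \lceil\delta\rceil$, for every real $\delta > 3$. Since $2\ell$ increases in steps of two, the naive windows $(2\ell+2, 2\ell+3]$ leave uncovered intervals, and for integer $\delta$ the distinction between ``$< \delta$'' and ``$\le \delta$'' in the two cases above becomes critical. Handling this requires a small, $\delta$-dependent refinement of the core that widens the edge/non-edge distance gap --- for instance replacing the last unit edge of each pocket by a short pendant gadget so that a representative point may sit at a half-integral offset (exploiting that dispersion allows points in edge interiors), or padding only the ``non-edge'' routes through the split core by a bounded amount --- all while keeping $G$ chordal, connected, and of diameter at most $\lceil\delta\rceil$. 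Checking that this refined construction still yields exactly the stated correspondence, and that the normalisation of $S$ is unaffected, is the bulk of the work, which I would defer to the appendix.
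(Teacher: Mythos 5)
Your construction is essentially the paper's (a chordal split-type core with a clique indexed by the edges of $H$, one attachment vertex per vertex of $H$, and pendant paths of length roughly $\delta/2$ whose far ends carry the selected points), but two concrete points do not go through as written. First, the ``base'' set with $t\geq 1$ extra points breaks the forward direction: any point supported on $K$ or its incident unit edges is within distance at most $\ell+3\leq 2\ell+2<\delta$ of every pocket endpoint $\bar v$, so a nonempty base $\delta$-dispersed set cannot coexist with even a single pocket point. (The paper's own reverse-direction argument exploits exactly this: if a solution contains a point in the central clique region, it has size $1$.) So you must take $t=0$ and $k'=k$. Relatedly, the normalisation of $S$ via \autoref{lemma:simple:solution} plus an exchange argument is both unnecessary and delicate (\autoref{lemma:simple:solution} only applies to rational $\delta$, while the statement covers all $\delta>3$); the paper instead covers $P(G')$ by balls of radius $\lceil\delta/2\rceil-1$ around the pocket ends plus the central region and shows each ball, each union of two balls over an edge of $H$, and the central region hosts at most one point.

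Second, and more importantly, the calibration issue you flag at the end is not a deferrable technicality; it is the heart of the construction for half of all values of $\delta$. Integer-length pockets only realize the window $\delta\in(2\ell+2,\,2\ell+3]$, i.e.\ essentially the case of odd $\lceil\delta\rceil$; when $\lceil\delta\rceil$ is even no integer $\ell$ works, exactly as you observe. The paper closes this gap with a specific gadget: it uses paths of length $\lceil\delta/2\rceil-2$ ending at $u_1$ and, when $\lceil\delta\rceil$ is even, adds a true twin $u_2$ of $u_1$, placing the representative point at the half-integral midpoint $p(u_1,u_2,\tfrac{1}{2})$. This gains exactly the extra unit ($\tfrac{1}{2}+\tfrac{1}{2}$) needed so that the non-edge distance becomes $2\lceil\delta/2\rceil\geq\delta$ while the edge distance stays $2\lceil\delta/2\rceil-1<\delta$, keeps the diameter at most $\lceil\delta\rceil$, and preserves chordality (a true twin of a simplicial vertex keeps the graph chordal). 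Your sketch gestures at this kind of fix (``a point at a half-integral offset'') but explicitly defers its construction and verification, so the argument is incomplete precisely where the real work lies.
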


Another direct generalization of treedepth is pathwidth of the input graph $G$.
We show \wone-hardness even for the combined parameters pathwidth and solution size $\pw(G)+k$.
With the same reduction also \wone-hardness for the combined parameters `feedback vertex set size' $\fvs(G)$ and solution size $k$ follows.
We can essentially use the same reduction as used by 
Katsikarelis et al.\ to show \wone-hardness of \problemm{DIS} when parameterized by $\fvs(G)+k$ by reducing from \problemm{Multi-Colored-Independent-Set}~\cite{DBLP:conf/wg/KatsikarelisLP18}.

\newcommand{\lemmaTextPwFvsETH}{
\dispersion is \wone-hard parameterized by $\pw(G)+k$.
Further, there is no $n^{o(\sqrt{\pw(G)}+ \sqrt{k})}$-time algorithm unless ETH fails.
\dispersion is \wone-hard parameterized by $\fvs(G)+k$.
Further, there is no $n^{o(\fvs(G)+ \sqrt{k})}$-time algorithm unless ETH fails.
}

\begin{theorem}\hyperref[lemma:appendix:pw:fvs:eth]{\refstar}
\label{lemma:pw:fvs:eth}
\lemmaTextPwFvsETH
\end{theorem}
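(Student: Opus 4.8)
The plan is to reduce from \problemm{Multi-Colored Independent Set} (\problemm{MCIS}) with $\ell$ colour classes on a graph with $N$ vertices — equivalently $\ell$-\problemm{Clique} — which under ETH admits no $f(\ell)\,N^{o(\ell)}$-time algorithm for any computable $f$ \cite{DBLP:books/sp/CyganFKLMPPS15}. As flagged before the statement, I would reuse the \wone-hardness construction of Katsikarelis et al.\ for \problemm{Distance Independent Set} parameterized by $\fvs+k$ \cite{DBLP:conf/wg/KatsikarelisLP18}. That construction turns an \problemm{MCIS} instance into a pair $(H,d)$, where $H$ is assembled from $\ell$ long ``selection paths'' and a few synchronisation vertices, such that $\fvs(H)=\Oh(\ell)$, $\pw(H)=\Oh(\ell)$, the target independent-set size is $k=\Theta(\ell^2)$, and the \problemm{MCIS} instance is a yes-instance if and only if $\alpha_d(H)\ge k$. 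The remaining task is to realise the same construction as a \dispersion instance.

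For the transfer I would use \autoref{lemma:subdivision:into:is} together with \autoref{lemma:subdivide}: since $\disp{a/b}(G)=\alpha_{2a}(G_{2b})$ and $\disp{\delta}(G)=\disp{c\delta}(G_c)$, if the graph on which the distance-$d$ independent set is sought is a $2$-subdivision $G_2$ and $d$ is even, then $\disp{d/2}(G)=\alpha_d(G_2)$; and one can always bring KLP's instance into this shape, since their gadgets are themselves paths, by at most one further uniform subdivision — which, by \autoref{lemma:subdivide}, changes neither the answer nor the solution size, and, being edge subdivision, changes neither $\fvs$ nor $\pw$. (Equivalently, one may interpret KLP's construction directly as a \dispersion instance for a fixed rational $\delta$ and argue, as in the proof of \autoref{lemma:subdivision:into:is}, that an optimal $\delta$-dispersed set is half-integral by \autoref{lemma:simple:solution} and hence corresponds to a distance-$d$ independent set of the relevant auxiliary graph of the same size, so that the continuous and discrete optima coincide.) Either way, the composition is a parameterized reduction (in $\ell$) from \problemm{MCIS} to \dispersion producing a graph $G$ with $\fvs(G)=\Oh(\ell)$, $\pw(G)=\Oh(\ell)$, target $k=\Theta(\ell^2)$, and $n=|V(G)|$ polynomial in $N$ and $\ell$; since both $\pw(G)+k$ and $\fvs(G)+k$ are $\Oh(\ell^2)$, \wone-hardness for both parameterisations follows.

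For the ETH statements it remains to check the exponents. We have $\sqrt{\pw(G)}+\sqrt{k}=\Oh(\sqrt{\ell}+\ell)=\Oh(\ell)$ and $\fvs(G)+\sqrt{k}=\Oh(\ell)$. Hence an algorithm for \dispersion running in time $n^{o(\sqrt{\pw(G)}+\sqrt{k})}$, respectively $n^{o(\fvs(G)+\sqrt{k})}$, would, applied to the constructed instance, decide \problemm{MCIS} in time $N^{o(\ell)}$, contradicting ETH. The one point to be careful about here is that, after substituting the constants hidden in the $\Theta(\ell^2)$ and $\Oh(\ell)$ estimates, the $o(\cdot)$ in the exponent remains $o(\ell)$.

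I expect the genuine work — beyond re-doing and re-checking the KLP gadgets, which is technical but routine — to be bridging the continuous-versus-discrete gap: ruling out that a $\delta$-dispersed set can gain cardinality by placing points in edge interiors in a way no distance-$d$ independent set of $H$ can imitate. Routing everything through \autoref{lemma:subdivision:into:is} — that is, building $G$ so that its $2$-subdivision is exactly the graph on which KLP seek a distance-$d$ independent set — is precisely what closes this gap, because there the identity $\disp{d/2}(G)=\alpha_d(G_2)$ is exact; the remaining effort is then to confirm that KLP's gadgets keep their intended behaviour under this reformulation and to propagate $\fvs$, $\pw$ and the solution size through it.
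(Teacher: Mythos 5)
Your overall strategy is the same as the paper's: reduce from \problemm{Multi-Colored-Independent-Set} with $\ell$ colour classes via the construction of Katsikarelis et al., observe $\fvs=\Oh(\ell)$, $\pw=\Oh(\ell^2)$ (the paper only needs and only proves the quadratic bound, by putting the $a_i,b_i$ and $g_{i_1,i_2}$ vertices in every bag), solution size $\Theta(\ell^2)$, and then the exponent arithmetic you do for the ETH statements is exactly the paper's and is fine. The genuine gap is in your bridging step from the \problemm{Distance Independent Set} instance $(H,d)$ to a \dispersion instance. You justify the ``one further uniform subdivision'' of $H$ by \autoref{lemma:subdivide}, but that lemma is a statement about $\delta$-dispersed sets, not about distance-$d$ independent sets; the analogous discrete claim $\alpha_d(H)=\alpha_{2d}(H_2)$ is false in general. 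For example, for $H=K_4$ and $d=2$ we have $\alpha_2(K_4)=1$, while in the $2$-subdivision the two subdivision vertices of a perfect matching are at distance $4$, so $\alpha_4\big((K_4)_2\big)\geq 2$. Consequently, arranging matters so that \autoref{lemma:subdivision:into:is} applies only gives you exactness between $\disp{a/b}(G)$ and $\alpha_{2a}(G_{2b})$; it does not tie this quantity back to $\alpha_d(H)$, which is the value KLP's correctness argument actually controls. The hard direction of the reduction (a large dispersed set forces a multicoloured independent set) could in principle be broken by points placed at midpoints/subdivision vertices of the gadget paths, and ruling this out is not a consequence of the quoted lemmas --- it is the instance-specific content of the proof. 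So the ``continuous-versus-discrete gap'' you flag is not closed by routing through \autoref{lemma:subdivision:into:is}; it is merely renamed as the question whether the subdivided KLP gadgets keep their intended behaviour, which still has to be argued.

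The paper handles this by skipping the detour entirely: it takes the KLP gadgets verbatim as a \dispersion instance with integer distance $\delta=6n$ and target $k'=k^2$ (selection paths of lengths $n+l$, non-edge vertices $u_e$ attached by paths of lengths $5n-j$ and $4n+j$, and the $g_{i_1,i_2}$-paths of length $6n-1$), and argues directly in the continuous setting, following KLP's length-based arguments, that a maximum $6n$-dispersed set is forced to pick points corresponding to a multicoloured independent set and to non-edges between the selected vertices. If you want to keep your route, you would have to redo that forcing argument on the subdivided graph (i.e., show that the midpoint vertices cannot be exploited), which is the same work; as written, the step ``changes neither the answer nor the solution size'' is unsupported and, as the $K_4$ example shows, not a generally valid principle.
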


Since $\fvs(G)$ is a linear upper bound for the treewidth of $G$, we also obtain:
\dispersion is \wone-hard parameterized by $\tw(G)+k$.
Further, there is no $n^{o(\tw(G)+ \sqrt{k})}$-time algorithm unless ETH fails.
Similarly as in~\cite{DBLP:conf/wg/KatsikarelisLP18} we obtain a lower bound for treedepth.

\newcommand{\lemmaTextTDlower}{Assuming ETH, there is no $2^{o(\td(G)^2)}$-time algorithm for \dispersion.}

\begin{theorem}\hyperref[lemma:appendix:td:lower:bound]{\refstar}
\label{lemma:td:lower:bound}
\lemmaTextTDlower
\end{theorem}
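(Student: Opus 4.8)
The plan is to obtain the ETH lower bound for \dispersion parameterized by treedepth by a reduction that transfers a known $2^{o(\td^2)}$ lower bound for \problemm{Distance Independent Set} through the subdivision correspondence established in \autoref{lemma:subdivision:into:is}. Concretely, Katsikarelis et al.~\cite{DBLP:conf/wg/KatsikarelisLP18} give a reduction (from $3$-\problemm{SAT}, or from a suitably structured \problemm{Independent Set}/\problemm{CSP} variant) producing, for a fixed target distance $d$, instances of \problemm{Distance-$d$-Independent-Set} with $\td(G') = \Oh(\sqrt{N}\cdot\mathrm{polylog})$, so that a $2^{o(\td(G')^2)}$ algorithm would solve $3$-\problemm{SAT} in subexponential time. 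I would reuse exactly that construction and then observe that the produced graph is (up to subdividing every edge into a path of length $2b$ for the appropriate constant $b$, say $b=1$ so $d=2a$) the $2b$-subdivision of some graph $H$; by \autoref{lemma:subdivision:into:is} a distance-$d$ independent set in that graph corresponds to a $\tfrac{a}{b}$-dispersed set in $H$ of the same size.

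First I would fix the rational distance $\delta = \tfrac{a}{b}$ for which \dispersion is being analyzed (the statement is about the input-distance version, so $\delta$ comes with each instance; for the fixed-$\delta$ reading one picks any noninteger $\delta=\tfrac{a}{b}$ with $a\ge 3$, where the problem is already \np-hard). Starting from a $3$-\problemm{SAT} instance $\varphi$ with $N$ variables and $\Oh(N)$ clauses, I would invoke the Katsikarelis et al.\ machinery to build an equivalent \problemm{Distance-$2a$-Independent-Set} instance $(G', k')$ where $G'$ is a $2b$-subdivision of a graph $H$ (this is automatic: their gadgets are built from ``long path'' edges which one can normalize to length $2b$, or one uniformly re-subdivides, which by \autoref{lemma:subdivide} only rescales the distance and preserves the independent-set value). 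The key quantitative point carried over is $\td(H) = \Oh(\sqrt{N})$, hence also $\td(G') = \Oh(\sqrt{N})$ since $2b$-subdividing a graph increases treedepth by at most a logarithmic additive term — more carefully, subdividing each edge into a path of constant length $2b$ changes treedepth by at most an additive $\Oh(\log(2b)) = \Oh(1)$ for fixed $b$, so $\td(G')=\Theta(\td(H))$. Then by \autoref{lemma:subdivision:into:is}, $\alpha_{2a}(G') = \disp{\tfrac{a}{b}}(H)$, so $(H, k')$ is a YES-instance of \dispersion iff $\varphi$ is satisfiable.

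Now I would finish by contradiction: a hypothetical $2^{o(\td(G)^2)}$-time algorithm for \dispersion, run on $H$, would decide $\varphi$ in time $2^{o(\td(H)^2)} \cdot \mathrm{poly} = 2^{o((\sqrt{N})^2)} = 2^{o(N)}$, contradicting ETH. One must be mildly careful that the polynomial overhead of building $H$ from $G'$ (un-subdividing, i.e.\ contracting the subdivision paths) and of the Katsikarelis et al.\ reduction itself is genuinely polynomial in $N$, so it is absorbed into the $\mathrm{poly}$ factor and does not pollute the $2^{o(N)}$ bound; this is routine.

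The main obstacle is not the logic of the transfer but verifying that the \problemm{DIS} hardness construction can be arranged so that (i) its underlying multigraph is literally a uniform subdivision — equivalently, that all ``distance gadgets'' use edges of a common length divisible by $2b$, which may require re-subdividing by an extra constant factor and thus replacing the target distance by an integer multiple (fine, by \autoref{lemma:subdivide}) — and (ii) the treedepth bound $\Oh(\sqrt{N})$ in~\cite{DBLP:conf/wg/KatsikarelisLP18} is stated for, or easily adapted to, the exact instances we feed into \autoref{lemma:subdivision:into:is}. I expect both to be straightforward adaptations: treedepth behaves well under constant-length subdivision, and the \problemm{DIS} lower-bound instances are already built from long induced paths, so they are ``morally'' subdivisions already. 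Writing this carefully — tracking the additive constants in $\td$ under subdivision and checking the parameters line up — is the only real work, and it is what the appendix proof (referenced as \refstar) presumably does in detail.
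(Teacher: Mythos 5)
Your high-level skeleton (3-\problemm{SAT} grouped into $\sqrt{N}$ blocks, a Katsikarelis-et-al.-style construction with treedepth $\Oh(\sqrt{N})$, then ``run the hypothetical $2^{o(\td^2)}$ algorithm'') matches the paper, but the step you declare routine is exactly where your argument breaks. You want to import the \problemm{DIS} lower bound as a black box and convert its instance $G'$ into a \dispersion instance via \autoref{lemma:subdivision:into:is}. That lemma only applies when $G'$ is \emph{literally} a uniform $2b$-subdivision of some graph $H$ (every edge of $H$ replaced by a path of the same length $2b$) and the distance is $2a$. The hardness gadgets are the opposite of uniform: the paths have lengths $n+l$, $5n-j$, $4n+j$, $6n-1$, and there are direct edges (e.g.\ $g_{i_1,i_2}u_e$), because the varying lengths are what encode the indices. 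So there is no $H$ with $G'=H_{2b}$, and ``normalizing all gadget paths to a common length'' would destroy the construction. Your fallback --- uniformly re-subdivide $G'$ and claim the independent-set value is preserved ``by \autoref{lemma:subdivide}'' --- misuses that lemma: it states $\disp{\delta}(G)=\disp{c\delta}(G_c)$ for the \emph{continuous} problem and says nothing about $\alpha_d$. And you cannot shortcut by identifying $\disp{d}(G')$ with $\alpha_d(G')$ for integer $d$ either: already in $K_4$ with $d=2$ one has $\alpha_2(K_4)=1$ but $\disp{2}(K_4)=2$ (midpoints of two disjoint edges are at distance $2$), so allowing points in edge interiors can genuinely beat the vertex solution, and ruling this out on the hardness instances is a correctness proof, not bookkeeping.

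The paper avoids this transfer problem by not going through \problemm{DIS} at all: it reduces $3$-\problemm{SAT} to \problemm{Multi-Colored Independent Set} with $k=\sqrt{N}$ color classes of size $c^{\sqrt{N}}$ (a $2^{\Oh(\sqrt{N})}$-time reduction), and then reuses the construction from \autoref{lemma:pw:fvs:eth} \emph{directly as a \dispersion instance} with $\delta=6n$, verifying in the continuous setting that the gadget path lengths force any large $\delta$-dispersed set to sit on the intended vertices; the treedepth bound $\Oh(k+\log\ell)=\Oh(\sqrt{N})$ comes from deleting the $2k$ vertices $a_i,b_i$ and observing the remaining trees have treedepth $\Oh(\log\ell)$. (Two smaller points: the theorem is about \dispersion with $\delta$ part of the input --- the reduction uses $\delta=6n$ growing with the instance, so your ``fix any $\delta=\tfrac{a}{b}$ with $a\ge 3$'' framing is not what is proved here; and your treedepth-under-subdivision estimates, while harmless in the direction you need since $H$ is a minor of its subdivision, do not rescue the structural mismatch above.) To make your route work you would have to re-prove, for the specific \problemm{DIS} instances, that interior points give no advantage --- at which point you are doing the same continuous correctness argument the paper does, just phrased differently.
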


\section{NP-hardness for Irrational Distance}
\label{sec:irrational}

We show \np-hardness of $\delta$-\dispersion for every irrational distance $\delta >0$.
Thus together with earlier results \cite{grigoriev2021dispersing}
	the complexity for every real $\delta > 0$ is resolved:
For rational distance $\delta = \frac{a}{b}$ where $a \in \{1,2\}$ the problem is polynomial time solvable,
	while it is \np-complete for every other distance $\delta > 0$.

\begin{theorem}
\label{lemma:np:hardness:irrational}
	For every irrational $\delta>0$, $\delta$-\problemm{Dispersion} is \np-complete.
\end{theorem}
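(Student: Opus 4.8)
The plan is to reduce from a known NP-hard case of rational dispersion, namely $\delta'$-\dispersion for a suitable rational $\delta'$ with numerator $\geq 3$ (which is NP-complete by~\cite{grigoriev2021dispersing}), and to exploit the rounding theorem (\autoref{lemma:rounding}) in reverse: the key observation is that whether a graph admits a $\delta$-dispersed set of size $k$ depends on $\delta$ only through which rational $\tfrac{a^\star}{b^\star}$ it rounds up to, and this rounded value is governed by the longest path length $L$ of the instance. So given an irrational $\delta > 0$, I would first fix $\delta$, and then, for an input graph $H$ of the rational problem, construct $G$ from $H$ by subdividing every edge by a common factor $c$ (using \autoref{lemma:subdivide}, which gives $\disp{\delta'}(H) = \disp{(c\delta')}(G)$) chosen large enough that $c\delta'$ lies in a controlled range, and then argue that $\disp{\delta}(G) = \disp{(c\delta')}(G)$ because $\delta$ and $c\delta'$ round up to the same rational relative to the (now large) path-length bound $L(G) = cL(H)$.

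The main steps, in order: (1) Pick the source problem: take a rational $\delta' = \tfrac{a'}{b'}$ with $a' \geq 3$ for which $\delta'$-\dispersion is NP-complete, say $\delta' = \tfrac{3}{2}$, with $H$ ranging over inputs for which hardness holds. (2) Given irrational $\delta$, choose the subdivision factor $c$ as a function of $\delta$ and of the instance $H$ so that $c\delta'$ is a rational just below $\delta$ (or we scale so that $\delta$ sits strictly between two consecutive "relevant" rationals of order $2L(G)+2$). Concretely, using that the Farey neighbors of $\delta$ of order $N$ converge to $\delta$, and that $\delta$ is irrational so it is never itself such a rational, I would arrange that the largest rational with numerator $\leq 2L(G)+2$ that is $\leq$ (the inverse bound applies to $\tfrac1\delta$) forces $\delta^\star$ for the instance $G$ to equal exactly $c\delta'$. (3) Apply \autoref{lemma:rounding} to conclude $\disp{\delta}(G) = \disp{\delta^\star}(G) = \disp{(c\delta')}(G) = \disp{\delta'}(H)$, so the reduction is correct. (4) Check that $c$ is polynomially bounded (or at least that $G$ has polynomial size), so the reduction runs in polynomial time; this is where I would need the convergence rate of Farey neighbors to be good enough — a continued-fraction estimate for $\delta$ gives denominators growing only polynomially in the required precision, but $\delta$ is an arbitrary fixed irrational, so the relevant quantity "how large must $N$ be so that the Farey-$N$ neighbors of $\delta$ pin down the same rational as $c\delta'$" must be bounded in terms of the instance size of $H$, not of $\delta$.

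The hard part will be exactly step (2)–(4): making the subdivision factor $c$ depend correctly on both $\delta$ and $H$ while staying polynomial. The subtlety is that $\delta$ is fixed (not part of the input), so a clean "rational approximation of $\delta$" is not available as a reduction gadget; instead the argument has to be: for every fixed irrational $\delta$ there is a constant (depending on $\delta$) such that, after subdividing $H$ by a factor polynomial in $|H|$, the rounded-up distance $\delta^\star$ of the subdivided graph coincides with the scaled target rational. I expect this to follow from the three-distance / Stern–Brocot structure behind \autoref{lemma:rounding}: the candidate rational $\delta^\star$ is determined by which integer multiples of $\tfrac1\delta$ fall below $2L+2$, and as $L$ grows linearly in the subdivision factor, one can tune $c$ so that $c\delta'$ is precisely the minimal rational $\geq \delta$ with numerator $\leq 2cL(H)+2$. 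Verifying that such a $c$ of polynomial magnitude always exists (uniformly over instances $H$, with the implied constant depending only on $\delta$) is the crux; once it is in place, correctness is immediate from \autoref{lemma:subdivide} and \autoref{lemma:rounding}, and NP membership is routine since an optimal solution can be taken $2b^\star$-simple by \autoref{lemma:simple:solution}.
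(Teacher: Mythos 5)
There is a genuine gap, and it sits exactly where you flagged ``the crux'': the equality $\disp{\delta}(G)=\disp{(c\delta')}(G)$ that your reduction needs cannot be arranged once the source instance $H$ is large. By \autoref{lemma:rounding}, $\disp{\delta}(G)=\disp{\delta^\star}(G)$ where $\delta^\star$ is the \emph{smallest} rational $\geq\delta$ with numerator at most $2L(G)+2$, and this $\delta^\star$ converges to $\delta$ as the graph grows: taking $b=\lfloor (2L(G)+1)/\delta\rfloor$ and $a=\lceil b\delta\rceil$ shows $\delta^\star-\delta< 1/b = O(1/L(G))$. On the other hand, for a fixed rational $\delta'=\tfrac{a'}{b'}$ (e.g.\ $\tfrac32$) and integer subdivision factors $c$, the values $c\delta'$ form a discrete set, and since $\delta$ is irrational the distance $\eta\coloneqq\min\{c\delta'-\delta \mid c\in\N,\ c\delta'>\delta\}$ is a fixed positive constant. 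As soon as $L(H)$ (hence $L(G)=cL(H)\geq L(H)$) exceeds roughly $\delta/\eta$, we get $\delta^\star-\delta<\eta\leq c\delta'-\delta$ for \emph{every} admissible $c$, so $\delta^\star<c\delta'$ and your required identity $\delta^\star=c\delta'$ fails for all choices of $c$; increasing $c$ only enlarges $L(G)$ and makes matters worse. The fallback you mention, taking $c\delta'$ just \emph{below} $\delta$, does not help either: \autoref{lemma:rounding} only rounds upwards, so it gives no relation between $\disp{\delta}(G)$ and a strictly smaller rational distance. Choosing a different hard rational $\delta'$ depending on $\delta$ runs into the same obstruction, since any fixed rational sits at a fixed positive distance from the irrational $\delta$. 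The NP-membership remark (via \autoref{lemma:rounding} and \autoref{lemma:simple:solution}) is fine, but the hardness reduction as designed cannot be repaired by tuning $c$.

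For contrast, the paper does not use the rounding theorem for this result at all. It first proves hardness for \emph{every} $\delta$ in the whole interval $(2,3]$ by a single reduction from \problemm{Independent Set} whose construction does not depend on the particular $\delta$ in that interval (\autoref{lemma:np:hardness:2:3}); it then transports this interval exactly, using the translation lemma (\autoref{lemma:limit:b:auto}, $\delta\mapsto\tfrac{\delta}{\delta+1}$, valid since $\delta\leq3$) and the subdivision lemma (\autoref{lemma:subdivide}), to obtain hardness on every interval $\bigl(\tfrac{2c}{2x+1},\tfrac{3c}{3x+1}\bigr]$; finally, density of the fractional parts of $c\delta^{-1}$ (irrationality of $\delta^{-1}$) yields integers $c,x$ with $\tfrac{2c}{2x+1}<\delta\leq\tfrac{3c}{3x+1}$, covering the given irrational $\delta$. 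The essential idea your plan is missing is precisely this interval-based hardness: because no fixed rational can track an irrational $\delta$ to within the $O(1/L)$ precision the rounding theorem enforces, one needs hardness that is uniform over an interval of distances rather than hardness at a single rational distance plus rounding.
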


The key step is a reduction from \problemm{Independent Set} which
shows \np-hardness not only for a single distance $\delta$ but for the whole interval $\delta \in (2,3]$.

\smallskip
\noindent\textbf{Construction:}
Given a graph $G$ and integer $k \in \N$,
	we construct an input for $\delta$-\problemm{Dispersion} consisting of a graph $G'$ and integer $k' = k$ as follows.
For every vertex $u \in V(G)$ introduce vertices $u_1,u_2$ and edge $\{u_1,u_2\}$.
For every edge $\{u,v\} \in E(G)$ introduce edges $\{u_i,v_j\}$ for every $i,j \in \{1,2\}$.

\begin{lemma}
	\label{lemma:np:hardness:2:3}
	For every $\delta \in (2,3]$,
	$\delta$-\dispersion is \np-hard and \wone-hard when parameterized by solution size.
\end{lemma}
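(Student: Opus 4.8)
The plan is to reduce from \problemm{Independent Set}, showing that $G$ has an independent set of size $k$ if and only if $G'$ has a $\delta$-dispersed set of size $k'=k$, for any $\delta \in (2,3]$. The graph $G'$ replaces each vertex $u$ by a ``gadget edge'' $\{u_1,u_2\}$ and connects the endpoints of adjacent gadget edges completely. The key geometric observations are: (i) within a single gadget edge $\{u_1,u_2\}$, which has length $1 < \delta$, at most one point of a $\delta$-dispersed set can lie; (ii) if $\{u,v\} \in E(G)$, then any point $p$ on gadget edge $\{u_1,u_2\}$ and any point $q$ on gadget edge $\{v_1,v_2\}$ satisfy $d(p,q) \le 1 + 1 + 1 = 3$, because one can go from $p$ to the nearer of $u_1,u_2$, cross a connecting edge, and reach $q$ — more carefully, $d(p,q) \le d(p,u_i) + 1 + d(v_j,q) \le \tfrac12 + 1 + \tfrac12 = 2$ for a suitable choice of $i,j$ (each endpoint of $\{u_1,u_2\}$ is adjacent to each endpoint of $\{v_1,v_2\}$), so in fact $d(p,q)\le 2 < \delta$ when $\delta>2$; hence two points on gadget edges of adjacent vertices are forbidden. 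This shows any $\delta$-dispersed set uses gadget edges only from an independent set of $G$, giving the ``only if'' direction after noting that points not on any gadget edge (i.e. on connecting edges) can be ``charged'' to gadget edges without increasing the count — or more simply, one argues the optimal solution can be taken to consist of one point per chosen gadget edge.

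First I would nail down the distances in $G'$ precisely: compute $d(p,q)$ for all relevant pairs of points — two points on the same gadget edge, two points on gadget edges of adjacent vertices, two points on gadget edges of non-adjacent vertices, and points involving connecting edges — and record which configurations are compatible with distance $\ge \delta$. The crucial bound is that gadget edges of \emph{non-adjacent} vertices are at distance $\ge 3 \ge \delta$ when we place points at the midpoints $p(u_1,u_2,\tfrac12)$, since any path between such midpoints must traverse at least three full edges; I would verify the diameter-type estimate that the shortest such path has length exactly $3$ (go midpoint $\to$ endpoint $\to$ common neighbor's gadget endpoint $\to$ midpoint — wait, non-adjacent vertices share no connecting edge, so the path is midpoint $\to u_i \to w_\ell \to w_{\ell'} \to v_j \to$ midpoint through an intermediate gadget, length $\ge \tfrac12 + 1 + 1 + 1 + \tfrac12 = 4$, or through the original graph's structure). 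This case analysis, and in particular confirming that $\delta \le 3$ is exactly what makes midpoints of an independent set's gadget edges mutually $\delta$-dispersed, is where I expect the main technical care to be needed.

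For the ``if'' direction: given an independent set $I$ of size $k$ in $G$, place the midpoint $p(u_1,u_2,\tfrac12)$ for each $u \in I$; by the distance computations above, any two such points are at distance $\ge 3 \ge \delta$, so this is a $\delta$-dispersed set of size $k = k'$. For the ``only if'' direction: given a $\delta$-dispersed set $S$ of size $k'$ in $G'$, I would first argue we may assume every point of $S$ lies on a gadget edge — a point in the interior of a connecting edge $\{u_i,v_j\}$ can be moved to $u_i$ (hence onto the gadget edge $\{u_1,u_2\}$) without decreasing pairwise distances, using that $u_i$ is ``between'' that point and everything else in a suitable sense, or by a direct local argument — and that at most one point lies on each gadget edge (length $1<\delta$); then the set of vertices $u$ whose gadget edge carries a point of $S$ is an independent set of $G$ (by observation (ii) above, adjacent gadget edges cannot both be occupied), of size $\ge k' = k$. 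The W[1]-hardness with parameter solution size follows immediately because $k' = k$ and \problemm{Independent Set} is W[1]-hard with parameter solution size, and \np-hardness follows since the construction is clearly polynomial-time.

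The main obstacle I anticipate is the reduction of a general $\delta$-dispersed set to one supported on gadget edges: one must show that a point sitting on a connecting edge $\{u_i,v_j\}$ can be ``snapped'' to a gadget-edge vertex while preserving $\delta$-dispersion and not colliding with another point already there. Here the completeness of the connection between adjacent gadgets helps — if a connecting edge is occupied, then $u$ and $v$ are adjacent in $G$, and the two endpoints $u_i, v_j$ behave almost interchangeably — but one has to handle the case where both $u_1$-side and $u_2$-side connecting edges are used, and rule out that a clever fractional placement beats any integral/midpoint placement. I would likely phrase this via the structure of optimal solutions on the $2$-subdivision or simply by exhibiting that in $G'$ every point has a half-integral ``retraction'' to the nearest gadget edge that is distance-nonincreasing, then invoking \autoref{lemma:simple:solution}-style reasoning to assume $S$ is half-integral from the start, which reduces the analysis to finitely many point positions per edge and makes the case check finite and routine.
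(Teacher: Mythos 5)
Your construction and the forward direction are exactly the paper's: replace each vertex $u$ by an edge $\{u_1,u_2\}$, join adjacent gadgets completely, and place midpoints for an independent set. (Minor slip there: for non-adjacent $u,v$ with a common neighbor $w$ the two midpoints are at distance exactly $3$, via $u_i\to w_\ell\to v_j$, not $\geq 4$ as your parenthetical path through two $w$-vertices suggests; only the bound $\geq 3\geq\delta$ is needed, and it does hold, but this is also why the interval cannot be pushed past $\delta=3$.)

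The backward direction is where your proposal has a genuine gap. Your primary plan --- snap each point on a connecting edge to a nearest gadget endpoint via a ``distance-nonincreasing retraction'' --- fails: such a retraction does not exist. Take $\delta=2.1$, $G$ the path $v-u-y-z$, let $p$ be the midpoint of the connecting edge $\{u_1,v_1\}$ and $q$ the point at distance $0.8$ from $y_1$ on the connecting edge $\{y_1,z_1\}$; then $d(p,q)=0.5+1+0.8=2.3\geq\delta$, but after snapping $p$ to $u_1$ one has $d(u_1,q)=1.8<\delta$. Extending the example symmetrically (a path $z'-y'-v-u-y-z$ with a second point $q'$ mirroring $q$) makes \emph{both} choices $u_1$ and $v_1$ fail, so no local ``snap to the nearest gadget edge'' rule preserves $\delta$-dispersion. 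Your fallback via \autoref{lemma:simple:solution} does not repair this: that lemma requires rational $\delta$, while the statement (and its later use for \autoref{lemma:np:hardness:irrational}) needs every $\delta\in(2,3]$, including irrational values; and even for rational $\delta$, $2b$-simplicity does not place points on gadget edges, so the ``finite case check'' still lacks the actual charging argument. The paper avoids modifying $S$ altogether: for each $u\in V(G)$ let $B_u$ be the set of points within distance $\tfrac12$ of $u_1$ or $u_2$. These balls cover $P(G')$, any two points inside one $B_u$ are at distance at most $2<\delta$, and the same bound holds for any two points in $B_u\cup B_v$ with $\{u,v\}\in E(G)$. Hence each ball holds at most one point of $S$, adjacent balls jointly hold at most one, and assigning each point of $S$ to a ball containing it yields an independent set of size $|S|$ directly. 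This is essentially the ``charging'' you mention in passing; developing that, rather than the snapping route, is what makes the proof go through.
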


\begin{proof}

Clearly, this construction is polynomial time computable.
Further, the reduction is parameter preserving
	such that \wone-hardness of \problemm{Independent Set} implies \wone-hardness of \dispersion, assuming correctness of the reduction.

Hence, it remains to show the correctness,
 	that $G$ has an independent set of size $k$ if and only if
$G'$ has a $\delta$-dispersed set of size $k$.

\forward
Let $I \subseteq V(G)$ be an independent set of graph $G$.
We define $S \coloneqq \{ p(u_1,u_2,\tfrac{1}{2}) \mid u \in I \} \subseteq P(G)$,
	which has size $|S| = |I|$.
We claim that $S$ is $\delta$-dispersed in $G'$ for $\delta \in (2,3]$.
Since any vertices $u,v \in V(G)$ have distance at least 2 in $G$,
	their corresponding points $p(u_1,u_2,\tfrac{1}{2})$ and $p(v_1,v_2,\tfrac{1}{2})$ have distance at least 3 in $P(G)$.
Thus they are $\delta$-dispersed for $\delta \in (2,3]$.

\backward
Let $S \subseteq P(G)$ be a $\delta$-dispersed for some $\delta \in (2,3]$.
We define the \define{ball} $B_u$ for $u \in V(G)$ as the points in $P(G)$ with distance at most $\tfrac{1}{2}$ to $u_1$ or $u_2$,
	which is $B_u \coloneqq \{ p(u_i,v,\lambda) \mid i \in \{1,2\}, \{u_i,v\} \in E(G'), \lambda \in [0,\tfrac{1}{2}] \}$.
Every ball $B_u$ for $u\in V(G)$ contains at most one point from $S$ since points $p,q \in B_u$ can be at most $2<\delta$ apart.
Every union $B_u \cup B_v$ for adjacent $\{u,v\} \in E(G)$ contains at most one point from $S$ since points $p,q \in B_u \cup B_v$ can also be at most $2<\delta$ apart.

Now we define an independent set $I \subseteq V(G)$.
Add vertex $u \in V(G)$ for every point $p \in S \cap B_u$ except when $p \in B_u \cap B_v$ for some $v\in P(G)$.
If $p \in S \cap B_u \cap B_v$, add either the point $u$ or $v$ to $I$.
Then $|I| = |S|$ since the union of $B_u$ for $u \in V(G)$ is the whole point space $P(G)$.
Further, no adjacent vertices $u,v$ are in $I$ since $B_u \cup B_v$ contain at most one point from $S$.
Thus $I \subseteq V(G)$ is an independent set of size $|S|$.
\end{proof}

Because $\delta \leq 3$ we may apply \autoref{lemma:limit:b:auto} to obtain \np-hardness for $\delta$ in the interval $(\frac{2}{2x+1},\frac{3}{3x+1}]$ for every integer $x \geq 0$.
Applying \autoref{lemma:subdivide} yields \np-hardness for $\delta$ in the interval $(\frac{2c}{2x+1},\frac{3c}{3x+1}]$ for every integer $c \geq 1$. 

Now, consider any irrational distance $\delta > 0$.
Consider $F \coloneqq \{ c {\delta}^{-1} - \lfloor c \delta^{-1} \rfloor \mid c \geq 1\}$,
	the set of fractional parts of multiples of $\delta^{-1}$.
Since $\delta^{-1}$ is irrational, $F$ is a dense subset of the interval $[0,1]$.
Let integer $c \geq 1$ be such that $ \frac{1}{3} \leq c \delta^{-1} - \lfloor c \delta^{-1} \rfloor < \frac{1}{2}$.
Thus there is a non-negative $x$ such that $x+\frac{1}{3} \leq c {\delta}^{-1} < x + \frac{1}{2}$.
This implies that $\frac{2c}{2x+1}< \delta \leq \frac{3c}{3x+1}$
	and hence \np-hardness for $\delta$-dispersion.
This finishes the proof of \autoref{lemma:np:hardness:irrational}.

\section{Parameter Solution Size}
\label{sec:naturalparameter}

$\delta$-\dispersion parameterized by the solution size $k$ is \wone-hard when $\delta > 2$:
When $\delta \in (2,3]$ \autoref{lemma:np:hardness:2:3} shows \wone-hardness,
	while for $\delta > 3$ \autoref{lemma:hardness:chordal} implies \wone-hardness even when the input graph is chordal.
It remains to consider $\delta \leq 2$.
Observe that for \( \delta \leq 2 \), every graph $G$ satisfies \( \disp{\delta}(G) \geq {\nu(G)} \)~\cite{grigoriev2021dispersing},
	where $\nu(G)$ is the maximum matching size of $G$.
Thus, a win-win situation occurs.
Determine $\nu(G)$ in polynomial time.
If $k \leq {\nu(G)}$, we may immediately answer `yes'.
Otherwise $k > {\nu(G)} \geq \tfrac{\vc(G)}{2}$,
	where $\vc(G)$ is the minimum size of a vertex cover in $G$.
The size of a vertex cover upper bounds the treedepth.
A treedepth decomposition of size $\td(G)$ is computable in \fpt-time~\cite{DBLP:conf/icalp/ReidlRVS14}.
Thus we may apply the \fpt algorithm for parameter treedepth from
	\autoref{lemma:L:tw:algo}. 

\begin{theorem}
	$\delta$-\dispersion parameterized by solution size \( k \) is \fpt if \( \delta \leq 2 \);
	and \( W[1] \)-hard if $\delta >2$.
\end{theorem}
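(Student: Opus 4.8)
The plan is to assemble the final dichotomy directly from the tools developed earlier in the paper, since both directions are essentially already in hand. For the hardness side, there is nothing new to do: when $\delta \in (2,3]$, \autoref{lemma:np:hardness:2:3} already gives \wone-hardness parameterized by solution size, and when $\delta > 3$, \autoref{lemma:hardness:chordal} gives \wone-hardness (in fact even on chordal graphs). So the whole content of the theorem is the positive direction, namely that $\delta$-\dispersion is \fpt when $\delta \leq 2$.

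For the \fpt side the plan is to set up a win-win argument driven by the maximum matching number $\nu(G)$. First I would invoke the known fact from \cite{grigoriev2021dispersing} that for $\delta \leq 2$ every graph $G$ satisfies $\disp{\delta}(G) \geq \nu(G)$ --- intuitively, pick a maximum matching and place one point in the midpoint of each matched edge; midpoints of distinct edges are at distance at least $2 \geq \delta$ apart. Then, given an instance $(G,k)$, compute $\nu(G)$ in polynomial time (König/Edmonds). If $k \leq \nu(G)$ we answer ``yes'' immediately. Otherwise $k > \nu(G)$, and here I would use the standard bound $\nu(G) \geq \vc(G)/2$, so $\vc(G) < 2k$; hence the vertex cover number, and therefore the treedepth $\td(G) \leq \vc(G)$, is bounded by a function of $k$. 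Compute a treedepth decomposition of depth $\td(G)$ in \fpt-time using \cite{DBLP:conf/icalp/ReidlRVS14}, and then run the \fpt algorithm for parameter treedepth from \autoref{lemma:dispersion:treedepth} (equivalently the treewidth/path-length algorithm of \autoref{lemma:L:tw:algo} with $L \leq 2^{\td(G)}$). Since $\td(G)$ is bounded in terms of $k$, the total running time is \fpt in $k$, which finishes this branch.

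Putting the two branches together yields an \fpt algorithm in $k$ for every fixed $\delta \leq 2$, and combined with the hardness statements above this establishes the claimed threshold at $\delta = 2$. I do not anticipate a genuine obstacle here; the only point requiring a little care is making sure the chain of inequalities $\td(G) \leq \vc(G) \leq 2\nu(G) < 2k$ and the resulting bound $L \leq 2^{\td(G)}$ feed correctly into the running time of \autoref{lemma:L:tw:algo}, so that the exponent $(2L+2)^{\tw(G)}$ --- with $\tw(G) \leq \td(G)$ --- is indeed bounded by a computable function of $k$ times a polynomial in $n$. Everything else is a direct citation of earlier results, so the proof is short.
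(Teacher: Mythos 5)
Your proposal is correct and follows essentially the same route as the paper: the hardness side cites the same two lemmas (for $\delta\in(2,3]$ and $\delta>3$), and the \fpt side is the identical win-win argument via $\disp{\delta}(G)\geq\nu(G)$, then $\vc(G)\leq 2\nu(G)<2k$, bounding treedepth and invoking the treedepth-parameterized algorithm from \autoref{lemma:L:tw:algo}. No substantive differences.
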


\bibliography{literature}

\newpage
\appendix

\section{Translating $\delta$-Dispersion}

\subsection{A Distance Measure}
\newcommand{\dr}{\mathsf{dr}}

We provide a distance measure for an edge $\{u,v\} \in E(G)$ regarding a point set $S \subseteq P(G)$.
This notion will lead us to an alternative definition of auto-dispersion and covering
	that avoids talking about walks in $P(G)$.

For an edge $\{u,v\} \in E(G)$, let $\dr_S(u,v)$ be the minimum length,
	over all points $p \in S$, of a $u,p$-walk in $P(G)$ starting from $u$ in the direction of $v$.

This notion equals a recursive definition that does not involve walks in~$P(G)$.
A point~${p \in S}$ that yields the value $\dr_S(u,v)$,
	is either already on the current edge $\{u,v\}$
	or is found by recursing on another incident edge of $v$.

\begin{lemma}
\label{lemma:dr:recurisive:definition}
For every edge $\{u,v\} \in E(G)$,
$$
	\dr_S(u,v) = \begin{cases}
		\min\{\lambda \mid p(u,v,\lambda) \in S\}, &
		\text{if } S \cap P(G[\{u,v\}]) \neq \emptyset, \\
		1 + \min\{ \dr_S(v,w) \mid w \in N(v) \setminus \{u\} \}, & \text{else}.
	\end{cases}	
$$
\end{lemma}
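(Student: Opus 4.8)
The plan is to prove the identity by a direct case analysis matching the two branches of the right‑hand side, exploiting that a shortest $u,p$‑walk starting at $u$ \emph{in the direction of} $v$ traverses the edge $\{u,v\}$ monotonically from $u$ and, at a vertex, continues through a \emph{fresh} incident edge (this is the locally‑injective reading of ``walk'' that is consistent with the rest of the paper and with the phrase ``recursing on \emph{another} incident edge of $v$''; the convention $\min\emptyset=+\infty$ covers the degenerate cases). Write $\lambda_0\coloneqq\min\{\lambda\mid p(u,v,\lambda)\in S\}$ for the first branch. The recurring fact I would isolate first is: any valid walk begins by increasing its edge position on $\{u,v\}$ from $0$ and cannot decrease it again while still inside the open edge, so it either stops at an $S$‑point of the edge or reaches $v$ (edge position $1$) and only then moves on, necessarily through an edge $\{v,w\}$ with $w\neq u$.

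\textbf{First case: $S\cap P(G[\{u,v\}])\neq\emptyset$.} For ``$\le$'', the walk running straight from $u$ along $\{u,v\}$ to $p(u,v,\lambda_0)\in S$ is a valid $u,p$‑walk of length $\lambda_0$, so $\dr_S(u,v)\le\lambda_0$ (and if $u\in S$ then $\lambda_0=0$ and the trivial walk gives $\dr_S(u,v)=0$). For ``$\ge$'', any valid walk $W$ ending at some $p\in S$ progresses monotonically along $\{u,v\}$ and cannot stop before reaching edge position $\lambda_0$, since no point of $S$ sits at position $<\lambda_0$ on this edge; hence $|W|\ge\lambda_0$, and taking the minimum gives $\dr_S(u,v)\ge\lambda_0$. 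Equality follows.

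\textbf{Second case: $S\cap P(G[\{u,v\}])=\emptyset$.} Now $u,v\notin S$ and no $S$‑point lies on the edge. For ``$\le$'': for each $w\in N(v)\setminus\{u\}$ and a shortest $v,p$‑walk $W'$ towards $w$ with $p\in S$, prepend the traversal of $\{u,v\}$ from $u$ to $v$; since $W'$ leaves $v$ along $\{v,w\}\neq\{u,v\}$, the concatenation is a valid $u,p$‑walk starting towards $v$ of length $1+\dr_S(v,w)$, so $\dr_S(u,v)\le 1+\min\{\dr_S(v,w)\mid w\in N(v)\setminus\{u\}\}$. For ``$\ge$'': let $W$ be any valid $u,p$‑walk towards $v$, $p\in S$. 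As $p\notin P(G[\{u,v\}])$ and $W$ does not reverse inside the edge, $W$ reaches $v$ after length exactly $1$ and then leaves $v$ through some $\{v,w\}$ with $w\neq u$; the suffix of $W$ from $v$ onwards is a valid $v,p$‑walk towards $w$, of length at least $\dr_S(v,w)\ge\min\{\dr_S(v,w')\mid w'\in N(v)\setminus\{u\}\}$, whence $|W|\ge 1+\min\{\dr_S(v,w')\mid w'\in N(v)\setminus\{u\}\}$. Taking the minimum over $W$ closes the case; if $\deg(v)=1$ there is no admissible $w$, no valid walk exists, and both sides equal $+\infty$.

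\textbf{Main obstacle.} The substance is entirely in the topological bookkeeping of the walks: one must argue cleanly that a shortest walk towards $v$ is monotone along $\{u,v\}$ and, at each vertex, does not backtrack along the edge it just used. This is precisely what forces the minimum in the recursive branch to range over $N(v)\setminus\{u\}$ rather than all of $N(v)$ — and a version of the lemma with arbitrary (reversing) walks would in fact be false, e.g.\ on the path $w'\!-\!u'\!-\!u\!-\!v\!-\!w$ with $S=\{w'\}$. Once the non‑reversing property is pinned down from the definition of the walks considered, both inequalities in each case reduce to the routine operations of concatenating and truncating walks.
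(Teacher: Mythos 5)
Your proof is correct and follows essentially the same route as the paper's: both arguments split on whether the edge carries a point of $S$ and then decompose a minimal walk at $v$ into the edge traversal plus a suffix leaving along some $\{v,w\}$ with $w\neq u$, using concatenation for one inequality and truncation for the other (the paper merely dresses this up as an induction on $\lceil \dr_S(u,v)\rceil$, which does no extra work). Your explicit handling of the locally-injective (non-backtracking) reading of the walks, the convention $\min\emptyset=+\infty$, and the leaf/degenerate cases is a point the paper leaves implicit, and your observation that the statement fails for reversing walks is accurate.
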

\begin{proof}
We prove the statement by induction on $\ell' = \lceil \dr_S(u,v) \rceil$.

If $\ell' = 1$, there is a point $p(u,v,\lambda) \in S$
	with edge position $\lambda = \dr_S(u,v)$.
Thus, $\{ p(u,v,\lambda') \mid \lambda' \in [0,\lambda]\} \subseteq P(G)$ is a $u,p$-walk starting from $u$ in the direction of $v$ and of minimum length $\dr_S(u,v)$.

Now assume by induction that the statement holds for all $u',v' \in V(G)$ with ${ \lceil \dr_S(u',v') \rceil < \ell' }$.
Consider $\lceil \dr_S(u,v) \rceil = \ell'$.
Then $\ell' \geq 2$ such that $S \cap P(G[\{u,v\}]) = \emptyset$. 
Let distance $\dr_S(u,v)$ be realized by a $u,p$-walk $P \subseteq P(G)$ starting from $u$ in the direction of $v$.
Since~$P$ has length~$>1$, after traversing edge $\{u,v\}$ it continues along an incident edge $\{v,w\} \in E(G)$ for some neighbor $w \in N(v)\setminus \{u\}$.
Then $P$ consists of $P(G[\{u,v\}])$ and a remaining $v,p$-walk~$P'$ starting from~$v$ in the direction of~$w$.
Thus $\dr_S(v,w) = 1 + \dr_S(v,w)$.

Vice versa, for any neighbor $w' \in N_G(v) \setminus \{u\}$
	there is a $v,p'$-path $Q'$ starting from $u$ in the direction of $w'$ for some point $p' \in S$.
Thus $P(G[\{u,v\}])$ combined with $Q'$ is a $u,p'$-walk $Q \subseteq P(G)$.
By the minimality of $P$, walk~$Q'$ has length at least that of $P'$.
Thus $w$ minimizes $\min_{w' \in N(v) \setminus \{u\}} \dr_S(v,w')$,
	which implies the statement.
\end{proof}

For convenience, let $\ell(p,u) > 0$ be minimum length of a non-trivial $p,q$-walk containing $u$ for any point $q \in S$ (possibly $p=q$).

\begin{lemma}
\label{lemma:auto:conditions:technical}
Let $p = p(u,v,\lambda) \in S$ with $\dr_S(u,v) = \lambda \in (0,1]$.
Then, there exists a neighbor $w \in N_G(u)$ such that
$\dr_S(u,v) + \dr_S(u,w) = \ell(p,u)$.
\end{lemma}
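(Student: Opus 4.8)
The plan is to unwind the definition of $\ell(p,u)$ and relate it to the two directional distances $\dr_S(u,v)$ and $\dr_S(u,w)$. Recall $\ell(p,u)$ is the minimum length of a non-trivial $p,q$-walk through $u$ over all $q \in S$, where $p$ itself is allowed as the endpoint $q$. First I would observe that since $p = p(u,v,\lambda)$ lies on edge $\{u,v\}$, the point $u$ is at distance exactly $\lambda$ from $p$ along the edge, and by hypothesis $\dr_S(u,v) = \lambda$, so the initial segment of any walk realizing $\ell(p,u)$ can be taken to be the $p,u$-subpath inside edge $\{u,v\}$, contributing $\lambda$. After reaching $u$, the walk must continue; since it is non-trivial and already used up the edge toward $v$, it leaves $u$ along some incident edge $\{u,w\}$ with $w \in N_G(u)$ (possibly $w = v$, in which case the walk turns around inside the edge, but then it would have to return and the length accounting still goes through; I would handle this case by noting $\dr_S(u,v)$ already captures the best return toward a point). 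From $u$ in the direction of $w$, the remaining walk is a $u,q$-walk of length at least $\dr_S(u,w)$ by definition of $\dr_S$, and choosing $w$ to minimize this gives a walk of total length $\lambda + \dr_S(u,w) = \dr_S(u,v) + \dr_S(u,w)$.

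Next I would argue the reverse inequality: for the minimizing neighbor $w$, the $p,u$-segment inside edge $\{u,v\}$ of length $\dr_S(u,v)$ concatenated with a shortest $u,q'$-walk leaving $u$ toward $w$ (of length $\dr_S(u,w)$, witnessed by some $q' \in S$) forms a genuine non-trivial $p,q'$-walk through $u$, hence $\ell(p,u) \leq \dr_S(u,v) + \dr_S(u,w)$. Combined with the previous paragraph this yields equality, $\ell(p,u) = \dr_S(u,v) + \min_{w \in N_G(u)} \dr_S(u,w)$, and the minimizing $w$ is the neighbor whose existence is claimed.

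The main obstacle I anticipate is the bookkeeping around the direction of the walk as it passes through $u$: I must be careful that a walk witnessing $\ell(p,u)$ does not "reuse" the edge $\{u,v\}$ in a way that makes its relevant tail shorter than $\dr_S(u,w)$ for every $w$ — in particular the degenerate case where the optimal continuation from $u$ goes back toward $v$ (so $w = v$) and finds a point on edge $\{u,v\}$ at position closer to $u$ than $p$ itself, or even finds $p$ again. Here the key is that $\dr_S(u,v) = \lambda$ is assumed to be realized by $p$, so no point of $S$ on that edge is strictly closer to $u$ in the $v$-direction; this pins down that taking $w = v$ contributes exactly $\dr_S(u,v) = \lambda$ again, which is consistent with the formula (it corresponds to the trivial back-and-forth walk $p \to u \to p$ of length $2\lambda$, which is exactly non-trivial and goes through $u$). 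So the formula $\ell(p,u) = \dr_S(u,v) + \min_{w} \dr_S(u,w)$ remains valid even in this case, and I would state it as such, with the lemma following immediately by taking $w$ to be the argmin. Throughout I would lean on Lemma~\ref{lemma:dr:recurisive:definition} only implicitly, using instead the direct walk-based definition of $\dr_S$, since the statement is really just an additive decomposition of a shortest non-trivial closed-or-open walk at the articulation point $u$.
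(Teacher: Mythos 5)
Your proof is correct and takes essentially the same route as the paper's: both directions hinge on the same decomposition at $u$ — an upper bound on $\ell(p,u)$ by concatenating the $p,u$-segment of edge $\{u,v\}$ with a walk realizing $\mathsf{dr}_S(u,w)$, and a lower bound by splitting a minimum-length walk at its first visit of $u$ and bounding the tail by $\mathsf{dr}_S(u,w)$ for the edge along which it leaves $u$. The only divergence is the turn-around case at $u$: the paper tacitly restricts to $w\in N_G(u)\setminus\{v\}$ (choosing the first edge after $\{v,u\}$), whereas you treat $w=v$ explicitly via the walk $p\to u\to p$; this is a defensible reading of the paper's loosely defined ``non-trivial'' walks, is compatible with the statement (which allows $w=v$), and does not change the substance of the argument.
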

\begin{proof}
Consider a point $p = p(u,v,\lambda) \in S$ with $\dr_S(u,v) = \lambda \in (0,1)$.

($\geq$)
Let $w \in N_G(u) \setminus \{v\}$ be arbitrary.
Let $P'$ realize $\dr_S(u,w)$ for some neighbor,
	which means that $P'$ is $u,q$-walk starting from $u$ in the direction of~$w$ be of length $\dr_S(u,w)$, for a point $q \in S$.
Combine $P'$ and $\{ p(u,v,\lambda) \mid \lambda' \in [0,\lambda) \}$
	to a non-trivial $p,q$-walk $P$ containing $u$ of length $\dr_S(u,v) + \dr_S(u,w)$.
Thus $\dr_S(u,v) + \dr_S(u,w) \geq \ell(p,u)$ for \stress{every} $w \in N_G(u) \setminus \{v\}$.

($\leq$)
Let $P$ realize $\ell(p,u)$,
	which means that $P$ is a non-trivial $p,q$-walk containing $u$ of length $\ell(p,u)$ for some point $q \in S$.
Let neighbor $w \in N_G(u) \setminus~\{v\}$ be
	such that $\{u,w\}$ is the first edge after $\{v,u\}$ visited by~$P$.
Let $P'$ be the subwalk $P$ starting from $u$ and ending in~$q$
	of length $\ell(p,u) - \dr_S(u,v)$.
Since $P'$ is a $u,q$-walk starting from $u$ in the direction of~$w$,
	it has length $\geq \dr_S(u,w)$.
Thus $\dr_S(u,v) + \dr_S(u,w) \leq \ell(p,u)$.
\end{proof}

\subsection{Every Edge Contains at Least One Point}

As a preliminary step, we need the following result.
\begin{lemma}
\label{lemma:delta:1:point}
Let $G$ be a graph and $\delta < 1$.
There exists a maximum $\delta$-dispersed set $S \subseteq P(G)$
	such that every edge $\{u,v\} \in E(G)$ contains a point $p(u,v,\lambda) \in S$ with $\lambda \in (0,1)$.
\end{lemma}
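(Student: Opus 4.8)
The plan is to start from an arbitrary maximum $\delta$-dispersed set $S$ and repair it edge by edge, showing that whenever some edge $\{u,v\}$ contains no interior point of $S$ we can add a point there (or shift points onto its interior) without destroying dispersion and without decreasing the size. First I would argue that if an edge $\{u,v\}$ contains no point of $S$ in its interior, then in particular the two endpoints $u,v$ together carry at most one point of $S$: since $d(u,v)=1>\delta$ is \emph{false} here we cannot conclude that directly, so instead the key observation is that the ``middle'' of the edge, i.e. any point $p(u,v,\lambda)$ with $\lambda\in(0,1)$, has distance less than $1$ to both $u$ and $v$, hence distance less than $\delta' $ from everything that could conflict only through $u$ or only through $v$ when $\lambda$ is chosen appropriately. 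Concretely, I would place a new candidate point $q=p(u,v,\lambda)$ and compute, using the distance measure $\dr_S$ from Lemma \ref{lemma:dr:recurisive:definition}, that the distance from $q$ to the nearest point of $S$ reached through $u$ is $\lambda + \dr_S(u,\cdot)$-type quantities and through $v$ is $(1-\lambda)+\dr_S(v,\cdot)$-type quantities; both of these can be made $\geq \delta$ for a suitable $\lambda\in(0,1)$ precisely because $\delta<1$ gives slack.

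The main structural step is a case analysis on how many points of $S$ already lie in the closed edge $P(G[\{u,v\}])$ (equivalently, on $u$, on $v$, or on the — currently empty — interior):

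\begin{itemize}
\item If no point of $S$ lies on $u$, on $v$, or in the interior, then by maximality of $S$ adding \emph{any} feasible point would contradict maximality, so some point must be blocked; I would show that in fact a point $p(u,v,\lambda)$ with $\lambda\in(0,1)$ can always be added when $\delta<1$ as long as both $\dr_S(u,v)$ and $\dr_S(v,u)$ are not too small, and handle the residual case by noting it forces a point within distance $<\delta$ on a neighbouring edge, which we then slide onto $\{u,v\}$.
\item If exactly one endpoint, say $u$, carries a point $p$ of $S$ (and the interior is empty), replace $p=u=p(u,v,0)$ by $p(u,v,\lambda)$ for small $\lambda>0$; this only \emph{increases} distances to everything not reached through $v$, and distances to points reached through $v$ change by at most $\lambda$, so choosing $\lambda$ small enough keeps $S$ feasible while moving the point into the interior.
\item If both endpoints carry points, or one endpoint carries a point and the edge-interior would then be forced, I would argue this cannot happen for $\delta<1$ is \emph{not} needed — rather, two points on $u$ and $v$ are at distance $1>\delta$ so that is allowed; here I slide \emph{both} inward by tiny amounts in opposite directions, keeping them at mutual distance close to $1$ and all other distances essentially unchanged.
\end{itemize}

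After performing this local repair on one edge, the repaired set is still a maximum $\delta$-dispersed set, and crucially the repair does not remove interior points from any edge that already had one (all motions are by arbitrarily small $\lambda$, and we never delete points). So I would iterate over all edges, using a potential argument — e.g. the number of edges whose interior is empty strictly decreases, or is at least non-increasing while one more edge gets fixed — to conclude that after finitely many steps every edge contains an interior point of $S$. The step I expect to be the main obstacle is making the first bullet rigorous: showing that when $\delta<1$ an empty edge with empty endpoints can \emph{always} absorb a fresh interior point, which requires carefully choosing $\lambda\in(0,1)$ against the finitely many constraints ``distance to the nearest $S$-point through $u$'' and ``through $v$'', and checking via Lemma \ref{lemma:dr:recurisive:definition} and Lemma \ref{lemma:auto:conditions:technical} that the two worst-case constraints $\lambda < \dr_S(u,v)$-slack and $1-\lambda < \dr_S(v,u)$-slack are simultaneously satisfiable; the arithmetic there is where the hypothesis $\delta<1$ is really used.
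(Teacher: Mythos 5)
There is a genuine gap, and it sits exactly where you flagged it: the ``residual case'' of your first bullet. When the edge $\{u,v\}$ and its endpoints carry no point, maximality of $S$ only tells you that adding $p(u,v,\tfrac12)$ is blocked by some nearby point $q\in S$; your plan is then to ``slide $q$ onto $\{u,v\}$''. But $S$ is a \emph{maximum} $\delta$-dispersed set, so $q$ is typically itself at distance exactly $\delta$ from other points of $S$, and any slide of $q$ violates those tight constraints unless the neighbours move as well; their neighbours may be tight too, so the movement must be propagated along an arbitrarily long chain of tight pairs, which may even close into a cycle. A single local move cannot resolve this, and your feasibility computation confirms it: adding $p(u,v,\lambda)$ requires, writing $d_u,d_v$ for the distances from $u$ and $v$ to their nearest $S$-points, that $d_u+d_v\geq 2\delta-1$, which can fail (e.g.\ $\delta=0.9$, $d_u=d_v=0.3$), and then no choice of $\lambda$ works without moving several points of $S$.

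The paper's proof is built precisely around this propagation. It defines \emph{exchange sequences} of vertices and points (a chain of edges each carrying a point close to its far endpoint), first normalizes them to be non-increasing by repeatedly nudging points toward offset $\tfrac{\delta}{2}$ (this normalization needs its own termination argument, bounding the step sizes from below), and then ``applies'' a sequence by mirroring each point across a vertex onto the previous edge. It then distinguishes how the chain terminates: at an edge already carrying two points (the mirroring goes through, with a small fix-up), at a dead end (contradiction with maximality), or in a cycle (all points on the cycle are re-placed at offset $\tfrac{\delta}{2}$, with a separate verification that this stays $\delta$-dispersed), and finally checks that previously happy edges stay happy. Your proposal contains no mechanism playing the role of these chains, no termination argument for the cascading moves needed to fix a single edge (your potential, the number of empty-interior edges, only controls the outer loop), and no treatment of the cycle case. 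Your bullets two and three (endpoint points slid into the interior by at most $1-\delta$) are fine but are only the easy side cases; without the propagation machinery the main case does not go through.
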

\begin{proof}
Consider a maximum $\delta$-dispersed set $S$.
Then $|S| \geq |E(G)|$,
	since placing a point on every mid-point of each edge is a $\delta$-dispersed set.
For $i\geq0$, let $E_i \subseteq E(G)$ be the set of edges $\{u,v\}$ that contain exactly $i$ points from $S$,
	which is $|S \cap \{ p(u,v,\lambda) \mid 0 \leq \lambda \leq 1 \}| = i$.
Let $E_{\geq i}$ be the union of $E_j$ for $j \geq i$.
Let an edge $\{u,v\} \in E(G)$ be \define{happy} if there is a point $p(u,v,\lambda) \in S$ with $\lambda \in (0,1)$.

Consider an unhappy edge $\{u,v\}$.
We will modify $S$ such that $\{u,v\}$ becomes happy, all happy edges remain happy and $S$ remains equal-sized and $\delta$-dispersed:
Either there will be enough room such that $p(u,v,\tfrac{1}{2})$ may be added,
	or we will find a sequence of point movements that also adds a point on edge $\{u,v\}$.

If already $S \cup p(u,v,\tfrac{1}{2})$ is $\delta$-dispersed,
	it contradicts that $S$ is maximum.
Hence $\delta > \tfrac{1}{2}$, and
	there is at least one point $p_1 \in S$ in distance $< \delta - \tfrac{1}{2} \leq \tfrac{\delta}{2}$ to $\{u,v\}$.
Let point $p_1 = p(w_1,w_2,\lambda_1) \in S$ with $\lambda_1 < \tfrac{\delta}{2}$ have minimal distance to $p(u,v,\tfrac{1}{2})$.
(We even know that $\lambda_1 < \delta - \tfrac{1}{2}$, but this is not essential for the proof).
By symmetry, we may assume that $w_1=v$.
Further, let $w_0 \coloneqq u$.

Based on vertices $w_0,w_1,w_2$ and point $p_1$ we define an \define{exchange sequence} $w_0,w_1,\dots,w_{s+1}$ with points $p(w_i,w_{i+1},\lambda_i)\in S$ for $1 \leq i \leq s$ as follows. 
Assume, we have $w_0,w_1,\dots,w_i$ defined for an $i\geq 2$.
If $\{w_{i-1},w_i\} \in E_{\geq 2}$, stop and set $s \coloneqq i-1$.
Else $\{w_{i-1},w_i\} \in E_1$.
Let $E_{w_i}$ be the subset of edges $\{w_i,w_{i+1}\} \in E_{\geq 1} \setminus \{ \{w_{i-1}, w_i\} \}$ that are incident to $w_i$.
If $E_{w_i}$ is empty, stop and set $s \coloneqq i-1$.
Otherwise, let $p_i = p(w_i,w_{i+1},\lambda_i) \in S$ be a point on an edge $\{w_i,w_{i+1}\} \in E_{w_i}$
	where $p_i$ has minimal distance to vertex $w_i$.
If $\lambda_i > \lambda_{i-1}$ or $w_{i+1} \in \{w_1,\dots,w_{i}\}$, stop and set $s \coloneqq i$.
Else, continue with increased $i$.

\medskip

We denote an exchange sequence $P = w_0,w_1,\dots,w_{s+1}$ as \define{non-increasing}
	if $\lambda_s \leq \dots \leq \lambda_1 < \tfrac{\delta}{2}$.
We claim that we can modify $S$
	such that every exchange sequence is non-increasing.
We achieve this goal by moving a point closer to the midpoint of its edge.

Consider an exchange sequence $P = w_0,w_1,\dots,w_{s+1}$ that is not non-increasing.
By construction, we have that $\lambda_{s} > \lambda_{s-1} \leq \dots \leq \lambda_1 < \tfrac{\delta}{2}$ for $s\geq 2$.
We replace the second to last point $p_{s-1} = p(w_{s-1},w_s, \lambda_{s-1} )$ by a new point $p_{s-1}' = p(w_{s-1},w_s, \min\{ \tfrac{\delta}{2}, \lambda_s \} )$.
Observe that $S$ under this modification remains equal-sized, $\delta$-dispersed,
	and no happy edge becomes unhappy.
After this modification, if $s \geq 3$, the shortened sequence $w_0,w_1,\dots,w_{s}$ is an exchange sequence that terminates with a point $p'_{s-1} = p(w_{s-1}, w_s, \lambda'_{s-1})$.
Possibly $\lambda'_{s-1} > \lambda_{s-2}$, in which case this exchange sequence is also not non-increasing.
We apply this modification iteratively, until every exchange sequence starting with $w_0=u, w_1=v$ is non-increasing.

We claim that this process terminates.
If at one step $p_{s-1}' = p(u,v, \tfrac{\delta}{2} )$,
	then it cannot be the second to last point of an exchange sequence again,
	and hence will not be moved again.
Else $p_{s-1}'$ compared to $p_{s-1}$ is moved by $\lambda_\varepsilon \coloneqq \lambda_{s}-\lambda_{s-1}$ closer to $\tfrac{\delta}{2}$.
Let 
$$
\Lambda = \{ \lambda \in (0,1) \mid p(u,v,\lambda) \in S, \{u,v\} \in E(G) \}
$$
be fixed before starting 
this modification process. 
Throughout this process it holds that $\lambda_\varepsilon \geq ( \prod_{\lambda \in \Lambda} \lambda )^{-1}$,
since $\lambda_\varepsilon > 0$ and $\lambda_\varepsilon = \sum_{\lambda \in \Lambda} m_{\lambda} \lambda$ with $m_{\lambda} \in \mathbb{Z}$ for each $\lambda \in \Lambda$.
Thus after finitely many moves of $p_{s-1}$, it reaches edges position $\tfrac{\delta}{2}$.

If now no exchange sequence starting with $w_0=u, w_1=v$ exists,
apply the process to $w_0=v, w_1=u$. If also in this case, 
no exchange sequence starting with $w_0=v, w_1=u$ exists,
there is no point of $S$ in distance $\delta$ to $p(u,v,\tfrac{1}{2})$.
A contradiction to $S$ being maximum.

\medskip

Now, every exchange sequence $P$ starting with $w_0=u, w_1=v$ is non-increasing,
	in other words, $\lambda_s \leq \dots \leq \lambda_1 < \tfrac{\delta}{2} < \tfrac{1}{2}$,
	and such an exchange sequence exists.
Then, we may \define{apply} such an exchange sequence $w_0,w_1,\dots,w_s$ to $S$ as follows.
For $1 \leq i \leq s$, replace every point $p_i = p(w_i,w_{i+1},\lambda_i) \in S$ with the point $p_i' = p(w_{i-1},w_i,1-\lambda_i)$,
	which is the point `mirrored' at~$w_i$.
It is easy to see that $S$ remains equal-sized and $\delta$-dispersed under this modification.

Now consider an exchange sequence $P = w_0,w_1,\dots,w_{s+1}$.
We distinguish the following cases.
\begin{itemize}
\item
If $P$ terminates because $\{w_{s},w_{s+1}\} \in E_{\geq 2}$,
apply $P$ to $S$ resulting in $S' \subseteq P(G)$.
Then~$S$ under this modification has $\{w_0,w_1\}$ newly happy
	and every happy edge remains happy, beside possibly $\{w_{s},w_{s+1}\}$.
If $\{w_{s},w_{s+1}\}$ became unhappy,
	it contained only the point $p_{s} \in S$ and point $q=p(w_{s},w_{s+1},1) \in S$ on the vertex $w_{s+1}$.
Replace $q$ by point $p(w_{s},w_{s+1},\delta)$.
Then, $S$ remains equal-sized and $\delta$-dispersed
	and edge $\{w_s,w_{s+1}\}$ is happy.
	
\item
If $P$ terminates because $E_{w_{s+1}}$ is empty,
then $w_{s+1}$ is only incident to edge $\{w_s,w_{s+1}\}$ and edges from $E_0$.
Apply $P$ to $S$ resulting in ${S' \subseteq P(G)}$.
Then $S$ under this modification is $\delta$-dispersed, $\{w_0,w_1\}$ is newly happy
	and every happy edge remains happy, except for $\{w_{s},w_{s+1}\}$ which becomes unhappy.
There is no point of $S$ in distance $\delta$ to $p(w_{s},w_{s+1},\delta)$.
A contradiction to the maximality of $S$.
\item
If $P$ terminates because $w_{s+1} = w_{k} \in \{w_1,\dots,w_{s}\}$,
	replace every point $p_i$ by a point $p_i' = p(w_i,w_{i+1}, \tfrac{\delta}{2})$ for $1\leq i \leq s$, resulting in a set $S'$.
Clearly, $|S'| = |S|$ and every edge for $S$ is also happy for $S'$.

We claim that $S'$ is $\delta$-dispersed.
The points $p_i', p_{i+1}'$ for $i \in \{1,\dots,s\}$ have distance $1$.
Further, point $p_s'$ has distance $1$ to $p_k$ and distance $2-\tfrac{\delta}{2}$ to $p_{k-1}$ (if $k\geq 2$).
It follows that $p_1,\dots,p_s$ are $\delta$-dispersed.
Assume, for the sake of contradiction, that there is a point $q \in {S' \setminus \{p_1',\dots,p_w'\}}$ that has a shortest $q,p_i'$-path for some $i \in \{1,\dots,s\}$.
Since point~$p_i'$ compared to $p_i$ is closer to $w_{i+1}$, path $P$ uses vertex $w_{i+1}$.
Thus, $q$ is at position $q=p(w_{i+1},w',\lambda)$ for some neighbor $w' \in N_G(w_{i+1})$ and $\lambda < \tfrac{\delta}{2}$.
Because vertex $w_{i+1}$ has at most one neighbor $w'$ that contains a point in distance $< \tfrac{\delta}{2}$ to $w_{i+1}$, it must be that $w' = w_{i+2}$ and $i< s$.
A contradiction to that $S'$ contains only point $p_{i+1}'$ on edge $\{w_{i+1},w_{i+2}\}$.

After this modification, the closest point in $S'$ to vertex $u$ is $p'$ with distance $\tfrac{\delta}{2}$.
Consider exchange paths starting with $w_0=v$ and $w_1=u$ instead and apply above modifications
Possibly, the closest point in the updated set of points has distance $\tfrac{\delta}{2}$, such that a point $p(u,v,\tfrac{1}{2})$ can be added, which makes also edge $\{u,v\}$ happy.
Otherwise, some construction made edge $\{u,v\}$ happy.
\end{itemize}
By repeatedly applying this procedure for every unhappy edge $\{u,v\}$, 
eventually we obtain a $\delta$-dispersed set of size $|S|$ without unhappy edges.
\end{proof}

\begin{corollary}
	\label{lemma:auto:delta:1:point}
	Let $\delta < 1$.
	Then there exists a $\delta$-auto-dispersed set $S$ such that $|S| = \dispauto{\delta}(G)$ and for every edge $\{u,v\} \in E(G)$ there is a point $p = p(u,v,\lambda) \in S$ with $\lambda \in (0,1)$.
\end{corollary}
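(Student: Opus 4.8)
The plan is to deduce this from \autoref{lemma:delta:1:point} by observing that, in the regime $\delta < 1$, the ``auto'' constraint is vacuous: every $\delta$-dispersed set is already $\delta$-auto-dispersed, so we are free to use the $\delta$-dispersed set produced by \autoref{lemma:delta:1:point} as is.

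The one ingredient I would isolate first is a purely geometric fact about walks in $P(G)$: any non-trivial locally-injective closed walk traverses at least one edge in full, and hence has length at least $1$. The reason is that local injectivity forbids reversing direction in the interior of an edge; therefore, once the walk enters an edge moving away from one of its endpoints, it must reach the other endpoint before it can leave that edge. Since a non-trivial closed walk must enter some edge, it traverses a full edge and contributes length $\geq 1$. (This is just the $\delta \leq 3$ observation from \autoref{sec:translating}, but for $\delta < 1$ it is elementary and the full strength is not needed.)

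Given this, for $\delta < 1$ no point $p \in P(G)$ admits a non-trivial locally-injective walk from $p$ to $p$ of length $< \delta$, so a set is $\delta$-auto-dispersed precisely when it is $\delta$-dispersed, and in particular $\dispauto{\delta}(G) = \disp{\delta}(G)$. Now I would simply invoke \autoref{lemma:delta:1:point} to get a maximum $\delta$-dispersed set $S$ in which every edge $\{u,v\} \in E(G)$ carries a point $p(u,v,\lambda)$ with $\lambda \in (0,1)$; by the above this $S$ is $\delta$-auto-dispersed, and $|S| = \disp{\delta}(G) = \dispauto{\delta}(G)$, which is exactly the claim.

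Because the reduction is this direct, there is essentially no obstacle; the only step warranting care is the local-injectivity argument, specifically making sure the length lower bound of $1$ is justified by ``must traverse a full edge'' rather than by a (false) claim that the walk cannot revisit a point, since a walk is allowed to pass straight \emph{through} an interior point without turning around. Should one prefer a fully self-contained argument that does not appeal to the equivalence, the alternative is to re-run the proof of \autoref{lemma:delta:1:point} and verify that each of its modifications (sliding a point toward the midpoint of its edge, mirroring a point across a vertex, and resetting a point to edge position $\delta$ or $\tfrac{\delta}{2}$) keeps the set $\delta$-auto-dispersed; but for $\delta < 1$ this is already subsumed by the equivalence above and would add nothing.
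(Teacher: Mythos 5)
Your proposal is correct and follows the same route the paper intends: the paper states this as an immediate corollary of \autoref{lemma:delta:1:point}, implicitly using the observation from \autoref{sec:translating} that any non-trivial locally-injective closed walk has length at least the girth, so $\delta$-dispersed and $\delta$-auto-dispersed coincide for $\delta \leq 3$ (and a fortiori for $\delta < 1$). Your elementary ``must traverse a full edge, hence length $\geq 1$'' argument is a valid, slightly weaker substitute for that observation, and the deduction from the lemma is exactly as in the paper.
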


\subsection{An Alternative Definition}

We obtain the following equivalence to auto-dispersion.
A set of points $S \subseteq P(G)$ is \define{edge internally $\delta$-dispersed}
	if every edge by itself does not falsify that $S$ is $\delta$-dispersed,
	formally that every distinct points $p,q \in P(G[\{u,v\}])$ for every edges $\{u,v\}\in E(G)$
	are $\delta$-dispersed.

\begin{lemma}
\label{lemma:auto:dispersion:conditions}
Let $G$ be a graph and $\delta > 0$.
A subset $S \subseteq P(G)$ is $\delta$-auto-dispersed if and only if
\begin{enumerate}
\item \labeltext{$($A1$)$}{condition:a1}
	$S$ is edge internally $\delta$-dispersed,
\item \labeltext{$($A2$)$}{condition:a2}
$\dr_{S}(u,v) + \dr_{S}(u,w) \geq \delta$ for every distinct adjacent edges $\{u,v\},\{u,w\} \in E(G)$ where vertex $u$ is not in $S$.
\end{enumerate}
\end{lemma}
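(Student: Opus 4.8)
The statement is an equivalence, and I plan to prove it by translating between locally-injective walks in $P(G)$ and the quantities $\dr_S$. The one elementary fact about a locally-injective walk that I use throughout is that it traverses each edge monotonically and makes no ``U-turn'', whether at an interior point of an edge or at a vertex; in particular it changes the edge it occupies only at a vertex, and then only to a \emph{different} incident edge. I also use that, by its definition and by the recursive description in \autoref{lemma:dr:recurisive:definition}, $\dr_S(u,v)$ is the least length of a non-backtracking walk that starts at $u$ towards $v$ and reaches a point of $S$; hence every concrete walk of that form has length at least $\dr_S(u,v)$, and the least length is attained.

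\forward Assume $S$ is $\delta$-auto-dispersed, hence $\delta$-dispersed. Then \ref{condition:a1} is immediate, since two distinct points of $S$ on a common edge already have distance at least $\delta$. For \ref{condition:a2}, fix distinct adjacent edges $\{u,v\}$ and $\{u,w\}$ with $u\notin S$. Choose a non-backtracking walk $P_1$ from $u$ towards $v$ to some $p\in S$ of length $\dr_S(u,v)$, and likewise $P_2$ from $u$ towards $w$ to some $q\in S$ of length $\dr_S(u,w)$; both lengths are positive because $u\notin S$. Glue the reverse of $P_1$ to $P_2$: this is a non-trivial $p,q$-walk through $u$ of length $\dr_S(u,v)+\dr_S(u,w)$, and it is locally injective, since $P_1$ and $P_2$ are and since at $u$ it arrives along $\{u,v\}$ and leaves along $\{u,w\}\neq\{u,v\}$, so there is no U-turn. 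If $p\neq q$ its length is at least $d(p,q)\geq\delta$; if $p=q$ it is a non-trivial locally-injective closed walk at $p$, hence of length at least $\delta$ by $\delta$-auto-dispersion. Either way $\dr_S(u,v)+\dr_S(u,w)\geq\delta$. (Alternatively one can route this through \autoref{lemma:auto:conditions:technical}.)

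\backward Assume \ref{condition:a1} and \ref{condition:a2}. It suffices to prove that every non-trivial locally-injective walk $W$ whose two endpoints lie in $S$ (possibly at the same point) has length at least $\delta$: applying this to shortest $p,q$-paths for distinct $p,q\in S$ shows $S$ is $\delta$-dispersed, and applying it to closed walks shows no point of $S$ admits a short non-trivial locally-injective closed walk, so $S$ is $\delta$-auto-dispersed. (By \ref{condition:a1}, $S$ has at most $\lfloor 1/\delta\rfloor+1$ points on each edge, hence is finite.) Suppose for contradiction some such $W$ has length $<\delta$, and among all such walks choose one meeting $S$ at the fewest interior parameters; this count is finite, since a locally-injective walk of finite length in a finite graph passes through any fixed point only finitely often. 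If $W$ met $S$ at an interior parameter $c$, the restriction of $W$ to $[0,c]$ would be a non-trivial locally-injective walk between points of $S$ of strictly smaller length with strictly fewer interior meetings of $S$, contradicting the choice of $W$; hence $W$ has no point of $S$ in its interior.

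Now either $W$ is contained in a single edge $e$, in which case local injectivity makes it monotone on $e$, so its endpoints $p\neq q$ lie on $e$ and \ref{condition:a1} gives that $W$ has length at least $d(p,q)\geq\delta$, contradicting that its length is $<\delta$; or $W$ changes the edge it occupies. In the second case let $u=W(c)$ at the first parameter $c$ of such a change, and let $\{u,b\}$ and $\{u,a\}$ be the edges occupied just before and just after $c$, so $\{u,b\}\neq\{u,a\}$. Since $c$ is the first edge-change, $W$ restricted to $[0,c]$ stays in $\{u,b\}$ and runs monotonically from the starting point (a point of $S$) to $u$; reversed, this is a walk from $u$ towards $b$ reaching a point of $S$, so it has length at least $\dr_S(u,b)$. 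The walk continues past $c$, so $c$ is interior; hence $u\notin S$, as otherwise $u$ would be an interior point of $S$ on $W$. Likewise $W$ restricted to $[c,1]$ is a walk from $u$ towards $a$ reaching the other endpoint $q\in S$, of length at least $\dr_S(u,a)$. As $u\notin S$ and $\{u,b\},\{u,a\}$ are distinct adjacent edges, \ref{condition:a2} gives that $W$ has length at least $\dr_S(u,b)+\dr_S(u,a)\geq\delta$, again a contradiction. The delicate part of the argument is the reduction to a walk with no interior point of $S$ --- which is exactly what forces the two halves of $W$ to realise $\dr_S(u,b)$ and $\dr_S(u,a)$ and lets \ref{condition:a2} be invoked --- together with the topological bookkeeping about locally-injective walks (monotonicity within edges, edge-changes only at vertices and only to a different edge, and finiteness of the number of meetings with $S$) that makes all the would-be degenerate configurations collapse into the single-edge case governed by \ref{condition:a1}.
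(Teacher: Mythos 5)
Your proof is correct, and it takes a noticeably different route from the paper's. The paper first sets up \autoref{lemma:auto:conditions:technical}, which identifies $\dr_S(u,v)+\dr_S(u,w)$ with the length $\ell(p,u)$ of a shortest non-trivial walk through $u$ between points of $S$; its forward direction then contradicts \ref{condition:a2} via that lemma, with an extra descent argument (recursing along \autoref{lemma:dr:recurisive:definition}) for the case $\dr_S(u,v)>1$, and its backward direction takes a shortest violating walk, normalizes so that the nearest point of $S$ realizes $\dr_S$, and applies the technical lemma once more. You bypass that auxiliary lemma entirely: in the forward direction you glue the two non-backtracking walks realizing $\dr_S(u,v)$ and $\dr_S(u,w)$ at $u$ into a single locally-injective walk and invoke $\delta$-dispersedness (or, if the endpoints coincide, the closed-walk condition) directly, which also removes the paper's case split on $\dr_S(u,v)\leq 1$ versus $>1$; in the backward direction you run an extremal argument over counterexamples (fewest interior meetings with $S$, finite because $S$ is finite by \ref{condition:a1} and a locally-injective walk of bounded length revisits any point only finitely often), then split the walk at its first edge change and bound the two halves below by $\dr_S(u,b)$ and $\dr_S(u,a)$ before invoking \ref{condition:a2}. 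What this buys is self-containedness — you inline exactly the content of the paper's technical lemma — at the cost of carrying the topological bookkeeping (no U-turns, monotonicity within edges, finiteness of revisits, and the attainment of $\dr_S$ by a non-backtracking walk) explicitly; the paper outsources precisely this to its two auxiliary lemmas. Note that both arguments rest on the same implicit convention, which you at least state openly: $\dr_S$ is to be read as the attained minimum over non-backtracking walks, as the recursive characterization presupposes. One small slip in your closing remark: the reduction to a walk with no interior point of $S$ is not what makes the two halves bounded by $\dr_S$ (any such walk is), it is only what guarantees $u\notin S$ so that \ref{condition:a2} applies — but the argument as executed uses it only for that, so nothing breaks.
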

\begin{proof}
\forward
Let $S \subseteq P(G)$ be $\delta$-auto-dispersed.
Then clearly $S$ is edge internally $\delta$-dispersed. 
Now assume, for the sake of contradiction, that \ref{condition:a2} is violated,
	which means that
	there are distinct adjacent edges $\{u,v\},\{u,w\} \in E(G)$ where vertex $u$ is not in $S$ and $\dr_{S}(u,v) + \dr_{S}(u,w) = \ell < \delta$.

If $\dr_S(u,v) \leq 1$ there is a point $p \in S$ on edge $p(u,v,\lambda)$ with $\lambda > 0$ since $u$ is not in~$S$.
Then \autoref{lemma:auto:conditions:technical}
	implies that $\ell(u,p) = \dr_{S}(u,v) + \dr_{S}(u,w) < \delta$,
	hence that there is non-trivial $p,q$-walk containing $u$ of length $<\delta$ for some point $q \in S$.
A contradiction to that $S$ is $\delta$-dispersed.
Symmetrically, assuming $\dr_S(u,w) \leq 1$ leads to a contradiction.
	
Thus, it remains to consider that $1 < \dr_S(u,v) \leq \dr_S(u,w)$, up to symmetry.
Then $\dr_S(u,v) = \min_{v' \in N_G(v)} 1+ \dr_S(v,v')$ according to \autoref{lemma:dr:recurisive:definition}.
Thus, $\dr_S(v,v') = \dr_S(u,v)-1$.
Consider the pair of edges $(v,v'), (v,u)$ instead,
	which has at least $\dr_S(v,v')$ smaller than $\dr_S(u,v), \dr_S(u,w)$.
We repeat this argument until $\dr_S(v,v') <1$.
Then again the previous case applies.
Therefore, also condition \ref{condition:a2} is satisfied.

\backward
Let $S \subseteq P(G)$ be not $\delta$-dispersed.
Then there are points $p,q \in S$ (possibly~$p=q$) with a non-trivial $p,q$-walk in $P(G)$.
If $p,q$ are distinct and from the same edge ${ \{u,v\} \in E(G) }$, hence $p,q\in P(G[\{u,v\}])$,
	then $S$ is not edge internally $\delta$-dispersed.

Else, we may assume that $p=p(u,v,\lambda)$ for some edge $\{u,v\} \in E(G)$ for $\lambda \in (0,1]$ and a shortest non-trivial $p,q$-walk contains $u$.
In particular $\lambda = \dr_S(u,v)$ since otherwise we may consider $p',q' \in S$ with smaller distance than $p,q$.
Then $\ell(p,u) \leq \ell < \delta$ by definition.
According to \autoref{lemma:auto:conditions:technical},
	$\dr_S(u,v) + \dr_S(u,w) = \ell < \delta$
	for some neighbor $w \in N_G(u)$,
	which violates condition~\ref{condition:a2}.
\end{proof}

\subsection{Constructive Part}

Now, we are ready to prove the actual translation lemma.

\begin{lemma}[(\autoref{lemma:limit:b:auto} repeated)]
\label{lemma:appendix:limit:b:auto}
\lemmaTextLimitBAuto
\end{lemma}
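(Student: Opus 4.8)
The plan is to prove the relation between $\dispauto{\delta}(G)$ and $\dispauto{\frac{\delta}{\delta+1}}(G)$ by establishing the two inequalities $\dispauto{\frac{\delta}{\delta+1}}(G)\ge\dispauto{\delta}(G)+|E(G)|$ and $\dispauto{\frac{\delta}{\delta+1}}(G)\le\dispauto{\delta}(G)+|E(G)|$ separately, and to carry out every feasibility check through the characterisation of \autoref{lemma:auto:dispersion:conditions}, so that walks in $P(G)$ never have to be handled directly: a set $S$ is $\rho$-auto-dispersed exactly when it is edge-internally $\rho$-dispersed (\ref{condition:a1}) and satisfies $\dr_S(u,v)+\dr_S(u,w)\ge\rho$ at every incident pair of edges whose common vertex is not occupied (\ref{condition:a2}). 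Put $\rho:=\tfrac{\delta}{\delta+1}$, so that $\rho<1$ and $\tfrac1\rho=\tfrac1\delta+1$. Since $\rho<1$, \autoref{lemma:auto:delta:1:point} supplies a maximum $\rho$-auto-dispersed set $S^\star$ containing an interior point of \emph{every} edge; this point-per-edge normal form is precisely what accounts for the additive term $|E(G)|$.

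For the direction ``$\ge$'' I would take a maximum $\delta$-auto-dispersed set $S$, assume it to be saturated, and transform it edge by edge: re-lay out the points of $S$ so that each distance-$\delta$ gap between consecutive points is replaced by gaps at the new density $\rho$ — concretely, contract the non-integral part of every geodesic by the factor $\tfrac\rho\delta=\tfrac1{\delta+1}$ — which keeps every gap length $\ge\rho$ and, across the whole graph, frees exactly enough room to insert one additional point into each edge of $G$; call the result $S'$. Condition~\ref{condition:a1} for $S'$ is then a one-dimensional bookkeeping on each edge, using that every original gap has length $\ge\delta$ and that each new point is placed in freed slack; that exactly one new point fits per edge — never zero, never two — follows from comparing the total occupied length $\sum_e(\text{length used on }e)$ before and after the contraction, using saturation of $S$. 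Condition~\ref{condition:a2} transfers because at an unoccupied vertex $u$ each value $\dr_{S'}(u,\cdot)$ is obtained from $\dr_S(u,\cdot)$ by contracting its fractional part by $\tfrac1{\delta+1}$ and adding one unit for each fully skipped edge, so $\dr_S(u,v)+\dr_S(u,w)\ge\delta$ forces $\dr_{S'}(u,v)+\dr_{S'}(u,w)\ge\rho$. For the direction ``$\le$'' I would run the same construction in reverse: start from the normal-form optimum $S^\star$, delete one interior point per edge, apply the inverse map (stretch the fractional parts by $\delta+1$), and verify \ref{condition:a1} and \ref{condition:a2} for $\delta$ exactly as before. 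Combining the two inequalities yields the claimed identity.

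The step I expect to be the main obstacle is proving that this edge-by-edge transformation is \emph{consistent at the vertices}: a point's position on an edge is seen from both of its endpoints, and \ref{condition:a2} couples all edges incident to a vertex, so the local re-layouts cannot be chosen independently. I would resolve this by orienting, for a saturated $S$, every edge towards a nearest point of $S$ (the orientations realising $\dr_S$), observing that these orientations are acyclic on the subgraph of $G$ containing no point of $S$, and performing the transformation in a single sweep over the vertices ordered by their distance to $S$, so that each $\dr_S$-value is already fixed when the corresponding edge is processed. \autoref{lemma:auto:conditions:technical}, which ties $\dr_S$ at a vertex $u$ to the shortest non-trivial closed-walk length $\ell(\cdot,u)$, is what guarantees that the two ends of each edge receive matching rescalings and that tracing around any cycle of $G$ produces no contradiction. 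Getting this vertex-consistency and the exact count of inserted points right is where the real work lies; the per-edge arithmetic behind \ref{condition:a1} and \ref{condition:a2} is routine once the global picture is fixed.
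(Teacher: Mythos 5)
Your skeleton coincides with the paper's own proof: the same conversion ratio $\tfrac{1}{\delta+1}$ (via the identity $\tfrac{1}{\delta+1}+\tfrac{\delta}{\delta+1}=1$), one inserted respectively deleted point per edge, the one-interior-point-per-edge normal form from \autoref{lemma:auto:delta:1:point}, and feasibility checked through \autoref{lemma:auto:dispersion:conditions}. The gap sits exactly at the step that carries the weight, the transfer of condition~\ref{condition:a2}. Your rule ``$\mathsf{dr}_{S'}(u,\cdot)$ arises from $\mathsf{dr}_{S}(u,\cdot)$ by contracting the fractional part by $\tfrac{1}{\delta+1}$ and adding one unit per fully skipped edge'' is inconsistent with your own construction on every edge of $G$ that contains no point of $S$ (for $\delta\geq 1$ that is most edges): there $\mathsf{dr}_S(u,v)=1+\mathsf{dr}_S(v,w)\geq 1$, so your formula returns a value of at least $1$, while after inserting a point into the edge $\{u,v\}$ itself one has $\mathsf{dr}_{S'}(u,v)<1$. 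More importantly, a uniform multiplicative bound $\mathsf{dr}_{S'}(u,v)\geq\tfrac{1}{\delta+1}\,\mathsf{dr}_S(u,v)$ is not available in general for such edges; this is precisely why the paper (i) places the new point on an empty edge $\{u,v\}$ at the carefully chosen position $\max\{\tfrac{\delta\rho}{2},\,\delta\rho-\rho\,\mathsf{dr}_S(u,w_u)\}$ with $\rho=\tfrac{1}{\delta+1}$ and $w_u$ a neighbour minimizing $\mathsf{dr}_S(u,\cdot)$, and (ii) verifies condition~\ref{condition:a2} at each vertex by a case analysis over ``positive/neutral/negative'' directed edges. Your proposal leaves the placement on empty edges unspecified and replaces this analysis by a sweep over vertices ordered by distance to $S$ plus an acyclicity claim for the orientations realising $\mathsf{dr}_S$; as stated this is a heuristic, not an argument---nothing in it yields $\mathsf{dr}_{S'}(u,v)+\mathsf{dr}_{S'}(u,w)\geq\tfrac{\delta}{\delta+1}$ at a vertex all of whose incident edges were empty for $S$.

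Two further points. The reverse direction is not ``the same verification run backwards'': there the uniform bound $\mathsf{dr}_S(u,v)\geq(\delta+1)\,\mathsf{dr}_{S'}(u,v)$ does hold for all directed edges, but establishing it requires a minimal-counterexample induction on $\mathsf{dr}_S(u,v)$, exploiting that points are only removed; the asymmetry between the two directions is real and your symmetric treatment misses it. Finally, maximality of $S$ is not needed for the insertion direction, and you never need ``never two points fit''---only that at least one extra point fits per edge, which is what the gap computation $1-(\lambda+\mu)\rho\geq(m+1)\delta\rho$ delivers for non-empty edges.
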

\begin{proof}
($\leq$)
First, we show that $\dispauto{\delta}(G) \leq \dispauto{\frac{\delta}{\delta+1}}(G) + |E(G)|$.
Consider a $\delta$-dispersed set $S \subseteq P(G)$ of size $|S| = \disp{\delta}(G)$.
Let $\rho = (\delta+1)^{-1}$, the conversion ratio.
We construct a $(\delta\rho)$-dispersed set $S'$ of size $|S'| = |S|+|E(G)|$. 
A key observation is that $$\rho + \delta\rho \;=\; 1 .$$

We construct $S'$ by considering each edge $\{u,v\} \in E(G)$ separately.
Set~$S'$ compared to~$S$
	will contain one more point in $P(G[\{u,v\}])$ for every edge $\{u,v\} \in E(G)$.
Any point~$p \in S$ on a vertex will also be in~$S'$.
Thus $|S'| = |S| + |E(G)|$.
\begin{itemize}
\item
Consider that edge $\{u,v\} \in E(G)$ contains $m \geq 1$ points from $S$,
	which is $|S \cap P(G[\{u,v\}])| = m$.
Let $p = p(u,v,\lambda)$ be the point among $S \cap P(G[\{u,v\}])$ with minimum distance to~$u$, hence with distance $\lambda$.
Analogously, let $q= p(u,v, 1- \mu)$ be the point among $S \cap P(G[\{u,v\}])$ with minimal distance to $v$,
	hence with distance $\mu$,
	possibly $p=q$.
Then $\dr_S(u,v)=\lambda$ and $\dr_S(v,u)=\mu$.
There are $m-1 \geq 0$ further points between $p$ and $q$ on edge $\{u,v\}$,
	such that $p$ and $q$ have distance $1-(\lambda + \mu) \geq m\delta$.
In other words $\lambda + \mu \leq 1 - m\delta$.
	
Add points $p' = p(u,v,\lambda \rho)$ and $q' = p(u,v, 1 - \mu \rho)$ to $S'$,
	which are distinct even in when $p=q$.
Note that if $p$ is positioned on a vertex, also $p'$ is; analogously for $q$ and~$q'$.
Thus, the distance between the new points $p'$ and $q'$ is
	$$1 - (\lambda + \mu)\rho \; \geq \; 1 - (1-m\delta) \rho \; = \; 1 - \rho + m \delta\rho \; = \; (m+1)\delta\rho . $$
Hence, we may add $m$ further points between $p'$ and $q'$ on edge $\{u,v\}$ to set $S'$
	in such a way that the now $m+2$ points in $S' \cap P(G[\{u,v\}])$ have pairwise distance at least~$\delta\rho$.
\item
It remains to consider that edge $\{u,v\} \in E(G)$ contains no point from $S$,
	which is $S \cap P(G[\{u,v\}]) = \emptyset$.
Fix a neighbor $w_u \in N(u)$ with minimum distance $\dr_S(u,w_u)$.
Analogously, fix a neighbor $w_v \in N(v)$ with minimum distance $\dr_S(v,w_v)$.
By symmetry, we may assume the inequality $\dr_S(u,w_u) \leq \dr_S(v,w_v)$.
Then $w_u \neq v$ since then $\dr_S(u,w_u) = 1 + \dr_S(v,w_v)$.
Then, add a new point $p(u,v, \lambda' )$ at edge position $\lambda' = \max\{ \frac{\delta\rho}{2}, \delta\rho - \dr_S(u,w_u)\rho\}$ to~$S'$.
\end{itemize}

\sloppy
We show that $S'$ is $(\delta\rho)$-auto-dispersed
	by proving the conditions of \autoref{lemma:auto:dispersion:conditions}.
Set $S'$ is edge internally $(\delta\rho)$-dispersed
	since for every edge we have added points that have pairwise distance at least $\delta$.
It remains to show condition~\ref{condition:a2},
	which is that $\dr_{S'}(u,v) + \dr_{S'}(u,w) \geq \delta$ for every distinct adjacent edges $\{u,v\},\{u,w\} \in E(G)$ where vertex $u$ is not in $S$.

We partition the directed edges $(u,v)$ for $\{u,v\} \in E(G)$ as follows:
\begin{itemize}
\item 
A directed edge $(u,v)$ is \define{positive} if it satisfies $\dr_{S'}(u,v) \geq \dr_S(u,v)\rho$.
This is the case when $S \cap G[\{u,v\}]\neq \emptyset$, by construction.
Further, we claim that $(u,v)$ is also positive if $S \cap G[\{u,v\}] = \emptyset$
	and point $p(u,v,\lambda')$ with $\lambda' = 1 - (\delta\rho - \dr_S(v,w_v)\rho)$ is added to~$S'$,
	where $w_v \in N_G(v)$ is a neighbor with minimum $\dr_S(v,w_v)$,
	which has $w_v \neq u$.
Then $\dr_S(u,v) = 1 + \dr_S(v,w_v)$.
It follows that
$$
	\dr_{S'}(u,v) \;=\; 1 - \delta\rho + \dr_S(v,w_v)\rho
	\;=\; 1 - \delta\rho + \dr_S(u,v)\rho - \rho
	\;=\; \dr_S(u,v)\rho .
$$
\item
A not positive directed edge $(u,v)$ is \define{neutral} if $S \cap G[\{u,v\}] = \emptyset$ and point $p(u,v,\lambda')$ with $\lambda' = 1-\frac{\delta\rho}{2}$ is added to $S'$.
Then, since it is not positive, $\dr_S(v,w_v)\rho > \frac{\delta\rho}{2}$
	where $w_v \in N_G(u)$ is a neighbor with minimum $\dr_S(v,w_v)$.
Also $w_v \neq u$.
\item
A directed edge $(u,v)$ is \define{negative} if $S \cap G[\{u,v\}] = \emptyset$ and point $p(u,v, \lambda')$
	with $\lambda' = \max\{\tfrac{\delta\rho}{2}, \delta\rho - \dr_S(u,w_u)\rho \}$ is added to $S'$,
	in which case $w_v \neq u$.
\end{itemize}
Now, we observe that distinct adjacent edges $\{u,v\},\{u,w\} \in E(G)$ satisfy condition~\ref{condition:a2},
	that is $\dr_{S'}(u,v) + \dr_{S'}(u,w) \geq 2(\frac{\delta\rho}{2}) \geq \delta\rho$.
We distinguish which type of $(u,v)$ and $(u,w)$ have.
\begin{itemize}
\item
If $(u,v),(u,w)$ are positive, $\dr_{S'}(u,v) + \dr_{S'}(u,w) \geq (\dr_{S}(u,v) + \dr_{S}(u,w))\rho \geq \delta \rho = \delta\rho$.
Here we used that \autoref{lemma:auto:dispersion:conditions} applies to $S$ and $\delta$,
	hence that $\dr_{S}(u,v) + \dr_{S}(u,w) \geq \delta$.
\item
If $(u,v),(u,w)$ are both not positive,
	then $\dr_{S'}(u,v) + \dr_{S'}(u,w) \geq 2\frac{\delta\rho}{2} \geq \delta\rho$,
	which follows from the fact that $\delta\rho < 1$.
\item
Consider that $(u,v)$ is neutral and $(u,w)$ is positive.
By construction, we have $\dr_S(u,w) \geq \dr_S(u,w_u) \geq \dr_S(v,w_v)$.
Then
\begin{align*}
\dr_{S'}(u,w)+\dr_{S'}(u,v)
\;\geq&\; \dr_{S}(u,w)\rho + 1 - \tfrac{\delta\rho}{2} \\
\;\geq&\; \dr_{S}(v,w_v)\rho + 1 - \tfrac{\delta\rho}{2} \\
\;\geq&\; 1 \\
\;>&\; \delta\rho.
\end{align*}
\item
It remains to consider that $(u,v)$ is negative and $(u,w)$ is positive.
Observe that $\dr_{S'}(u,w) + \dr_{S'}(u,v) \geq \dr_{S'}(u,w_u) + \dr_{S'}(u,v)$.
If $(u,w_u)$ is neutral or negative, we obtain that $ \dr_{S'}(u,w_u) + \dr_{S'}(u,v) \geq \delta\rho$, by the previous cases, as desired.
Thus, consider that $(u,w_u)$ is positive.
Then
\begin{align*}
\dr_{S'}(u,w_u) + \dr_{S'}(u,v)
\;\geq\;& \dr_{S'}(u,w_u) + \delta\rho - \dr_{S}(u,w_u)\rho \\
\;\geq\;& \dr_{S'}(u,w_u) + \delta\rho - \dr_{S'}(u,w_u)
\;=\; \delta\rho. 
\end{align*}
\end{itemize}
Therefore, \autoref{lemma:auto:dispersion:conditions} applies to $S'$ and $\delta\rho$,
	which shows that $S'$ is $\delta\rho$-auto-dispersed.

\bigskip

($\geq$)
Now, we show that $\dispauto{\delta}(G) \geq \dispauto{\frac{\delta}{\delta+1}}(G) + |E(G)|$.
Let $\rho^{-1} = (\delta+1)$, the conversion ratio.
Consider a $(\delta\rho)$-dispersed set $S' \subseteq P(G)$ of size $|S'| = \dispauto{\delta}(G)$.
We construct a $\delta$-covering set $S \subseteq P(G)$ of size $|S| = |S'|-|E(G)|$.
A key observation is that $$\rho^{-1}+\delta\rho \;=\; 1 .$$

Similarly to before, we construct $S$ by considering each edge $\{u,v\} \in E(G)$ separately.
We may assume that every edge contains at least one point,
	according to \autoref{lemma:auto:delta:1:point}.
Set~$S$ compared to~$S'$
	will contain one less point $p(u,v,\lambda)$ with $\lambda \in (0,1)$ for every edge $\{u,v\}$.
Any point $p \in S'$ on a vertex will also be in $S$.
Thus $|S| = |S'| - |E(G)|$.
\begin{itemize}
\item
Consider that edge $\{u,v\}$ contains exactly one point from $S'$,
		which is $S' \cap P(G[\{u,v\}]) = \emptyset$.
Then add no point for edge $\{u,v\}$ to $S$.
\item
Consider that edge $\{u,v\}$ contains $m+1 \geq 2$ points from $S'$,
	which is $|S' \cap P(G[\{u,v\}])| = m+1$, for $m\geq 1$.
Let $p' = p(u,v,\lambda)$ be the point among $S' \cap P(G[\{u,v\}])$ with minimal distance to vertex~$u$,
	hence with distance $\lambda$.
Analogously, let $q' = p(u,v,1-\mu)$ be the point among $S' \cap P(G[\{u,v\}])$ with minimal distance to $v$,
	hence with distance $\mu$.
Then $\dr_{S'}(u,v)=\lambda$ and $\dr_{S'}(v,u)=\mu$.
We have that $p' \neq q'$.
	
Add points $p = p(u,v, \lambda \rho^{-1})$ and $q = p(u,v, 1- \mu \rho^{-1})$ to $S$.
If $p'$ is on a vertex, also the new point $p$ is on the same vertex;
	analogously for $q'$ and $q$.

Observe that the distance between the old points $p'$ and $q'$ is $1-(\lambda+\mu) \geq \delta\rho m$
	since points $S' \cap G[\{u,v\}]$ have pairwise distance at least $\delta$.
Thus, the distance between the new points $p$ and $q$ is
$$
	1 - (\lambda + \mu)\rho^{-1}
	\; = \; (1-(\lambda + \mu))\rho^{-1} + (1-\rho^{-1})
	\; \geq \; \delta\rho  m  \rho^{-1} - \delta
	\; = \; \delta (m-1) .$$

If $m=1$, then $1-(\mu+\lambda) \geq \delta\rho$,
	and hence the distance between $p$ and $q$ is $0$.
In other words, $p = q$
		such that $S$ contains one point less on edge $\{u,v\}$.
Else, if $m \geq 2$,
	we may add $m-2 \geq 0$ further points between $p$ and $q$
	such that the $m$ points $S \cap P(G[\{u,v\}])$ have pairwise distance at least $\delta$.
Then, $S$ contains one more point on edged $\{u,v\}$ than~$S'$.
\end{itemize}

We prove that $S$ is $\delta$-auto-dispersed
	by showing the conditions of \autoref{lemma:auto:dispersion:conditions}.
Set $S$ is edge internally $\delta$-dispersed
	since for every edge we have added points that have pairwise distance at least $\delta$.
For condition~\ref{condition:a2}
	it suffices to show that $\dr_{S}(u,v) \geq \dr_{S'}(u,v) \rho^{-1}$ for every pair~$(u,v)$ where $\{u,v\} \in E(G)$.
Then for every distinct adjacent edges $\{u,v\},\{u,w\} \in E(G)$
	we have $\dr_S(u,v) + \dr_S(u,w) = (\dr_S(u,v)+\dr_S(u,w))\rho^{-1} \geq \delta\rho \rho^{-1} = \delta$ since \autoref{lemma:auto:dispersion:conditions} applies to the $(\delta\rho)$-auto-dispersed set $S'$.

Hence, assume, for the sake of contradiction,
	that there are $(u,v)$ with $\{u,v\}\in E(G)$ where $\dr_{S}(u,v) < \dr_{S'}(u,v) \rho^{-1}$.
Among all such pairs $(u,v)$ where $\dr_{S}(u,v) < \dr_{S'}(u,v) \rho^{-1}$,
	let $(u,v)$ be where $\dr_S(u,v)$ is minimum.

Consider that $\dr_S(u,v) \leq 1$.
Then $S$ contains a point $p = p(u,v, \dr_S(u,v))$.
By construction,
	$S'$ contains point $p' = p(u,v,\lambda')$ with $\lambda' =  \dr_{S}(u,v)\rho = \dr_{S'}(u,v)$.
Thus, actually, $\dr_{S}(u,v) \leq \dr_{S'}(u,v) \rho^{-1}$.

Otherwise, we have $\dr_S(u,v) > 1$.
Then $\dr_S(u,v) = 1 + \dr_S(v,w)$ for some neighbor ${ w \in N_{G}(v) \setminus \{u\} }$.
Because $\dr_S(u,v) > \dr_S(v,w)$,
	the assumption $\dr_S(v,w) \geq \dr_{S'}(v,w) \rho^{-1}$ holds.
We will follow that $\dr_S(u,v) \geq \dr_S(u,v) \rho^{-1}$.
Again, we use that \autoref{lemma:auto:dispersion:conditions} applies to~$S'$,
	such that $\dr_{S'}(v,u) \geq \delta\rho - \dr_{S'}(v,w)$.
By construction, set $S' \cap P(G[\{u,v\}])$ contains only one point.
Thus $\dr_{S'}(v,u) = 1 - \dr_{S'}(u,v)$.
It follows that
\begin{align*}
\dr_S(u,v)
	\;=&\; 1 + \dr_S(v,w) \\
	\;\geq&\; (1-\delta\rho + \dr_{S'}(v,w)) \rho^{-1} \\
	\;\geq&\; (1 - \dr_{S'}(v,u))\rho^{-1} \\
	\;=&\; \dr_S(u,v) \rho^{-1}.
\end{align*}
Hence, in both cases, we obtain the contradiction that $\dr_S(u,v) \geq \dr_S(u,v) \rho^{-1}$.
Thus, by \autoref{lemma:auto:dispersion:conditions}, $S$ is $\delta$-auto-dispersed.
\end{proof}

\section{Rounding the Distance}

\subsection{Proof of \autoref{lemma:rounding:orientation}}

We rely on the \define{direction} $\ori{q}{p} \in \{u,v\}$ for distinct points $p,q \in P(G)$
	defined as follows:
\begin{itemize}
\item
For points $p=p(u,v,\lambda_p)$ and $q=p(u,v,\lambda_q)$ on a common edge $\{u,v\}$ with $\lambda_p<\lambda_q$,
	let $\ori{q}{p}=v$.
Let $\nori{q}{p} = u$.
\item
For points $p=p(u_p,v_p,\lambda_p)$ and $q=p(u_q,v_q,\lambda_q)$
	on distinct edges $\{u_p,v_p\} \neq \{u_q,v_q\}$,
	let $\ori{q}{p}$ be the unique vertex of $\{u_p,v_p\}$ that is contained in \emph{every} shortest path between $p$ and $q$,
		if such vertex exists.
If $\ori{q}{p}$ is defined, let $\nori{q}{p}$ be the unique vertex in $\{u_p,v_p\} \setminus \{ \ori{q}{p} \}$.
\end{itemize}

\begin{lemma}[(\autoref{lemma:rounding:orientation} restated)]
\label{lemma:appendix:rounding:orientation}
\textLemmaRoundingOrientation
\end{lemma}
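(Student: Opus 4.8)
The plan is to follow the two-case definition of $\oripara$ and show that whenever $\ori{q}{p}$ is undefined, $p$ is half-integral. Since the only non-half-integral vertices of $G_S$ are points of $S$, I may assume $p \in S$ lies in the interior of a unique edge $\{u_p,v_p\}$ --- otherwise $p$ is a vertex, hence half-integral, and there is nothing to show.

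If $p$ and $q$ lie on a common edge, the claim is immediate: distinctness forces different edge positions and thus a uniquely determined direction. So suppose $q$ does not lie on $\{u_p,v_p\}$. Every shortest $p$--$q$-path leaves this edge through $u_p$ or through $v_p$; comparing the two best costs $A \coloneqq \lambda_p + d(u_p,q)$ and $B \coloneqq (1-\lambda_p) + d(v_p,q)$, we have $d(p,q) = \min\{A,B\}$. If $A < B$ then no shortest path leaves through $v_p$, and --- using $\lambda_p > 0$ --- no shortest path leaving through $u_p$ can revisit $v_p$ either; hence $u_p$ lies on every shortest $p$--$q$-path and $\ori{q}{p} = u_p$ is well-defined, and symmetrically if $A > B$. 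So the only way $\ori{q}{p}$ can fail to be defined is $A = B$, equivalently $\lambda_p = \tfrac12\bigl(1 + d(v_p,q) - d(u_p,q)\bigr)$; since $d(u_p,v_p) = 1$ already gives $\lvert d(v_p,q) - d(u_p,q)\rvert \le 1$, the lemma reduces to showing that $A = B$ forces $d(v_p,q) - d(u_p,q) \in \{-1,0,1\}$ (which puts $\lambda_p \in \{0,\tfrac12,1\}$).

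To get there I would split on $q$. If $q$ is half-integral --- a vertex or an edge midpoint --- then $d(u_p,q)$ and $d(v_p,q)$ both lie in $\mathbb{Z}$ or both lie in $\tfrac12 + \mathbb{Z}$, so their difference is an integer and we are done. The same observation settles the case where the two shortest $p$--$q$-paths forced by $A = B$ both reach $q$ from the same endpoint $x$ of its edge: then $\lambda_p = \tfrac12\bigl(1 + d(v_p,x) - d(u_p,x)\bigr)$ with $d(v_p,x) - d(u_p,x)$ an integer of absolute value at most $1$. In the remaining case $q \in S$ is not half-integral and the shortest path through $u_p$ and the one through $v_p$ reach $q$ from opposite endpoints of its edge; gluing the first to the reverse of the second yields a closed walk $W$ based at $p$ and through $q$ that traverses the edges of $p$ and of $q$ in full plus two internal paths of $G$, so $W$ is non-trivial, has integral length $2\,d(p,q) \le 2L+2$, and is locally injective --- it enters and leaves $p$ on opposite sides of its edge, passes straight through $q$ by the opposite-endpoints assumption, and never backtracks at $u_p$ or $v_p$ since the two paths avoid $v_p$, resp.\ $u_p$. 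From such a short non-trivial locally injective closed walk through points of $S$ one concludes, exactly as in the closed-walk contradiction behind \autoref{lemma:equal:potential} (where $d(p,q)$ is pinned to $\delta$), that $\delta$ is already a rational $\tfrac{a}{b}$ with numerator $a \le 2L+2$, so $\delta = \delta^\star$, contradicting $\delta < \delta^\star$. Hence $A = B$ cannot occur when $p$ is not half-integral.

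The main obstacle is precisely this last case: making the glued walk $W$ provably locally injective --- choosing the two shortest paths to minimise their overlap and cancelling any shared back-and-forth while keeping the length integral --- and then extracting from it exactly the numerator bound $2L+2$ that governs $\delta^\star$, as in the proof of \autoref{lemma:equal:potential}.
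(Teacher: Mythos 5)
Your reduction to the condition $A=B$, and your first two subcases, are sound and essentially the paper's argument: when $q$ is half-integral (in particular when $q\in\pivots{S}{\delta}$), or when the two competing shortest paths enter $q$'s edge at the same endpoint, the difference of the two integral/co-integral path portions forces $2\lambda_p\in\mathbb{Z}$, so $p$ is half-integral. Up to that point you are, if anything, slightly cleaner than the paper's own treatment of the pivot case.

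The genuine gap is the last case ($q\in S$ non-half-integral, the two shortest paths entering $q$'s edge at opposite endpoints). There your glued walk has length $2\,d(p,q)=2+\ell_u+\ell_v\in\mathbb{Z}$, i.e.\ it only tells you that $d(p,q)$ is half-integral; it is a single ``hop'' of length $d(p,q)$, not a concatenation of hops of length $\delta$ or $\tfrac{\delta}{2}$, so nothing about $\delta$ being a rational with numerator at most $2L+2$ can be extracted from it. The contradiction behind \autoref{lemma:equal:potential} works precisely because every step of a spine has length exactly $\delta$ or $\tfrac{\delta}{2}$, so the integral walk length becomes an integer combination of $\delta$; your walk has no such structure, and the auxiliary bound $2\,d(p,q)\le 2L+2$ is also not justified (via \autoref{lemma:path:from:PG:to:G} one only gets roughly $d(p,q)<L+2$). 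Moreover the implication you want is false in this generality: two points of $S$ placed antipodally on a $7$-cycle at distance $\tfrac{7}{2}>\delta$ are non-half-integral, have both directions ambiguous, and yield exactly such an integral closed walk, with no contradiction to $\delta<\delta^\star$. What saves the lemma is that it is applied to pairs that are adjacent in $G_S$, so for $q\in S$ one has $d(p,q)=\delta$ exactly; then the same computation gives $2\delta=2+\ell_u+\ell_v\in\mathbb{Z}$, i.e.\ $\delta$ half-integral, contradicting \autoref{lemma:not:half:integral} (a half-integral $\delta$ with numerator exceeding $2L+2$ is larger than $L+1$, so a shortest path between two points of $S$ would force a too-long path in $G$). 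Once you use $d(p,q)=\delta$, the walk construction and the local-injectivity worry you flag become unnecessary; without it, no argument can close this case.
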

\begin{proof}
Let $p=p(u_p,v_p,\lambda_p)$ and $q=p(u_q,v_q,\lambda_q)$.
If $\{u_p,v_p\} = \{u_q,v_q\}$, then $\ori{q}{p}$ is defined for $p,q$.
It remains to consider that $\{u_p,v_p\}\neq\{u_p,v_q\}$.
Assume, for the sake of contradiction, that there are shortest paths $P_u, P_v \in P(G)$
	which contain $u_p$ respectively $v_p$.
Let \( \ell_u \in \N \) be the length of the path \( P_u \) without edges $\{u_p,v_p\}$ and $\{u_q,v_q\}$.
Analogously let \( \ell_v \in \N \) be the length of the path $P_v$ without edges $\{u_p,v_p\}$ and $\{u_q,v_q\}$.
	
If $P_u$ and $P_v$ both contain $u_q$ or both contain $v_q$, 
	then their length may only differ by $|\lambda_p - (1-\lambda_p)|<1$.
Since $P_u$ and $P_v$ have equal length, $\lambda_p = \frac{1}{2}$ in contradiction that $p$ is not half-integral.
Thus, up to symmetry, $P_u$ contains $u_q$ and $P_v$ contains $v_q$.

First, assume that $q \in S \setminus \pivots{S}{\delta}$,
	such that $p,q$ have distance $\delta$.	
Then the distance between $p$ and $q$ also computes as \( \lambda_p + \ell_u + \lambda_q = (1-\lambda_p) + \ell_v + (1-\lambda_q) = \delta \).
In other words $2+\ell_u+\ell_v = 2 \delta$,
	which implies that $\delta$ is half-integral.
	Contradiction to \autoref{lemma:not:half:integral}.
	
Now, assume that $q \in \pivots{S}{\delta}$.
Then $\lambda_q \in \{0,\tfrac{1}{2},1\}$.
The length of path $P_v$ is $\lambda_p + x$ for some $x \in \HHH$.
Similarly the length of path $P_u$ is $(1-\lambda_p) + y$ for some $y \in \HHH$.
Since $\lambda_p$ is not a half integral, paths $P_u$ and $P_v$ have different length.
Contradiction.
\end{proof}

\newcommand{\lemmaTextDeltaNotHalfIntegral}{
$\delta$ is not half-integral.}

\subsection{Proof of \autoref{lemma:stay:critical}}

We work towards proving \autoref{lemma:stay:critical}.
We observe the following as an underlying principle.

\newcommand{\textLemmaRoundingOne}{
Consider points
	$p = p(u_p,v_p,\lambda_p)$, $q=p(u_q,v_q,\lambda_q)$
	in distance $\delta' > 0$ and a shortest $p,q$-path that does not visit $v_p,v_q$.
Let $x,y \in \mathbb{R}$, $x+y=1$ and $\varepsilon \geq 0$.
For $\varepsilon' \in [0, \varepsilon)$, let no point $p(u_p,v_p,\lambda_p + x \varepsilon')$, $p(u_q,v_q, \lambda_q + y \varepsilon')$  be half-integral nor $\delta' + \varepsilon'$ be half-integral.
Then points
	$p(u_p,v_p,\lambda_p + x \varepsilon)$ and $p(u_q,v_q, \lambda_q + y \varepsilon)$ have distance $\delta'+\varepsilon$.
}

\begin{lemma}
\label{lemma:rounding:one} 
\label{lemma:appendix:rounding:one}
\textLemmaRoundingOne
\end{lemma}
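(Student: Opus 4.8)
The plan is to track the pairwise distance as a function of how far the points have been pushed. Write $p_t := p(u_p,v_p,\lambda_p + xt)$ and $q_t := p(u_q,v_q,\lambda_q + yt)$, and set $d(t) := d(p_t,q_t)$ for $t \in [0,\varepsilon]$. Since $p_t$ moves along its edge at speed $|x|$ and $q_t$ at speed $|y|$, the triangle inequality gives $|d(t)-d(t')| \le (|x|+|y|)\,|t-t'|$, so $d$ is continuous on $[0,\varepsilon]$ and $d(0) = \delta'$. The goal is $d(t) = \delta'+t$ for all $t \in [0,\varepsilon]$; by continuity it suffices to establish this on $[0,\varepsilon)$.

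For the upper bound $d(t) \le \delta'+t$, fix the hypothesised shortest $p,q$-path $\pi$ avoiding $v_p$ and $v_q$. If $p,q$ lie on a common edge, then $\pi$ is the sub-segment between them and $v_p,v_q$ are its two (distinct) endpoints, so the sub-segment between $p_t$ and $q_t$ has length $\delta' + (x+y)t = \delta'+t$. Otherwise $\pi$ leaves $p$'s edge through $u_p$ and enters $q$'s edge through $u_q$; its middle portion from $u_p$ to $u_q$ uses neither incident edge $\{u_p,v_p\}$ nor $\{u_q,v_q\}$ (those would force revisiting $v_p$ or $v_q$), hence has an integer length $m \ge 0$ with $\delta' = \lambda_p + m + \lambda_q$. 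Concatenating that middle portion with the segments $[p_t,u_p]$ and $[u_q,q_t]$ yields a $p_t,q_t$-walk of length $(\lambda_p+xt)+m+(\lambda_q+yt) = \delta'+t$.

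For the lower bound, suppose $d(t_1) < \delta'+t_1$ for some $t_1 \in (0,\varepsilon)$ and put $t^\star := \sup\{t \in [0,t_1] : d(t) = \delta'+t\}$; by continuity $d(t^\star) = \delta'+t^\star$ while $d(t) < \delta'+t$ on $(t^\star,t_1]$. A shortest path between $p_s$ and $q_s$ leaves $p$'s edge through a single well-defined endpoint (it cannot traverse that edge in full and stay shortest, as $p$ is not half-integral), and likewise for $q$; so it has a combinatorial type given by those two endpoints together with its route through the rest of $G$, and the length of any fixed type is affine in $s$ with slope in $\{x+y,\,x-y,\,-x+y,\,-x-y\} = \{1,\,x-y,\,y-x,\,-1\}$ (using $x+y=1$) and integer ``middle'' contributions. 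Choose $s$ just above $t^\star$; then $d(s)$ is realised by some type $\tau$ with $\ell_\tau(s) < \delta'+s$, and since $\ell_\tau(t^\star) \ge d(t^\star) = \delta'+t^\star$ and $\ell_\tau$ is affine, $\ell_\tau(t^\star) = \delta'+t^\star$ and the slope of $\ell_\tau$ is strictly below $1$ (a type of slope $\ge 1$ meeting or lying above $\delta'+t$ at $t^\star$ stays $\ge \delta'+t$). Writing $\ell_\tau(t^\star) = \delta'+t^\star$ and simplifying with $1-x=y$ and $\delta' = \lambda_p+m+\lambda_q$, each remaining possibility for $\tau$ forces a forbidden event at the parameter $\varepsilon' = t^\star \in [0,\varepsilon)$: if $\tau$ uses $u_p$ and $v_q$ the identity becomes $\lambda_q + y t^\star \in \HHH$, i.e.\ $p(u_q,v_q,\lambda_q+yt^\star)$ is half-integral; symmetrically using $v_p$ and $u_q$ makes $p(u_p,v_p,\lambda_p+xt^\star)$ half-integral; using $v_p$ and $v_q$ makes $\delta'+t^\star \in \HHH$; and using $u_p$ and $u_q$ has slope exactly $1$, excluded. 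Each of the first three contradicts a hypothesis, so no such $t_1$ exists and $d \equiv \delta'+t$ on $[0,\varepsilon)$, hence on $[0,\varepsilon]$.

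The main obstacle I expect is making the ``combinatorial type'' bookkeeping airtight: one must justify that a shortest path leaves $p$'s and $q$'s edges each through a single endpoint, that the middle lengths are genuine non-negative integers, and that these integers survive the algebraic rearrangements so that the conclusions are literally half-integrality statements. The degenerate common-edge case (shortest path a sub-segment rather than passing through vertices) must be peeled off separately, but it is immediate once one observes $v_p$ and $v_q$ are then the two endpoints of that edge.
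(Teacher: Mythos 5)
Your proposal follows essentially the same strategy as the paper's proof: the upper bound is obtained by pushing the witnessing shortest path, and the lower bound locates an intermediate parameter at which some competing path is tight and then case-analyses which of $u_p,v_p,u_q,v_q$ that path uses, each case contradicting one of the non-half-integrality hypotheses. Your ``combinatorial type'' bookkeeping is a cleaner, more algebraic execution of the paper's continuity argument: where the paper compares $d(u_p,u_q)$ with $d(v_p,u_q)$ and, in the $v_p,v_q$-case, glues the old and new paths into a cycle of integer length, you read the same contradictions directly off affine length identities, and your slope argument disposes of the $(u_p,u_q)$-type without further discussion. The computations in your three contradiction cases are correct.

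One step is not justified as written, and the whole case analysis hinges on it: from ``$\ell_\tau(s)=d(s)<\delta'+s$ for one fixed $s$ just above $t^\star$'' together with ``$\ell_\tau(t^\star)\ge d(t^\star)=\delta'+t^\star$ and $\ell_\tau$ is affine'' you cannot conclude $\ell_\tau(t^\star)=\delta'+t^\star$; a type of sufficiently negative slope (recall $x$ and $y$ are arbitrary reals with $x+y=1$) can be strictly above $\delta'+t^\star$ at $t^\star$ and strictly below $\delta'+s$ at $s$. The repair is immediate inside your own framework: there are only finitely many types, so $d$ is a minimum of finitely many affine functions and hence some single type $\tau$ realises $d$ on a whole interval $(t^\star,t^\star+\eta)$ (alternatively, pigeonhole a type over a sequence $s_n$ decreasing to $t^\star$); continuity of $d$ then gives $\ell_\tau(t^\star)=d(t^\star)=\delta'+t^\star$, after which your slope bound and the three half-integrality contradictions go through verbatim. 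A further minor remark: your common-edge case silently assumes the favourable orientation ($v_p\neq v_q$ and the two points moving apart along the segment), which the stated hypothesis does not literally force; the paper's one-line treatment of that case makes the same implicit assumption, so this is an imprecision of the lemma statement rather than a defect specific to your argument.
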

\begin{proof}
	Let $p_{\varepsilon} \coloneqq p(u_p, v_p, \lambda_p + x \varepsilon)$ and 
	$q_{\varepsilon} \coloneqq p(u_q, v_q, \lambda_q + y \varepsilon)$.

The simple case is that points $p,q$ are from a common edge $\{u,v\}$.
Then any shortest path between $p_{\varepsilon},q_{\varepsilon}$ uses edge $\{u,v\}$
	such that points $p_{\varepsilon},q_{\varepsilon}$ have distance $\delta' + x\varepsilon + y\varepsilon = \delta' + \varepsilon$, as desired.

The interesting case is that $p,q$ are from distinct edges.
Then, there is a shortest path $P\subseteq P(G)$ of length $\delta'$ between the points $p$ and $q$
	that uses vertices $u_p$ and $u_q$.
Let $Q \subseteq P$ be the subpath of $P$ between the vertices $u_p$ and $u_q$,
	which has some integer length $\ell \in \mathbb{N}_0$.
Hence $P$ has length $\delta' = \ell + \lambda_p + \lambda_q$.

Let $P' \subseteq P(G)$ be the path between the new vertices $p_{\varepsilon}$ and $q_{\varepsilon}$ consisting of $Q$ and $\{ p(u_p,v_p,\lambda) \mid 0 \leq \lambda \leq \lambda_p + x\varepsilon \}$
	and $\{ p(u_q,v_q,\lambda) \mid 0 \leq \lambda \leq \lambda_q + y\varepsilon \}$.
Path $P'$ has minimum length of $p_{\varepsilon},q_{\varepsilon}$-paths that visit vertices $u_p$ and $u_q$.
Further, $P'$ has length $\ell+\lambda_p+\lambda_q+\varepsilon = \delta' + \varepsilon$.
		
Assume, for the sake of contradiction,
	that there is a path between $p_\varepsilon$ and $q_\varepsilon$ of length $<\delta'+\varepsilon$.
Then this path may not visit $u_p$ and $u_q$.
Further, there is an $\varepsilon'' \in [0,\varepsilon)$
	such that there is a shortest path $P''$ between $p_{\varepsilon''}$ and $q_{\varepsilon''}$
	using not both of $u_p$ and $u_q$ and of length \emph{exactly} $\delta' + \varepsilon''$.
To see this, note that the distance of a shortest path between $p_{\varepsilon}$ and $q_{\varepsilon}$ that does not use both vertices $u_p$ and $u_q$ is a continuous function in $\varepsilon$.
By symmetry assume that $P''$ does not use $v_p$.
\begin{itemize}
\item
First, consider that $P''$ uses vertices $v_p$ and $u_q$.
Consider that $d(u_p,u_q) = d(v_p,u_q)$.
Let path $P^\star$, analogously to $P'$,
	consist of $Q$,
	$\{ p(u_p,v_p,\lambda) \mid 0 \leq \lambda \leq \lambda_p + x{\varepsilon''} \}$
	and $\{ p(u_q,v_q,\lambda) \mid 0 \leq \lambda \leq \lambda_q + y{\varepsilon''} \}$.
Then each of $P''$ and $P^\star$ has length $\delta' + \varepsilon''$.
Since $d(u_p,u_q) = d(v_p,u_q)$, point $p_{\varepsilon''}$ has to have equal distance to $u_p$ and to $u_q$.
Thus, $p_{\varepsilon''}$ is the half-integral point $p(u_p,v_p,\frac{1}{2})$.
A contradiction to that no point $p(u_p,v_p,\lambda_p + x \varepsilon')$ for $0\leq \varepsilon' < \varepsilon''$ is half-integral.

If $d(u_p,u_q) \neq d(v_p,u_q)$, the distances $d(u_p,u_q)$ and $d(v_p,u_q)$ differ by at least one.
Then by similar arguments it follows that $p_{\varepsilon''}=p(u_p,v_p,0)$ or $p_{\varepsilon''}=p(u_p,v_p,1)$, hence is half-integral.
Again a contradiction to that no point $p(u_p,v_p,\lambda_p + x \varepsilon')$ for $0\leq \varepsilon' < \varepsilon''$ is half-integral.
\item
Now, consider that path $P''$ uses vertices $v_p$ and $v_q$.
Assume that $P$ and $P''$ intersect in an inner point.
Then, they in particular intersect in a point that is at a vertex $u \notin \{u_p,v_p\}$.
Analogously to the previous case, this implies that $p_{\varepsilon''}$ is half-integral.
Thus $P$ and $P''$ must have inner-points disjoint.
Then $P \cup P'' \subseteq P(G)$ forms a cycle of length $2(\delta' + \varepsilon'')$ which is an integer.
Hence $\delta'+\varepsilon''$ is half-integral.
A contradiction to that no $\delta' + \varepsilon'$ for $\varepsilon' < \varepsilon$ is half-integral.
\end{itemize}
Thus, $P'$ is indeed a shortest path between $p_\varepsilon$ and $q_\varepsilon$,
	which has length $\delta'+\varepsilon$, as desired.
\end{proof}

To properly prove \autoref{lemma:stay:critical}, we recall the technical framework.

An $(S,\delta)$-\define{pivot}, or simply a pivot, is a half-integral point $r \in P(G)$
	where two points $p,q \in S$, the witnesses, have distances $d(p,r) = d(r,q) = \tfrac{\delta}{2}$.
Since $S$ is $\delta$-dispersed, there is a shortest $p,q$-path of length $\delta$ containing $r$.
Let $\pivots{S}{\delta}$ be the set of $(S,\delta)$-pivots,
	and let $W(S,\delta) \subseteq \binom{S}{2}$ be the family of pairs of points from $S$, that witness some $(S,\delta)$-pivot.

We construct an \define{auxiliary graph} $G_S$ on vertex set $S \cup \pivots{S}{\delta}$:
\begin{itemize} 
	\item
	For $\{p,q\} \in W(S,\delta)$ and for every point $r \in \pivots{S}{\delta}$ they witness,
		add edges $\{p,r\},\{r,q\}$; and
	\item
	for every pair of points $\{p,q\} \in \binom{S}{2} \setminus W(S,\delta)$ with distance $d(p,q)=\delta$, add edge $\{p,q\}$.
\end{itemize}
Note that, for every edge $\{r,p_1\}$ with $r \in \pivots{S}{\delta}$, there is at least one other edge $\{r,p_2\}$ such that $p_1,p_2$ witness pivot $r$.

A path $P=(p_0,p_1,\dots,p_s),s \geq 1$ in graph $G_S$ is a \define{spine} if points $p_1,\dots,p_{s}$ are not half-integral.
Note that any sub-sequence $(p_0,\dots,p_i)$, $1\leq i \leq s$ is also a spine.

\medskip

Consider a spine $P=(p_0,\dots,p_s)$.
We define its half-integral \define{velocities} $\vel_P: \{p_0,\dots,p_s\} \allowbreak \to \HHH$
	depending on \define{signs} $\sgn_P: \{p_1,\dots,p_s\} \to \{-1,1\}$,
	which in turn depend on $\change_P: \{p_0,\dots,p_s\} \to \{-1,1\}$.
We may drop the subscript $P$, if it is clear from the context.
Let $\vel(p_0)=0$.
Let $\vel(p_1) = \frac{1}{2}$, if $p_0 \in \pivots{S}{\delta}$, and let $\vel(p_1)=1$, if $p_0 \in S$.
For $i \geq 1$, let
$$ \vel(p_{i+1}) \; \coloneqq \; \vel(p_i)+ \sgn(p_{i+1}) . $$ 
Thus, $\sgn \in \{-1,1\}$ indicates whether the velocity increases or decreases.
The current $\sgn$ is unchanged unless $\change$ is negative.
Let $\sgn(p_1) = 1$.
For $2 \leq i \leq s$, let
	$$
	\sgn(p_{i}) \; \coloneqq \; \change(p_{i-1}) \sgn(p_{i-1}) \; = \; \prod_{0< j< i} \change(p_j) , $$ 
where for $1 \leq i \leq s-1$,
$\change(p_i) \in \{1,-1\}$, and $\change(p_i)=1$ if and only if $ \ori{p_{i-1}}{p_i} \neq \ori{p_{i+1}}{p_i}$. 
	$$ 
		\change(p_i) \coloneqq
		\begin{cases}
			\phantom{-}1, & \ori{p_{i-1}}{p_i} \neq \ori{p_{i+1}}{p_i}, \\ 
			-1, & \text{else}.
		\end{cases}
	$$
Value $\change(p_i)$ describes whether a shortest path between $p_{i-1}$ and $p_i$
	and a shortest path between $p_{i}$ and $p_{i+1}$ overlap.
If $\change(p_i)=1$, they do not overlap.
If $\change(p_i)=-1$, the paths overlap;
	and this will cause a flip for the sign, i.e., whether the speed is increasing or decreasing.

\medskip

Let $\varepsilon \geq 0$ and consider a spine $P = (p_0,\dots,p_i)$.
We move all points $p_i$ for $1\leq i \leq s$ by $\sgn_P(p_i) \vel_P(p_i) \varepsilon$ away from their predecessor $p_{i-1}$.
Specifically, when $\lambda_i$ is such that $p_{i} = p(\cdot, \nori{p_{i-1}}{p_{i}}, \lambda_i)$,
	we define the moved point as
\begin{equation*}
	\label{eq:moved:point}
	(p_i)_{P,\varepsilon} \coloneqq p \big(\cdot, \; \nori{p_{i-1}}{p_{i}}, \; \lambda_i + \sgn_P(p_i) \vel_P(p_i) \varepsilon \big) .
\end{equation*}

Doing so, points $(p_i)_{P,\varepsilon},(p_{i+1})_{P,\varepsilon}$ maintain a distance of $\delta+\varepsilon$,
	under the condition that we do not encounter a half-integral point in the process
	and that $\delta+\varepsilon$ itself does not become half-integral.

\newcommand{\lemmaTextMovingPointsPreview}{
	Let $P = (p_0,\dots,p_i)$ be a spine.
	For $ \varepsilon' \in (0,\varepsilon)$, let neither $(p_{i-1})_{P,\varepsilon'}$ nor $(p_i)_{P,\varepsilon'}$
		nor $\delta' + \varepsilon'$ be half-integral.
	Then points $\{(p_{i-1})_{P,\varepsilon}, (p_{i})_{P,\varepsilon}\}$ are $(\delta + \varepsilon)$-critical.
}

\begin{lemma}
\label{lemma:moving:points:preview}
\label{lemma:appendix:moving:points:preview}
\lemmaTextMovingPointsPreview
\end{lemma}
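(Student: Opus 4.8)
The plan is to derive the statement directly from \autoref{lemma:rounding:one}, which is exactly tailored to pushing two points on fixed edges apart with displacement coefficients summing to one; the entire job is to recast the pair $\{(p_{i-1})_{P,\varepsilon},(p_i)_{P,\varepsilon}\}$ into that form and to verify that the coefficients do sum to one.

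First I would dispose of the bookkeeping around the endpoints. The hypothesis already forces $p_{i-1}$ to be non-half-integral: for $i\ge 2$ this is the spine condition, and for $i=1$ the point $(p_0)_{P,\varepsilon'}$ equals $p_0$ (the root does not move), so $p_0$ is non-half-integral. In particular, if $i=1$ then $p_0\in S$ is not a pivot, so $\vel_P(p_1)=1$; and in every case $\{p_{i-1},p_i\}$ is an edge of $G_S$ joining two points of $S$, hence $d(p_{i-1},p_i)=\delta$ (so the ``$\delta'$'' of the statement is just $\delta$). Using \autoref{lemma:rounding:orientation} I would then fix the parametrisations $p_{i-1}=p(u,v,\lambda_{i-1})$ and $p_i=p(u',v',\lambda_i)$ with $u=\ori{p_i}{p_{i-1}}$, $v=\nori{p_i}{p_{i-1}}$, $u'=\ori{p_{i-1}}{p_i}$, $v'=\nori{p_{i-1}}{p_i}$; since $\oripara$ is defined precisely so that a shortest $p_{i-1},p_i$-path leaves $p_{i-1}$ towards $u$ and enters $p_i$ from $u'$, there is a shortest $p_{i-1},p_i$-path visiting neither $v$ nor $v'$, as \autoref{lemma:rounding:one} requires.

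Next I would read off the displacement coefficients in this common parametrisation. By definition $(p_i)_{P,\varepsilon}=p(u',v',\lambda_i+\sgn_P(p_i)\vel_P(p_i)\varepsilon)$, so the coefficient of $p_i$ is $y:=\sgn_P(p_i)\vel_P(p_i)$. For $p_{i-1}$ the move is prescribed relative to $\ori{p_{i-2}}{p_{i-1}}$ (for $i\ge 2$; for $i=1$ we simply set $x:=0$, and then $x+y=1$ because $y=1$). Here the case split on $\change_P(p_{i-1})$ enters: if $\change_P(p_{i-1})=1$ then $\ori{p_{i-2}}{p_{i-1}}=v$, so rewriting the move in the parametrisation based at $u$ introduces a sign flip, whereas if $\change_P(p_{i-1})=-1$ then $\ori{p_{i-2}}{p_{i-1}}=u$ and there is no flip. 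In both cases the coefficient of $p_{i-1}$ comes out as $x:=-\change_P(p_{i-1})\,\sgn_P(p_{i-1})\vel_P(p_{i-1})$. Substituting the recursions $\sgn_P(p_i)=\change_P(p_{i-1})\sgn_P(p_{i-1})$ and $\vel_P(p_i)=\vel_P(p_{i-1})+\sgn_P(p_i)$ gives $\sgn_P(p_i)\vel_P(p_i)=\change_P(p_{i-1})\sgn_P(p_{i-1})\vel_P(p_{i-1})+1$, hence $x+y=1$.

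Finally I would invoke \autoref{lemma:rounding:one} with these $u,v,u',v',\lambda_{i-1},\lambda_i,x,y$. For $\varepsilon'\in(0,\varepsilon)$ its hypotheses that no intermediate point and no intermediate distance be half-integral hold, since $p(u,v,\lambda_{i-1}+x\varepsilon')=(p_{i-1})_{P,\varepsilon'}$ and $p(u',v',\lambda_i+y\varepsilon')=(p_i)_{P,\varepsilon'}$ are non-half-integral by assumption and $\delta+\varepsilon'$ is non-half-integral by assumption; for $\varepsilon'=0$ these points are $p_{i-1},p_i$ (non-half-integral, as shown) and $\delta$ itself is not half-integral by \autoref{lemma:not:half:integral}. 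Then \autoref{lemma:rounding:one} yields $d\big((p_{i-1})_{P,\varepsilon},(p_i)_{P,\varepsilon}\big)=\delta+\varepsilon$, i.e.\ the pair is $(\delta+\varepsilon)$-critical. The only genuinely delicate point is the direction accounting --- translating ``pushed away from $p_{i-2}$'' into ``pushed along the $p_{i-1}p_i$-edge'' --- which is precisely what the $\change_P$ case distinction handles, together with the base case $i=1$, where one must use that the hypothesis forces $p_0\in S$ so that $\vel_P(p_1)=1$.
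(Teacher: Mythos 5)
Your proposal is correct and follows essentially the same route as the paper's own proof: it reduces the claim to \autoref{lemma:rounding:one} by expressing the two moved points in the parametrisations given by $\ori{p_i}{p_{i-1}}$ and $\ori{p_{i-1}}{p_i}$, uses the $\change_P$ case distinction to read off the displacement coefficients $x=-\change_P(p_{i-1})\sgn_P(p_{i-1})\vel_P(p_{i-1})$ and $y=\sgn_P(p_i)\vel_P(p_i)$, and verifies $x+y=1$ via the recursions for $\sgn_P$ and $\vel_P$ (with the base case $i=1$, $x=0$, $y=1$). Your additional remarks --- deducing non-half-integrality of $p_{i-1}$ (hence $p_0\in S$ when $i=1$) from the hypothesis, and covering the $\varepsilon'=0$ endpoint of \autoref{lemma:rounding:one} via \autoref{lemma:not:half:integral} --- only make explicit steps the paper leaves implicit.
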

\begin{proof}
For simplicity, let us write $(p_{i'})_\varepsilon$ instead of $(p_{i'})_{P,\varepsilon}$
	for $0 \leq i' \leq i$.

If $i=1$, then $\vel(p_{0})=0$, and $p_{0} = (p_{0})_\varepsilon$ by definition.
Further $\vel(p_1) = 1$, since $p_0,p_1 \in S$.
Also $\sgn(p_1)=1$.
Let $p_1 = p(\cdot, \nori{p_{0}}{p_1}, \mu)$.
Then $(p_1)_\varepsilon = p(\cdot, \nori{p_{0}}{p_1}, \mu + \sgn(p_1) \vel(p_1) \varepsilon) = p(\cdot,\nori{p_{0}}{p_1},\mu+\varepsilon)$.
There is a shortest path between $p_0,p_1$ that uses $\ori{p_0}{p_1}$.
Hence \autoref{lemma:rounding:one} applies to old points $p_{0},p_{1}$ and new points $(p_{0})_{\varepsilon}, (p_{1})_\varepsilon$
	with $x = 0$ and $y = 1$.
It follows that $d(p_\varepsilon,q_\varepsilon) = \delta + \varepsilon$, as desired.
		
Now consider that $i> 1$, and hence point $p_{i-1}$ is not half-integral.
Then there is an edge position $\lambda \in (-\tfrac{1}{2},\tfrac{1}{2})$ such that
\begin{align*}
	p_{i-1} &= p \big(\cdot, \; \nori{p_{i-2}}{p_{i-1}}, \; \tfrac{1}{2} + \lambda \big) & \\
	&= p \big(\cdot, \; \nori{p_{i}}{p_{i-1}}, \; \tfrac{1}{2} -\change(p_{i-1}) \lambda \big) & 
\end{align*}
For the new point $(p_{i-1})_\varepsilon$ there is some edge position $\lambda^\star = \lambda + \sgn(p_{i-1})  \vel(p_{i-1})\varepsilon \in [-\tfrac{1}{2}, \tfrac{1}{2}]$
	such that $(p_{i-1})_\varepsilon = p(\cdot, \; \nori{p_{i-2}}{p_{i-1}}, \; \tfrac{1}{2} + \lambda^\star )$.
Indeed, the new edge position may be integral.
Still $\nori{p_{i-2}}{p_{i-1}}$ is defined, since it relies on the old point $p_{i-1}$.
By \autoref{lemma:dir:change:flip},
		\begin{align*}
				(p_{i-1})_\varepsilon &= p \big(\cdot, \; \nori{p_{i}}{p_{i-1}}, \; \tfrac{1}{2} - \change(p_{i-1}) \lambda^\star) \\
				&= p \big(\cdot, \; \nori{p_{i}}{p_{i-1}}, \; \tfrac{1}{2} -\change(p_{i-1}) \lambda - \change(p_{i-1}) \sgn(p_{i-1})  \velo(p_{i-1})\varepsilon \big) \\
				&= p \big(\cdot, \; \nori{p_i}{p_{i-1}}, \; \tfrac{1}{2} -\change(p_{i-1}) \lambda - \sgn(p_i)  \velo(p_{i-1})\varepsilon \big) .
		\end{align*}
		
		Let edge position $\mu \in (-\tfrac{1}{2},\tfrac{1}{2})$ be such that $p_i = p(\cdot, \nori{p_{i-1}}{p_i}, \tfrac{1}{2} + \mu)$.
		Then
		\begin{align*}
				(p_i)_\varepsilon &= p \big( \cdot, \; \nori{p_{i-1}}{p_i}, \; \tfrac{1}{2} + \mu + \sgn(p_i) \vel(p_i) \varepsilon \big) \\
				&= p \big( \cdot, \; \nori{p_{i-1}}{p_i}, \; \tfrac{1}{2} + \mu + \sgn(p_i) (\vel(p_{i-1}) + \sgn(p_{i})) \varepsilon \big) \\
				&= p \big( \cdot, \; \nori{p_{i-1}}{p_i}, \; \tfrac{1}{2} + \mu + ( \sgn(p_i) \vel(p_{i-1}) + 1)) \varepsilon \big) .
		\end{align*}
There is a shortest path between $p_{i-1},p_i$ that does not use $\nori{p_{i-1}}{p_i}$ and $\nori{p_i}{p_{i-1}}$.	
Thus \autoref{lemma:rounding:one} applies to points $(p_{i-1})_\varepsilon, (p_{i})_\varepsilon$
	with $x = -\sgn(p_i)\velo(p_{i-1})$ and $y = \sgn(p_i)\velo(p_{i-1})+1$.
It follows that $d(p_\varepsilon,q_\varepsilon) = \delta + \varepsilon$, as desired.
\end{proof}

\newcommand{\lemmaTextPathFromPGtoG}{
Let $P \subseteq P(G)$ be a path of points of length $\ell \in \mathbb{R}^+$.
Then $G$ contains a path of length at least $\lfloor \ell-1 \rfloor$.
}

\begin{lemma}
\label{lemma:appendix:path:from:PG:to:G}
\label{lemma:path:from:PG:to:G}
\lemmaTextPathFromPGtoG
\end{lemma}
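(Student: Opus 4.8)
The plan is to extract from the geometric path $P$ the sequence of graph vertices it visits, observe that this sequence is a path in $G$, and check that it is only negligibly shorter than $P$. First I would fix an injective parametrization of $P$ from one endpoint $p$ to the other endpoint $q$, and let $v_1,v_2,\dots,v_m$ be the vertices of $G$ lying on $P$, enumerated in the order in which they occur along $P$. Since $P$ is a \emph{path}, hence injective, these vertices are pairwise distinct. The key local observation is that between two consecutive visited vertices $v_i$ and $v_{i+1}$ the curve traverses exactly one full edge of $G$: after leaving $v_i$ the curve enters the interior of some incident edge, and injectivity forbids it from reversing inside that edge, so it runs monotonically to the other endpoint of that edge, which is therefore the next visited vertex $v_{i+1}$; in particular $\{v_i,v_{i+1}\}\in E(G)$. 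Consequently $v_1 v_2 \cdots v_m$ is a path in $G$ of length $m-1$.

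Next I would bound $\ell$ from above in terms of $m$. The curve $P$ splits into the initial arc from $p$ to $v_1$, the $m-1$ unit-length edges $v_1 v_2,\dots,v_{m-1}v_m$, and the final arc from $v_m$ to $q$. Each of the two outer arcs has length strictly less than $1$ (either the corresponding endpoint is the vertex itself, contributing $0$, or it is an interior point of an edge incident to that vertex, at distance in $(0,1)$ from it); and if $m=0$ then $P$ lies in the interior of a single edge, so $\ell<1$. In all cases $\ell < (m-1)+2 = m+1$, i.e.\ $m > \ell - 1$. Since $m$ is a non-negative integer, a short case distinction on whether $\ell-1$ is an integer gives $m-1 \ge \lfloor \ell - 1 \rfloor$, so the path $v_1 \cdots v_m$ witnesses the claim; for the degenerate cases $m\in\{0,1\}$ we have $\lfloor \ell-1\rfloor \le 0$, which is realised already by a single vertex.

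This argument is essentially routine; the only points that need a little care are the justification that a simple curve traverses each edge monotonically between consecutive vertices (so that the visited vertices really form a path, not merely a walk, in $G$) and the final floor arithmetic. It is worth noting that the statement would be \emph{false} if $P$ were allowed to be a non-injective (even locally injective) walk — for instance a walk winding many times around a triangle has arbitrarily large length while the longest path of $G$ has length $2$ — so injectivity of $P$ is used essentially throughout.
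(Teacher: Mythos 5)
Your proof is correct and follows essentially the same route as the paper's: you walk along the simple curve, record the vertices it visits in order (the paper records the visited edges and takes their shared endpoints), use injectivity of $P$ to get distinctness and to rule out reversals inside an edge, and then bound $\ell$ by the $m-1$ fully traversed unit edges plus two end fragments of length less than $1$. Your version is in fact a bit more explicit about the endpoint and floor-arithmetic corner cases than the paper's terse argument, but the underlying idea is identical.
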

\begin{proof}
Let $\{u_1,u_2\},\{u_2,u_3\},\dots,\{u_s,u_{s+1}\}$
	be the list of edges visited by \stress{non-integral} points when we follow the path $P$ of length $\ell$ from start to finish.
Beside possibly $\{u_1,u_2\}=\{u_s,u_{s+1}\}$, no edge may be repeated.
Hence also $\lceil \ell \rceil \geq s$.
Particularly none of the vertices $u_2,u_3, \dots ,u_s$ occurs twice in $P$.
Thus $u_2,u_3, \dots ,u_s$ is a path in $G$ consisting of at least $\lfloor \ell-1 \rfloor$ edges.
\end{proof}

\begin{lemma}
\label{lemma:not:half:integral}
\label{lemma:appendix:not:half:integral}
\lemmaTextDeltaNotHalfIntegral
\end{lemma}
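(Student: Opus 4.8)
The plan is to show that, under the standing hypotheses of the rounding section ($S$ is $\delta$-dispersed with $|S|\ge 2$ in a connected graph $G$, and $\delta<\delta^\star$), a half-integral $\delta$ would already be admissible in the definition of $\delta^\star$, contradicting $\delta<\delta^\star$. If $\delta$ is irrational there is nothing to prove, so assume $\delta$ is rational and suppose for contradiction that $\delta=\tfrac z2$ for a positive integer $z$. I will derive $z\le 2L+1$ and then conclude.

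The key step is a diameter estimate for $P(G)$. Since $|S|\ge 2$, fix distinct $p,q\in S$; as $S$ is $\delta$-dispersed, $d(p,q)\ge\delta$. A shortest $p$--$q$ path in $P(G)$ is a simple path of length $d(p,q)$. If $p$ or $q$ is a vertex, or $p$ and $q$ lie on a common edge, then $d(p,q)\le L$ is immediate, since a shortest path of $G$ is simple and hence has at most $L$ edges. Otherwise write the shortest $p$--$q$ path as a partial traversal of the edge carrying $p$ from $p$ to the endpoint $w_p$ through which the path leaves, then a shortest path $Q$ of $G$ from $w_p$ to an endpoint $w_q$ of the edge carrying $q$, then a partial traversal from $w_q$ to $q$. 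Since $Q$ is a shortest path of $G$ it has no chord, and the two remaining edge-endpoints $y_p$ (of $p$'s edge) and $y_q$ (of $q$'s edge) do not lie on $Q$, for otherwise the $p$--$q$ path would not be shortest. Hence, when $y_p\neq y_q$, prepending $y_p$ and appending $y_q$ to $Q$ produces a path of $G$ with $|E(Q)|+2$ edges, so $|E(Q)|\le L-2$ and $d(p,q)<1+(L-2)+1=L$; and when $y_p=y_q$ (so $|E(Q)|\le 2$ and $Q$ together with $y_p$ is a path with $|E(Q)|+1\le L$ edges) one similarly gets $d(p,q)<L+1$. In every case $d(p,q)<L+1$. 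This refines \autoref{lemma:path:from:PG:to:G} by one, which is precisely what the conclusion needs.

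Combining the two steps, $\tfrac z2=\delta\le d(p,q)<L+1$, hence $z\le 2L+1$. Then $\delta=\tfrac z2$ is a rational number $\tfrac ab\ge\delta$ with $a=z\le 2L+2$ and $b=2\in\mathbb N$, so by the minimality in the definition of $\delta^\star$ we have $\delta^\star\le\delta$; this contradicts $\delta<\delta^\star$. Therefore $\delta$ is not half-integral. I expect the diameter estimate to be the main obstacle: \autoref{lemma:path:from:PG:to:G} only yields $d(p,q)<L+2$, and with that weaker bound the single borderline value $\delta=\tfrac{2L+3}{2}$ would survive; the extra care of extending the middle path $Q$ by the two outer vertices $y_p,y_q$ — together with treating the degenerate case $y_p=y_q$ and the vertex/common-edge cases — is what closes that gap.
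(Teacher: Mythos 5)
Your proof is correct, and while it shares the paper's overall skeleton -- pick two distinct points $p,q\in S$ (here $|S|\ge 2$ is used), note $d(p,q)\ge\delta$, and bound this distance in terms of $L$ to contradict the standing assumption $\delta<\delta^\star$ -- the key technical ingredient is genuinely different. The paper's proof uses the derived standing fact that a rational $\delta$ has numerator $a>2L+2$, hence $\delta\ge L+\tfrac32$, and then invokes \autoref{lemma:path:from:PG:to:G} on a shortest $p,q$-path to extract a path of length at least $\lfloor\delta-1\rfloor\ge L$ in $G$, declaring a contradiction. You instead prove a sharper diameter estimate $d(p,q)<L+1$ from scratch, by showing that the far endpoints $y_p,y_q$ of the edges carrying $p$ and $q$ cannot lie on the middle segment $Q$ of the geodesic (the shortcut argument is valid and correctly requires $p,q$ to be interior points, which is why you treat the vertex and common-edge cases separately), so that $Q$ extended by $y_p$ and $y_q$ is a path of $G$; you then close via the minimality in the definition of $\delta^\star$ rather than via the numerator bound -- an equivalent phrasing of the same contradiction. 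Your extra effort is not wasted: as you observe, the generic bound of \autoref{lemma:path:from:PG:to:G} only gives $d(p,q)<L+2$, which leaves the borderline value $\delta=\tfrac{2L+3}{2}$ alive, and indeed the paper's one-line argument at this point only produces a path of length \emph{at least} $L$, which does not formally contradict $L$ being an upper bound on path lengths; your refined estimate for geodesics closes exactly this gap. One tiny imprecision: in the case where $p$ or $q$ is a vertex, $d(p,q)\le L$ is not quite ``immediate'' for a vertex versus an interior point (one directly gets only $d(p,q)<L+1$ via $d_G$ plus a partial edge), but that weaker bound is all your argument needs there.
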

\begin{proof}
Assume, for the sake of contradiction, that $\delta$ is half-integral,
	hence that $\delta = a$ or $\delta = \tfrac{a}{2}$ for some $a \in \mathbb{N}$.
Recall that $a > 2L+2$.
Then $\delta \geq \tfrac{a}{2} > L$.
Recall that $|S|>2$.
Let $P \subseteq P(G)$ be a shortest path between distinct points $p,q \in S$,
	which is of length $\delta > L$.
Then $G$ contains a path of length at least $L$,
	as seen in \autoref{lemma:path:from:PG:to:G}.
Contradiction.
\end{proof}

Recall the definition of the root points $R$.
Let $R_0$ be the set of half-integral points in $G_S$.
There may be some components of the auxiliary graph $G_S$ without a point in $R_0$.
Let $R$ result from $R_0$ by adding exactly one point from every component that has no point in $R_0$.
Now, when we restrict spines to those that begin with a point from $R$,
	the implied movement of points is uniquely determined.

For each point~$p$, the new point is $p_\varepsilon$,
	and the resulting set is $S_\varepsilon \coloneqq \{\, p_\varepsilon \mid p \in S \,\}$. 
For a point ${p_0 \in R}$, simply $(p_0)_\varepsilon \coloneqq p_0$.
For other points $p_i \in S \setminus R$,
	hence that are the $i$-th point, for some $i \geq 1$, of some spine $P = (p_0,\dots,p_i)$ with start $p_0 \in R$,
	point $(p_i)_\varepsilon$ is defined as for the spine $P$,
	formally $(p_i)_\varepsilon \coloneqq (p_i)_{P,\varepsilon}$.
By \autoref{lemma:equal:potential}, the definition of $p_\varepsilon$ is independent of the considered spine.

We further recall $\me$ and its defining events.
Let $\me \geq 0$ be minimum such that $\delta+\me = \delta^\star$
	or one of the following events occurs:
	\begin{itemize}
		\item
		(\labeltext{Event 1}{event:1:a})
	For a $\delta$-uncritical pair of points $\{p,q\} \in \binom{S}{2}$,
	now $\{p_\me,q_\me\}$ $(\delta+\varepsilon)$-critical.
		\item
		(\labeltext{Event 2}{event:2:a})
		a non-half-integral $p \in S$, now $p_\me$ half-integral, or
		\item
		(\labeltext{Event 3}{event:3:a})
		for $r \in P(G) \setminus \pivots{S}{\delta}$, now $r \in \pivots{S_\me}{\delta+\me}$.
	\end{itemize}
We claim that the minimum $\me \geq 0$ is defined.
This is due to that the above events depend on continuous functions in $\varepsilon$,
	which are the distance of $p_{\varepsilon}$ to its closest half-integral point, and the distance between points $p_\varepsilon$ and $q_{\varepsilon}$ for $p,q \in S$.

\begin{lemma}[(\autoref{lemma:stay:critical} restated)]
\label{lemma:appendix:stay:critical}
\lemmaTextStayCritical
\end{lemma}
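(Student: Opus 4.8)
The statement asserts that, along any spine $P=(p_0,\dots,p_s)$, the pushed points $(p_i)_{P,\me}$ and $(p_{i+1})_{P,\me}$ remain at distance exactly $\delta+\me$. The engine is \autoref{lemma:moving:points:preview}: since the velocities $\vel_P$ and signs $\sgn_P$ attached to a fixed spine are constants (independent of $\varepsilon$), the map $\varepsilon\mapsto(p_j)_{P,\varepsilon}$ slides each point along its edge at constant speed, and \autoref{lemma:moving:points:preview} is precisely the lemma that certifies ``then the distance between two consecutive such points grows with unit slope'' — provided no moved point and no candidate distance passes through a half-integral value. So the plan is: (i) apply \autoref{lemma:moving:points:preview} to the initial segment $(p_0,\dots,p_{i+1})$ of $P$, which is again a spine; (ii) verify its three half-integrality hypotheses on $(0,\me)$ from the definition of $\me$; and (iii) stitch over the finitely many parameter values where one hypothesis momentarily fails, using continuity.

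For (ii): write $q_\varepsilon=(p_i)_{P,\varepsilon}$, $r_\varepsilon=(p_{i+1})_{P,\varepsilon}$. By definition $\me$ is the smallest non-negative value at which one of \ref{event:1:a}, \ref{event:2:a}, \ref{event:3:a} occurs; in particular no point of $S$ becomes half-integral before $\me$, so — using that $p_i,p_{i+1}$ have spine-index $\ge 1$ and hence start non-half-integral — neither $q_{\varepsilon'}$ nor $r_{\varepsilon'}$ is half-integral for $\varepsilon'\in(0,\me)$; this also keeps the relevant directions $\nori{p_{j-1}}{p_j}$ constant on $[0,\me)$, so $\varepsilon\mapsto d(q_\varepsilon,r_\varepsilon)$ is continuous. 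The boundary index $i=0$ with $p_0$ a root point is covered by the $i=1$ branch in the proof of \autoref{lemma:moving:points:preview}: $p_0$ does not move, so its possible half-integrality is harmless, and the needed direction $\ori{p_0}{p_1}$ is read off the non-half-integral point $p_1$ (if $p_0$ is a pivot one gets the analogous statement with $\delta$ replaced by $\tfrac\delta2$). The remaining hypothesis, that $\delta+\varepsilon'$ is not half-integral, holds for all $\varepsilon'$ except at most finitely many in $(0,\me)$: one has $\delta+\varepsilon'<\delta^\star\le\lceil\delta\rceil\le L+2$ — the last bound because $|S|\ge 2$ forces, via \autoref{lemma:path:from:PG:to:G}, a path of length $\ge\lfloor\delta-1\rfloor$ in $G$ (in the spirit of \autoref{lemma:not:half:integral}) — and a short Farey-type check, using that $\delta^\star$ is the minimal number with numerator $\le 2L+2$ that is $\ge\delta$, leaves only boundedly many half-integers in the interval $(\delta,\delta^\star)$.

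For (iii) and the conclusion: let $Z\subseteq(0,\me)$ be the finite set of $\varepsilon'$ with $\delta+\varepsilon'$ half-integral. On each maximal open subinterval $(\varepsilon_1,\varepsilon_2)$ of $(0,\me)\setminus Z$, the proof of \autoref{lemma:moving:points:preview} extracts weights $x,y$ with $x+y=1$ together with a shortest $q_{\varepsilon_1},r_{\varepsilon_1}$-path avoiding the two far endpoints; since no point crosses a vertex on $[0,\me)$ this combinatorial data is the same as at $\varepsilon=0$, so \autoref{lemma:rounding:one} applied to the configuration at $\varepsilon_1$ (with $\delta'=d(q_{\varepsilon_1},r_{\varepsilon_1})$ and the same $x,y$) gives $d(q_\varepsilon,r_\varepsilon)=\delta'+(\varepsilon-\varepsilon_1)$ on $[\varepsilon_1,\varepsilon_2]$. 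Gluing these unit-slope pieces across the finitely many points of $Z$ by continuity of $\varepsilon\mapsto d(q_\varepsilon,r_\varepsilon)$, and anchoring at $d(q_0,r_0)=\delta$ (the pair $\{p_i,p_{i+1}\}$ being $\delta$-critical), yields $d(q_\varepsilon,r_\varepsilon)=\delta+\varepsilon$ on all of $[0,\me]$; for $\varepsilon=\me$ this is exactly the claim.

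The routine ingredient is the invocation of \autoref{lemma:moving:points:preview}/\autoref{lemma:rounding:one}; the real work is the bookkeeping of step (ii): turning ``$\me$ is the first event time'' into the clean statement that none of the three half-integrality conditions is violated on the open interval below $\me$ — which is justified by the very continuity (of $d(q_\varepsilon,r_\varepsilon)$ and of the distance of each moved point to its nearest half-integral point) that makes $\me$ well-defined in the first place — together with the ``few half-integers between $\delta$ and $\delta^\star$'' fact, which is the single place where the bound $2L+2$ in the definition of $\delta^\star$, and the hypothesis $|S|\ge 2$ of the rounding section, are genuinely exploited.
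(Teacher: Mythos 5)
Your overall plan coincides with the paper's: apply \autoref{lemma:moving:points:preview} (hence \autoref{lemma:rounding:one}) along the spine and discharge its hypotheses from the definition of $\me$, with Event~2 giving that no moved point becomes half-integral on $(0,\me)$. The divergence, and the genuine gap, is in how you treat the hypothesis that $\delta+\varepsilon'$ is never half-integral. You only argue that there are boundedly many half-integral values in $(\delta,\delta^\star)$ and then propose to glue unit-slope pieces across the crossing set $Z$ by continuity. That gluing does not work. Formally, \autoref{lemma:rounding:one} cannot be restarted at a crossing value $\varepsilon_1\in Z$: for the next subinterval its base distance is $\delta'=\delta+\varepsilon_1$, which is half-integral, so its hypothesis is already violated at the left endpoint. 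Substantively, passing through a half-integral total distance is exactly the event the hypothesis is designed to exclude: at that moment a second shortest path between the two moved points, running through the two \emph{far} endpoints, can become tied (the two routes close a cycle of integer length, which is why the tie happens precisely at a half-integer), and beyond that moment the pair's distance is governed by the alternative route, which \emph{shrinks} at rate $1$ under the prescribed movement. Continuity of $\varepsilon\mapsto d\bigl((p_i)_{P,\varepsilon},(p_{i+1})_{P,\varepsilon}\bigr)$ gives a continuous function, not one of slope $1$; the conclusion that the pair stays $(\delta+\varepsilon)$-critical can genuinely fail past such a crossing, so the lemma cannot be proved while tolerating crossings.

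What is actually needed --- and what the paper invokes via \autoref{lemma:not:half:integral} --- is that $(\delta,\delta^\star)$ contains \emph{no} half-integral number, so that $Z=\emptyset$ and \autoref{lemma:moving:points:preview} applies on all of $[0,\me]$ at once. You have all the ingredients to close this: since $|S|\geq 2$, any two distinct points of $S$ are at distance at most $L+1$ (a geodesic uses at most $\tfrac12+\tfrac12$ of its two end edges plus a vertex-to-vertex shortest path of length at most $L$), so $\delta\leq L+1$; moreover $\delta<L+1$, since $L+1$ is a rational with numerator at most $2L+2$ and $\delta<\delta^\star$. Hence $\lfloor\delta\rfloor\leq L$, and the smallest half-integral number $h\geq\delta$ (either $\lfloor\delta\rfloor+\tfrac12$ or $\lceil\delta\rceil$) has numerator at most $2L+1\leq 2L+2$, so minimality of $\delta^\star$ forces $\delta^\star\leq h$ and no half-integral value lies strictly between $\delta$ and $\delta^\star$; in particular no $\delta+\varepsilon'$ with $\varepsilon'\in(0,\me)$ is half-integral. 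With this exclusion in place, your steps (i) and (ii) alone constitute the paper's proof, and step (iii) should be deleted rather than repaired.
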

\begin{proof}
This is an immediate consequence from \autoref{lemma:moving:points:preview}.
Because of \autoref{lemma:not:half:integral} $\delta + \varepsilon'$ is not half-integral for $\varepsilon' \in (0,\varepsilon)$ is $\varepsilon' \in $.
That event 2 did not occur before implies that neither $(p_i)_{P,\varepsilon'}$ nor $(p_{i+1})_{P,\varepsilon'}$ is half-integral for $\varepsilon' \in (0,\varepsilon)$ is $\varepsilon' \in $.
\end{proof}

\subsection{Proof of \autoref{lemma:rounding:invariant}}

We delay the (lengthy and technical) proof of \autoref{lemma:equal:potential} to the next subsection.
Here we prove the monotonicity result, \autoref{lemma:rounding:invariant}.

\begin{lemma}[(\autoref{lemma:rounding:invariant} restated)]
	\label{lemma:appendix:rounding:invariant}
	\lemmaTextRoundingInvariant
\end{lemma}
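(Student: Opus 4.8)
The plan is to classify the pairs of points of $S$ by their role in the auxiliary graph $G_S$ and reduce each of claims~(1)--(3) to the machinery already set up. First I would note that $S_\me$ is well-defined: by \autoref{lemma:equal:potential} the prescribed move $p \mapsto p_\varepsilon$ does not depend on the spine used to define it, for every $\varepsilon < \delta^\star - \delta$, and $\me \le \delta^\star - \delta$. I would also record that the combinatorial data underlying the movement---the vertex set $S \cup \pivots{S}{\delta}$ and the edges of $G_S$, the root set $R$, and hence all the values $\vel_P$, $\sgn_P$, $\change_P$---are unchanged for $\varepsilon \in [0,\me)$: a point turning half-integral is Event~2, a new pivot appearing is Event~3, and a previously $\delta$-uncritical pair becoming critical is exactly what \autoref{lemma:stay:critical} together with Event~1 precludes before $\me$. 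Consequently each $p_\varepsilon$ moves at constant speed along a fixed edge, so every distance $\varepsilon \mapsto d(p_\varepsilon, q_\varepsilon)$ is continuous on $[0,\me]$.

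For claim~(2), let $\{p,q\} \in \binom{S}{2}$ be $\delta$-critical. If $\{p,q\}$ is an edge of $G_S$ (so $p,q \in S$ and they do not witness a pivot), then it appears as a pair of consecutive points on some spine rooted in $R$---begin the spine at a root of the component of $G_S$ containing the edge and route it so that $\{p,q\}$ is traversed last---and \autoref{lemma:stay:critical} yields $d(p_\me, q_\me) = \delta + \me$. Otherwise $\{p,q\} \in W(S,\delta)$, i.e.\ $p,q$ witness some pivot $r$, so $d(p,r) = d(r,q) = \tfrac{\delta}{2}$ and there is a shortest $p,q$-path through $r$. Since the distance between two half-integral points is itself half-integral, \autoref{lemma:not:half:integral} forces $p$ and $q$ to be non-half-integral; hence $(r,p)$ and $(r,q)$ are spines with the half-integral root $r$ and first velocity $\tfrac12$, so $p$ and $q$ each move by $\tfrac{\me}{2}$ directly away from $r$. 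Applying \autoref{lemma:rounding:one} directly to $p$ and $q$ with $x = y = \tfrac12$---the non-half-integrality of the intermediate points and of $\delta + \varepsilon'$ being ensured exactly as in the proof of \autoref{lemma:stay:critical}---gives $d(p_\me, q_\me) = \delta + \me$.

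Claim~(3) then follows: if $r \in \pivots{S}{\delta}$ it is half-integral, so $r \in R$ and $r_\me = r$, while prolonging the old shortest $r,p$- and $r,q$-paths by $\tfrac{\me}{2}$ shows $d(r, p_\me) \le \tfrac{\delta}{2} + \tfrac{\me}{2}$ and $d(r, q_\me) \le \tfrac{\delta}{2} + \tfrac{\me}{2}$ for the witnesses $p,q$ of $r$; since claim~(2) gives $d(p_\me, q_\me) = \delta + \me$, the triangle inequality $d(p_\me, q_\me) \le d(p_\me, r) + d(r, q_\me)$ must be tight, whence $d(r, p_\me) = d(r, q_\me) = \tfrac{\delta + \me}{2}$ and $r \in \pivots{S_\me}{\delta + \me}$. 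For claim~(1), take distinct $p,q \in S$: if $d(p,q) = \delta$ then $d(p_\me, q_\me) = \delta + \me$ by claim~(2); if $d(p,q) > \delta$, then $\{p,q\}$ is $\delta$-uncritical, and since $\varepsilon \mapsto d(p_\varepsilon,q_\varepsilon)$ is continuous and equals $d(p,q) > \delta$ at $\varepsilon = 0$, the defining property of Event~1 (it is the first $\varepsilon$ at which an uncritical pair attains distance $\delta + \varepsilon$) forces $d(p_\varepsilon,q_\varepsilon) > \delta + \varepsilon$ for all $\varepsilon \in [0,\me)$, hence $d(p_\me,q_\me) \ge \delta + \me$ by continuity. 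Thus $S_\me$ is $(\delta+\me)$-dispersed, and as distinct points of $S$ then have images at distance $\ge \delta + \me > 0$, the map $p \mapsto p_\me$ is injective and $|S_\me| = |S|$.

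The main obstacle I anticipate is not a single estimate but the bookkeeping in claim~(2): checking that every $\delta$-critical edge of $G_S$ really lies on a spine rooted in $R$ (including the degenerate components of $G_S$ that contain no half-integral vertex, whose root is chosen arbitrarily) so that \autoref{lemma:stay:critical} literally applies, and checking that \autoref{lemma:rounding:one} may be invoked with the correct orientations $\nori{r}{p}$, $\nori{r}{q}$ and parameters in the pivot-witness case---both of which ultimately rest on the consistency guaranteed by \autoref{lemma:equal:potential}.
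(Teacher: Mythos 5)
Your proposal is correct and follows essentially the same route as the paper's proof: the same three-way split into uncritical pairs (handled by continuity and the minimality of $\me$ with respect to Event~1), critical pairs forming edges of $G_S$ (handled by placing them at the end of a spine rooted in $R$ and invoking \autoref{lemma:stay:critical}), and pivot-witnessing pairs (handled by the two length-one spines from the pivot with velocities $\tfrac12$ and \autoref{lemma:rounding:one} with $x=y=\tfrac12$). The only cosmetic deviation is that you conclude claim~(3) by tightness of the triangle inequality, where the paper observes directly that the moved witnesses still realize distance $\tfrac{\delta+\me}{2}$ to the pivot; both are fine.
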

\begin{proof}
To show claims $(1)$, $(2)$, $(3)$, we prove the following more convenient version:	
\begin{itemize}
\item $(1')$
For every $\{p,q\} \in \binom{S}{2} \setminus (E(G_S) \cup W(S,\delta))$
	we have $d(p_\me,q_\me) \geq \delta + \me$.
\item $(3')$
For every $\{p,q\} \in \binom{S}{2} \cap W(S,\delta)$ that witness some pivot $r \in \pivots{S}{\delta}$,
	we have $\{p_\me,q_\me\} \in W(S_{\me},\delta + \me)$ and $r \in \pivots{S_\me}{\delta+\me}$.
Moreover, for every pivot $r \in \pivots{S}{\delta}$ with edge $\{r,p\} \in E(G_S)$,
	we have $\{r, p_\me\} \in E(G_{S_\me})$.
\item $(2')$
For every $\{p,q\} \in \binom{S}{2} \cap E(G_S)$,
	we have $d(p_\me,q_\me) = \delta + \me$.
\end{itemize}
Here $(1')$, $(3')$, $(2')$ is a case-distinction for $(1)$.
Statement $(3)$ follows from $(3')$ since any pivot $r$ has some witnesses $p,q$.
Claim $(2)$ for edges in $G_S$ involving a pivot is shown by $(3')$.
Claim $(2)$ for the remaining edges $\{p,q\} \in \binom{S}{2} \cap E(G_S)$ follows from $(2')$.
So it remains to show the above claims.
\begin{itemize}
\item $(1')$
Assume, for the sake of contradiction,
	that there are points $\{p,q\} \in \binom{S}{2} \setminus (E(G_S) \cup W(S,\delta))$ where $d(p_\me,q_\me) < \delta + \me$.
Then there is an $\varepsilon' \in [0,\me)$ such that points $p_{\varepsilon'},q_{\varepsilon'}$ have distance exactly $\delta +\varepsilon'$.
Then event~\ref{event:1:n} should have triggered for $\varepsilon'$ and points $p_{\varepsilon'},q_{\varepsilon'}$.
A contradiction to the minimality of $\me$.
\item $(3')$
Consider points $\{p,q\} \in \binom{S}{2} \cap W(S,\delta)$ that witness some pivot $r \in \pivots{S}{\delta}$.
Especially $p,q$ may not be half-integral due to \autoref{lemma:not:half:integral}.
Hence there are edge positions $\lambda_p,\lambda_q \in (0,1)$
	such that $p = p(\cdot, {\nori{r}{p}}, \lambda_p)$
	and $q = p(\cdot, \nori{r}{q}, \lambda_q)$.
Further, the two-element sequences $P = (r,p)$ and $Q =(r,q)$ are spines.
By construction, we have that $\vel_P(p) = \vel_Q(q) = \frac{1}{2}$.
It follows that the new points are at $p_\me = p(\cdot,\nori{r}{p},\lambda_p + \frac{1}{2}\me)$ and $q_\me = p(\cdot,{\nori{r}{q}},\lambda_q + \frac{1}{2}\me)$.
Then \autoref{lemma:rounding:one} applies to points $p_\me,q_\me$ with $\delta' = \delta$, $\varepsilon' = \me$
	and with $x+y=\tfrac{1}{2}+\tfrac{1}{2}=1$,
	since $\delta + \me \leq \delta^\star$ and \ref{event:2:n} has not occurred.
It follows that $d(p_\me, q_\me) = \delta+\me$, as desired.
Further points $p_\me, q_\me$ still witness $r$ as a $(\delta + \me)$-pivot.
In other words $r \in \pivots{S_\me}{\delta + \me}$.

Finally, consider pivot $r \in \pivots{S}{\delta}$ of $S$ with an edge $\{r,p\} \in E(G_S)$.
Then there is $q \in S$ such that $p,q$ witness pivot $r$.
As just observed, then $d(p_\me,r) = \tfrac{\delta+\me}{2}$
	and hence $\{p_\me,r\} \in E(G_{S_\me})$.

\item $(2')$
Consider points $\{p,q\} \in \binom{S}{2} \cap E(G_S)$.
We will show that $d(p_\me,q_\me) = \delta + \me$.
There is a spine $P = (p_0,\dots,p_{i})$ with $p_0 \in R$ and $\{p_{i-1},p_i\}=\{p,q\}$.
By symmetry, we may assume that $p_{i-1}=p$ and $p_i=q$.
Since event~\ref{event:2:n} did not occur for $\varepsilon' \in (0,\varepsilon)$
	\autoref{lemma:moving:points:preview} applies.
It follows that $d(p_\me,q_\me) = \delta + \me$.
\end{itemize}
This finishes the proof.
\end{proof}

\subsection{Proof of \autoref{lemma:equal:potential}}
\label{section:proof:of:equal:pot}

Now we work towards proving \autoref{lemma:equal:potential}.

\begin{lemma}[(\autoref{lemma:equal:potential} restated)]
\label{lemma:appendix:equal:potential}
{\textLemmaEqualPotential}
\end{lemma}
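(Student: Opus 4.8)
\textbf{Plan and reductions.} The plan is to discard trivial configurations, reinterpret $\vel$ and $\sgn$ along a single spine as purely geometric ``unrolled'' data, and then turn any disagreement between $P$ and $Q$ into a short closed walk in $P(G)$ built out of $\delta$-hops, which is impossible once $\delta<\delta^\star$. For the reduction: if $z := p_i = q_j$ lies in $R$, then $i=j=0$ and $P=Q$, because a half-integral point of $R_0$ cannot be the endpoint of a spine of positive length (the non-initial vertices of a spine are non-half-integral) and the chosen representative of a component of $G_S$ without half-integral points cannot be revisited by a spine. So assume $z\notin R$; then $i,j\ge 1$, $z$ is non-half-integral, and --- inspecting the component of $G_S$ containing $z$ --- either it contains a half-integral point and $p_0,q_0\in R_0$ are half-integral (possibly distinct), or it contains none and $p_0=q_0$ is its representative (and then no pivot is involved, so all edges of that component have length $\delta$).

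\textbf{Unrolling.} Next I would show, by induction along the single spine $P=(p_0,\dots,p_s)$, that $\vel_P(p_k)$ equals, up to the factor $\delta$, the signed position of $p_k$ in the ``unfolding'' of the walk obtained by concatenating fixed shortest paths between consecutive vertices of $P$: starting at $p_0$, advance by $\delta$ on each hop (by $\tfrac{\delta}{2}$ on the first hop when $p_0$ is a pivot), and at each vertex $p_l$ with $1\le l\le s-1$ either continue or reverse the current direction depending on whether $\change_P(p_l)=1$, i.e.\ on whether the two incident shortest paths meet the edge of $p_l$ through different vertices ($\nori{p_{l-1}}{p_l}\ne\nori{p_{l+1}}{p_l}$); correspondingly $\sgn_P(p_k)$ is exactly the direction of the $k$-th unfolded hop. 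The induction step is the flip identity already used for \autoref{lemma:moving:points:preview} (\autoref{lemma:dir:change:flip}), together with well-definedness of directions at non-half-integral points (\autoref{lemma:rounding:orientation}). In particular, writing $e$ for the edge of $z$, the pair $(\vel_P(p_i),\nori{p_{i-1}}{p_i})$ records exactly the unfolded position of $z$ reached along $P$, and moving $z$ by $\varepsilon$ along $P$ displaces it on $e$ by $\eta_P\,\sgn_P(p_i)\,\vel_P(p_i)\,\varepsilon$, where $\eta_P\in\{-1,1\}$ records which end of $e$ equals $\nori{p_{i-1}}{p_i}$.

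\textbf{The contradiction, and the main obstacle.} Suppose the conclusion fails. Glue the two spines at $z$ into the walk $\mathcal W=(p_0,\dots,p_{i-1},z,q_{j-1},\dots,q_0)$ in $G_S$ and let $\mathcal W^\star\subseteq P(G)$ be the concatenation of the chosen shortest paths; every piece has length $\delta$ except possibly the first piece of $P$ and the first piece of $Q$, which have length $\tfrac{\delta}{2}$ when $p_0$ respectively $q_0$ is a pivot, so $\mathcal W^\star$ has length $\tfrac{m}{2}\delta$ for a positive integer $m$. By the unfolding description, the unfolded end of $\mathcal W^\star$ differs from its unfolded start by $(\vel_P(p_i)\pm\vel_Q(q_j))\delta$, with the sign governed by whether $P$ and $Q$ approach $z$ from the same or from opposite ends of $e$ and by the relation between $\sgn_P(p_i)$ and $\sgn_Q(q_j)$; a failure of part $(1)$ or of part $(2)$ is precisely the statement that this quantity is a non-zero integer (or half-integer) multiple of $\delta$, i.e.\ that $\mathcal W^\star$ does not unfold back onto itself. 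Cancelling the backtracking of $\mathcal W^\star$ then leaves a genuine cycle of $P(G)$ through $p_0$ (when $p_0=q_0$) or a reduced $p_0$--$q_0$ path (when $p_0\ne q_0$ are half-integral); the former has integral length, the latter half-integral length (its endpoints being half-integral), and in both cases the length is a positive (half-)integer multiple of $\delta$ and is at most $L+1$ by \autoref{lemma:path:from:PG:to:G}. Hence $\delta=\tfrac{c}{m'}$ with integer numerator $c\le 2L+2$ and $m'\ge1$, so $\delta^\star\le\delta$ by minimality of $\delta^\star$, contradicting $\delta<\delta^\star$; that $\delta$ itself is non-half-integral, needed to keep the unfolding non-degenerate, is \autoref{lemma:not:half:integral}. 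The main obstacle is this last step: establishing that the backtracking of $\mathcal W^\star$ cancels down to a \emph{non-trivial} cycle or path \emph{exactly} when the conclusion fails, uniformly over the cases ``$p_0=q_0$'' versus ``$p_0,q_0$ distinct and half-integral'', ``$P,Q$ approach $z$ from the same versus opposite ends of $e$'', and ``$\sgn_P(p_i)=\sgn_Q(q_j)$ versus not'', plus two corners --- the case $\vel_P(p_i)=\vel_Q(q_j)=0$, where part $(2)$ needs a dedicated argument since the displacement of $z$ carries no sign, and the $\tfrac{\delta}{2}$-hops of pivot root points, which is what keeps the factor $2$ in $c\le 2L+2$ honest.
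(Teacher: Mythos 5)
Your first half --- the ``unrolling'' of a spine, proved by induction using \autoref{lemma:dir:change:flip} and \autoref{lemma:rounding:orientation} --- is exactly the paper's \autoref{lemma:lambda:jumps}, and the overall plan (glue $P$ and $Q$ at $p_i=q_j$ and turn a disagreement into an arithmetic/geometric impossibility) is also the paper's. But the step you yourself flag as ``the main obstacle'' is where the proposal has a genuine gap, and it is not a bookkeeping issue. You glue the spines into $\mathcal W^\star$, cancel backtracking, and then bound the reduced object's length by $L+1$ via \autoref{lemma:path:from:PG:to:G}. That lemma converts a \emph{simple path} in $P(G)$ into a path in $G$; it says nothing about reduced walks. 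The glued walk has total length roughly $(i+j)\delta$, which can be far larger than $L$, and after cancelling immediate backtracking it can still revisit vertices and edges many times (it is only locally injective), so neither ``length $\le L+1$'' nor ``length is a non-zero (half-)integer multiple of $\delta$'' follows; the latter also conflates the \emph{unfolded displacement} $(\vel_P(p_i)\pm\vel_Q(q_j))\delta$ with the actual length of the reduced walk in $P(G)$.

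The paper closes precisely this hole with \autoref{lemma:oriented:path}: along a generalized spine, carefully chosen shortest-path segments between consecutive spine points are shown to be internally disjoint except for consecutive overlaps (\autoref{claim:long:path:disjoint}, which leans on $\delta$-dispersion, the pivot structure, and the definition of $\change$), and an induction over velocity levels shows their union is a \emph{simple} path of length about $|\vel(p_s)|\,\delta$; only then does the longest-path bound $L$ bite, giving the velocity bound $|\vel|<\tfrac{b}{2}$ (\autoref{lemma:no:high:velocity}). The final argument is then arithmetic on fractional positions via $\lfs$, with a four-way case analysis: in the two ``aligned'' cases the congruences $(x-y)\delta\in\mathbb Z$ or $(x+y)\delta\in\mathbb Z$ contradict the velocity bound, and in the two ``mixed'' cases (which your sketch does not really address) the contradiction is different in kind --- an intermediate velocity $z=\tfrac{x\mp y}{2}$ would have half-integral position, and since velocities change by $\pm1$ along a spine, some inner spine point would be half-integral or some consecutive pair would witness a pivot, contradicting the spine/auxiliary-graph construction. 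Without something playing the role of \autoref{lemma:oriented:path} (a conversion from velocity to a long \emph{simple} path in $G$), your reduction-of-the-walk argument does not go through.
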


We begin by explicitly stating the edge position of a point $p_i$ of a spine $P=(p_0,\dots,p_i)$ depending on $\vel(p_i)$, $\sgn(p_i)$
	and the edge position $\lambda_0$ of $p_0$.
Then our final proof is by contradiction. It leads the assumption that different spines that end in the same point do not agree on $\vel$ and $\sgn$, such that there must be a path $G$ longer than~$L$,
	a contradiction.

\medskip

Consider a point $p_i$ that is part of a spine $P=(p_0,\dots,p_i)$.
We will state its edge position~$\mu_i$ depending on its velocity~$\vel(p_i)$, its sign~$\sgn(p_i)$ and the edge position~$\lambda_0$ of~$p_0$. 
We orientate the edge position~$\mu_i$ according to $\nori{p_{i-1}}{p_i}$,
	its direction to the predecessor point $p_{i-1}$.
That is, let $\mu_i$ be such that $p_i=p( \cdot, \nori{p_{i-1}}{p_i}, \mu_i)$;
	hence $\mu_i$ is a part of the length of any shortest path between~$p_i$ and~$p_{i-1}$,
	specifically the part using the edge of $p_i$ (assuming $q$ is on another edge).

We give some intuition:
Assume that we already have some expression for the edge position $\mu_{i-1}$ of the previous point $p_{i-1}$,
	hence we know the value $\mu_{i-1} \in (0,1)$ such that $p_{i-1} = p(\cdot, \nori{p_{i-2}}{p_{i-1}}, \mu_{i-1})$.
Consider that a shortest $p_{i-1},p_i$-path starts from $p_{i-1}$ in the direction of vertex $\nori{p_{i-2}}{p_{i-1}}$,
	in other words $\nori{p_{i-2}}{p_{i-1}} \neq \nori{p_{i}}{p_{i-1}}$ and by definition $\change(p_{i-1})=1$.
Then it is not difficult to follow that $\mu_i = \fracp(\mu_{i-1} + \delta)$ where $ \fracp(y) = y - \lfloor y \rfloor$ is the \define{fractional part} of $y \in \mathbb{R}$. 
If however $\change(p_{i-1})=-1$, hence $\nori{p_{i-2}}{p_{i-1}} = \nori{p_{i}}{p_{i-1}}$,
	then point~$p_{i}$ has the same edge position as point $p_{i-2}$ but `flipped'.
More precisely, when $\mu_{i-1} \in (0,1)$ is such that $p_{i-2} = p(\cdot, \nori{p_{i-3}}{p_{i-2}}, \mu_{i-2})$,
	we have $p_{i} = p(\cdot, \nori{p_{i-1}}{p_{i}}, 1 - \mu_{i-2})$.
	
To handle this flip of the edge position more conveniently,
	let us specify the edge positions relatively to the middle $\tfrac{1}{2}$.
That is, we are interested in the (shifted) edge position $\lambda_i \in (-\tfrac{1}{2},\tfrac{1}{2})$
	such that $p_i = p(\cdot, \nori{p_{i-1}}{p_{i}}, \tfrac{1}{2} + \lambda_i)$ for $i\geq 1$.
It follows that:

\begin{lemma} 
\label{lemma:dir:change:flip}
For a spine $P=(p_0,\dots,p_{i+1})$, $i\geq 1$,
	there is a $\lambda \in (-\tfrac{1}{2},\tfrac{1}{2})$
	such that $p(\cdot,\nori{p_{i-1}}{p_{i}},\tfrac{1}{2}+\lambda)
		= p_{i} = p(\cdot,\nori{p_{i+1}}{p_{i}},\tfrac{1}{2} - \change(p_{i-1}) \lambda) $.
\end{lemma}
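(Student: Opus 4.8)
The plan is to unwind the definitions: the statement is a pure bookkeeping identity about edge positions, so there is no real content beyond carefully tracking which endpoint of the edge of $p_i$ plays which role. Since $P=(p_0,\dots,p_{i+1})$ is a spine and $i\ge 1$, the point $p_i$ lies among $p_1,\dots,p_s$ with $s=i+1$ and is therefore not half-integral; write $p_i=p(u,v,\lambda_i)$ with $\{u,v\}\in E(G)$ and $\lambda_i\in(0,1)\setminus\{\tfrac12\}$. First I would invoke \autoref{lemma:rounding:orientation} for the pairs $(p_i,p_{i-1})$ and $(p_i,p_{i+1})$ to conclude that both $\ori{p_{i-1}}{p_i}$ and $\ori{p_{i+1}}{p_i}$ are well-defined; this is exactly where non-half-integrality of $p_i$ is used, and the hypotheses $p_{i-1},p_{i+1}\in V(G_S)$ hold because $P$ is a path in $G_S$. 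Consequently $\{u,v\}=\{\ori{p_{i-1}}{p_i},\nori{p_{i-1}}{p_i}\}=\{\ori{p_{i+1}}{p_i},\nori{p_{i+1}}{p_i}\}$, and I may define $\lambda\in(-\tfrac12,\tfrac12)\setminus\{0\}$ to be the unique value with $p_i=p(\cdot,\nori{p_{i-1}}{p_i},\tfrac12+\lambda)$; equivalently $p_i$ has distance $\tfrac12-\lambda$ to $\ori{p_{i-1}}{p_i}$ and $\tfrac12+\lambda$ to $\nori{p_{i-1}}{p_i}$. Such a $\lambda$ exists precisely because $p_i$ is neither a vertex nor an edge midpoint.

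Everything then reduces to a two-case check on whether the direction of $p_i$ towards $p_{i-1}$ agrees with its direction towards $p_{i+1}$, which is by definition exactly the value $\change(p_i)$. If $\ori{p_{i-1}}{p_i}\ne\ori{p_{i+1}}{p_i}$ (so $\change(p_i)=1$), then $\nori{p_{i+1}}{p_i}=\ori{p_{i-1}}{p_i}$, so the distance of $p_i$ to $\nori{p_{i+1}}{p_i}$ is $\tfrac12-\lambda$, that is, $p_i=p(\cdot,\nori{p_{i+1}}{p_i},\tfrac12-\lambda)=p(\cdot,\nori{p_{i+1}}{p_i},\tfrac12-\change(p_i)\lambda)$. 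If instead $\ori{p_{i-1}}{p_i}=\ori{p_{i+1}}{p_i}$ (so $\change(p_i)=-1$), then $\nori{p_{i+1}}{p_i}=\nori{p_{i-1}}{p_i}$, so $p_i=p(\cdot,\nori{p_{i+1}}{p_i},\tfrac12+\lambda)=p(\cdot,\nori{p_{i+1}}{p_i},\tfrac12-\change(p_i)\lambda)$. In both cases the second representation $p_i=p(\cdot,\nori{p_{i+1}}{p_i},\tfrac12-\change(p_i)\lambda)$ holds with the same $\lambda$ as the first, which is the assertion.

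There is no genuine obstacle here. The two points requiring care are: (i) checking that $\ori{p_{i+1}}{p_i}$ is legitimately defined, which is handed to us by \autoref{lemma:rounding:orientation} once we note $p_i$ is an interior point of a spine and hence non-half-integral; and (ii) keeping the orientation straight so the sign of $\lambda$ is tracked correctly — i.e.\ that switching the reference endpoint from $\nori{p_{i-1}}{p_i}$ to $\ori{p_{i-1}}{p_i}$ replaces $\tfrac12+\lambda$ by $\tfrac12-\lambda$, whereas keeping the same endpoint leaves it unchanged. The multiplicative factor $\change(p_i)$ is merely a compact way to record which of these two situations occurs, so once the case analysis above is in place the identity is immediate.
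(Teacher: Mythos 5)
Your proof is correct and takes the route the paper intends: the paper states this lemma without an explicit proof, presenting it as immediate from the definitions of $\oripara$, $\noripara$ and $\change$, and your two-case analysis --- splitting on whether $\ori{p_{i-1}}{p_i}$ and $\ori{p_{i+1}}{p_i}$ coincide, after securing well-definedness of both directions via \autoref{lemma:rounding:orientation} and the non-half-integrality of $p_i$ --- is exactly that omitted verification. Note that you establish the identity with flip factor $\change(p_i)$ rather than the $\change(p_{i-1})$ printed in the statement; this is the correct reading, since $\change(p_{i-1})$ is not even defined when $i=1$ and every later application of the lemma (in \autoref{lemma:moving:points:preview} and \autoref{lemma:lambda:jumps}) uses the $\change$-value of the middle point being re-expressed, so the printed index is an off-by-one slip in the statement rather than a gap in your argument.
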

As noted earlier, if the predecessor point $p_{i-1}$ has $\change(p_{i-1})=-1$,
	the sign of the edge position of $p_i$ should flip.
This sign change is tracked by $\sgn$.
In turn $\vel(p_i)$ sums up the values of $\sgn$ of points $p_1,\dots,p_i$,
	and hence allows to state the edge position relatively to that of $p_0$.
For $\lambda \in (-\tfrac{1}{2},\tfrac{1}{2})$ and $x \in \HHH$, let
$$
	\lf{\lambda}{x} \coloneqq \fracp \left(\tfrac{1}{2} + \lambda + x\delta \right) - \tfrac{1}{2}.
$$	
Consider that $\lambda$ 
	is such that $p_0 = p(\cdot, \, \ori{p_{1}}{p_0} , \, \frac{1}{2}+\lambda)$,
	assuming for now that $p_0$ is non-half-integral.
Then $\lf{\lambda}{\vel(p_i)}$ is the distance of the edge position of $p_i$ to the middle $\tfrac{1}{2}$,
	in other words, the edge position of $p_i$ up to the sign.
	
Later we rely on the following observation:

\begin{lemma} 
\label{lemma:lambda:function:composition}
	$  \lf{\lambda}{x+y} = \lf{ \lf{\lambda}{x} }{y} $,
	for every $\lambda  \in [-\tfrac{1}{2}, \tfrac{1}{2})$ and $x,y \in \HHH$.
\end{lemma}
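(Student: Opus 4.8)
The plan is to reduce the claimed functional equation to a single elementary property of the fractional-part map $\fracp(y) = y - \lfloor y \rfloor$: it is invariant under adding an integer, and hence satisfies $\fracp(\fracp(a) + b) = \fracp(a+b)$ for all real $a$ and $b$. Indeed $\fracp(a)$ differs from $a$ by the integer $\lfloor a \rfloor$, so $\fracp(\fracp(a)+b) = \fracp(a - \lfloor a \rfloor + b) = \fracp(a+b)$. Before invoking this I would record that the nested expression $\lf{\lf{\lambda}{x}}{y}$ is meaningful in the first place: for every $\lambda \in [-\tfrac12, \tfrac12)$ and every $x \in \HHH$ the value $\lf{\lambda}{x} = \fracp(\tfrac12 + \lambda + x\delta) - \tfrac12$ again lies in $[-\tfrac12, \tfrac12)$, since $\fracp$ takes values in $[0,1)$.

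The computation is then a one-line substitution. Setting $\mu \coloneqq \lf{\lambda}{x}$ we have $\tfrac12 + \mu = \fracp(\tfrac12 + \lambda + x\delta)$, so
\[
\lf{\mu}{y} \;=\; \fracp\!\bigl(\tfrac12 + \mu + y\delta\bigr) - \tfrac12 \;=\; \fracp\!\bigl(\fracp(\tfrac12 + \lambda + x\delta) + y\delta\bigr) - \tfrac12 \;=\; \fracp\!\bigl(\tfrac12 + \lambda + (x+y)\delta\bigr) - \tfrac12 ,
\]
where the last equality applies the property above with $a = \tfrac12 + \lambda + x\delta$ and $b = y\delta$. The right-hand side is exactly $\lf{\lambda}{x+y}$, which is the identity to be proved.

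I do not expect any real obstacle here; the statement is essentially bookkeeping around $\fracp$. The only points worth spelling out are the periodicity identity for $\fracp$ used in the middle step and the remark that $\HHH$ is closed under addition, so that $\lf{\lambda}{x+y}$ is itself a legitimate instance of the notation.
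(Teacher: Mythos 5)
Your proof is correct and follows the only natural route: unfolding the definition of $\lfs$ and using the periodicity identity $\fracp(\fracp(a)+b)=\fracp(a+b)$, together with the observation that $\lf{\lambda}{x}$ stays in $[-\tfrac12,\tfrac12)$. The paper states this lemma as an immediate observation without a written proof, and your bookkeeping argument is exactly the reasoning it implicitly relies on.
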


Further, let us state the edge position even for half-integral points,
	as for example it may be the case for $p_0$.
To do so, we extend the definition of $\ori{p}{q}$ to also apply for half-integral $p$:
For points $p,q \in P(G)$, let $\orie{q}{p}$ be equal to $\ori{p}{q}$ when $\ori{p}{q}$ is defined.
Else, $p$ must be half-integral as observed in \autoref{lemma:rounding:orientation}.
If $p$ is such that $p=(u,v,\tfrac{1}{2})$, fix $\orie{q}{p}$ to be $u$ or $v$ arbitrarily,
	and let $\norie{p}{q}$ be the unique vertex in $\{u,v\} \setminus \{\orie{p}{q}\}$.
Else, if $p$ is such that $p=p(u,v,0)$ for some edge $\{u,v\}\in E(G)$,
	let $\orie{q}{p} = u$ and let $\norie{q}{p} = v'$ where $v'$ is an arbitrary neighbor of $v$ in $G$.
As a result, we may for example define $\lambda_0 \in [-\tfrac{1}{2},\tfrac{1}{2})$
	such that $p_0 = p(\cdot, \norie{q}{p}, \tfrac{1}{2}+\lambda_0)$.
Here again the missing entry is irrelevant for $\lambda_0$.

Moreover, let us also extend the definition of a spine to allow a final half-integral point.
A path $P=(p_0,\dots,p_s), s \geq 1$ in graph $G_S$ is a \define{generalized spine}
	if $(p_0,\dots,p_{s-1})$ is a spine.
Values $\vel$, $\sgn$ and $\change$ are defined analogously as we did for spines (still based on $\noripara$).
Also $\vel$ and $\sgn$ are defined for $p_s$.
Only $\change(p_s)$ is undefined which however is not used.

\begin{lemma}\label{lemma:lambda:jumps}
Let $P = (p_0,\dots,p_s)$ be a generalized spine.
Let $\lambda \in [-\tfrac{1}{2},\tfrac{1}{2})$ be such that 
	 $p_0= p(\cdot, \, \orie{p_{1}}{p_0} , \, \frac{1}{2}+\lambda)$.
Then, for $i\in\{1,\dots,s\}$,
	$$ p_i \; = \; p \big(\cdot, \; \norie{p_{i-1}}{p_i}, \; \tfrac{1}{2} + \sgn(p_i) \cdot \lf{\lambda}{\vel(p_i)} \big) . $$
\end{lemma}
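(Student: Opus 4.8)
The statement is an explicit formula for the edge position of every point $p_i$ on a generalized spine $P = (p_0, \dots, p_s)$, and the natural approach is induction on $i$. The base case $i = 1$ should be a direct unfolding of the definitions: $\sgn(p_1) = 1$, $\vel(p_1)$ is either $\tfrac12$ or $1$ depending on whether $p_0 \in \pivots{S}{\delta}$ or $p_0 \in S$, and in either case one checks that $\lf{\lambda}{\vel(p_1)} = \fracp(\tfrac12 + \lambda + \vel(p_1)\delta) - \tfrac12$ indeed records the distance of $p_1$'s edge position to $\tfrac12$, using that $\{p_0, p_1\}$ is $\delta$-critical and $\orie{p_1}{p_0}$ picks out the direction along which the length $\delta$ is measured. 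I would handle the pivot case ($\vel(p_1) = \tfrac12$) and the $S$-case ($\vel(p_1) = 1$) uniformly by noting that in both cases the length of the relevant shortest $p_0,p_1$-subpath lying outside the edge of $p_0$ plus the two fractional contributions equals $\vel(p_1)\delta$.

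\emph{Inductive step.} Assume the formula holds for $p_i$; I want it for $p_{i+1}$. The key structural input is \autoref{lemma:dir:change:flip}, which says that $p_i$ can be written both as $p(\cdot, \norie{p_{i-1}}{p_i}, \tfrac12 + \lambda_i)$ and as $p(\cdot, \norie{p_{i+1}}{p_i}, \tfrac12 - \change(p_i)\lambda_i)$, where $\lambda_i = \sgn(p_i)\cdot\lf{\lambda}{\vel(p_i)}$ by the induction hypothesis. Since $\{p_i, p_{i+1}\}$ is $\delta$-critical and there is a shortest path realizing this distance that enters $p_i$'s edge through $\orie{p_{i+1}}{p_i}$, the same elementary computation as in the base case (a single ``hop of length $\delta$'') gives that $p_{i+1} = p(\cdot, \norie{p_i}{p_{i+1}}, \tfrac12 + \lf{-\change(p_i)\lambda_i}{1}\,)$ up to a sign governed by whether $p_{i+1}$'s position lands above or below $\tfrac12$. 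Now I invoke the definitions $\sgn(p_{i+1}) = \change(p_i)\sgn(p_i)$ and $\vel(p_{i+1}) = \vel(p_i) + \sgn(p_{i+1})$, together with \autoref{lemma:lambda:function:composition} ($\lf{\lambda}{x+y} = \lf{\lf{\lambda}{x}}{y}$), to rewrite
\[
	\lf{-\change(p_i)\lambda_i}{1} \;=\; \lf{-\change(p_i)\sgn(p_i)\lf{\lambda}{\vel(p_i)}}{1} \;=\; \lf{-\sgn(p_{i+1})\lf{\lambda}{\vel(p_i)}}{1},
\]
and then push the sign $\sgn(p_{i+1})$ outside — using that $\lf{-\mu}{x} = -\lf{\mu}{-x}$ and $\sgn(p_{i+1})^2 = 1$ — to land on $\sgn(p_{i+1})\cdot\lf{\lambda}{\vel(p_i) + \sgn(p_{i+1})} = \sgn(p_{i+1})\cdot\lf{\lambda}{\vel(p_{i+1})}$, which is exactly the claimed expression.

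\emph{Main obstacle.} The delicate part is the bookkeeping of signs and the fact that the intermediate point $p_{i+1}$ may genuinely be half-integral (allowed, since $P$ is only a \emph{generalized} spine), so that $\tfrac12 + \lambda_{i+1}$ may hit $0$ or lie at the boundary of $(-\tfrac12, \tfrac12]$ — this is why the range in $\lfintro$ is half-open and why \autoref{lemma:dir:change:flip} is stated only for genuine spines $(p_0,\dots,p_{i+1})$ with $p_i$ non-half-integral; I must be careful to apply the flip lemma only to $p_i$ (which is interior, hence non-half-integral) and never to $p_{i+1}$. A secondary subtlety is checking that the sign $\sgn(p_{i+1})$ correctly matches the geometric side on which $p_{i+1}$ sits relative to $\tfrac12$; this is precisely what the definition $\change(p_i) = 1 \iff \ori{p_{i-1}}{p_i} \neq \ori{p_{i+1}}{p_i}$ was designed to encode, and the case analysis on $\change(p_i) = \pm 1$ (shortest $p_{i-1},p_i$-path and $p_i,p_{i+1}$-path overlapping or not) closes the argument. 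All remaining steps are routine manipulations of the fractional-part function.
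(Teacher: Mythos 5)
Your overall route is the paper's own: induction on $i$, a single-hop relation describing how the shifted edge position changes across one step of length $\delta$ (resp.\ $\tfrac{\delta}{2}$), \autoref{lemma:dir:change:flip} applied to the interior (hence non-half-integral) point $p_i$, and \autoref{lemma:lambda:function:composition} together with a reflection identity for $\lfintro$ to absorb the sign; the paper's explicit floor manipulation ($-\lfloor x\rfloor=\lfloor -x+1\rfloor$ for non-integral $x$) is exactly your identity $\lf{-\mu}{x}=-\lf{\mu}{-x}$ in disguise. So there is no divergence in strategy.

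There is, however, a genuine flaw in your inductive step. The quantity you feed into the hop is reflected: after the flip, the position of $p_i$ that must enter the one-hop relation is its distance from $\nori{p_{i+1}}{p_i}$, which equals $\tfrac12+\change(p_i)\lambda_i=\tfrac12+\sgn(p_{i+1})\lf{\lambda}{\vel(p_i)}$, not $\tfrac12-\change(p_i)\lambda_i$. With your sign, the asserted chain $\lf{-\sgn(p_{i+1})\lf{\lambda}{\vel(p_i)}}{1}=\sgn(p_{i+1})\lf{\lambda}{\vel(p_{i+1})}$ is simply false in general: applying your reflection identity to the left side yields $-\lf{\sgn(p_{i+1})\lf{\lambda}{\vel(p_i)}}{-1}$, and comparing with the right side forces $2\lf{\lambda}{\vel(p_i)}\in\mathbb{Z}$, i.e.\ it holds only when $p_i$ is half-integral, which is exactly what spines exclude. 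Numerically, take $\delta=1.15$, $\lambda=-0.4$, $\vel(p_i)=1$, $\sgn(p_{i+1})=1$: then $\lf{\lambda}{1}=-0.25$, your expression gives $\lf{0.25}{1}=0.4$, while $\sgn(p_{i+1})\lf{\lambda}{2}=-0.1$; the correctly signed input gives $\lf{-0.25}{1}=-0.1$ as required. Moreover, the hedge that the sign is ``governed by whether $p_{i+1}$ lands above or below $\tfrac12$'' concedes precisely the content of the lemma: the whole point is that the combinatorially defined sign $\sgn(p_{i+1})=\change(p_i)\sgn(p_i)$ coincides with the geometric side, and an ``up to sign'' version neither feeds the induction hypothesis of the next step nor suffices for the use of the lemma in \autoref{lemma:equal:potential}. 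The repair is the paper's computation: after the flip, $p_i=p(\cdot,\orie{p_{i+1}}{p_i},\tfrac12+\sgn(p_{i+1})\lf{\lambda}{\vel(p_i)})$, the hop yields $\lf{\sgn(p_{i+1})\lf{\lambda}{\vel(p_i)}}{1}$, and then composition (for $\sgn(p_{i+1})=1$) or the reflection identity plus composition (for $\sgn(p_{i+1})=-1$) gives $\sgn(p_{i+1})\lf{\lambda}{\vel(p_{i+1})}$. Finally, note that your reflection identity fails exactly when $\tfrac12-\mu+x\delta\in\mathbb{Z}$ (left side $-\tfrac12$, right side $+\tfrac12$), i.e.\ when the image point sits at a vertex; this can only concern the final, possibly half-integral point $p_s$ of a generalized spine and deserves an explicit word, since it is the one place where the half-open range of $\lfintro$ actually bites.
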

\begin{proof}
Before we prove the actual statement,
	let us consider how the edge positions of some non-half-integral points $p,q$ in distance $\delta'\in \{\tfrac{\delta}{2}, \delta\}$ relate.
Let $\mu_p \in [0,1)$ be such that $p = p(\cdot, \orie{q}{p},\mu_p)$,
	and let $\mu_q \in [0,1)$ be such that $q = p(\cdot, \norie{p}{q}, \mu_q)$.
If $p,q$ are on the same edge, then $\mu_q = \mu_p + \delta'$.
If $p,q$ are not on the same edge, there is a shortest path $P$ between $p$ and $q$ that uses vertices $\orie{q}{p}$ and $\orie{p}{q}$.
Let integer $\ell \geq 0$ be the length of path $P$ restricted to the path between $\orie{q}{p}$ and $\orie{p}{q}$.
Then $\delta' = (1-\mu_p) + \ell + \mu_q$.
Hence
$ \mu_p + \delta' = \ell + \mu_q + 1$,
	which implies $\lfloor \mu_p + \delta' \rfloor = \ell + 1$,
	since $\ell,1$ are integer and $\lfloor \mu_q \rfloor = 0$.
Then
\begin{equation}\label{eq:p:q:jump:raw}
\mu_q \; = \;
	\mu_p  + \delta' - \big\lfloor \mu_p + \delta' \big\rfloor .
\end{equation}
This also applies when $p,q$ are on the same edge, where we followed $\mu_q = \mu_p + \delta' \in [0,1)$.
Let us state the edge positions relatively to the middle $\tfrac{1}{2}$.
Let $\lambda_p \in \left[-\tfrac{1}{2},\tfrac{1}{2}\right)$ be such that
	$p = p(\cdot, \orie{q}{p}, \tfrac{1}{2} + \lambda_p)$,
	and let $\lambda_q \in \left[-\tfrac{1}{2},\tfrac{1}{2}\right)$ be such that
	$q = p(\cdot, \norie{p}{q}, \tfrac{1}{2} + \lambda_q)$.
That means $\tfrac{1}{2} + \lambda_p = \mu_p$ and $\tfrac{1}{2} + \lambda_q = \mu_q$.
It follows that
\begin{equation}\label{eq:p:q:jump}
	\lambda_q \; = \; \lambda_p + \delta' - \left\lfloor \tfrac{1}{2} + \lambda_p + \delta' \right\rfloor
	\;=\; \lf{\lambda_p}{\tfrac{\delta'}{\delta}}
	.
\end{equation}
Now, we are ready to prove the statement that
$$ p_i \; = \; p \big(\cdot, \; \norie{p_{i-1}}{p_i}, \allowbreak \; \tfrac{1}{2} + \sgn(p_i) \cdot \lf{\lambda}{\vel(p_i)} \big) $$
by induction on $i\in\{1,\dots,s\}$.

\medskip

Induction base, $i=1$:
Let $\delta'$ be the distance between $p_0,p_1$ which is either $\delta$ or $\tfrac{\delta}{2}$.
Then $p_1$ has velocity $\vel(p_1) = \tfrac{\delta'}{\delta}$. 
Note that \autoref{eq:p:q:jump} applies for points $p_0=p, p_1=q$ and $\lambda_p = \lambda$.
It follows that $p_1 = p(\cdot, \norie{p_0}{p_1}, \tfrac{1}{2}+ \lf{\lambda}{ \vel(p_1) } )$.
Since $\sgn(p_1)=1$, we are done.

\medskip
	
Induction step, $i \leadsto i+1$, $i\geq1 $:
Note that point $p_i$ is non-half-integral.
Let $z \coloneqq \lf{\lambda}{\vel(p_i)}$.
By the induction hypothesis,
\begin{align*}
	{p_i} \; =& \;\; p\big( \cdot, \norie{p_{i-1}}{p_i}, \tfrac{1}{2} + \sgn(p_i) z \big) && \\
	\; =& \;\; p\big( \cdot, \nori{p_{i-1}}{p_i}, \tfrac{1}{2} + \sgn(p_i) z \big), && \mid \text{$p_i$ is non-half-integral} \\
	=& \;\; p\big(\cdot, \ori{p_{i+1}}{p_{i}}, \tfrac{1}{2} + \change(p_i) \sgn(p_i) z \big), && \mid \text{\autoref{lemma:dir:change:flip}} \\
	=& \;\; p\big(\cdot, \orie{p_{i+1}}{p_{i}}, \tfrac{1}{2} + \change(p_i) \sgn(p_i) z \big) && \\
	=& \;\; p\big(\cdot, \orie{p_{i+1}}{p_{i}}, \tfrac{1}{2} + \sgn(p_{i+1}) z \big).
\end{align*}
Hence, the edge position of point~$p_i$ is $\lambda_{i} = \sgn(p_{i+1}) z$ when defined as above.
Let value $\lambda_{{i+1}} \in [-\tfrac{1}{2},\tfrac{1}{2})$ be such that $p_{i+1} = p(\cdot, \norie{p_{i}}{p_{i+1}},\frac{1}{2} + \lambda_{i+1})$.
Then \autoref{eq:p:q:jump} applies for points $p=p_{i}$ and $q=p_{i+1}$ with edge positions $\lambda_i$ and $\lambda_{i+1}$, respectively, and distance between them of~$\delta' = \delta$.
We define $\lambda_{i+1}' \coloneqq \sgn(p_{i+1}) \lambda_{i+1}$.
Thus we have to show that $\lambda_{i+1}' = \lf{\lambda}{\vel(p_{i+1})}$.
We have that $\lambda_{i+1}' = \sgn(p_{i+1}) \cdot \lf{\lambda_i}{1}$, according to \autoref{eq:p:q:jump}.

First, consider that $\sgn(p_{i+1}) = 1$, and hence $\vel(p_{i+1})=\vel(p_i)+1$.
Then
\begin{align*}
	\lambda_{i+1}'
	= \lf{\lambda_i}{1}
	= \lf{ \lf{\lambda}{\vel(p_i)} }{1}
	\overset{Obs.~\ref{lemma:lambda:function:composition}}{=}
	\lf{ \lambda }{\vel(p_i)+1}
	= \lf{ \lambda }{\vel(p_{i+1})},
\end{align*}
as desired.
Now, consider that $\sgn(p_{i+1}) = -1$, and hence $\vel(p_{i+1})=\vel(p_i)-1$.
We use the fact that $- \lfloor x \rfloor = \lfloor -x+1 \rfloor $ for non-integral $x$.
Then
\begin{align*}
	\lambda_{i+1}'
	&= - \lf{\lambda_i}{1} && \\
	&= - \big( \lambda_{i} + \delta - \lfloor \tfrac{1}{2} + {\lambda}_{i} + \delta \rfloor \big) && \\
	&= z - \delta + \lfloor \tfrac{1}{2} -z + \delta \rfloor && \\
	&= z - \delta - \lfloor \tfrac{1}{2} + z - \delta \rfloor && \\
	&= \lf{ z }{-1} && \\
	&= \lf{\lambda}{\vel(p_i)-1} && \mid \text{\autoref{lemma:lambda:function:composition}}
\end{align*}
as desired.
\end{proof}

Now, we show that a generalized spine $(p_0,p_1,\dots,p_s)$ implies a path of length $\lfloor |\vel(p_s)|\delta \rfloor$ in $G$.
Roughly speaking, we can combine shortest paths between the pairs $(p_0,p_1)$, $(p_1,p_2),\dots$ to obtain a long path in $P(G)$, which in turn implies a long path in $G$.

\begin{lemma}
	\label{lemma:oriented:path}
	Let $P = (p_0,p_1,\dots,p_{s})$ be a generalized spine. 
	Then $G$ contains a path of length $\lfloor |\vel(p_s)| \delta -1 \rfloor$.
\end{lemma}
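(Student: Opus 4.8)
The target is to produce a genuine \emph{path} $Q \subseteq P(G)$ of length at least $|\vel(p_s)|\delta$; then \autoref{lemma:path:from:PG:to:G} immediately yields a path of length $\lfloor |\vel(p_s)|\delta - 1\rfloor$ in $G$. So the whole content is the claim that a generalized spine $(p_0,\dots,p_s)$ carries a path in $P(G)$ of length at least $|\vel(p_s)|\delta$. The plan is to prove this by induction on $s$, assembling the path from shortest paths $P_i \subseteq P(G)$ between consecutive spine points $p_{i-1}, p_i$ (each of length $\delta$, except $|P_1| = \tfrac{\delta}{2}$ when $p_0$ is a pivot). The base case $s=1$ is just $Q := P_1$, whose length is $\vel(p_1)\delta = |\vel(p_1)|\delta$.

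For the inductive step, the quantity $\change(p_i)$ dictates what happens at $p_i$. Using \autoref{lemma:rounding:orientation} (directions are well defined at the non-half-integral points $p_1,\dots,p_s$), \autoref{lemma:dir:change:flip} and \autoref{lemma:lambda:jumps}, one knows precisely, from $\lambda$ and $\vel(p_i)$, on which side of its edge each $p_i$ sits and from which side $P_i$ enters it. When $\change(p_i)=1$ we have $\ori{p_{i-1}}{p_i}\neq\ori{p_{i+1}}{p_i}$, so $P_{i+1}$ leaves $p_i$ on the side opposite to where the current curve entered, and $Q$ extends straight through $p_i$ (length grows by $\delta$, and $\sgn$ is unchanged). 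When $\change(p_i)=-1$ the path $P_{i+1}$ leaves $p_i$ on the same side, first retracing a prefix of the current curve's tail for some length $\ell_i \in (0,\delta)$ and then diverging; here the forward direction flips (matching $\sgn(p_{i+1})=-\sgn(p_i)$), and the running picture is an \emph{accordion} (\autoref{figure:accordion}) rather than a straight curve. The bookkeeping of how far the curve currently reaches in the forward direction is exactly what $\vel$ records: $\vel(p_i)\delta$ is the net forward progress of the walk $P_1\cup\cdots\cup P_i$ at $p_i$, so its forward extent spans an interval of length at least $|\vel(p_i)|\delta$, with the walk (locally) monotone at the extremal points of that interval.

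The main obstacle — and where the earlier technical lemmas really get used — is turning this walk of net forward progress $\vel(p_s)\delta$ into an \emph{actual simple path} of length at least $|\vel(p_s)|\delta$: the backtracking at the $\change(p_i)=-1$ steps can, handled naively, shorten the assembled curve, so one must show that the portion of $P_1\cup\cdots\cup P_s$ lying between the points of maximal and minimal forward progress is simple (no non-local self-intersections, using that each $P_i$ is a \emph{shortest} path and that the points of $S$ lying on it are pairwise at distance at least $\delta$) and has length at least $|\vel(p_s)|\delta$, and one must control the edge at which a retracing stops via \autoref{lemma:dir:change:flip} and \autoref{lemma:lambda:jumps}. \autoref{lemma:not:half:integral} is invoked along the way to exclude degenerate half-integral configurations. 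Once such a path $Q$ is in hand, \autoref{lemma:path:from:PG:to:G} applied to $Q$ concludes the proof.
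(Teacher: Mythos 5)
Your overall strategy is the same as the paper's: concatenate shortest paths $P_i$ between consecutive spine points, use $\change$/$\sgn$/$\vel$ to track how they attach to each other, extract from their union a path in $P(G)$ of length at least $|\vel(p_s)|\delta$, and finish with \autoref{lemma:path:from:PG:to:G}. But there is a genuine gap: the step you yourself flag as ``the main obstacle'' is exactly the content of the lemma, and your proposal only states that it must be shown rather than showing it. The heuristic that $\vel(p_i)\delta$ is the ``net forward progress'' of $P_1\cup\cdots\cup P_i$ implicitly treats the construction as if it lived on a line; in a general graph there is no forward coordinate, and the whole difficulty is to prove that pieces of the union coming from far-apart indices do not collide anywhere in the graph. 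Your stated reasons (``each $P_i$ is a shortest path and the points of $S$ on it are pairwise at distance at least $\delta$'') are not enough for this: two length-$\delta$ pieces $P_i,P_j$ with $j>i+1$ can meet at a point $p$ without violating any pairwise distance; the paper excludes this by showing the meeting point would be at distance exactly $\tfrac{\delta}{2}$ from the four endpoints, so that $p_{i-1},p_i$ would \emph{witness a pivot}, contradicting that they are joined by a direct edge of $G_S$ (and, for consecutive pieces with $\change(p_i)=1$, that an intersection would force $p_i$ to be half-integral, contradicting the spine definition). This pivot-based contradiction, which is the heart of the argument, does not appear in your sketch at all.

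Concretely, the paper's proof does the following work that your proposal omits: (i) a reduction to the case where $\vel(p_s)$ is positive and maximal along the spine (flipping signs via a sub-spine when $\vel(p_s)<0$, truncating otherwise), together with ruling out consecutive velocities $\tfrac12,-\tfrac12$; (ii) a choice of each $P_i$ as a shortest $p_{i-1},p_i$-path with \emph{maximum} overlap with $P_{i-1}$, and a claim that interiors of $P_i,P_j$ are disjoint unless $j=i+1$ and $\change(p_i)=-1$, plus that $P_i\setminus P_{i\pm1}$ has length at least $\tfrac{\delta}{2}$ (proved by the dispersion, half-integrality and pivot arguments above); and (iii) an induction over the indices $i_x$ at which the velocity first reaches $x$, accumulating length $x\delta$ for the pruned union $Q_{i_{x+1}-1}$, rather than an induction on $s$. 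Without (ii) in particular, your ``portion between maximal and minimal forward progress'' is not known to be a path, nor to have the claimed length, so the argument does not go through as written.
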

\begin{proof}
Before we begin, let us rule out the special case that $\vel(p_{i-1})=\frac{1}{2}$ and $\vel(p_{i})=-\frac{1}{2}$, $i \in \{1,\dots,s\}$,
	which would imply that $p_{i-1}$ and $p_{i}$ witness a pivot.

\begin{claim}
\label{lemma:no:pivot:potentials}
For no $i \in \{1,\dots,s\}$ we have $\vel(p_{i-1})=\frac{1}{2}$ and $\vel(p_{i})=-\frac{1}{2}$.
\end{claim}
\begin{claimproof}
Clearly we have $i\geq 2$ since $\vel(p_0)=0$.
There is a position $\lambda \in (-\tfrac{1}{2},\tfrac{1}{2})$
	such that $p_{i-1} = p(\cdot,\norie{p_{i-2}}{p_{i-1}}, \tfrac{1}{2} + \lambda)$
		and $p_{i} = p(\cdot,\norie{p_{i-1}}{p_{i}}, \tfrac{1}{2} - \lambda)$, according to \autoref{lemma:lambda:jumps}.
First assume that $\change(p_{i-1})=1$ hence that $\sgn(p_{i-1})=\sgn(p_{i})$.
Then we know that $p_{i-1} = p(\cdot,\norie{p_{i}}{p_{i-1}}, \tfrac{1}{2} - \lambda)$, as seen in \autoref{lemma:dir:change:flip}.
It follows that points $p_{i-1},p_i$ have integer distance $\tfrac{1}{2}+\lambda+\tfrac{1}{2}-\lambda+\ell = 1+\ell$ for some $\ell \in \mathbb{N}$,
	hence that $\delta$ is integer, a contradiction to \autoref{lemma:not:half:integral}.

Thus, it remains to consider that $\change(p_{i-1})=-1$ and hence that $\sgn(p_{i-1}) \neq \sgn(p_{i})$.
Then we know that $p_{i-1} = p(\cdot,\norie{p_{i}}{p_{i-1}}, \tfrac{1}{2} + \lambda)$, as seen in \autoref{lemma:dir:change:flip}.
That is, points $p_{i-1},p_i$ have some equal distance to a half-integral point $r \in P(G)$,
	which is midway on some path between the vertices $\orie{p_{i}}{p_{i-1}}$ and $\orie{p_{i-1}}{p_{i}}$
	(even when $p_{i-1}$ and $p_i$ are on the same edge).
Then $p_{i}, p_{i-1}$ witness the pivot $r$,
	in contradiction to that $\{p_{i}, p_{i-1}\} \in E(G_S)$.
\end{claimproof}

Further, we claim that it is safe to assume that $\vel(p_s)$ is positive and has maximum velocity among all points in $P$.
Clearly if $\vel(p_s)=0$ there is nothing to prove for the lemma.
Now consider that $\vel(p_s)<0$.
Then there is an index $i \in \{1,\dots,s-1\}$ where $\vel(p_i) \in \{\tfrac{1}{2},0\}$ and $\vel(p_j) < 0$ for every $j \in \{i+1,\dots,s\}$.
Particularly, $\vel(p_i)=0$ since otherwise it contradicts \autoref{lemma:no:pivot:potentials}. 
Consider the spine $P' = (p_i,\dots,p_s)$.
Observe that $\sgn_P(p_i)=-1 = -\sgn_{P'}(p_i)$
	and $\vel_P(p_i) = 0 = \vel_{P'}(p_i)$.
Further, $\change_P(p_j) = \change_{P'}(p_j)$ for $j \in \{i+1,\dots,s\}$.
Inductively, it follows that $\sgn_{P'}(p_j) = - \sgn_P(p_j)$
	and hence $\vel_{P'}(p_j) = - \vel_{P'}(p_j)$, for $j \in \{1,\dots,s\}$.
Thus, if $\vel_P(p_s)<0$, let us consider the spine $P'$ with non-negative $\vel_{P'}(p_s) = -\vel_P(p_s)$ instead of $P$.
Hence, it is save to only consider the case that $\vel(p_s) > 0$.
Now, if $\vel(p_i)\geq\vel(p_s)$, for some $i<s$, consider the sub-spine $(p_0,\dots,p_i)$ instead,
	which has the same $\change$, $\sgn$ and $\vel$ values.
Since $\vel(p_i) \geq \vel(p_s) > 0$ we have that $i\geq 1$.	
Hence, in the following, we may then assume that $z= \vel(p_s)$ is positive and 
has maximum velocity among the points of $P$.

\medskip

Let us define shortest paths $P_i \subseteq P(G)$ between $p_{i-1}$ and $p_i$ for $i \in \{1,\dots,s\}$.
For convenience, let $P_i \coloneqq \emptyset$ for $i \in \mathbb{N} \setminus \{1,\dots,s\}$.
For $i \in \{1,\dots,s\}$, let $P_{i}$ be some shortest path between $p_{i-1}$ and $p_{i}$
	with maximum intersection with the previous path $P_{i-1}$.
	Possibly their intersection is only $\{p_{i-1}\}$, in case that $\change(p_{i-1})=1$.
	Hence, the intersection of paths $P_{i-1}$ and $P_{i}$ forms a path starting at $p_{i-1}$.
It follows that $P_i \setminus P_{i-1}$ is a path.
	
For $j\geq 1$, let $Q_j  \subseteq P(G)$ be the set of points resulting from the union of $ P_{i} \setminus (P_{i-1} \cup P_{i+1})$ for $i \leq j$,
i.e., $Q_j = \bigcup_{i \leq j} (P_{i} \setminus (P_{i-1} \cup P_{i+1}))$.
We will show that $Q_s$ is a path of length at least $z\delta$.
Then it follows that $G$ contain a path of length at least $z\delta$, according to \autoref{lemma:path:from:PG:to:G}.
First, let us study the intersection of the described paths.

\begin{claim}
\label{claim:long:path:disjoint}
For $i,j \in \{1,\dots,s\}$ and $i<j$ we have
	\begin{itemize}
	\item (A) the interior of $P_i$ and $P_j$ are disjoint,
	unless $j=i+1$ and $\change(p_i)=-1$; and
	\item (B) path $P_{i} \setminus P_{i-1}$ has length $\geq \frac{\delta}{2}$.
	Symmetrically, $P_i \setminus P_{i+1}$ has length $\geq \frac{\delta}{2}$.
\end{itemize}
\end{claim}
Above (A), (B) and that $p_0,\dots,p_s$ are pairwise distinct imply that $Q_s$ is a path in $P(G)$.
\begin{claimproof}
(A)
Let $I$ be the intersection of interior points of $P_i$ and the interior points of $P_j$.
Assume, for the sake of contradiction, that $I \neq \emptyset$.

Consider that $j=i+1$ and $\change(p_i)=1$, meaning $\ori{p_{i-1}}{p_i} \neq \ori{p_{i+1}}{p_i}$.
Let $p_i = p(u,v,\lambda)$ for some $\lambda \in (0,1)$.
By the definition of $\ori{p_{i-1}}{p_i}$ and $\ori{p_{i-1}}{p_i}$,
	the paths $P_i$ and $P_{i+1}$ and $\{ p(u,v,\lambda') \mid \lambda' \in (0,1) \}$
	intersect only in $p_i$.
Then there exists a point $p \in I$ with minimum distance to $p_i$.
Let $P_{i}^\star$ be the subpath of $P_{i}$ from $p_i$ to $p$.
Analogously, $P_{i+1}^\star$ be the subpath of $P_{i+1}$ from $p_i$ to $p$.
Then, $P_i^\star$ is a shortest $p,p_i$-path.
Analogously, $P_{i+1}^\star$ is a shortest $p,p_{i+1}$-path.
We claim that $P_{i}^\star \cup P_{i+1}^\star$ forms a cycle.
If not, there is a point $p' \in (P_{i}^\star \cap P_{i+1}^\star) \setminus \{p,p_i\}$, where $d(p',p_i) < d(p,p_i)$.
Then $p' \in I$ has smaller distance to $p_i$ than $p$, in contradiction to the definition of $p$.
Further, we claim that $p$ is at a vertex.
Assuming otherwise, $p$ is at the position $p = p(u',v',\lambda')$ for some $\lambda' \in (0,1)$.
Then $p(u',v',0) \in I$ or $p(u',v',1) \in I$ and hence $p$ cannot have minimum distance to $p_i$. 

Since $p$ is integral, paths $P_i^\star$, $P_{i+1}^\star$ have equal length and $P_{i}^\star \cup P_{i+1}^\star$ forms a cycle,
	point $p_i$ must also be half-integral.
A contradiction to the definition of a generalized spine.

\medskip

It remains to consider that $j\neq i+1$.
We distinguish between possible lengths of paths $P_i,P_j$, which must be $\delta$ or $\tfrac{\delta}{2}$.
Fix point $p \in I$ arbitrarily.
All cases lead to a contradiction:
\begin{itemize}
\item
Case, paths $P_i$ and $P_j$ both have length $\tfrac{\delta}{2}$.
Then $p_i$, $p_{j-1} \in S$ such that they have distance $\delta$.
However, $d(p_i,p_{j-1}) \leq d(p_i,p) +d(p,p_{j-1}) < \tfrac{\delta}{2} + \tfrac{\delta}{2}$.
Contradiction.
\item
Case, paths $P_i$ and $P_j$ have length $\delta$.
Then $p_i,p_{j-1},p_{j-1},p_j \in S$ such that they pairwise have distance at least $\delta$.
Further $d(p_{i-1},p_i)=\delta$ and $d(p_{j-1},p_j)=\delta$.
It follows that the distance of respectively $p_i,p_{j-1},p_{j-1},p_j$ to $p$ is exactly $\frac{\delta}{2}$.
Then however, $p_{i-1},p_i$ witness the pivot $p$, and hence they may not be part of a spine.
Contradiction.
\item
$\{P_i,P_j\} = \{P',P''\}$ where $P'$ has length $\delta$ and $P''$ has length $\tfrac{\delta}{2}$.
Path $P'$ connects some points $p',p'' \in S$.
Then $d(p',p) \leq \tfrac{\delta}{2}$.
Path $P''$ connects a pivot and some point $q \in S$.
Then $d(q,p) < \tfrac{\delta}{2}$.
Thus, $d(q,p') \leq d(d',p) + d(p,p') < \delta$, in contradiction to that $S$ is $\delta$-dispersed.
\end{itemize}

\medskip

(B)
We show that $P_{i} \setminus P_{i-1}$ has length $\geq \frac{\delta}{2}$, for $i \in \{1,\dots,s\}$.
Then that $P_{i} \setminus P_{i+1}$ has length $\geq \frac{\delta}{2}$, for $i \in \{0,\dots,s-1\}$,
	follows from considering the reversed generalized spine $p_s,p_{s-1},\dots,p_0$.

If $\change(p_{i-1}) = 1$, then (A) applies and we are done.
Thus consider $\change(p_{i-1})=-1$.

If $p_{i-2}$ exists and $p_{i-2} \in \pivots{S}{\delta}$, then $p_{i-2} = p_0$ is a pivot.
Hence $\vel(p_1) = \tfrac{1}{2}$.
Having $\change(p_1) = -1$ implies that $\vel(p_2) = -\tfrac{1}{2}$.
Then points $p_1,p_2$ contradict \autoref{lemma:no:pivot:potentials}.

If $p_i$ exists and $p_{i} \in \pivots{S}{\delta}$, then $p_{i}$ is a pivot and hence $p_i = p_{s}$.
Having that $\change(p_{i-1}) = -1$ it follows that $\vel(p_{i-1})> \vel(p_i) = z$, which we excluded earlier.

If $p_{i-1}$ exists and $p_{i-1} \in \pivots{S}{\delta}$, then $p_{i-1}$ is a pivot and hence $p_0 = p_{i-1}$.
Especially $p_{i-2}$ does not exist, such that (B) follows immediately for $P_{i} \setminus P_{i-1}$.

Thus it remains to consider that points $p_{i-2},p_{i-1},p_{i}$, if they exist, are in $S$.
If paths $P_{i}$ and $P_{i-1}$ have inner points disjoint, we are done.
Hence consider that they intersect in a point $p \notin \{p_{i-2},p_{i-1},p_{i}\}$.
Since $p_{i-2}\neq p_{i}$, there is a well-defined intersection point $p$ that maximizes the distance to $p_{i-1}$.
If $d(p,p_{i-2}) \geq \tfrac{\delta}{2}$, then also $P_{i} \setminus P_{i-1}$ has length $\geq \tfrac{\delta}{2}$.
Thus consider that $d(p,p_{i-2}) < \tfrac{\delta}{2}$.
Then $d(p,p_{i-1}) > \tfrac{\delta}{2}$.
Further $d(p,p_{i+1}) < \tfrac{\delta}{2}$ since $p \in P_{i}$. 
Then we have the contradiction $d(p_{i-2},p_{i}) \leq d(p_{i-2},p) + d(p,p_{i}) < \delta$.
\end{claimproof}

\medskip

Let index $i_{x}$ describe the first point that reaches a velocity $x \in \{1,\dots,z\}$, which is $i_x \coloneqq \min\{ i_x \mid \vel(p_{i_x})\geq x \}$.
We prove by induction on $x \in \{1,\dots,\vel(p_s)-1\}$ that the path $Q_{i_{x+1}-1}$ has length $x \delta$.
(We use that $x < \vel(p_s)$, implies that the point of index $i_{x+1}$ exists.)
For $x = \vel(p_s)$ we have $i_x=s$ and we will follow that $Q_s = s \delta$.
Recall that the difference in the velocity of consecutive points is always $\pm 1$
	with exceptions at the very beginning and at the very end.

Induction base, $x=1$:
Then $i_x = 1$ and $Q_1 = P_1$ has length $\vel(p_1) \in \{\tfrac{1}{2},1\}$, as desired.

Induction step, $x-1 \leadsto x$:
By the induction hypothesis, path $Q_{i_{x}-1}$ has length at least $\delta(x-1)$.
By the definition of $i_x$, the velocity increased twice before reaching $i_x$,
	in other words $\vel(p_{i_x-1})=x-1$ and $\vel(p_{i_x-2})=x-2$.
That means $\change(p_{i_x-1}) = 1$.
Thus the path $Q_{i_{x}-1}$ ending at $p_{i_x-1}$ has inner points disjoint from $P_j$ for $j\geq i_x$,
	according to \autoref{claim:long:path:disjoint}.
If $i_x=s$, then path $Q_{s}$ consists of the disjoint parts $Q_{s-1}$ and $P_{s}$,
	and hence has length $(\vel(p_{s-1}) + (\vel(p_{s})-\vel(p_{s-1})) \delta = \vel(p_s) \delta$,
	as desired.
Else, if $i_x<s$, consider the point with index ${i_{x+1}-1}$.
It remains to show that $Q_{i_{x+1}-1} \setminus Q_{i_{x}-1}$ has length at least $\delta$.

If $i_{x+1}-1 = i_x$, then $Q_{i_{x+1}-1} \setminus Q_{i_{x}-1}$
	consists only of $P_{i_x}$ which has length $\delta$, as desired.

Else, consider that $i_{{x+1}}-1 \neq i_x$.
Then $P_{i_x} \setminus \bigcup_{j \neq i_x} P_j$ has length $\tfrac{\delta}{2}$, according to \autoref{claim:long:path:disjoint}.
Again the velocity increased twice before reaching $i_{x+1}$,
	and hence $\change(p_{i_{x+1}-1}) = 1$.
That means $P_{i_{x+1}-1} \setminus \bigcup_{j \neq i_{x+1}-1} P_j$ has length $\tfrac{\delta}{2}$.
Hence $Q_{i_{x+1}-1} \setminus Q_{i_{x+1}-1}$ has length at least $2 \tfrac{\delta}{2} = \delta$, as desired.

As discussed within the induction step, for $i_x=s$ we follow that $Q_s$ has length $z \delta$.
It follows that $G$ has a path of length $\lfloor z \delta - 1 \rfloor$ as a non-induced subgraph, according to \autoref{lemma:path:from:PG:to:G}.
\end{proof}

\begin{corollary}
	\label{lemma:no:high:velocity}
	Let $P = (p_0,p_1,\dots,p_{s})$ be a generalized spine. 
	Let $\delta = \tfrac{a}{b}$ be a rational number.
	Then $|\vel(p_s)| < \tfrac{b}{2}$.
\end{corollary}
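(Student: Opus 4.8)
The statement follows from \autoref{lemma:oriented:path}: I would suppose for contradiction that $|\vel(p_s)| \ge \tfrac b2$ and produce a path of $G$ that is longer than $L$, contradicting that $L$ bounds the length of every path of $G$.

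First unwind the arithmetic. Since $\vel(p_s) \in \HHH$, the quantity $2|\vel(p_s)|$ is a non-negative integer, so $|\vel(p_s)| \ge \tfrac b2$ is equivalent to $2|\vel(p_s)| \ge b$, and hence
$$ |\vel(p_s)| \cdot \delta \;=\; |\vel(p_s)| \cdot \tfrac{a}{b} \;\ge\; \tfrac b2 \cdot \tfrac{a}{b} \;=\; \tfrac a2 . $$
Recall that in the situation at hand $\delta = \tfrac ab$ is written in lowest terms and, since the $\delta$-dispersed set $S$ is not $\delta^\star$-dispersed, we have $a > 2L+2$, i.e.\ $a \ge 2L+3$. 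By \autoref{lemma:oriented:path} the generalized spine $P$ forces $G$ to contain a path of length $\lfloor |\vel(p_s)|\,\delta - 1 \rfloor \ge \lfloor \tfrac a2 - 1 \rfloor$, which is at least $L$, and strictly larger than $L$ as soon as $a \ge 2L+4$ (for instance whenever $|\vel(p_s)| > \tfrac b2$ and $\delta \ge 1$). In all of these cases we obtain the contradiction immediately. The only residual possibility is $a = 2L+3$ together with $2|\vel(p_s)| = b$, in which case $|\vel(p_s)|\,\delta$ equals exactly $\tfrac a2 = L + \tfrac32$ and the path $Q_s \subseteq P(G)$ constructed in the proof of \autoref{lemma:oriented:path} has length exactly $L + \tfrac32$, so the generic estimate of \autoref{lemma:path:from:PG:to:G} produces only a path of length $L$.

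I expect this residual case to be the only real obstacle, and I would resolve it by extracting one extra edge from the endpoints of $Q_s$. That path traverses about $L$ full interior edges together with proper sub-segments of the edges carrying its endpoints $p_0$ and $p_s$; by the disjointness of the relevant shortest sub-paths established in \autoref{claim:long:path:disjoint} (which is precisely what makes $Q_s$ a genuine path in $P(G)$) all of these edges are pairwise distinct. A short case analysis of whether $p_0$ and $p_s$ are half-integral — the parity of $b$ forces one of the two values $\vel(p_1) = \tfrac12$ or $\vel(p_1) = 1$, and \autoref{lemma:no:pivot:potentials} together with the $\sgn$-flip rule constrains the shape of $Q_s$ near $p_s$ — then shows that at least one endpoint edge properly extends the interior sub-path, yielding a path of $G$ of length $L+1$ and hence the contradiction. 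The skeleton is thus a one-line reduction to \autoref{lemma:oriented:path} combined with $a > 2L+2$, with all the care concentrated at the boundary value $a = 2L+3$.
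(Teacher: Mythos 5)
Your main argument is exactly the paper's proof: assume $|\vel(p_s)| \geq \tfrac{b}{2}$, apply \autoref{lemma:oriented:path} to obtain a path in $G$ of length $\lfloor |\vel(p_s)|\delta - 1\rfloor \geq \lfloor \tfrac{a}{2}-1\rfloor$, and contradict the standing assumption $a > 2L+2$. The paper stops there, writing simply ``hence $a \leq 2L+2$'' without separating out the odd boundary value, so the residual case $a = 2L+3$ with $2|\vel(p_s)| = b$ that you isolate is a fair observation about the write-up rather than a methodological difference.

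Regarding your patch for that residual case: when $b$ is odd it does go through, and more easily than you suggest --- $\vel(p_s) = \pm\tfrac{b}{2}$ is then a non-integer element of $\HHH$, so $p_0$ must be a pivot and hence half-integral; a path in $P(G)$ of length at least $L+\tfrac{3}{2}$ that starts at a vertex or an edge midpoint cannot reverse inside an edge, so it already traverses $L+1$ full edges, giving the contradiction without the lossy rounding of \autoref{lemma:path:from:PG:to:G}. When $b$ is even, however, $p_0 \in S$ may be an interior point of its edge, and your ``extract one extra edge at an endpoint'' step is not justified as stated: the disjointness you invoke from the proof of \autoref{lemma:oriented:path} concerns the edges met by $Q_s$ itself, whereas the far endpoint of a partially traversed end edge lies off $Q_s$ and may coincide with an interior vertex of $Q_s$ (moreover the two partially traversed end edges could be the same edge), so appending such an edge need not produce a simple path of length $L+1$ in $G$. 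That sub-case is therefore not settled by your sketch --- though, to be fair, it is not addressed by the paper's own one-line proof either.
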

\begin{proof}
Assume, for the sake of contradiction, that $|\vel(p_s)| \geq \tfrac{b}{2}$.
Then by \autoref{lemma:oriented:path} there is a path in $G$ of length $\lfloor |\vel(p_s)|\delta -1 \rfloor \geq \lfloor \tfrac{b}{2} \cdot \tfrac{a}{b} - 1 \rfloor = \lfloor \tfrac{a}{2} - 1 \rfloor$.
Hence $a \leq 2L + 2$.
A contradiction to that $\tfrac{a^\star}{b^\star} > \delta$ where $\tfrac{a^\star}{b^\star}$ is smallest possible rational $\tfrac{a^\star}{b^\star} \geq \delta$ for co-prime $a^\star,b^\star$ and $a^\star < 2L + 2$.
\end{proof}

Finally, we are able to prove \autoref{lemma:equal:potential}.

\begin{lemma}[(\autoref{lemma:equal:potential} restated)]
\label{lemma:equal:potentail:repeat}
\textLemmaEqualPotential
\end{lemma}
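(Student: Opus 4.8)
The plan is to argue by contradiction. Suppose $P=(p_0,\dots,p_i)$ and $Q=(q_0,\dots,q_j)$ are spines with $p_0,q_0\in R$ and common endpoint $x\coloneqq p_i=q_j$, and that $(1)$ or $(2)$ fails for them. Since $i,j\ge 1$, the point $x$ is non-half-integral; and since $P$ and $Q$ join $p_0$ and $q_0$ through $x$ inside $G_S$, the roots $p_0,q_0$ lie in one component of $G_S$, so by the definition of $R$ either $p_0=q_0$, or both $p_0,q_0$ are half-integral. I would combine two tools: \autoref{lemma:lambda:jumps}, to read off the edge position of $x$ from each spine, and the generalized spine obtained by gluing $P$ with the reverse of $Q$ at $x$, to which \autoref{lemma:oriented:path} and \autoref{lemma:no:high:velocity} (resp.\ \autoref{lemma:not:half:integral}) apply.

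\emph{The edge-position identity.} Let $\lambda_P\in[-\tfrac{1}{2},\tfrac{1}{2})$ satisfy $p_0=p(\cdot,\orie{p_1}{p_0},\tfrac{1}{2}+\lambda_P)$, and define $\lambda_Q$ analogously. By \autoref{lemma:lambda:jumps}, $$x\;=\;p\big(\cdot,\nori{p_{i-1}}{x},\,\tfrac{1}{2}+\sgn_P(p_i)\lf{\lambda_P}{\vel_P(p_i)}\big)\;=\;p\big(\cdot,\nori{q_{j-1}}{x},\,\tfrac{1}{2}+\sgn_Q(q_j)\lf{\lambda_Q}{\vel_Q(q_j)}\big).$$ Since $x$ is a single non-half-integral point of its edge, both $\lf{\lambda_P}{\vel_P(p_i)}$ and $\lf{\lambda_Q}{\vel_Q(q_j)}$ lie in $(-\tfrac{1}{2},\tfrac{1}{2})\setminus\{0\}$, and comparing the two representations gives $$\sgn_P(p_i)\lf{\lambda_P}{\vel_P(p_i)}\;=\;\eta\,\sgn_Q(q_j)\lf{\lambda_Q}{\vel_Q(q_j)},$$ with $\eta=+1$ exactly when $\nori{p_{i-1}}{x}=\nori{q_{j-1}}{x}$. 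To use this I also need the two starting offsets related: if $p_0=q_0$ is half-integral the convention fixing $\orie{\cdot}{p_0}$ does not depend on the spine, so $\lambda_P=\lambda_Q$; if $p_0=q_0\in S$ is non-half-integral then a shortest $p_0,p_1$-path runs through $\orie{p_1}{p_0}$, forcing $\lambda_P\equiv\delta-\tfrac{1}{2}\pmod 1$, and $\lambda_Q=\pm\lambda_P$ with sign $+$ iff $P,Q$ leave $p_0$ toward the same vertex; the symmetry $\lf{-\mu}{-v}=-\lf{\mu}{v}$ then moves the sign of $\lambda_Q$ onto the right-hand side.

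\emph{Closing it.} The arithmetic input is that $\lf{\mu}{\cdot}$ is essentially injective on the velocities that can occur: by \autoref{lemma:no:high:velocity} every such velocity has magnitude $<\tfrac{b}{2}$ (for $\delta=\tfrac{a}{b}$ in lowest terms), and all velocities along one spine share a parity type, so $\lf{\mu}{v}=\pm\lf{\mu}{v'}$ reduces, via $\gcd(a,b)=1$, to $v=\pm v'$; for irrational $\delta$ the fact that $2v\delta\in\mathbb{Z}$ forces $v=0$ plays the same role. Plugging the root relation into the displayed identity then forces $\vel_P(p_i)=\vel_Q(q_j)$ and $\sgn_P(p_i)=\eta\,\sgn_Q(q_j)$ — exactly $(1)$ and $(2)$ — unless a residual congruence $w\delta\in\mathbb{Z}$ with $0\ne w\in\HHH$ of small $|w|$ survives, or $p_0\ne q_0$ are two distinct half-integral roots. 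Both exceptions I would dispatch through $Z\coloneqq(p_0,\dots,p_i=q_j,q_{j-1},\dots,q_0)$: truncated to a simple path (otherwise one extracts a closed walk of hops through a non-half-integral point and argues as in \autoref{lemma:oriented:path}), $Z$ is a generalized spine that coincides with $P$ on its first segment and whose $q$-part is determined by $\vel_Q,\sgn_Q$ and the single hinge value $\change_Z(x)=\eta$ (since $\change$ is symmetric under reversing a consecutive triple). The assumed failure makes some point of $Z$ attain velocity magnitude $\ge\tfrac{b}{2}$, so \autoref{lemma:oriented:path} yields a path of length $>L$ in $G$ — impossible, since $L$ bounds the path lengths of $G$; this is equivalently the minimality of $\delta^\star$, and for irrational $\delta$ it is \autoref{lemma:not:half:integral}.

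The main obstacle is this last bookkeeping: propagating $\vel$, $\sgn$, $\change$ through the reversal of $Q$ and the concatenation at the non-half-integral hinge $x$, and verifying that for each combination of the binary choices ($p_0=q_0$ vs.\ $p_0\ne q_0$ both half-integral; $\eta=+1$ vs.\ $\eta=-1$; $\sgn_P(p_i)=\sgn_Q(q_j)$ or not) at least one of the two closing moves — the injectivity argument on $\lf{\mu}{\cdot}$, or the long-path argument on $Z$ — goes through. Everything else (\autoref{lemma:lambda:jumps}, \autoref{lemma:lambda:function:composition}, the congruence $\lambda_P\equiv\delta-\tfrac{1}{2}$, and the elementary divisibility facts) is routine given the framework already in place.
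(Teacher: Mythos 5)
Your skeleton matches the paper's up to a point: reading off the edge position of $p_i=q_j$ from both spines via \autoref{lemma:lambda:jumps}, relating the two root offsets by $\lambda_Q=\pm\lambda_P$, closing the matching-sign case by a divisibility argument together with \autoref{lemma:no:high:velocity}, and gluing $P$ with the reversed $Q$ into a generalized spine to exclude two half-integral roots of different type (vertex versus mid-edge) is exactly what the paper does. The gap is in the mixed-sign cases, which are the heart of the lemma. The relation $\lf{\mu}{v}=-\lf{\mu}{v'}$ does \emph{not} reduce to $v=-v'$: it only yields $2\mu+(v+v')\delta\in\mathbb{Z}$ (and in the case where the two spines orient the root's edge oppositely one similarly gets $2\lambda_0+(x-y)\delta\in\mathbb{Z}$), and the root offset $\mu=\lambda_0$ does not cancel. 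When the common root is the arbitrarily chosen, non-half-integral root of a component of $G_S$ containing no half-integral point, $\lambda_0$ is unconstrained, so no divisibility on $x\pm y$ follows and no velocity of magnitude $\geq\tfrac{b}{2}$ is forced; the residual case you flag is of the form $2\lambda_0+w\delta\in\mathbb{Z}$, not $w\delta\in\mathbb{Z}$ (the latter indeed cannot survive, by the very gcd argument you cite). Your fallback through $Z$ cannot repair this: when $p_0=q_0$ the glued walk is closed, hence not a path in $G_S$, and even when $Z$ is a genuine generalized spine the assumed failure (say $x-y=\pm1$) forces no large velocity on $P$, $Q$ or $Z$, so \autoref{lemma:oriented:path} and \autoref{lemma:no:high:velocity} give nothing. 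What the paper does at precisely this point, and what is missing from your plan, is the intermediate-velocity argument: from $2\lambda_0+(x\mp y)\delta\in\mathbb{Z}$ one concludes that $z=\tfrac{x\mp y}{2}$ has $\lf{\lambda_0}{z}$ half-integral; since velocities along a spine start at $0$ (or $\pm\tfrac12$) and change by $\pm1$ at every step, the spine must either attain velocity exactly $z$ --- making that point half-integral, contradicting the definition of a spine --- or jump over $z$ with consecutive velocities $z\mp\tfrac12$ --- making that consecutive pair witness a pivot, contradicting that it forms an edge of $G_S$. Without this step (or a genuine substitute) the proof does not close.

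Two smaller points. The congruence $\lambda_P\equiv\delta-\tfrac12\pmod 1$ you assert for a non-half-integral common root is unfounded: that root is an arbitrary point of its component and its edge position has no relation to $\delta$ (fortunately you do not seem to use it). Also, ``all velocities share a parity type'' holds within one spine, but $P$ and $Q$ need not share it (one root may lie in $S$, the other be a pivot), so even in the plus-sign case the divisibility step has to be argued for half-integral differences $x-y$ as well, as the paper implicitly does.
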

\begin{proof}
\newcommand{\bb}{\beta}
By the definition of the root points $R$,
	points $p_0,q_0$ are either both half-integral or $p_0 = q_0$,
	since $p_0,q_0$ must be from the same connected component of $G_S$.
First, let us rule out that $p_0$ is integral and $q_0$ is at the middle position of an edge
	(and analogously the symmetric case).
Consider the walk $p_0,p_1,\dots,p_{i-1},q_j,q_{j-1},\dots,q_0$ in $G_S$.
Remove any loop from this walk to obtain a path $P^\star = (p_0^\star,p_1^\star,\dots,p_{k}^\star)$
	with $p_0^\star = p_0 \neq p_k^\star = q_0$.
Then $P^\star$ is a generalized spine.
\autoref{lemma:lambda:jumps} applies with $\lambda=-\tfrac{1}{2}$.
Therefore $\sgn(p_i) \cdot \lf{\lambda}{\vel(p_i)}$ is the edge position of $q_0$ (relatively to the middle $\tfrac{1}{2}$) which must be $0$.
Hence $\lf{\lambda}{\vel(p_i)} = \fracp(\vel(q_0) \delta) - \tfrac{1}{2} = 0$.
Then $\delta$ must be a rational $\tfrac{a}{b}$ with co-prime $a,b \in \N$
	and $|\vel_{P^\star}(q_0)|$ is a multiple of $\tfrac{b}{2}$,
	which contradicts \autoref{lemma:no:high:velocity}.

\medskip

Thus it remains to consider that $p_0,q_0$ are equal, are both integral or both are at the middle of an edge.
Let $\lambda_0 \in [-\frac{1}{2},\frac{1}{2})$ be such that $p_0= p(\cdot, \orie{p_{1}}{p_0} , \frac{1}{2}+\lambda_0)$.
Analogously, let $\lambda_0' \in [-\frac{1}{2},\frac{1}{2})$ be such that $p_0= p(\cdot, \orie{q_{1}}{q_0} , \frac{1}{2}+\lambda_0')$.
We observe that there is a value $\beta_0 \in \{-1,1\}$ such that $\lambda_0' = \beta_0 \lambda_0$.
If $\lambda_0'= \lambda_0 = 0$, we fix $\beta_0$ to be $1$.

Since $p_i=q_j$, their edge positions defined relatively to their spine $P$ and $Q$ are equal up to the sign.
Also recall that $p_i=q_j$ is not half-integral.
That means there is an edge position $\lambda \in (-\tfrac{1}{2}, \tfrac{1}{2}) \setminus \{0\}$ and a value $\bb' \in \{-1,1\}$
	such that
\begin{align*}\label{eq:claim:equal:potential}
	p_i \; = \; p \big(\cdot, \nori{p_{i-1}}{p_i}, \tfrac{1}{2} + \lambda \big) \; = \; p \big( \cdot,\nori{q_{j-1}}{q_j}, \tfrac{1}{2} + \bb' \lambda \big) \; = \; q_j .
\end{align*}
Figuratively speaking, $\bb'$ indicates whether spines $P,Q$ reach $p_i=q_j$ from the same direction.
We observe that $\nori{p_{i-1}}{p_i} = \nori{q_{j-1}}{q_j}$ if and only if $\beta' = 1$.
Hence for the claim~(2) we have to show that $\beta \coloneqq \beta'\sgn_P(p_i) \sgn_Q(q_j) = 1$.

Let $x = \vel_P(p_i)$ and $y = \vel_Q(q_j)$.
For the claim~(1) we have to show that $x=y$.
By \autoref{lemma:lambda:jumps} we have
	$$ \lambda = \sgn_P(p_i) \lf{\lambda_0}{x} \; \;\text{ and }\; \; \bb'\lambda =  \sgn_Q(q_j) \lf{\beta_0\lambda_0}{y} .$$
Then $\lf{\beta_0 \lambda_0}{y} = \bb \lf{\lambda_0}{x}$ for above defined $\bb \in \{-1,1\}$.
Hence $\lf{\lambda_0}{x} = \bb \lf{\beta_0 \lambda_0}{y}$, which is equivalent to
	$$ \fracp \left( \tfrac{1}{2} + \lambda_0 + x \delta \right) - \tfrac{1}{2}
	= \bb \big( \fracp( \tfrac{1}{2} + \bb_0 \lambda_0 + y \delta ) - \tfrac{1}{2} \big) .$$

We distinguish the cases whether $\bb_0,\bb \in \{-1,1\}$.
\begin{itemize}
\item Case $\bb_0 = \bb = 1$:
Then claim~(2) holds.
Further $ \fracp( \tfrac{1}{2} + \lambda_0 + x \delta )
	= \fracp( \tfrac{1}{2} + \lambda_0 + y \delta ) $,
	which implies that $\fracp(x \delta) = \fracp(y \delta)$.
Assume, for the sake of contradiction, claim~(1) is false, hence that $x \neq y$.
Then $\delta$ is a rational $\frac{a}{b}$ with co-prime $a,b \in \mathbb{N}$.
By symmetry, we may assume that $|x| < |y|$.
Also $x,y$ are both either non-negative or non-positive.
Then $(y-x)\delta \in \mathbb{N}$, and hence $y-x$ is a multiple of $b$.
Then $|y| \geq \tfrac{b}{2}$ in contradiction to \autoref{lemma:no:high:velocity}.
\end{itemize}
All other cases lead to a contradiction, regardless whether $x\neq y$.
\begin{itemize}
		\item
		Case $\bb_0 = \bb = -1$:
		Then
		$ \fracp( \tfrac{1}{2} + \lambda_0 + x \delta ) - \tfrac{1}{2}
			= -\fracp( \tfrac{1}{2} - \lambda_0 + y \delta ) + \tfrac{1}{2} $.
			
		Thus $ \fracp( \tfrac{1}{2} + \lambda_0 + x \delta )  +\fracp( \tfrac{1}{2} - \lambda_0 + y \delta ) = 0  $.
		
		Hence $ (\tfrac{1}{2} + \lambda_0 + x \delta) + (\tfrac{1}{2} - \lambda_0 + y \delta ) = (x+y)\delta \in \mathbb{Z}  $.
		
		This implies that $\delta$ is a rational $\frac{a}{b}$ with co-prime $a,b \in \mathbb{N}$,
			and that $|x+y|$ is a multiple of $b$.
For either $z \in \{x,y\}$ we have $|z| \geq \frac{|x + y|}{2}$.
By symmetry, assume that $|x| \geq \frac{|x+y|}{2}$.
Thus $|x|$ is larger than a multiple of $\tfrac{b}{2}$,
	in contradiction to \autoref{lemma:no:high:velocity}
\item
Case $\bb=1$ and $\bb_0 = -1$:
Then $\fracp(\tfrac{1}{2} + \lambda_0 + x\delta) = \fracp(\tfrac{1}{2} - \lambda_0 + y\delta)$.

Thus $\fracp(\tfrac{1}{2} + \lambda_0 + x\delta) - \fracp(\tfrac{1}{2} - \lambda_0 + y\delta) = 0$.
		
That means $(\tfrac{1}{2} + \lambda_0 + x\delta) - (\tfrac{1}{2} - \lambda_0 + y\delta) \in\mathbb{Z}$ and hence $2\lambda_0 + (x-y)\delta \in \mathbb{Z}$.
		
Let $z \coloneqq \tfrac{x-y}{2}$.
Then  $\lf{\lambda_0}{z} = \fracp(\tfrac{1}{2} (1+2\lambda_0 + 2z\delta)) - \tfrac{1}{2} \in \{0,-\tfrac{1}{2}\}$, i.e., it is half-integral.
Hence no point $p \in \{p_1,\dots,p_{i}\}$ has velocity $\lf{\lambda_0}{ \vel_P(p) } = z$, since $p$ is not half-integral.
Thus $z \notin \{x,y\}$.

Further $z\neq 0$;
	Indeed assuming that $z=0$, implies that
	$2 \lambda_0 + 0\delta \in \mathbb{Z}$ and hence that $\lambda_0$ is half-integral.
	In that case, however, $\bb_0 = 1$, and hence this is covered by other cases.

Now, by symmetry, we may assume that $z > 0$.
Either $z$ occurs as a potential $\vel_P(p_{i'})$ or
	$z-\tfrac{1}{2},z+\tfrac{1}{2}$ occur as potentials $\vel_P(p_{i'}), \vel_P(p_{i+1'})$ for some $j \in \{1,\dots,i-1\}$.
The former implies that $p_{i'}$ is half-integral.
The latter implies that $p_{i'-1}, p_{i'}$ witness a pivot half-way on a shortest path between them.
Both cases contradict the definition of the spine $P$.
		
\item
Case $\bb = -1$ and $\bb_0 = 1$:
Then $ \fracp( \tfrac{1}{2} + \lambda_0 + x \delta ) + \fracp( \tfrac{1}{2} + \lambda_0 + y \delta ) = 1  $.

That means $ \tfrac{1}{2} + \lambda_0 + x \delta + \tfrac{1}{2} + \lambda_0 + y \delta \in \mathbb{Z}$,
	and hence $2\lambda_0 + (x+y) \delta  \in \mathbb{Z}$.

Let $z=\tfrac{x+y}{2}$.
Then $\lf{\lambda_0}{z} = \fracp(\tfrac{1}{2} (1 + 2\lambda_0 + (x+y)\delta) \in \{0,-\tfrac{1}{2}\}$, i.e., it is half-integral.
Hence no point $p \in \{p_1,\dots,p_{i}\}$ has velocity $\lf{\lambda_0}{ \vel_P(p) } = z$, since $p$ is not half-integral.
Note that $z \notin \{0,x,y\}$ since $x,y\geq 1$.
Then, analogously to the previous case it follows either contradiction that an inner point of $P,Q$ is half-integral or two points witness a pivot.
\end{itemize}
To summarize, we always have $\beta = \beta_0 = 1$ and there claim~(1) and~(2) hold.
\end{proof}

\section{Hardness Results}

\subsection{Diameter}

\begin{lemma}[(\autoref{lemma:hardness:chordal} restated)]
\label{lemma:appendix:hardness:chordal}
\lemmaTextHardnessChordal
\end{lemma}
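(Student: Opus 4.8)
The plan is to reduce from \problemm{Independent Set}. Given an instance $(G,k)$ I may assume $G$ is connected with at least two vertices (hence no isolated vertices and at least one edge), since \problemm{Independent Set} is already hard under these restrictions. Write $m:=\lceil\delta\rceil$, so $m\ge 4$ and $m-1<\delta\le m$, and set $\rho:=\lfloor\tfrac{m-3}{2}\rfloor$ (thus $\rho=\tfrac{m-3}{2}$ if $m$ is odd and $\rho=\tfrac{m-4}{2}$ if $m$ is even). I will build a connected chordal graph $G'$ of diameter at most $m$, together with a distinguished point $r_v\in P(G')$ for each $v\in V(G)$, so that $\disp{\delta}(G')=\alpha(G)$, where $\alpha(G)$ is the maximum size of an independent set of $G$; the reduction is parameter preserving, so \wone-hardness for parameter $k$ will transfer as well, and \np-membership is immediate since $\delta$ is a fixed rational $\tfrac ab$ and, by \autoref{lemma:simple:solution}, an optimal $\delta$-dispersed set may be taken $2b$-simple, hence of polynomial size.

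\emph{Construction.} Start from the split graph with independent part $X=\{x_v:v\in V(G)\}$, clique part $Y=\{y_e:e\in E(G)\}$, and edges $x_vy_e$ for $v\in e$; in this graph $d_{G'}(x_u,x_v)=2$ if $uv\in E(G)$ (through $y_{uv}$) and $=3$ otherwise (through $y_e,y_f\in Y$ with $e\ni u$, $f\ni v$, using the $Y$-clique), because distinct $x$'s have a common neighbour only when adjacent. To each $x_v$ I glue a \emph{tentacle} of radius $\tfrac{m-3}{2}$: a pendant path $x_v=z_v^0-z_v^1-\cdots-z_v^{\rho}$, together — only when $m$ is even — with a vertex $w_v$ adjacent to $z_v^{\rho}$ and $z_v^{\rho-1}$ (for $m=4$, where $\rho=0$, this degenerates to a twin $w_v$ adjacent to $x_v$ and to every $y_e\ni v$). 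I put $r_v:=z_v^\rho$ if $m$ is odd, and $r_v:=$ midpoint of $z_v^\rho w_v$ (resp.\ of $x_vw_v$) if $m$ is even; in all cases $r_v$ lies at distance exactly $\tfrac{m-3}{2}$ from $x_v$ inside the tentacle while no vertex of the tentacle is farther than $\rho$ from $x_v$ — the triangle tip exists precisely to realise a half-integral such distance for even $m$ without introducing a vertex beyond radius $\rho$. Then $G'$ is chordal (any cycle lies inside $Y$, or runs through some $x_v$ whose two cycle-neighbours lie in the clique $K_v:=\{x_v\}\cup\{y_e:v\in e\}$, or through a tentacle vertex whose two cycle-neighbours are covered by a triangle — a chord in each case), $G'$ is connected, and its diameter is at most $2\rho+3\le m$. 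Since a tentacle is attached to the rest of $G'$ only through $x_v$, for $u\ne v$ we get
$$ d_{G'}(r_u,r_v)=2\cdot\tfrac{m-3}{2}+d_{G'}(x_u,x_v)=(m-3)+d_{G'}(x_u,x_v)=\begin{cases} m-1,& uv\in E(G),\\ m,& uv\notin E(G).\end{cases}$$

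\emph{Correctness.} For $(\Rightarrow)$, if $I$ is an independent set of $G$ then $\{r_v:v\in I\}$ is $\delta$-dispersed, all pairwise distances being $m\ge\delta$; hence $\disp{\delta}(G')\ge\alpha(G)$. For $(\Leftarrow)$ I cover $P(G')$ by territories $B_v$ ($v\in V(G)$), in the spirit of the ball-cover in the proof of \autoref{lemma:np:hardness:2:3}: $B_v$ consists of the tentacle of $v$, all edges lying inside $K_v$, and, for each clique-edge $y_ey_f$ with $e\cap f=\emptyset$ and $y_e\in K_v$, the half of that edge within distance $\tfrac12$ of $y_e$ — each such half being assigned to exactly one of the constantly many territories it touches. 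Then $\bigcup_vB_v=P(G')$; every point of $B_v$ lies within distance $\tfrac{m-1}{2}$ of every $y_e\in K_v$ (the representative $r_v$ at $\tfrac{m-3}{2}+1=\tfrac{m-1}{2}$, and the clique material at most $\tfrac32\le\tfrac{m-1}{2}$ since $m\ge4$), so $\mathrm{diam}(B_v)\le m-1<\delta$; and for $uv\in E(G)$ any $p\in B_u$, $q\in B_v$ satisfy $d_{G'}(p,q)\le d(p,y_{uv})+d(y_{uv},q)\le 2\cdot\tfrac{m-1}{2}=m-1<\delta$. Therefore a $\delta$-dispersed set $S$ meets each $B_v$ in at most one point and cannot meet both $B_u$ and $B_v$ when $uv\in E(G)$; assigning each point of $S$ to a territory containing it gives an injection of $S$ into $V(G)$ with independent image, so $\alpha(G)\ge\disp{\delta}(G')$. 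Combining the two bounds, $\disp{\delta}(G')=\alpha(G)$, which (together with the parameter-preservation and \np-membership noted above) proves the lemma.

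The main obstacle is the backward direction: one must design the territories $B_v$ so that they genuinely cover all of $P(G')$ — in particular carve up the entire $Y$-clique — while keeping every $\mathrm{diam}(B_v)$ and every ``across-an-edge'' diameter strictly below $\delta$; this is exactly the point at which the distance $m-1$ between representatives of adjacent vertices (versus $m$ for non-adjacent ones) is essential. A secondary, more bookkeeping-flavoured subtlety is the dependence on the parity of $m=\lceil\delta\rceil$: the representative must sit at the (possibly half-integral) distance $\tfrac{m-3}{2}$ from $x_v$, which is why for even $m$ one replaces the bare pendant path by a pendant path ending in a triangle, so that this half-integral position is available without any tentacle vertex pushing the diameter of $G'$ beyond $m$.
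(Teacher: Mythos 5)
Your proposal is correct and essentially coincides with the paper's own proof: the same gadget (a clique on edge-vertices, vertex-vertices attached to their incident edge-vertices, pendant paths of length $\lceil\delta/2\rceil-2$ with a twin tip when $\lceil\delta\rceil$ is even, representatives at the tips/midpoints) and the same ball-covering argument for the backward direction, your only deviation being that your territories also carve up the central clique, which spares you the paper's separate case for a solution point inside the clique. One cosmetic caveat: in the degenerate case $\lceil\delta\rceil=4$ the justification ``the tentacle is attached only through $x_v$'' is inaccurate (there $w_v$ is adjacent to the clique vertices $y_e$ with $v\in e$), but the stated distances, the diameter bound and the territory bound $\tfrac{m-1}{2}$ still hold because $w_v$ is adjacent to all of $K_v$, so the argument goes through unchanged.
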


\textbf{Construction:}
Let $\delta > 3$ be arbitrary.
Given a \problemm{Independent Set}-Instance consisting of a graph $G$ and integer $k$,
	we construct a $\delta$-\dispersion-instance
	consisting of a graph $G'$ and integer $k'=k$.
For every edge $\{u,v\} \in E(G)$, add a vertex $w_{\{u,v\}}$ to $V(G')$.
Let $w_{\{u,v\}}$ for $\{u,v\} \in E(G)$ form a clique.
For every vertex $u \in V(G)$ add a vertex $u'$ to $V(G')$
	and make it adjacent to $w_{\{u,v\}}$ for every $v \in N_G(u)$.
For every $u \in V(G)$, add a path of length $\left\lceil\tfrac{\delta}{2}\right\rceil-2 \geq 0$ starting at $u'$ and terminating at a vertex which we call $u_1$.
Finally, if $\left\lceil{\delta}\right\rceil$ is even,
	add a true twin $u_2$ to $u_1$ for every $u\in V(G)$,
	that is $u_2$ is adjacent to $\{u_1\} \cup N_{G'}(u_1)$.

\medskip

The diameter of $G'$ is at most the distance between $u_1$ and $v_1$ for non-adjacent $u,v \in V(G)$,
	hence at most $2(\left\lceil\tfrac{\delta}{2}\right\rceil-2)+3 \leq \delta$.
Observe that $G'$ is connected and chordal.
Further, the construction is possible in polynomial time.
Moreover, \problemm{Independent Set} is \wone-hard and the reduction preserves the parameter.
Hence, in order to show \np-hardness and \wone-hardness,
	it remains to show the correctness:

\begin{lemma}
Graph $G$ has an independent set of size $k$
	if and only if $G'$ has a $\delta$-dispersed set of size $k$.
\end{lemma}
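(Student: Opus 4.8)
The plan is to set up an explicit, branch‑localised correspondence between independent sets of \( G \) and \( \delta \)-dispersed sets of \( G' \). For \( u \in V(G) \) call the \emph{tip} \( t_u \) the vertex \( u_1 \) if \( \lceil \delta \rceil \) is odd, and the midpoint of the edge \( \{u_1,u_2\} \) if \( \lceil \delta \rceil \) is even. A direct computation in \( G' \), splitting on the parity of \( \lceil \delta \rceil \) and using that each branch is a path of length \( \lceil \delta/2 \rceil - 2 \), shows that for distinct \( u,v \) one has \( d_{G'}(t_u,t_v) = 2\lceil\delta/2\rceil-2 \) (no twins) or \( 2\lceil\delta/2\rceil-1 \) (with twins) when \( u \sim v \), and one more than that when \( u \not\sim v \); the construction is tuned so that the ``\( u\not\sim v \)'' value is \( \geq \delta \) while the ``\( u\sim v \)'' value is \( < \delta \), both a consequence of \( \delta > 3 \). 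Hence the forward direction: if \( I \subseteq V(G) \) is independent with \( |I| = k \), then \( S := \{\, t_u : u \in I \,\} \) has \( |S| = k \) and any two of its points lie in distinct non-adjacent branches, hence are at distance \( \geq \delta \); so \( S \) is \( \delta \)-dispersed.

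For the converse, fix a \( \delta \)-dispersed set \( S \) with \( |S| = k \). Let \( C := \{\, w_{\{a,b\}} : \{a,b\}\in E(G) \,\} \) be the clique of branch‑connectors, set \( \mathcal{Z} := \{\, p \in P(G') : d(p,C) \leq 1 \,\} \), and for \( u \in V(G) \) let \( \mathcal{A}_u \) be the remaining points lying on an edge of branch \( u \); these sets partition \( P(G') \). Since \( C \) is a clique and every \( u' \) is adjacent to a vertex of \( C \), any two points of \( \mathcal{Z} \) are at distance at most \( 3 < \delta \), so \( |S \cap \mathcal{Z}| \leq 1 \); and since branch \( u \) is a path of length \( \lceil\delta/2\rceil - 2 \) with at most a small twin triangle at its end (degenerating to the single edge \( \{u',u_2\} \) when \( \delta \in (3,4] \)), a direct estimate shows any two points of \( \mathcal{A}_u \) are at distance \( < \delta \), so \( |S \cap \mathcal{A}_u| \leq 1 \). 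Put \( I_0 := \{\, u : S \cap \mathcal{A}_u \neq \emptyset \,\} \) and for \( u \in I_0 \) let \( p_u \) be the unique point of \( S \) in \( \mathcal{A}_u \). If \( u \sim v \) were both in \( I_0 \), then routing a shortest path through \( w_{\{u,v\}} \) — using, when the branches are degenerate, that \( u_1 \) and \( u_2 \) have the same neighbours outside branch \( u \) — gives \( d(p_u,p_v) \leq 2\lceil\delta/2\rceil-2 \) if \( \lceil\delta\rceil \) is odd and \( \leq 2\lceil\delta/2\rceil-1 \) if \( \lceil\delta\rceil \) is even, which in either case is \( < \delta \) by \( \delta > 3 \), contradicting \( \delta \)-dispersion. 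Hence \( I_0 \) is independent.

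If \( S \subseteq \bigcup_u \mathcal{A}_u \) we are done, since \( |I_0| = |S| = k \). Otherwise \( S \) contains exactly one point \( p_0 \in \mathcal{Z} \) and \( |I_0| = k-1 \), so it remains to produce a vertex \( u^\star \) with \( N_G[u^\star] \cap I_0 = \emptyset \); then \( I_0 \cup \{u^\star\} \) is independent of size \( k \). Depending on where \( p_0 \) lies, take \( u^\star := u \) if \( p_0 \) is on an edge incident to \( u' \), and \( u^\star := a \) if \( p_0 \) is on an edge incident to \( w_{\{a,b\}} \) (or is \( w_{\{a,b\}} \) itself). In each case a short distance estimate — routing through a single vertex of \( C \) and again using \( \delta > 3 \) — shows that if some \( v \in N_G[u^\star] \) lay in \( I_0 \) then \( d(p_0,p_v) < \delta \), a contradiction, which yields the required \( u^\star \). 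Throughout we may assume \( G \) is connected, so that \( G' \) is connected and \( V(G) \neq \emptyset \); this is without loss of generality for \problemm{Independent Set} and, if necessary, is arranged by standard preprocessing.

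The single genuine obstacle is the distance bookkeeping in \( G' \): one must verify, uniformly over all real \( \delta > 3 \) and both parities of \( \lceil\delta\rceil \), the three strict inequalities ``any two points of \( \mathcal{A}_u \) are \( <\delta \) apart'', ``\( d(p_u,p_v) < \delta \) for adjacent \( u,v \in I_0 \)'', and ``\( d(p_0,p_v) < \delta \) for \( p_0 \in \mathcal{Z} \) and \( v \in N_G[u^\star] \)''. These are precisely what dictate the branch length \( \lceil\delta/2\rceil - 2 \) and the parity‑correcting twin gadget, and the tight regime is \( \delta \in (3,4] \): there each branch collapses to the single edge \( \{u',u_2\} \), and one must exploit that \( u' \) and \( u_2 \) have identical neighbourhoods outside the branch to keep every estimate strictly below \( \delta \).
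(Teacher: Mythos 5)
Your forward direction is correct and is exactly the paper's: put the points at $u_1$ (odd $\lceil\delta\rceil$) or at the midpoint of $\{u_1,u_2\}$ (even $\lceil\delta\rceil$) and check the two distance values. Your backward direction takes a finer decomposition than the paper's, and this is where a genuine gap sits, in precisely the regime you yourself flag as tight and leave unverified. The paper covers $P(G')$ by the clique-edge points $C$ together with balls of radius $\lceil\delta/2\rceil-1$ around $u_1,u_2$; a point of $S$ on a clique edge is within distance $<\delta$ of \emph{every} point of $P(G')$ (for all $\delta>3$), so then $|S|=1$, and otherwise each ball holds at most one point of $S$ while $B_u\cup B_v$ has diameter $<\delta$ for adjacent $u,v$ — no repair step is needed. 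Your zone $\mathcal{Z}$ (everything within distance $1$ of the clique) is larger than $C$, and a point of $\mathcal{Z}$ does \emph{not} dominate all of $P(G')$ when $\delta\in(3,4]$ (a point near $u'$ on a connector edge can be at distance $\geq\delta$ from far twin-midpoints), so you genuinely need the extra step of exhibiting $u^\star$ with $N_G[u^\star]\cap I_0=\emptyset$.

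That step fails as stated. In the degenerate case $\delta\in(3,4]$ the twin $u_2$ is adjacent to all clique vertices $w_{\{u,\cdot\}}$, so $\mathcal{Z}$ contains the interiors of the edges $\{u_2,w_{\{u,x\}}\}$; these are incident to neither $u'$ nor any other $v'$, so only your second rule applies, and it permits $u^\star=x$. For that choice the asserted inequality $d(p_0,p_v)<\delta$ for $v\in N_G[u^\star]\cap I_0$ is false: take $\delta=3.2$ (so branches are single twin edges), $G\supseteq$ the path $u-x-v$, let $p_0$ lie at distance $0.05$ from $u_2$ on $\{u_2,w_{\{u,x\}}\}$ and $p_v$ be the midpoint of $\{v_1,v_2\}$; then $d(p_0,p_v)=0.95+1+1+0.5=3.45\geq\delta$, so this configuration is $\delta$-dispersed, $v\sim x$ may lie in $I_0$, no contradiction arises, and $I_0\cup\{x\}$ is not independent. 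The argument is repairable locally — assign points of $\{u_2,w_e\}$ to the branch of $u$, i.e.\ choose $u^\star=u$, for which routing through $u_2$ or through $w_{\{u,\cdot\}}$ gives $d(p_0,p_y)\leq 3<\delta$ for all $y\in N_G[u]$; or simply enlarge your branch regions to the paper's radius-$(\lceil\delta/2\rceil-1)$ balls, which removes the repair step altogether — but as written the selection rule does not yield the claimed contradiction, and this is exactly the $(3,4]$ bookkeeping your proposal defers rather than carries out.
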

\begin{proof}
\forward
Let $I \subseteq V(G)$ be an independent set of $G$.
If $\left\lceil{\delta}\right\rceil$ is odd,
	then let $S\subseteq P(G)$ be the point set consisting of points $p_u$ at the vertices $u \in I$.
Then, every $p_u,p_v \in S$ have distance $2(\left\lceil\tfrac{\delta}{2}\right\rceil-2)+ 3 \leq \delta$, since $\{u,v\} \notin E(G)$.
Else, when $\left\lceil{\delta}\right\rceil$ is even,
	let $S \coloneqq \{ p(u_1,u_2,\tfrac{1}{2}) \mid u \in I\}$.
Then, every $p(u_1,u_2,\tfrac{1}{2})$, $p(v_1,v_2,\tfrac{1}{2})$
	have distance $2(\left\lceil\tfrac{\delta}{2}\right\rceil-2)+4 \leq \delta$, again since $\{u,v\} \notin E(G)$.
Thus, in both cases, $S$ is $\delta$-dispersed.

\backward
Let $S \subseteq P(G')$ be a $\delta$-dispersed set of $G'$.
Let $C \subseteq P(G')$ be the set of points located at the central clique,
	that is the set of points $p(u,v,\lambda)$ with $u,v \in V(G)$ and $\lambda \in  [0,1]$.
Consider that $S$ contains a point $p \in C$,
	hence a point $p=p(u,v,\lambda)$ with $u,v \in V(G)$ and $\lambda \in  [0,1]$.
Then $p$ has distance to any vertex $u'$ for $u \in V(G)$ of at most $\tfrac{3}{2}$.
Thus, $p$ has distance to any other point in $P(G)$ of at most $\tfrac{3}{2} + \left\lceil\tfrac{\delta}{2}\right\rceil-2< \delta$.
That means, $|S|=1$ and any vertex in $G$ forms an independent set of the same size.
Hence it remains to consider that $S \cap C = \emptyset$.

Let \define{ball} $B_u \subseteq P(G)$, for $u \in V(G)$, consist of every point in distance $\left\lceil\tfrac{\delta}{2}\right\rceil-1\geq 1$ to $u_1$ or $u_2$ (if $u_2$ exists).
Then the balls $B_u$ for $u \in V(G)$ and $C$ together make up the whole set of points $P(G)$.
Clearly, every ball $B_u$, for $u \in V(G)$, contains at most one point form $S$.
Then $I \coloneqq \{ u \in V(G) \mid S \cap B_u \neq \emptyset\}$ has size $|S|$.
We claim that $I$ is an independent set.
First consider that $\delta >4$.
Consider an edge $\{u,v\} \in E(G)$.
Vertices $u'$ and $v'$ have the common neighbor $w_{u,v}$,
	hence have distance $2$.
Any point $p \in B_u$, for $u \in V(G)$, has distance at most $\left\lceil\tfrac{\delta}{2}\right\rceil-\tfrac{3}{2}$ to $u'$.
Hence, any distinct points $p,q \in B_u \cup B_v$ have distance at most $2 (\left\lceil\tfrac{\delta}{2}\right\rceil-\tfrac{3}{2})+1 < \delta$.
Therefore, for adjacent $u,v \in V(G)$ either $S \cap B_u \neq \emptyset$ or $S \cap B_v \neq \emptyset$, such that $|I \cap \{u,v\}|=1$,
	hence that $I$ is an independent set of $G$.

A remaining special case is when $\delta \leq 4 $.
Then $u',v'$ have a true twins $u_2$ and $v_2$ respectively,
	the distance of any $u',u_2$ to any $v',v_2$ is at most $2$.
Again, any point $p \in B_u$ has distance at most $\left\lceil\tfrac{\delta}{2}\right\rceil-\tfrac{3}{2}$ to either $u'=u_1$ or its true twin $u_2$.
Thus, analogously follows that $I$ is an independent set of $G$.
\end{proof}

\subsection{Pathwidth and Feedback Vertex Set Size}

\begin{theorem}[(\autoref{lemma:pw:fvs:eth} restated)]
\label{lemma:appendix:pw:fvs:eth}
\lemmaTextPwFvsETH
\end{theorem}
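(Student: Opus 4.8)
The plan is to reuse the two reductions of Katsikarelis, Lampis and Paschos~\cite{DBLP:conf/wg/KatsikarelisLP18} that prove the analogous statements for \problemm{DIS} — one tuned for pathwidth, one for feedback vertex set size — and to re-read the \problemm{DIS}-instances they produce as \dispersion-instances. Both reductions start from \problemm{Multi-Colored-Independent-Set} with $q$ colour classes of size $N$, which is \wone-hard parameterized by $q$ and, unless ETH fails, has no $f(q)\,N^{o(q)}$-time algorithm. For a fixed target distance $d$, the reduction builds a graph $G^\star$ and an integer $k^\star$ with $\alpha_d(G^\star)\ge k^\star$ iff the \problemm{MCIS}-instance is a yes-instance, where $|V(G^\star)|=N^{\Oh(1)}q^{\Oh(1)}$, $k^\star=\Theta(q^2)$ (one chosen element per colour-pair gadget), $\pw(G^\star)=\Oh(q^2)$, and — in the second construction — $\fvs(G^\star)=\Oh(q)$ (delete the $\Theta(q)$ colour-selection vertices and a forest remains). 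I would replicate this construction, output $G':=G^\star$, $\delta:=d$, $k:=k^\star$, and the only new point to check is that this is a faithful \dispersion-instance, i.e.\ $\disp{\delta}(G')=\alpha_\delta(G')$.

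The crux is exactly this last point, and I expect it to be the main obstacle. In general $\disp{\delta}(H)$ can strictly exceed $\alpha_\delta(H)$ (for instance $\disp{2}(K_4)=2$ via the two midpoints of a matching, while $\alpha_2(K_4)=1$), so one must use the structure of the construction. Here it goes through because, once the $\Oh(q)$ selector vertices are removed, $G'$ decomposes into paths and small trees hanging off long paths, a structure with no short non-trivial closed walk relative to $\delta$: by \autoref{lemma:simple:solution} an optimal $\delta$-dispersed set may be taken $2$-simple, and a facility placed on a degree-$2$ vertex or in an edge interior of such a path/tree can be slid to an adjacent vertex without decreasing any pairwise distance; the remaining small components (and any clique-like pieces) hold only one dispersed point for $\delta\ge 3$ anyway, so they too contribute no surplus. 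A cleaner alternative, which I would use to make this watertight, is to present the \problemm{DIS}-construction with every gadget edge subdivided an odd number of times $c$ and the distance scaled to $cd$: an odd subdivision with correspondingly scaled distance preserves the maximum distance-$d$ independent set exactly (a chosen subdivision vertex is strictly nearer to one endpoint, so rounding to that endpoint costs strictly less than one unit per endpoint, hence nothing after scaling by $c$), and if the underlying gadget graph is moreover already a $2$-subdivision, the result is a $2b$-subdivision with even scaled distance, so \autoref{lemma:subdivision:into:is} turns it \emph{exactly} into a \dispersion-instance on the un-subdivided graph, with $\pw$, $\fvs$, $|V|$ all affected only by additive/multiplicative constants.

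Granting $\disp{\delta}(G')=\alpha_\delta(G')$, the reduction is polynomial and parameter-preserving up to constants, so \wone-hardness of \problemm{MCIS} in $q$ yields \wone-hardness of \dispersion parameterized by $\pw(G)+k$ and by $\fvs(G)+k$. For the running-time bounds, $\sqrt{\pw(G')}+\sqrt{k}=\Theta(q)$ and $\fvs(G')+\sqrt{k}=\Theta(q)$ (dominated by $\sqrt{k}=\Theta(q)$), while $|V(G')|=N^{\Oh(1)}q^{\Oh(1)}$; hence an $n^{o(\sqrt{\pw(G)}+\sqrt{k})}$-time, resp.\ $n^{o(\fvs(G)+\sqrt{k})}$-time, algorithm for \dispersion would solve \problemm{MCIS} in time $\bigl(N^{\Oh(1)}q^{\Oh(1)}\bigr)^{o(q)}=f(q)\,N^{o(q)}$, contradicting ETH. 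Since $\fvs(G)$ linearly bounds $\tw(G)$, the second construction additionally gives \wone-hardness parameterized by $\tw(G)+k$ and the $n^{o(\tw(G)+\sqrt{k})}$ lower bound stated after the theorem. Everything beyond the edge-interior verification is a transcription of~\cite{DBLP:conf/wg/KatsikarelisLP18} with the parameter bookkeeping redone.
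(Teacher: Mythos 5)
Your overall route coincides with the paper's: reduce from \problemm{Multi-Colored-Independent-Set} via the gadget of Katsikarelis et al.~\cite{DBLP:conf/wg/KatsikarelisLP18}, keep $k'=\Theta(k^2)$, $\fvs(G')=\Oh(k)$ (delete the $a_i,b_i$), $\pw(G')=\Oh(k^2)$, and run the same ETH arithmetic, so your last paragraph is exactly the paper's bookkeeping. The paper, however, does not subdivide and does not invoke \autoref{lemma:subdivision:into:is} here: it uses the gadget graph itself as a continuous instance with $\delta=6n$ and argues directly that a maximum $6n$-dispersed set is forced onto the intended vertices, i.e.\ it establishes the faithfulness claim $\disp{6n}(G')=\alpha_{6n}(G')$ ad hoc from the gadget's metric structure.

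The gap in your proposal is precisely in how you discharge that claim, which you correctly single out as the crux. The sliding argument does not work as stated: by \autoref{lemma:simple:solution} you may assume half-integral positions, but since $\delta=6n$ is an integer, rounding two edge-midpoints to endpoints moves each by $\tfrac{1}{2}$ and can drop a pairwise distance from $6n$ to $6n-1$, so the rounding directions must be chosen consistently using the gadget's structure (this is exactly what the paper hides behind ``details easily follow''); moreover the gadget does contain cycles much shorter than $\delta$ (e.g.\ the cycle through $a_i$, $p^i_\ell$, $b_i$, $p^i_{\ell'}$ of length roughly $4n$), so ``no short closed walk'' is not available. The ``cleaner alternative'' also fails as described: it requires the gadget graph to already be a $2b$-subdivision (an \emph{even} subdivision) with even scaled distance, but the KLP gadget is not one --- the $g_{i_1,i_2}$--$u_e$ edges are single edges and the prescribed path lengths $n+\ell$, $5n-j$, $4n+j$, $6n-1$ have mixed parity --- and you cannot manufacture this property by subdividing every edge once, because an even subdivision does not preserve the distance-$d$ independence number in general: $\alpha_4\big((K_4)_2\big)=2>1=\alpha_2(K_4)$, the same phenomenon as your own $K_4$ dispersion example. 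Your odd-$c$ rounding claim is fine in itself, but once the even-subdivision hypothesis fails it does no work, and repairing it means rebuilding the gadget with rescaled lengths and re-proving its correctness rather than using \cite{DBLP:conf/wg/KatsikarelisLP18} as a black box. In short, the reduction, parameters and ETH arithmetic are right and identical to the paper's, but the step $\disp{\delta}(G')=\alpha_{\delta}(G')$ still needs an actual forced-position/ball argument on the gadget, and neither of your two shortcuts supplies it.
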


In \problemm{Multi-Colored-Independent-Set} (\problemm{MCIS}) we are given a graph $G$ with a partition of $V(G)$ into 
$k$ independent sets, called \define{color classes}, $V_1,\dots,V_k$, each of size $n$.
The task is to find a \define{multi-colored independent set},
	an independent set that contains exactly one vertex from each color class.

\textbf{Construction (as in \cite{DBLP:conf/wg/KatsikarelisLP18})}
Let the given \problemm{MCIS} consist of graph $G$ where $V(G)$ partitions into $k$ color classes $V_1,\dots,V_k$, each of size $n$.
We construct a \dispersion instance consisting of a graph $G'$, distance $\delta \coloneqq 6n$
	that asks for a set of size at least $k' \coloneqq k^2$.
\begin{itemize}
\item
For every color class $V_i$, $1 \leq i\leq k$:
	Enumerate the vertices in $V_i$ as $v_1^i,\dots, v_n^i$.
	Create a set $P_i \subseteq V(G')$ of $n$ vertices $p_1^i,\dots, p_n^i$;
	Add vertex a $a_i$ and connect it to each $p_\ell^i$ by a path of length $n+l$ for $1 \leq l \leq n$;
	Similarly add a vertex $b_i$ and connect it to each  $p_\ell^i$ by a path of length $n+l$ for $1 \leq l \leq n$.
\item
For every non-edge $e = \{ v_{j_1}, v_{j_2} \} \in V_{i_1} \times V_{i_2} \setminus E(G)$ (where ${i_1} \neq {i_2}$) add a vertex $u_e$.
We connect $u_e$ to $a_{i_1},b_{i_1}$ and to $a_{i_2},b_{i_2}$ by paths of lengths that encode the indices $j_1$ and $j_2$:
Add a path between $u_e$ and $a_{i_1}$ of length $5n - j_1$ and add a path between $u_e$ and $b_{i_1}$ of length $4n+ j_1$;
Similarly between $u_e$ and $a_{i_2}$ of length $5n - j_2$ and add a path between $u_e$ and $b_{i_2}$ of length $4n+ j_2$.
\item
For every distinct indices $i_1,i_2 \in \{1,\dots,k\}$ introduce a path of length $6n-1$ with start vertex $g_{i_1,i_2}$ and end vertex $g_{i_1,i_2}'$.
Make $g_{i_1,i_2}$ adjacent to every $u_e$ with $e \in V_{i_1} \times V_{i_2}$.
\end{itemize}

\medskip

A multi-colored independent set $I \subseteq V(G)$ translates to a  $6n$-dispersed set of size $k^2$
	by selecting the corresponding $p^{i}_j$ vertices and
	for every pair of selected $p^{i_1}_{j_1}, p^{i_1}_{j_2}$, for $i_1\neq i_2$, selecting $g_{i_1,i_2}'$ and $u_{e}$
	where $e = \{v^{i_1}_{j_1}, v^{i_1}_{j_2}\}$.
Similarly, a maximum $6n$-dispersed is forced to select vertices corresponding to an independent set and non-edges between those in the original graph $G$.
The details of the correctness easily follow from the reduction of that of Katsikarelis et al.~\cite{DBLP:conf/wg/KatsikarelisLP18}.

The resulting graph $G'$ has a feedback vertex set of size $\Oh(k)$
	which consists of $a_i,b_i$ for every $1\leq i \leq k$.
Also $G'$ has pathwidth $\Oh(k^2)$:
Removing all $a_i, b_i$ vertices and all $g_{i_1,i_2}$ vertices results in a forest of stars
	which has a constant width pathwidth decomposition.
By adding the $\Oh(k^2)$ removed vertices to every bag,
	we obtain a pathwidth decomposition of $G'$ and of width $\Oh(k^2)$ .

The bounds on the feedback vertex set size and pathwidth
	imply lower bounds assuming the Exponential Time Hypothesis (ETH).
\problemm{MCIS} cannot be solved in time $n^{o(k)}$ assuming ETH is true \cite{DBLP:books/sp/CyganFKLMPPS15}.
However, assuming an $n^{o(\sqrt{\pw(G)}+ \sqrt{k})}$-time algorithm for \dispersion
	implies an $n^{o(k)}$-time algorithm for \problemm{MCIS}.
Similarly, an $n^{o({\fvs(G)}+ \sqrt{k})}$-time algorithm for \dispersion
	implies an $n^{o(k)}$-time algorithm for \problemm{MCIS}.

\subsection{Treedepth}

\begin{theorem}[(\autoref{lemma:td:lower:bound} repeated)]
\label{lemma:appendix:td:lower:bound}
\lemmaTextTDlower
\end{theorem}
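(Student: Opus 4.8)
The plan is to recycle the reduction from the proof of \autoref{lemma:pw:fvs:eth} and to pair it with a sharper, $\Oh(k+\log n)$ bound on the treedepth of the graph it outputs. Concretely, I would start from a $3$-\problemm{SAT} formula on $N$ variables and $\Oh(N)$ clauses, group its variables into $k\coloneqq\lceil\sqrt N\rceil$ blocks of $\Oh(\sqrt N)$ variables, and form the \problemm{MCIS} instance with $k$ color classes, each listing all $n\coloneqq 2^{\Theta(\sqrt N)}$ assignments of a block (with the usual edges forbidding incompatible or clause-violating block assignments); this is the standard composition behind the $f(k)\,n^{o(k)}$ lower bound for \problemm{MCIS}, and for these parameters an algorithm solving it in time $2^{o(N)}$ would refute ETH~\cite{DBLP:books/sp/CyganFKLMPPS15}. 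Feeding this \problemm{MCIS} instance into the reduction of \autoref{lemma:pw:fvs:eth} gives, in time $2^{\Oh(\sqrt N)}$, a \dispersion instance $(G',\delta)$ with $\delta=6n$, target size $k^2$, and $|V(G')|=\mathrm{poly}(n,k)=2^{\Theta(\sqrt N)}$, whose answer equals satisfiability of the formula; all the path-length bookkeeping and the correctness argument are inherited verbatim from \autoref{lemma:pw:fvs:eth} and~\cite{DBLP:conf/wg/KatsikarelisLP18}.

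The heart of the argument is then the treedepth estimate. I would delete the set $A\coloneqq\{a_i,b_i\mid 1\le i\le k\}$ of $2k$ vertices and show that every component of $G'-A$ is a tree of treedepth $\Oh(\log n)$, whence $\td(G')\le 2k+\Oh(\log n)=\Oh(k+\log n)$. Indeed, after removing $A$ the construction falls apart into, on the one hand, for each $(\ell,i)$ the vertex $p^i_\ell$ together with the two (now truncated) paths that joined it to $a_i$ and $b_i$ --- a single path on $\Oh(n)$ vertices --- and, on the other hand, for each pair $i_1<i_2$ the component around $g_{i_1,i_2}$: it contains $g_{i_1,i_2}$, the pendant path of length $6n-1$ to $g'_{i_1,i_2}$, every $u_e$ for a non-edge $e$ between $V_{i_1}$ and $V_{i_2}$, and, hanging off each such $u_e$, the four truncated paths towards $a_{i_1},b_{i_1},a_{i_2},b_{i_2}$. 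Since for distinct $e$ these path-stumps are vertex-disjoint, this component contains no cycle and is therefore a tree; rooting it at $g_{i_1,i_2}$ leaves under every child either a path on $\Oh(n)$ vertices or a spider with four legs of length $\Oh(n)$, each of treedepth $\Oh(\log n)$. Plugging in $k=\Theta(\sqrt N)$ and $n=2^{\Theta(\sqrt N)}$ yields $\td(G')=\Oh(\sqrt N)$, and since $|V(G')|=2^{\Oh(\sqrt N)}=2^{\Oh(\td(G'))}$, a \dispersion algorithm running in time $2^{o(\td(G)^2)}$ (even with an extra $|V(G)|^{\Oh(1)}$ factor) would decide the formula in time $2^{o(N)}$, contradicting ETH.

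The main obstacle, and the only genuinely new point beyond \autoref{lemma:pw:fvs:eth}, is realising that $\Oh(k)$ deletions --- rather than the $\Oh(k^2)$ used there to bound the pathwidth --- already reduce $G'$ to a graph of logarithmic treedepth; the crucial observation is that the $\binom{k}{2}$ ``verification'' vertices $g_{i_1,i_2}$ need not enter the elimination prefix, because each of them becomes the root of a shallow tree once the vertices $a_i,b_i$ are gone. Everything else --- the inequality $\td(G')=\Oh(k+\log n)$, the choice of parameters, and the final time comparison with the $2^{\Oh(\td^2)}$ upper bound of \autoref{lemma:dispersion:treedepth} --- is then routine.
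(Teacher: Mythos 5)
Your overall route is the paper's: you reuse the \problemm{MCIS}-to-\dispersion construction of \autoref{lemma:pw:fvs:eth}, bound the treedepth of $G'$ by $\Oh(k+\log n)$ by deleting only the $2k$ vertices $a_i,b_i$ (so that every remaining component is a path or a shallow tree rooted at some $g_{i_1,i_2}$ whose children are spiders with legs of length $\Oh(n)$), and then feed in an \problemm{MCIS} instance with $\Theta(\sqrt N)$ colors and color classes of size $2^{\Theta(\sqrt N)}$ obtained by compressing a $3$-\problemm{SAT} formula, so that $\td(G')=\Oh(\sqrt N)$ and a $2^{o(\td(G)^2)}$ algorithm would decide the formula in time $2^{o(N)}$, contradicting ETH. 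All of this matches the paper's proof, including the key observation that only the $a_i,b_i$ (and not the $\binom{k}{2}$ vertices $g_{i_1,i_2}$) need to be deleted.

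The one step that fails as written is your $3$-\problemm{SAT} to \problemm{MCIS} compression. You partition the \emph{variables} into $\lceil\sqrt N\rceil$ disjoint blocks and add pairwise edges ``forbidding incompatible or clause-violating block assignments''. Since the blocks are variable-disjoint, incompatibility never occurs, and a clause whose three variables lie in three different blocks is never falsified by the union of only two block assignments (the third literal is still unset), so no pairwise edge can exclude a triple of block assignments that jointly falsify it. Consequently your instance can admit a multi-colored independent set even when $\varphi$ is unsatisfiable; variable-grouping with pairwise edges is the standard trick only for binary constraints such as $3$-coloring. The paper (following Katsikarelis et al.) instead groups the \emph{clauses} into $\Oh(\sqrt N)$ groups, takes as the color-$i$ vertices the assignments to the variables occurring in group $i$ that satisfy all clauses of that group, and adds conflict edges between assignments of different groups that disagree on a shared variable; because every clause lives inside a single group, pairwise consistency then suffices (and the groups' variable sets overlap, which is exactly why the conflict edges are needed). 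With that correction, the remainder of your argument — the $\Oh(k)$-deletion treedepth bound, the parameter choice, and the final time comparison — goes through and coincides with the paper's proof.
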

To show a lower bound regarding the treedepth,
	a start is to use same reduction as for parameter pathwidth.
However, this alone provides only a lower bound of $n^{\Oh(k + \log \ell)}$
	where $\ell$ is the number of colors of the original \problemm{MCIS}-instance.
To avoid the dependency on $n$ we further use a reduction from $3$-\problemm{SAT} to \problemm{MCIS} that outputs instance with relatively small $\ell$
(similar as in \cite{DBLP:conf/wg/KatsikarelisLP18}).

First, consider the treedepth of $G'$ of the resulting \dispersion-instance
	when the original \problemm{MCIS}-instance consists of a graph $G$ with $k$ color classes each containing $\ell$ vertices.
The treedepth of $G'$ is at most $\Oh(k + \log \ell)$,
	as observed in \cite{DBLP:conf/wg/KatsikarelisLP18}:
To see this, consider the components of graph $G'$ after removing the $2k$ vertices $a_i,b_i$ for $1 \leq i \leq k$.
The treedepth of $G'$ is then at most $2k$ plus the maximum treedepth of the remaining components.
The remaining components are a tree for each vertex $g_{i_1,i_2}$ 
	and a path of length $3\ell$ for each vertex $p^i_j$.
A path of length $3\ell$ has treedepth $\Oh(\log \ell)$.
For a tree containing some $g_{i_1,i_2}$,
	note that after removing $g_{i_1,i_2}$ only paths of length $<6 \ell$ remain.
Hence, each remaining tree has treedepth $\Oh(\log \ell)$.
In total $G'$ has treedepth $\Oh(k + \log \ell)$.

Now, we give a (non-polynomial time) reduction from $3$-\problemm{SAT} to \problemm{MCIS}.
There, a $3$-\problemm{SAT}-instance $\varphi$ with $N$ variables and $\Oh(N)$ clauses reduces to a \problemm{MCIS}-instance with $\sqrt{N}$ color classes each containing $c^{\sqrt{N}}$ vertices,
	for some constant $c$.
Further, the run-time is $\Oh(\sqrt{N} \cdot c^{2\sqrt{N}} )$.

\medskip

\textbf{Construction (similar to \cite{DBLP:conf/wg/KatsikarelisLP18}\footnote{The reduction there directly reduces to \problemm{Distance Independent Set}.
	 The idea is essentially to use the following reduction to \problemm{MCIS} as the first step.}):}
Group the clauses of $\varphi$ into $\ell = \Oh(\sqrt{N})$ groups $F_1,\dots,F_{\ell}$ each of size $\Oh(\sqrt{N})$.
Thus, every group $F_i$ consists of $\Oh(\sqrt{N})$ variables,
	which have $2^{\Oh(\sqrt{N})}$ truth assignments.
Let $c$ be such that $2^{\Oh(\sqrt{N})} = c^{\sqrt{N}}$.
Let $\varphi^i_1,\dots,\varphi^i_{s}$ for $s \leq c^{\sqrt{N}}$ be an enumeration of truth assignments to the variables of $F_i$ that satisfy the clauses of $F_i$, for each $i \in \ell$.
For each truth assignments $\varphi^i_j$ introduce a corresponding vertex $u^i_j$ of color $i$.
Add an edge between $u^i_j$ and $u^{i'}_{j'}$ for $i \neq i'$, if the truth assignments $\varphi^i_j$ and $\varphi^{i'}_{j'}$ are contradicting each other in variables they have in common.

\medskip

It easily follows that $\varphi$ is satisfiable if and only if the constructed graph $G'$ has a multi-colored independent set of size $\ell$.
The run-time is $\Oh(\sqrt{N} \cdot c^{\sqrt{N}} + c^{2\sqrt{N}} )$.
For more details, we refer to~\cite{DBLP:conf/wg/KatsikarelisLP18}.

\medskip

By combining these two reductions, $3$-\problemm{SAT} reduces to \dispersion with treedepth $\Oh(k + \log \ell) = \Oh( \sqrt{N} + \log c^{\sqrt{N}} ) = \Oh(\sqrt{N})$.
Assuming a $2^{o(\td(G)^2)}$-time algorithm for \dispersion implies a $2^{o(N)}$-time algorithm for $3$-\problemm{SAT} with $N$ variables and $\Oh(N)$ clauses,
	contradicting ETH~\cite{DBLP:journals/jcss/ImpagliazzoPZ01}.
This finishes the proof of \autoref{lemma:appendix:td:lower:bound}.

\end{document}